\documentclass[11pt]{article}
\usepackage{amsmath, amssymb, amsthm,bbm}
\usepackage{natbib} 
\usepackage{xcolor} 
\usepackage{authblk}

\definecolor{lblue}{RGB}{40, 103, 178}
\definecolor{cred}{RGB}{177, 4, 14}
\definecolor{sgreen}{RGB}{46, 139, 87}

\usepackage{algorithm,algorithmic,mathtools}

\usepackage{calc} \usepackage{enumerate} \usepackage{parskip}
\usepackage{longtable}
\usepackage{setspace}
\usepackage{array}
\usepackage{multirow}
\usepackage{makecell}
\usepackage{comment}
\usepackage{import}
\usepackage{booktabs}
\usepackage{tabularx}
\usepackage{tikz}
\usetikzlibrary{shapes.multipart}
\usepackage[font=small]{caption}

\usetikzlibrary{arrows.meta,decorations.pathreplacing}

\usepackage{fullpage,anyfontsize}
\allowdisplaybreaks

\RequirePackage[colorlinks,linkcolor=blue,citecolor=blue,urlcolor=blue,breaklinks]{hyperref}
\usepackage{cleveref}
\newtheorem{theorem}{Theorem}
\newtheorem{lemma}{Lemma}

\newtheorem{remark}{Remark}
\newtheorem{definition}{Definition}
\newtheorem{corollary}{Corollary}

\newtheorem{assumption}{Assumption}

\makeatletter
\newtheorem*{rep@theorem}{\rep@title}
\newcommand{\newreptheorem}[2]{\newenvironment{rep#1}[1]{ \def\rep@title{#2 \ref{##1} (full version)} \begin{rep@theorem}} {\end{rep@theorem}}}
\makeatother

\newreptheorem{theorem}{Theorem}
\definecolor{lblue}{RGB}{40, 103, 178}
\definecolor{cred}{RGB}{177, 4, 14}
\definecolor{sgreen}{RGB}{46, 139, 87}

\def\E {{\mathbb{E}}}
\def\P {{\mathbb{P}}}
\def\RR {{\mathbb{R}}}

\def\independentraw#1#2{\mathrel{\rlap{$#1#2$}\mkern2mu{#1#2}}}
\newcommand\independent{\protect\mathpalette{\protect\independentraw}{\perp}}
\newcommand{\Cov}{\mathrm{Cov}}
\newcommand{\EEst}[2]{\mathbb{E}\left[{#1}\  \middle| \ {#2}\right]}
\newcommand{\EE}[1]{\mathbb{E}\left[{#1}\right]}

\newcommand{\One}[1]{{\mathbbm{1}}\left\{{#1}\right\}}
\newcommand{\PPst}[2]{\mathbb{P}\left({#1}\  \middle| \ {#2}\right)}
\newcommand{\PP}[1]{\mathbb{P}\left({#1}\right)}

\newcommand{\Var}{\mathrm{Var}}
\newcommand{\tr}{\mathrm{tr}}

\newcommand{\bX}{\mathbf{X}}
\newcommand{\bY}{\mathbf{Y}}
\newcommand{\bZ}{\mathbf{Z}}

\newcommand{\bx}{\mathbf{x}}
\newcommand{\by}{\mathbf{y}}

\newcommand{\cF}{\mathcal{F}}
\newcommand{\cS}{\mathcal{S}}

\newcommand{\dks}{\textnormal{d}_{\textnormal{KS}}}
\newcommand{\dtv}{\textnormal{d}_{\textnormal{TV}}}
\newcommand{\dw}{\textnormal{d}_{\textnormal{W}}}
\newcommand{\eqd}{\stackrel{\textnormal{d}}{=}}
\newcommand{\iidsim}{\stackrel{\textnormal{iid}}{\sim}}

\usepackage{algorithm,algorithmic,mathtools}

\title{Local permutation tests for conditional independence: an adaptive binning perspective}
\author[1]{David Chen}
\author[2]{Rohan Hore}
\author[1]{Rina Foygel Barber}
\affil[1]{Department of Statistics, University of Chicago}
\affil[2]{Department of Statistics and Data Science, Carnegie Mellon University}
\date{\today}

\begin{document}

\maketitle

\begin{abstract}
    In this work, we study the problem of testing conditional independence between random variables $X$ and $Y$ given a confounder $Z$. The local permutation test (LPT) from \cite{kim2022local} offers a principled approach to this problem by partitioning the $Z$-space into pre-specified bins, and permuting the $X$ and $Y$ data within each bin, to assess the significance of an observed test statistic. However, when the partitions are pre-fixed, the resulting partition can be poorly balanced, as some bins may contain most of the samples while others contain only a few. This motivates the use of data-adaptive binning strategies, such as equisized bins with a fixed (typically small) number of points. We study this natural and practically important extension of LPT, providing finite-sample bounds on the Type I error for an arbitrary test statistic, providing stronger validity results than previously known. We also show that LPT attains power comparable to the oracle likelihood ratio tests derived from the Neyman--Pearson lemma. Within a linear confounder model class, we further analyze the effect of bin size and demonstrate that constant bin sizes can match the performance of partitions with growing bin-size. These results, further supported by extensive numerical simulations, position the proposed data-adaptive strategy as both practically implementable and statistically efficient.
\end{abstract}

\section{Introduction}

In this work, we study the problem of testing conditional independence (CI). Formally, suppose we observe data points $(X_1,Y_1,Z_1),\ldots,(X_n,Y_n,Z_n) \in \mathcal{X} \times \mathcal{Y} \times \mathcal{Z}$, drawn i.i.d.\ from an unknown joint distribution $P$. Our goal is to test the null hypothesis
\[
H_0^{\mathrm{CI}}: \; X \independent Y \mid Z .
\]
Here, $X$ and $Y$ denote the primary variables of interest, for instance, a response variable $Y$ and associated covariate $X$, while $Z$ represents a confounder. Note that $X$, $Y$, and $Z$ may each be multi-dimensional, and in particular, it is common in practical applications to have a high-dimensional confounder $Z$. Throughout this paper, we write $P_{X \mid Z}$, $P_{Y \mid Z}$, and $P_{X,Y \mid Z}$ for the conditional distributions of $X$, $Y$, and $(X,Y)$ given $Z$, respectively.

The problem of testing conditional independence has been studied for a long time in statistics, with early developments focusing on partial correlations and contingency table methods \citep{fisher1924035,agresti2012categorical}. Over time, conditional independence became a standard concept in areas such as graphical modeling and causal inference \citep{koller2009probabilistic}. More recently, it has also appeared in a range of modern statistical problems, including variable selection and high-dimensional inference \citep{williamson2021nonparametric,dai2022significance}. 

While this problem has long been of classical interest in the statistical literature, it has also become increasingly relevant in various parts of the modern machine learning literature, including algorithmic fairness \citep{hardt2016equality,kosorok2019annual}, and invariant representation learning \citep{PogodinDLSVG23,hwa2024}.

\subsection{The hardness of CI testing}\label{sec:hardness_CI}

Despite a rich literature in statistics, constructing valid and powerful tests for $H_0^{\mathrm{CI}}$ remains a fundamentally challenging problem without imposing any additional distributional assumptions. This difficulty was first theoretically established by \citet[Theorem 2]{shah2020hardness}; stated informally, their result shows that
\begin{quote} 
    \emph{Any test for $H_0^{\mathrm{CI}}$ that achieves finite-sample level $\alpha$ control uniformly over all Lebesgue-continuous null distributions must have power no greater than $\alpha$ (i.e., no better than random) against any Lebesgue-continuous alternative.}
\end{quote}
Subsequently, \citet{kim2022local} established a similar hardness result, extending the analysis beyond purely continuous models and arriving at a similar conclusion. Taken together, these results highlight a fundamental limitation of distribution-free CI testing: one \emph{must} impose additional structural or distributional assumptions to obtain meaningful power against alternatives of interest.

As a consequence, much of the existing literature has focused on achieving Type~I error control under restricted classes of null distributions. These restrictions take various forms, including assuming a known parametric model \citep{kalisch2007estimating,colombo12}, access to a known or accurately estimated conditional distribution $P_{X \mid Z}$ \citep{candes2018panning,barber2020robust,berrett2020conditional,niu2024reconciling}, additional structural constraints such as shape restrictions \citep{hore2025testing}, or smoothness of the conditional distributions $P_{X \mid Z}$ and $P_{Y \mid Z}$ \citep{shah2020hardness,kim2022local,lundborg2022projected}. In what follows, we focus on this last approach and review how smoothness assumptions can be leveraged for CI testing.

\subsection{Local permutation tests for conditional independence}\label{sec:review_kim_paper}

We briefly review the \emph{local permutation test} procedure proposed by \citet{kim2022local}, which shows how smoothness assumptions on the conditional distributions can be leveraged to obtain approximate Type~I error control for testing $H_0^{\mathrm{CI}}$. The core idea is to discretize the conditioning variable $Z$ and reduce the problem to conditional independence testing with a discrete confounding variable.

Specifically, fix a partition of the space $\mathcal{Z}$ into bins, $\mathcal{Z} = \bigcup_{k=1}^K \mathcal{Z}_{k}$, and let $\tilde Z_i$ denote the discretized version of $Z_i$, defined as
\[
\tilde Z_i = k \iff Z_i \in \mathcal{Z}_{k}.
\]
We now consider a different hypothesis of conditional independence,
\[
\tilde{H}_0^{\mathrm{CI}} : \; X \independent Y \mid \tilde Z .
\]
Intuitively, if the bins $\mathcal{Z}_{k}$ are each relatively small (i.e., the number of bins $K$ is large), we might expect that the original null hypothesis $H_0^{\mathrm{CI}}$ is similar, in some sense, to its modified version $\tilde{H}_0^{\mathrm{CI}}$---that is, since $\tilde{Z}$ contains nearly the same information as the original un-discretized $Z$, we might expect that a joint distribution $P$ that satisfies $H_0^{\mathrm{CI}}$ will also approximately satisfy $\tilde{H}_0^{\mathrm{CI}}$, and vice versa.

Importantly, due to the discretization, testing $\tilde{H}_0^{\mathrm{CI}}$ no longer faces the same hardness result of \citet{shah2020hardness}. Since $\tilde Z$ is discrete, testing $\tilde{H}_0^{\mathrm{CI}}$ can be carried out in a finite-sample valid manner using a permutation-based procedure that permutes values of $X$ and $Y$ within each bin, giving us the local permutation test (LPT) with corresponding $p$-value 
\[
    p = \frac{1}{|\Pi|} \sum_{\sigma \in \Pi} \One{T(\mathbf X_\sigma, \mathbf Y) \geq T(\mathbf X, \mathbf Y)}.
\]
Here $\Pi$ is the group of all permutations that permute entries within bins only, and $T$ is some chosen test statistic; see Section \ref{sec:methodology} for details.
 
Note that the conditional distributions given $\tilde Z$ correspond to a local averaging of the conditional distributions given $Z$. In particular, writing  $P_{(X,Y)\mid \tilde Z}$ to denote the conditional distribution of $(X,Y)$ given $\tilde{Z}$, we observe that 
\[
\mathsf{d}P_{(X,Y)\mid \tilde Z}
= \int \mathsf{d}P_{(X,Y)\mid Z=z} \, \mathsf{d}P_{Z\mid \tilde Z}(z),
\]
with analogous representations for $P_{X\mid \tilde Z}$ and $P_{Y\mid \tilde Z}$. Under the null $H_0^{\mathrm{CI}}$ and suitable smoothness assumptions on $P_{X\mid Z}$ and $P_{Y\mid Z}$, this local averaging approximately preserves the conditional independence structure when the bins are sufficiently fine. As a result, the permutation test for $\tilde{H}_0^{\mathrm{CI}}$ yields approximate finite-sample Type~I error control for the original null $H_0^{\mathrm{CI}}$.

While this construction is natural, the LPT framework raises several practical questions.
\begin{itemize}
    \item First, given a prescribed number of bins $K$, it is not obvious how to construct an appropriate partition of the space $\mathcal Z$. This challenge is particularly pronounced in high-dimensional settings, where naive partitions can lead to a large number of empty or sparsely populated bins. This naturally raises the question of whether the partition can be chosen in a data-adaptive manner, while still retaining approximate Type~I error control.
    \item Second, the choice of $K$ governs an inherent tradeoff in the procedure. Using finer partitions (equivalently, larger values of $K$) leads to tighter control of the approximation error in replacing $Z$ by $\tilde Z$---that is, the null hypothesis $\tilde{H}_0^{\mathrm{CI}}$ being tested by the procedure is more similar to the original null hypothesis of interest, $H_0^{\mathrm{CI}}$. But, a larger value of $K$ may reduce power due to fewer observations within each bin. In \citet{kim2022local}, the authors select $K \asymp n^{2/5}$ to obtain asymptotically optimal power guarantees. However, this choice leaves open the possibility that substantially larger values of $K$ (with possibly data-adaptive partitions) may still yield satisfactory power in practice, offering significant computational and statistical advantages.
\end{itemize}

From a practical standpoint, one can always perform sample splitting: the first split may be used to construct a partition of $\mathcal{Z}$, while the second split is used to implement the LPT framework. This extension would still retain the existing Type~I error control guarantees, but may be inefficient in several respects. If the partition is chosen to satisfy $K=\mathrm{o}(n)$, then optimizing over all feasible partitions to identify a ``good'' one may itself be computationally expensive. On the other hand, if one wishes to ensure $K=\Theta(n)$, then the second split will typically contain many bins with very few or no observations, substantially reducing the effective sample size available for inference and thereby leading to a corresponding loss in power.

\subsection{Our contributions}
In this work, we provide concrete answers to the questions raised above by proposing a natural generalization of the LPT framework that allows for data-adaptive bins, which we call the adaptive-LPT. We provide detailed theoretical analysis to demonstrate its practicality over the fixed bin LPT framework.
Our main contributions are summarized as follows:
\begin{itemize}
    \item 
We first reexamine the Type I error of the local permutation tests, conditional on observed $Z$ samples, and give two bounds on the Type I error rate: the first, in Section \ref{sec:validity_general}, applies to any bin based test statistic, generalizing the validity result of \cite{kim2022local} to data-adaptive binning procedures and thereby sharpening their finite sample bound based on total variation distances. The second guarantee, in Section \ref{sec:validity_linear}, considers the widely adopted family of statistics which are linearly decomposable into a sum of individual bin level statistics; this design choice, in turn, enables a stronger bound on excess Type I error. 

\item Our second main contribution is a detailed power analysis of LPTs, benchmarked against oracle likelihood ratio tests based on the Neyman--Pearson lemma. We show in Section \ref{sec:power_NNPT} that restricting to LPT tests (in fact, even restricting further to bins of size $2$) incurs at most a constant loss of power whenever conditional distribution of $(X,Y)$ given $Z$ is smooth in $Z$. Moreover, Section \ref{sec:linear_model}, by restricting to a linear confounding model, gives guidance on the choice of powerful test statistics and binning strategies, and formally establishes that these choices are indeed near-optimal.
\end{itemize}

\paragraph{Organization of the paper.}
The rest of the paper is organized as follows. Section \ref{sec:methodology} formally defines the adaptive-LPT testing procedure, and in Section \ref{sec:validity}, we develop corresponding bounds on Type I error. Then, in Section \ref{sec:power}, we give our power guarantees, and  Section \ref{sec:sims} includes multiple numerical simulations to illustrate our theoretical results. Most proofs are deferred to the appendix. In Section \ref{sec:power}, we also defer complicated technical conditions (and corresponding discussion on those conditions) to the appendix for expository reasons. Additional discussion, including some generalization of our results, are also available in Appendix \ref{sec:additional}.

\section{Methodology}\label{sec:methodology}
In this section, we formally introduce the extension of LPT, from Section~\ref{sec:review_kim_paper}, which allows data-adaptive partitions, i.e., one may look at the observed $Z$ values to construct an efficient and balanced partition of the samples. 

Before describing our method, we introduce some notation.
We adopt the following asymptotic shorthand: for sequences $a_n$ and $b_n$, we write $a_n \ll b_n$ and $a_n = \mathrm o(b_n)$ interchangeably, $a_n \lesssim b_n$ and $a_n = \mathrm O(b_n)$ interchangeably, and $a_n \asymp b_n$ to mean $a_n \lesssim b_n$ and $b_n \lesssim a_n$ hold interchangeably.
Now let $B_1,\dots,B_K\subseteq[n]$ be a collection of disjoint bins.\footnote{Often these bins form a partition (i.e., $B_1\cup\dots\cup B_K=[n]$), but it may sometimes be the case that $B_1\cup\dots\cup B_K$ is a strict subset of $[n]$---for example, if we wish to have bins of equal size $|B_k|=m$ then the union of the bins contains only $m\cdot \lfloor n/m\rfloor$ data points. Since in practice, the bins cover all or nearly all of the data indices, we will typically refer to $B_1,\dots,B_K$ as a `partition' of $[n]$ even though this may not be strictly the case.} We will use $m_k = |B_k|$ to denote the size of the bin $B_k$, and when $m_1=\cdots=m_K$, we will simply write $m$ for the shared bin size.
We write $\mathbf X=(X_1, X_2, \dots, X_n)$, $\mathbf Y=(Y_1,\dots,Y_n)$, and $\mathbf Z=(Z_1,\dots,Z_n)$ to denote the full vectors of data points. Given a partition $\{B_1,\ldots,B_K\}$, we write $\mathbf X_k$ and $\mathbf Y_k$ to be the subvectors of $\mathbf X$ and $\mathbf Y$ corresponding to the indices in $B_k = \{i_1, \dots, i_{m_k}\}$, (and similar for any deterministic vectors $\mathbf x$ or $\mathbf y$). We write $\mathcal S_m$ for the group of permutations on $[m]=\{1,\dots,m\}$, and for any permutation $\sigma \in \mathcal S_m$ and vector $\mathbf x = (x_1, \dots, x_m) \in \mathbb R^m$, $\mathbf x_\sigma$ is the vector $(x_{\sigma(1)}, \dots, x_{\sigma(m)})$. Note that this convention satisfies $(\mathbf x_\sigma)_{\sigma'} = \mathbf x_{\sigma \circ \sigma'}$ for repeated permutations. Finally, for a subvector $\mathbf x_k$, we write $(\mathbf x_k)_\sigma$ for the vector given by permuting the indices of ${\mathbf x}_k$ according to $\sigma$ restricted to the indices in $B_k$ (we will only work with $\sigma$ such that $\sigma(B_k) = B_k$).

\subsection{Local permutation tests with data-adaptive binning}
Now, we generalize the existing LPT procedure, and demonstrate how we can use it for testing the conditional independence null hypothesis. We will refer to this procedure as the \emph{adaptive-LPT} from now onwards. The adaptive-LPT is comprised of two ingredients: a partition into (data-adaptive) bins, and a choice of a test statistic.

\paragraph{(i) Data-adaptive bins.} First, with access to $\mathbf Z$ but not $\mathbf X$ or $\mathbf Y$, the analyst chooses the bins $B_1,\dots,B_K\subseteq[n]$ with the goal that for any $k\in [K]$,
  \[
    P_{X \mid Z_i} \approx P_{X \mid Z_j} \text{ and }
    P_{Y \mid Z_i} \approx P_{Y \mid Z_j} \quad \textnormal{for}~ i,j\in B_k.
  \]
  One particularly natural way of doing this is to bin together $Z$ values which are close, i.e., $d(Z_i,Z_j)\approx 0$ for some measure of distance or dissimilarity $d$ on $\mathcal{Z}$. This is a natural extension of the conditional permutation test with discretely supported $Z$, where we would usually group the samples with the same $Z$ value, and then apply permutation within each of the groups independently.
  
Given such a partition, we can now define the permutations that we will consider in the test:
  \begin{definition}[Bin-preserving permutation]
  Given bins $B_1, \dots, B_K$, we say that a permutation $\sigma \in \mathcal S_n$ is bin-preserving if $\sigma(B_k) = B_k$ for all $k \in [K]$. We further write $\Pi = \Pi(B_1,\dots,B_K)\subseteq\mathcal{S}_n$ to denote the subgroup of all bin-preserving permutations.
\end{definition}

We remark that a prespecified (non-data-adaptive) binning scheme, as in the work of \citet{kim2022local} (recall Section~\ref{sec:review_kim_paper}), is a special case: given a prespecified partition $\mathcal{Z} = \bigcup_{k=1}^K \mathcal{Z}_{(k)}$, we can simply define $B_k = \{i : Z_i \in\mathcal{Z}_{(k)}\}$. However, in adaptive-LPT, we have substantially more flexibility, since the bins may depend on the observed values $Z_1,\dots,Z_n$. For instance, we may choose to construct bins $B_k$ of a fixed size---e.g., as we will explore below, we might choose to group the observed values $Z_1,\dots,Z_n$ into $\lfloor n/2\rfloor$ pairs, in which case we have $m_k=2$ for each $k$.
  
\paragraph{(ii) A test statistic.}
Next, we choose a test statistic $T(\mathbf X,\mathbf Y,\mathbf Z)$, which is a function measuring some notion of evidence against the null $H_0^{\mathrm{CI}}$ (we will explore concrete examples shortly). We require that $T$ respects a symmetry assumption within each partition:

\begin{definition}[Bin-symmetric statistic]
  We say a function $T: \mathcal X^n \times \mathcal Y^n \times \mathcal Z^n \to \mathbb R$ is bin-symmetric, if for any bin-preserving permutation $\sigma \in \Pi$,
  \begin{equation}\label{eq:symmetry}
    T(\mathbf x,\mathbf y,\mathbf z) = T(\mathbf x_\sigma,\mathbf y_\sigma, \mathbf z).
  \end{equation}
\end{definition}
Being bin-symmetric is equivalent to enforcing that $T$ depends only on the collection of $(X, Y)$ pairs within each bin and is fully agnostic towards their original ordering. On the other hand, one can use the full vector $\mathbf Z$ however one likes to construct the statistic, as long as it does not violate the necessary symmetry condition.

\paragraph{Computing a p-value.} Finally, given these two components, we define the $p$-value by
  \begin{equation}\label{eq:pval_adaptive_LPT}
       p = \frac{1}{|\Pi|}\sum_{\sigma \in \Pi} \One{T(\bX_{\sigma},\bY,\bZ) \geq T(\mathbf X, \mathbf Y, \mathbf Z)}.
  \end{equation}
   Note that, since $T$ is bin-symmetric, we can equivalently define this p-value as
  \[
    p = \frac{1}{|\Pi|}\sum_{\sigma \in \Pi} \One{T(\bX, \bY_{\sigma},\bZ) \geq T(\mathbf X, \mathbf Y, \mathbf Z)}.
  \]
  In other words, $\mathbf X$ and $\mathbf Y$ are being treated symmetrically in this testing procedure. We (arbitrarily) choose to use the first notation (i.e., permuting $\mathbf X$ rather than $\mathbf Y$) from this point on.

\subsection{Concrete examples of adaptive-LPT}\label{sec:examples_Sec2}
Now, we provide two simple examples of our general framework, demonstrating how one might define the bins and the test statistic to implement the adaptive-LPT test. We will study the approximate validity and the power of these examples later on, but here we provide only the constructions, for intuition on how the adaptive-LPT test might be run in practice.

\subsubsection{Example: conditional covariance}\label{sec:examples_Sec2_covariance}
For the first example, suppose that both $X$ and $Y$ are real-valued, i.e., $\mathcal X, \mathcal Y \subseteq \mathbb R$. In this setting, a natural measure of conditional dependence would be (the absolute value of) the conditional covariance, $|\mathrm{Cov}(X,Y\mid Z)|$, under the joint distribution $P$. 

To begin, we construct a partition $B_1\cup \dots \cup B_K$ such that the values $Z_i$ within each bin are nearby, and $|B_k|=m_k\geq 2$ for each bin; for example, we may use $K$-means clustering to form a partition of $Z_1,\dots,Z_n$. We can then choose the test statistic $T$ to provide an approximation to the average conditional covariance, by defining
\[T(\mathbf X,\mathbf Y, \mathbf Z)
 = \frac{1}{K}\sum_{k=1}^K \left|\frac{1}{m_k} \sum_{i\in B_k} (X_i - \overline{\mathbf X}_k)(Y_i - \overline{\mathbf Y}_k)\right|.\]
Here $\overline{\mathbf X}_k$ denotes the sample mean of the observations $\mathbf X_k$ within bin $B_k$, and same for $\overline{\mathbf Y}_k$. This test statistic is bin-symmetric, by construction, and is likely to be large under the alternative if the joint distribution $P$ exhibits high conditional covariance.

\subsubsection{Example: betting on pairs}\label{ex:betting}
In our next example, we now implement a betting-based strategy, where $X$ and $Y$ can now take values in any space. Suppose that, given two values $x,x'\in\mathcal X$ and two values $y,y'\in\mathcal Y$, we can place a ``bet'' on which pairing is more likely---$(x,y)$ and $(x',y')$, or, $(x,y')$ and $(x',y)$. For instance, if we believe that there is positive dependence between $X$ and $Y$ after conditioning on $Z$ (in the setting $\mathcal X = \mathcal Y= \mathbb R$), and $x>x'$ and $y>y'$, we would bet that the first pairing is more likely than the second. In general, define a function
\[T_{\mathrm{bet}}(x,x',y,y') \in \{-1, 0 , +1 \},\]
with a $+1$ indicating that we believe $(x,y),(x',y')$ is a more likely pairing than $(x,y'),(x',y)$, and a $-1$ indicating the opposite. (We must allow a value $0$ to accommodate settings such as ties, e.g., $x=x'$.) It is natural to require a symmetry condition,
\[T_{\mathrm{bet}}(x,x',y,y') = T_{\mathrm{bet}}(x',x,y',y) = -T_{\mathrm{bet}}(x,x',y',y) = -T_{\mathrm{bet}}(x',x,y,y').\]

Next, suppose that we choose bins of size $m_k=2$: we have $B_k = \{i_k,j_k\}$ for each $k=1,\dots,K$ where $K=\lfloor n/2\rfloor$, with the bins chosen so that $Z_{i_k}\approx Z_{j_k}$ for each $k$. For instance, if $\mathcal Z\subseteq \mathbb R$, we can simply sort the $Z$ values and use the ranking to determine the bins: if $Z_{(1)}\leq \dots \leq Z_{(n)}$ denote the order statistics, then we form a bin containing $Z_{(1)},Z_{(2)}$, another bin containing $Z_{(3)},Z_{(4)}$, etc.

We can then define our test statistic by aggregating all our ``bets'':
\[T(\mathbf X, \mathbf Y, \mathbf Z)
=\sum_{k=1}^K T_{\mathrm{bet}}(X_{i_k},X_{j_k},Y_{i_k},Y_{j_k}).\]
If in fact the null hypothesis $H_0^{\mathrm{CI}}$ holds, then we might expect that each bet $T_{\mathrm{bet}}(X_{i_k},X_{j_k},Y_{i_k},Y_{j_k})$ is approximately equally likely to take the value $+1$ or $-1$, for each $k$, leading to a test statistic $T(\mathbf X, \mathbf Y,\mathbf Z)$ that has mean $\approx 0$. On the other hand, under the alternative, if we are able to estimate the nature of the (conditional) dependence between $X$ and $Y$ then we may be able to place bets whose performance is better than random, leading to a large value of $T(\mathbf X,\mathbf Y, \mathbf Z)$.

Such use of nearest neighbors in $\mathcal Z$ to approximate the null distribution of test statistics under conditional independence is well documented in the literature \citep{sen2017model,runge2018conditional}. Here, we view such methods through the lens of the adaptive-LPT framework and strengthen existing asymptotic Type I guarantees by establishing finite-sample Type I error bounds.

\section{Validity of adaptive-LPT} \label{sec:validity}

In this section, we study the validity of the adaptive-LPT test; that is, we analyze the Type I error rate of our procedure under the null hypothesis $H_0^{\mathrm{CI}}$. Ideally, we would wish to establish bounds of the form
\[\textnormal{If $H_0^{\mathrm{CI}}$ holds then\ }\mathbb P(p\leq \alpha) \lessapprox \alpha.\]
Of course, the discussion on hardness result from Section~\ref{sec:hardness_CI} implies that this cannot be achieved by any test with nontrivial power. Instead, we will establish that this type of approximate Type I error control holds under mild conditions: namely, in settings where the conditional distributions $P_{X\mid Z}$ and/or $P_{Y\mid Z}$ are approximately constant within bins-- that is, for two confounder values $Z_i,Z_j$ in the same bin $B_k$, we have $P_{X\mid Z_i}\approx P_{X\mid Z_j}$ and/or $P_{Y\mid Z_i}\approx P_{Y\mid Z_j}$. 
In particular, these results will demonstrate that under mild regularity conditions on the conditional distributions, our test is asymptotically valid for testing the null hypothesis $H^{\mathrm{CI}}_0$: informally,
\[\textnormal{If $H^{\mathrm{CI}}_0$ holds, along with some regularity conditions, then }\mathbb P(p\leq \alpha) \lessapprox \alpha.\]

We will present two different bounds on Type I error, with the first result holding generally for any implementation of the adaptive-LPT test, while the second specializes to the practical implementation with test statistics that can be decomposed as a sum over bins.

\subsection{Type~I error control with general statistics}\label{sec:validity_general}
We now give our first validity result, which holds for any choice of the test statistic $T$.

\begin{theorem}\label{thm:validity_main}
Let $B_1\cup\dots\cup B_K$ be any data-adaptive partition, and let $T(\mathbf X,\mathbf Y,\mathbf Z)$ be any bin-symmetric test statistic. Then, for any $\alpha\in (0,1)$ and any distribution $P$ satisfying the null $H_0^\mathrm{CI}$, the p-value in~\eqref{eq:pval_adaptive_LPT} satisfies
\[
\mathbb P \left( p\leq \alpha \mid \mathbf Z \right) \leq \alpha + \delta_n,
\]
  where
  \[
  \delta_n:=4\sum_{k=1}^K  (m_k-1) \left( \max_{i, j \in B_k} \mathrm{d_{TV}}(P_{X \mid Z_i}, P_{X \mid Z_j}) \right) \left( \max_{i, j \in B_k} \mathrm{d_{TV}}(P_{Y \mid Z_i}, P_{Y \mid Z_j}) \right).
  \]
\end{theorem}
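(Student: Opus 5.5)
The proof conditions on $\mathbf Z$ throughout, which fixes the bins. Under $H_0^{\mathrm{CI}}$ the pairs $(X_i,Y_i)$ are then independent, with $X_i\sim P_i:=P_{X\mid Z_i}$ and $Y_i\sim R_i:=P_{Y\mid Z_i}$ independent of each other. Call this joint law $\mathcal L$. The plan is to split the argument into an exact-validity step and a total-variation step.

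\emph{Step 1 (reduction to a TV distance).} Draw $\sigma_0$ uniformly from $\Pi$, independent of the data. Since $\Pi$ is a group, the p-value computed from $(\mathbf X_{\sigma_0},\mathbf Y)$ uses the same orbit $\{(\mathbf X_{\sigma},\mathbf Y):\sigma\in\Pi\}$ as the p-value computed from $(\mathbf X,\mathbf Y)$. The observed point sits at a uniformly random position in this orbit, so the standard permutation argument gives $\mathbb P(p(\mathbf X_{\sigma_0},\mathbf Y)\le\alpha\mid\mathbf Z)\le\alpha$. Hence
\[
\mathbb P(p\le\alpha\mid\mathbf Z)\le \alpha+\mathrm{d_{TV}}\bigl(\mathcal L(\mathbf X,\mathbf Y),\,\mathcal L(\mathbf X_{\sigma_0},\mathbf Y)\bigr).
\]
Two further observations sharpen this. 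First, $p$ is a function of the data that is invariant under simultaneous bin-preserving permutations of $(\mathbf X,\mathbf Y)$, because $T$ is bin-symmetric. So it suffices to bound the TV distance between the laws of the bin-wise \emph{unordered} collections of pairs. Second, both laws factorize over bins, and so does uniform $\sigma_0$. By subadditivity of TV over product measures, the problem reduces to a single bin of size $m$: I need to show $\mathrm{d_{TV}}\le 4(m-1)ab$, where $a,b$ are the within-bin maxima of the $X$- and $Y$-TV distances.

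\emph{Step 2 (a single transposition).} Fix $i,j$ in the bin and write $P_j=P_i+\mu$ and $R_j=R_i+\nu$, with $\|\mu\|_1\le 2a$ and $\|\nu\|_1\le 2b$. Compare the law of the pairs $\{(X_i,Y_i),(X_j,Y_j)\}$ with the law obtained after swapping $X_i$ and $X_j$. Expand $R_j=R_i+\nu$.
\begin{itemize}
\item In the term with $R_i$ in both $Y$-slots, swapping the $X$'s is the same as swapping the whole pairs. This term therefore cancels once we pass to unordered pairs.
\item What remains is $\bigl[P_i\otimes R_i\otimes P_j-P_j\otimes R_i\otimes P_i\bigr]\otimes\nu$. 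The bracket is itself a difference whose $L^1$ norm is $O(\|\mu\|_1)$.
\end{itemize}
This yields a bound of order $ab$ for one transposition. I will aim for exactly $4ab$, though the constant may need care. The other coordinates of the bin are untouched, so the same bound holds inside the whole bin when the remaining coordinates are held fixed.

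\emph{Step 3 (chaining; main obstacle).} A uniform permutation of $m$ elements can be generated sequentially as a product of at most $m-1$ random transpositions, for example by the random-insertion construction $\sigma=\tau_m\cdots\tau_2$ with $\tau_\ell=(\ell\ J_\ell)$ and $J_\ell$ uniform on $[\ell]$. Telescoping through the intermediate laws, and using convexity of TV in the mixing over $J_\ell$, then gives $(m-1)\cdot 4ab$. The hard point is that after the first swaps, the intermediate law is no longer a product of independent pairs with marginals $P_i\otimes R_i$. Its $X$-marginals are permuted relative to its $Y$-marginals, so Step 2 does not apply verbatim.

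To handle this I would condition on the realized prefix of transpositions. Conditionally, the intermediate law is again a product law in which $X$-coordinate $\ell$ has law $P_{\pi(\ell)}$ for some permutation $\pi$ and $Y$-coordinate $\ell$ has law $R_\ell$. Step 2 only used that the two $X$-laws being swapped are within TV $a$ of each other and that the two $Y$-laws are within $b$; both facts still hold, since $a$ and $b$ are maxima over the bin. The one-transposition bound therefore survives conditioning. Summing over bins finally gives $\delta_n$.
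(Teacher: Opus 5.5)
Your proof is correct and follows the paper's skeleton. Your Step~1 is Lemma~\ref{lem:initial_tv_bound}: the paper compares the jointly symmetrized law $P_k(\mathbf Z)$ with the independently symmetrized law $P_k^*(\mathbf Z)$. That is the same comparison as your laws of bin-wise unordered pairs, but it avoids putting a $\sigma$-algebra on multisets. Your bin bound, a chain of at most $m_k-1$ transpositions costing $4ab$ each, is the paper's Lemma~\ref{lem:bin_tv_bound}.

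Where you differ is in how the chain and the one-swap bound are carried out.

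The paper reparametrizes $(\sigma_X,\sigma_Y)$ as $(\sigma,\sigma\circ\pi)$ and proves the bound for every \emph{fixed} relative permutation $\pi$, written as a product of at most $m-1$ fixed transpositions. For each step it adds a randomization $\nu\sim\frac12\delta_{\mathrm{Id}}+\frac12\delta_\pi$. This reduces the step to comparing $\frac12(A_X\times A_Y+A_X'\times A_Y')$ with $\frac12(A_X\times A_Y'+A_X'\times A_Y)$. The mixture-decomposition Lemma~\ref{lem:dtv_product_swap} then gives $\dtv(A_X,A_X')\,\dtv(A_Y,A_Y')\le 2a\cdot 2b$.

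You instead average over random-insertion transpositions using convexity, and bound each swap by a one-sided signed-measure expansion. After writing $R_j=R_i+\nu$, the $R_i$--$R_i$ term cancels on unordered pairs. The remainder has total variation norm at most $\|P_i\otimes P_j-P_j\otimes P_i\|_1\,\|\nu\|_1\le 4a\cdot 2b$, so the constant comes out to exactly $4ab$, as you hoped.

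What each route buys:
\begin{itemize}
\item Your argument is more elementary. It makes clear that the one-swap bound needs nothing about which laws sit at the two positions beyond the within-bin TV maxima, and that is what removes the obstacle you flagged in Step~3.
\item The paper's argument proves the stronger fixed-$\pi$ statement. It also yields a symmetric product inequality whose Hellinger analogue (Lemma~\ref{lem:dh2_product_swap}) is reused for the divergence extensions in the appendix. Your cancellation relies on linearity of the $L^1$ norm, so it does not carry over directly to that setting.
\end{itemize}
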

Here $\dtv$ refers to the total-variation (TV) distance.
We emphasize that this result holds for \textit{any} binning scheme which is a function of $\mathbf Z$, covering both the original LPT framework with pre-defined bins (e.g., by partitioning $\mathcal Z$) and our proposed extension, where one may instead use $\mathbf Z$ to choose the binning scheme so as to accommodate salient features such as underlying structure or observable heterogeneity across different regions of $\mathcal Z$.

For intuition, consider the special case of bins of size $m_k=2$, with $K=\lfloor n/2\rfloor$ bins. In this case, writing $B_k = \{i_k,j_k\}$ for each bin $k$, the offset term $\delta_n$ simplifies to
\begin{equation}\label{eq:delta_n_nnpt_case}
   \delta_n = 4\sum_{k=1}^{\lfloor n/2\rfloor}\mathrm{d_{TV}}(P_{X\mid Z_{i_k}},P_{X\mid Z_{j_k}})\cdot \mathrm{d_{TV}}(P_{Y\mid Z_{i_k}},P_{Y\mid Z_{j_k}}).
\end{equation}

Note that the product structure in $\delta_n$ enables a double-robustness result: approximate Type~I error control does not require both conditional distributions $P_{X\mid Z}$ and $P_{Y\mid Z}$ to be smooth within bins. Rather, it suffices that at least one of the two conditional laws varies slowly in $Z$ (i.e., either $\mathrm{d_{TV}}(P_{X\mid z},P_{X\mid z'})$ or $\mathrm{d_{TV}}(P_{Y\mid z},P_{Y\mid z'})$ is small, when $z\approx z'$), since the excess Type~I error depends only on the product of the corresponding within-bin discrepancies.

\begin{remark}
In the statement of Theorem~\ref{thm:validity_main}, the upper bound is presented in terms of TV distances to facilitate an interpretable version of the result. In fact, it is possible to prove a similar result for generalized Hellinger distances and R\'enyi divergences,  and our bounds recover as a special case of Theorems~2 and~3 of \cite{kim2022local}. See Appendix \ref{app:other_metrics} for full details.
\end{remark}

\subsubsection{Implications: Type~I error control under smoothness assumptions} \label{sec:validity_smooth}
To interpret our Type~I error control in Theorem~\ref{thm:validity_main},
we next discuss how smoothness conditions enable bounds on $\delta_n$, and thus on the resulting Type~I error control. In this part, we suppose that $\mathcal Z$ is a metric space with metric $d$, and that the conditional distributions are Lipschitz smooth in the TV distance:

\begin{definition}\label{defn:TV_ball}
  For any positive constant $L$, let $\mathcal P_{\mathrm{TV}}(L)$ denote the set of distributions $P$ on $(X,Y,Z)$ satisfying 
  \[
    \mathrm{d_{TV}}(P_{X \mid Z = z}, P_{X \mid Z = z'}) \leq L\,d(z, z'),
    \textnormal{ and }
    \mathrm{d_{TV}}(P_{Y \mid Z = z}, P_{Y \mid Z = z'}) \leq L\,d(z, z').
  \]
\end{definition}
This Lipschitz assumption on the conditional distributions allows us to bound the term $\delta_n$ appearing in Theorem~\ref{thm:validity_main} above, and thereby give Type~I error control, conditional on $\bZ$ and uniformly over the class $\mathrm P_{\mathrm{TV}}(L)$.

\begin{corollary}\label{cor:thm2_smooth}
  Under the setting and notation of Theorem~\ref{thm:validity_main}, suppose that there exists $h_n>0$ such that $\max_k \max_{i,j\in B_k} d(Z_i,Z_j)\leq h_n$ almost surely.
  Then, it holds that
  \[
    \sup_{P\in \mathcal P_{\mathrm{TV}}(L)\cap H_0^{\mathrm{CI}}}\P_{(\bX,\bY,\bZ)\sim P^n}(p \leq \alpha \mid \mathbf Z) \leq \alpha + 4nL^2 h_n^2\quad \text{a.e.}
  \]
\end{corollary}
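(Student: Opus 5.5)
The plan is to apply Theorem~\ref{thm:validity_main} directly and bound the offset term $\delta_n$ using the Lipschitz condition of Definition~\ref{defn:TV_ball} together with the almost-sure diameter bound $h_n$. Fix any $P\in\mathcal P_{\mathrm{TV}}(L)\cap H_0^{\mathrm{CI}}$. Since the bins are functions of $\mathbf Z$ and $T$ is bin-symmetric, Theorem~\ref{thm:validity_main} gives
\[
\mathbb P(p\le\alpha\mid\mathbf Z)\le \alpha+\delta_n
\]
almost surely. Everything then reduces to a deterministic bound on $\delta_n$ on the almost-sure event that $\max_k\max_{i,j\in B_k}d(Z_i,Z_j)\le h_n$.

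On that event, for every bin $k$ and all $i,j\in B_k$, the Lipschitz assumption gives $\dtv(P_{X\mid Z_i},P_{X\mid Z_j})\le L\,d(Z_i,Z_j)\le Lh_n$, and likewise for $Y$. Each of the two maxima in the summand of $\delta_n$ is therefore at most $Lh_n$, so
\[
\delta_n\le 4\sum_{k=1}^K (m_k-1)L^2h_n^2 .
\]
The counting step is then $\sum_k (m_k-1)\le \sum_k m_k\le n$, since the bins are disjoint subsets of $[n]$. This gives $\delta_n\le 4nL^2h_n^2$. Since this bound does not depend on $P$, taking the supremum over $P$ in the class yields the claim.

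The only point needing care is the measure-theoretic one. The conditional distributions $P_{X\mid Z=z}$ are defined only up to $P_Z$-null sets, and the Lipschitz condition is interpreted for a chosen version. So the inequality holds for almost every realization of $\mathbf Z$ (the intersection of the event $\{\max d\le h_n\}$ with the full-measure set where Theorem~\ref{thm:validity_main} holds for the conditional probability). This is why the statement says ``a.e.'' I expect no real obstacle beyond stating this cleanly, since the substantive work is already contained in Theorem~\ref{thm:validity_main}.
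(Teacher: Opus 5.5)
Your proposal is correct and is the same argument the paper relies on (the paper states the corollary without a separate proof, describing it as plugging the Lipschitz bound into $\delta_n$). You insert $\max_{i,j\in B_k}\dtv(P_{X\mid Z_i},P_{X\mid Z_j})\le Lh_n$, and likewise for $Y$, into $\delta_n$ from Theorem~\ref{thm:validity_main}, and then use $\sum_k(m_k-1)\le n$.
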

In particular, this immediately yields an asymptotic Type~I error guarantee of adaptive-LPT: for any $L>0$, the adaptive-LPT test gives an asymptotic valid test for $H_0^{\mathrm{CI}}$ as long as the diameter of bins shrinks at a rate $\mathrm{o}_P(n^{-1/2})$. In comparison, the excess Type I error bound of \cite{kim2022local} (Theorem 2, specialized to TV distance) is $n^{1/2}Lh_n$ which, while asymptotically equivalent in terms of validity, vanishes at a much slower rate.

\subsubsection{A proof sketch of Theorem~\ref{thm:validity_main}}\label{sec:proof_sketch}
 Though the full proof is deferred to the appendix, here we give a brief overview of the key steps in our argument.
 Recall that the p-value $p$ is constructed by comparing the value of the test statistic, $T(\mathbf X,\mathbf Y,\mathbf Z)$, against its permuted copies, $T(\mathbf X_\sigma, \mathbf Y, \mathbf Z)$. In order for $p$ to be (approximately) valid, we therefore need to verify that under the null, $T(\mathbf X,\mathbf Y,\mathbf Z)$ has (approximately) the same distribution as $T(\mathbf X_\sigma, \mathbf Y, \mathbf Z)$, where $\sigma\sim\textnormal{Unif}(\Pi)$.

Let $\sigma,\sigma'\overset{iid}{\sim}\text{Unif}(\Pi)$ denote bin-preserving permutations sampled uniformly at random, and note that the test statistic satisfies
\[T(\bX, \bY, \bZ) = T(\bX_\sigma, \bY_\sigma, \bZ),\]
since $T$ is required to be bin-symmetric. On the other hand, the permuted test statistic satisfies
\[T(\bX_\sigma, \bY, \bZ) =T(\bX_{\sigma \circ \sigma'}, \bY_{\sigma'}, \bZ)\overset{D}=T(\bX_\sigma, \bY_{\sigma'}, \bZ),\]
where the second equality holds since $(\sigma \circ \sigma', \sigma' )\overset{D}{=}(\sigma,\sigma')$, while the first equality follows by bin-symmetry of $T$.
Now, conditional on $\bZ$, let $P_k(\bZ)$ denote the conditional distribution of $((\bX_k)_{\sigma_k},(\bY_k)_{\sigma_k})$ with $\sigma_k\sim\textnormal{Unif}(\cS_{m_k})$, and let $P^*_k(\bZ)$ denote the conditional distribution of $((\bX_k)_{\sigma_k},(\bY_k)_{\sigma'_k})$ with $\sigma_k,\sigma'_k\iidsim\textnormal{Unif}(\cS_{m_k})$. 
 Then by construction, we can verify that
 \[\big(\bX_\sigma, \bY_\sigma\big) \mid \bZ \sim \otimes_{k=1}^K P_k(\bZ),\quad\quad\big(\bX_\sigma, \bY_{\sigma'}\big) \mid \bZ \sim \otimes_{k=1}^K P^*_k(\bZ).\]
 Comparing this to our calculations above, we see that the approximate validity of our test relies on bounding the TV distance between the distributions $\otimes_{k=1}^K P_k(\bZ)$ and $\otimes_{k=1}^K P^*_k(\bZ)$. This is formalized in the following lemma:

\begin{lemma}\label{lem:initial_tv_bound}
  Under the above setting, the Type I error of adaptive-LPT, conditional on $\bZ$, is bounded as
  \[
  \mathbb P \left( p\leq \alpha \mid \mathbf Z \right) \leq \alpha + \sum_{k=1}^K \mathrm{d_{TV}}(P_k(\bZ), P_k^*(\bZ)).
  \]
\end{lemma}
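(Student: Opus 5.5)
Everything is conditional on $\bZ$, so the bins $B_1,\dots,B_K$ and the group $\Pi$ are fixed. The plan is to compare the real-world data $(\bX,\bY)$ with an idealized dataset that is exactly exchangeable under $\Pi$. We then pay for the swap with the TV distance between the two joint laws.

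\emph{Step 1 (reduction to a randomized pair).} Draw $\sigma\sim\textnormal{Unif}(\Pi)$ independently of the data. By bin-symmetry, $T(\bX_{\tau},\bY,\bZ)=T(\bX_{\sigma\circ\tau},\bY_{\sigma},\bZ)$ for every $\tau\in\Pi$. Since $\tau\mapsto\sigma\circ\tau$ is a bijection of $\Pi$, the p-value computed from $(\bX,\bY)$ equals the p-value computed from $(\bX_\sigma,\bY_\sigma)$. Write $p=F(\bX,\bY)$, where $F(\mathbf x,\mathbf y)=\frac{1}{|\Pi|}\sum_{\tau\in\Pi}\One{T(\mathbf x_\tau,\mathbf y,\bZ)\ge T(\mathbf x,\mathbf y,\bZ)}$. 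Then $\P(p\le\alpha\mid\bZ)=\P(F(\bX_\sigma,\bY_\sigma)\le\alpha\mid\bZ)$.

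\emph{Step 2 (the idealized data).} Let $\sigma'\sim\textnormal{Unif}(\Pi)$ be independent of $\sigma$ and of the data. Consider $(\bX_\sigma,\bY_{\sigma'})$, whose conditional law is $\otimes_k P_k^*(\bZ)$. I will check that $F(\bX_\sigma,\bY_{\sigma'})$ is a valid p-value. Conditional on $\bY_{\sigma'}$ and on the data, $\bX_\sigma$ is uniform over the orbit of the unordered bin contents. Equivalently, $\sigma\circ\tau\overset{D}{=}\sigma$ for each $\tau\in\Pi$, jointly with $\sigma'$. Hence the vector $\bigl(T(\bX_{\sigma\circ\tau},\bY_{\sigma'},\bZ)\bigr)_{\tau\in\Pi}$ is invariant in law under right-multiplication of the index by any element of $\Pi$.

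The standard permutation-test argument then applies. The rank of the identity term among all $|\Pi|$ values is uniform, with ties handled by counting ``$\ge$'', and this gives $\P(F(\bX_\sigma,\bY_{\sigma'})\le\alpha\mid\bZ)\le\alpha$. Concretely: $\sum_{\tau}\One{F\text{ evaluated at }\bX_{\sigma\circ\tau}\le\alpha}\le\alpha|\Pi|$ deterministically, and each summand has the same expectation.

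\emph{Step 3 (coupling/TV).} For any event $A$, $|\P((\bX_\sigma,\bY_\sigma)\in A\mid\bZ)-\P((\bX_\sigma,\bY_{\sigma'})\in A\mid\bZ)|$ is at most the TV distance between the two conditional laws. These are $\otimes_k P_k(\bZ)$ and $\otimes_k P_k^*(\bZ)$; the product structure holds because, given $\bZ$, the data are independent across indices and $\sigma,\sigma'$ factor into independent uniform within-bin permutations. Subadditivity of TV over product measures gives the bound $\sum_k\dtv(P_k(\bZ),P_k^*(\bZ))$. Combining this with Steps 1–2 yields the lemma.

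The main subtlety is Step 2: establishing exact validity when $\bX$ and $\bY$ are permuted by independent uniform permutations. The point to handle carefully is the group-invariance and tie convention. It reduces to the classical fact that a p-value built over a group is super-uniform whenever the data law is invariant under that group. Steps 1 and 3 are routine.
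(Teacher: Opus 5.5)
Your proof is correct and follows essentially the same route as the paper: exact super-uniformity of $p$ under the idealized law $\otimes_k P_k^*(\bZ)$ from the invariance $(\bX_{\sigma\circ\tau},\bY_{\sigma'})\overset{D}{=}(\bX_\sigma,\bY_{\sigma'})$ and the deterministic counting inequality, then transfer to the true law via bin-symmetry and subadditivity of TV over product measures; your Step 1 also makes explicit that the whole p-value, not just $T$, is invariant under a common bin-preserving permutation, which the paper leaves implicit. One cosmetic slip: with the paper's convention $(\mathbf x_\sigma)_{\sigma'}=\mathbf x_{\sigma\circ\sigma'}$, bin-symmetry gives $T(\bX_\tau,\bY,\bZ)=T(\bX_{\tau\circ\sigma},\bY_\sigma,\bZ)$ rather than $T(\bX_{\sigma\circ\tau},\bY_\sigma,\bZ)$, but $\tau\mapsto\tau\circ\sigma$ is also a bijection of $\Pi$, so the conclusion $F(\bX,\bY)=F(\bX_\sigma,\bY_\sigma)$ is unaffected.
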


The task now is to control each of the TV terms. The following lemma connects such TV terms to the product of TV distances between the conditional distributions of $X$ given $Z$ and of $Y$ given $Z$, and is the technical crux of the proof, requiring a new analysis of the effect of permutations in terms of TV distance.

\begin{lemma}{\label{lem:bin_tv_bound}}
  Under the above setting, it holds that for each $k\in[K]$,
  \[
    \mathrm{d_{TV}}(P_k(\bZ), P_k^*(\bZ)) \leq 4(m_k-1)\cdot \left( \max_{i, j \in B_k} \mathrm{d_{TV}}(P_{X \mid Z_i}, P_{X \mid Z_j}) \right) \left( \max_{i, j \in B_k} \mathrm{d_{TV}}(P_{Y \mid Z_i}, P_{Y \mid Z_j}) \right).
  \]
\end{lemma}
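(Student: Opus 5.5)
Throughout, fix $\bZ$ and a bin $B_k$ with $m=m_k$, and write $p_i=P_{X\mid Z_i}$, $q_i=P_{Y\mid Z_i}$ for $i\in B_k$. Set $\varepsilon_X=\max_{i,j\in B_k}\dtv(p_i,p_j)$ and $\varepsilon_Y=\max_{i,j\in B_k}\dtv(q_i,q_j)$. Under $H_0^{\mathrm{CI}}$, conditional on $\bZ$ all the variables $X_i,Y_i$ are mutually independent. Hence
\[
P_k=\mathbb E_{\sigma}\Big[\textstyle\bigotimes_{t}\big(p_{\sigma(t)}\otimes q_{\sigma(t)}\big)\Big],\qquad
P_k^*=\mathbb E_{\sigma,\sigma'}\Big[\textstyle\bigotimes_{t}\big(p_{\sigma(t)}\otimes q_{\sigma'(t)}\big)\Big].
\]
Here $t$ indexes positions in the bin, and $\sigma,\sigma'\iidsim\textnormal{Unif}(\cS_m)$. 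Writing $\sigma'=\sigma\circ\tau$, the permutation $\tau$ is uniform on $\cS_m$ and independent of $\sigma$. So $P_k^*$ is the law obtained by additionally scrambling the $Y$-labels by an independent uniform $\tau$.

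\textbf{Step 1: reduce to one transposition at a time.} Realize $\tau$ as a product of $m-1$ independent random transpositions, as in the Fisher--Yates shuffle: $\tau=\tau_1\circ\cdots\circ\tau_{m-1}$, where $\tau_s$ swaps $s$ with a uniform element of $\{s,\dots,m\}$, possibly the identity. Let $Q_s$ be the law of the $X$-labels $\sigma$ and $Y$-labels $\sigma\circ\tau_1\circ\cdots\circ\tau_s$. Then $Q_0=P_k$ and $Q_{m-1}=P_k^*$. By the triangle inequality, $\dtv(P_k,P_k^*)\le\sum_{s=1}^{m-1}\dtv(Q_{s-1},Q_s)$, so it suffices to show that each step costs at most $4\varepsilon_X\varepsilon_Y$. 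Passing from $Q_{s-1}$ to $Q_s$ applies one extra transposition to the $Y$-labels only. By convexity of TV, we may condition on everything except one symmetry.

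\textbf{Step 2: exploit the symmetrization from $\sigma$.} The key point is that $\sigma$ is uniform, and it remains uniform after composing with a fixed transposition $(a\,b)$. So we condition on the unordered pair of labels $\{u,v\}$ sitting at positions $a,b$, together with all other assignments. Conditionally, the $X$-labels at $(a,b)$ are $(u,v)$ or $(v,u)$ with probability $1/2$ each, and the $Y$-labels move jointly with them in $Q_{s-1}$. In $Q_s$ the $Y$-labels are additionally swapped. All coordinates outside $\{a,b\}$ are common product factors that do not affect the TV. On the four coordinates $(x,y,x',y')$, the signed difference of the two conditional laws is then
\[
\tfrac12\big[p_uq_up_vq_v+p_vq_vp_uq_u-p_uq_vp_vq_u-p_vq_up_uq_v\big]
=\tfrac12\,(p_u\otimes p_v-p_v\otimes p_u)(x,x')\cdot(q_u\otimes q_v-q_v\otimes q_u)(y,y').
\]
This identity factorizes the difference into an $X$-part and a $Y$-part, and it is where the double-robust product structure comes from.

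\textbf{Step 3: bound the factors.} We have $\|p_u\otimes p_v-p_v\otimes p_u\|_1\le 2\|p_u-p_v\|_1\le 4\varepsilon_X$, and likewise $\|q_u\otimes q_v-q_v\otimes q_u\|_1\le 4\varepsilon_Y$. Since the $L^1$ norm of a tensor product is the product of the norms, the TV of this step is at most $\tfrac12\cdot\tfrac12\cdot4\varepsilon_X\cdot4\varepsilon_Y=4\varepsilon_X\varepsilon_Y$. Averaging over the conditioning and summing over the $m-1$ steps gives $4(m-1)\varepsilon_X\varepsilon_Y$, as claimed in Lemma~\ref{lem:bin_tv_bound}.

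The main obstacle is Step 2. A naive comparison of $Y$-labels differing by one swap gives only a bound of order $\varepsilon_Y$, with no $\varepsilon_X$ factor. One must carefully justify that the conditioning still leaves the $X$-and-$Y$ joint swap at $(a,b)$ with probability exactly $1/2$ under both $Q_{s-1}$ and $Q_s$, so that the four-term factorization applies.
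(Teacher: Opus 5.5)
Your Step~1 is fine. The Fisher--Yates product is indeed uniform, and joint convexity lets you condition on $\kappa:=\tau_1\circ\cdots\circ\tau_{s-1}$ and $\tau_s=(a\;b)$. Your four-term factorization is also correct; it is an exact (equality) form of the paper's Lemma~\ref{lem:dtv_product_swap}. The gap is in Step~2 for $s\geq 2$: conditionally on $\kappa$, the law $Q_{s-1}$ is not the matched law. Its $Y$-labels at positions $a,b$ are $\sigma(\kappa(a)),\sigma(\kappa(b))$, not the $X$-labels $\sigma(a),\sigma(b)$. Moreover, the symmetry $\sigma\mapsto\sigma\circ(a\;b)$ swaps the $X$-coordinates at $a,b$ but the $Y$-coordinates at $\kappa^{-1}(a),\kappa^{-1}(b)$. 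So the assertion that ``the $Y$-labels move jointly with them'' at $(a,b)$ is false, and the display with $p_uq_up_vq_v$ does not describe the step. Concretely, take $m=3$, $\tau_1=(1\;2)$, $\tau_2=(2\;3)$. The (X-label, Y-label) pairs under $Q_1$ are $(\sigma(1),\sigma(2))$, $(\sigma(2),\sigma(1))$, $(\sigma(3),\sigma(3))$. So at positions $2,3$ the $X$-labels are $\{\sigma(2),\sigma(3)\}$ while the $Y$-labels are $\{\sigma(1),\sigma(3)\}$. You flagged this as the main obstacle, but the justification you ask for cannot be supplied as stated, because the claim is not true.

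The missing idea is a realignment. After conditioning on $\kappa$ and $\tau_s$, apply the bijection $(\mathbf x,\mathbf y)\mapsto(\mathbf x,\mathbf y_{\kappa^{-1}})$. It permutes only the $Y$-coordinates and hence preserves total variation. It maps $Q_{s-1}$ (the law of $(\mathbf X_\sigma,\mathbf Y_{\sigma\circ\kappa})$) to the law of $(\mathbf X_\sigma,\mathbf Y_\sigma)$. It maps $Q_s$ to the law of $(\mathbf X_\sigma,\mathbf Y_{\sigma\circ\kappa\tau_s\kappa^{-1}})$, and $\kappa\tau_s\kappa^{-1}=(\kappa(a)\;\kappa(b))$ is again a transposition. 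Equivalently, symmetrize with $\sigma\mapsto\sigma\circ(\kappa(a)\;\kappa(b))$. This swaps the $X$-coordinates at $\kappa(a),\kappa(b)$ jointly with the $Y$-coordinates at $a,b$, under both $Q_{s-1}$ and $Q_s$. Your identity then applies to the coordinates $(X_{\kappa(a)},Y_a,X_{\kappa(b)},Y_b)$ and gives $4\varepsilon_X\varepsilon_Y$ per step. The paper achieves the same reduction differently: it decomposes a fixed $\pi$ into at most $m-1$ transpositions and, at each step, replaces $\mathbf Y$ by a relabeled vector whose coordinate laws are permuted. This lets it reapply the single-transposition Lemma~\ref{lem:dtv_transposition}, which holds for arbitrary coordinate laws.
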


Note that the bound depends on the \textit{product} of TV distances of the $X$ and $Y$ conditional distributions, rather than the sum, which one might naively expect, leading to much stronger control on $\mathrm{d_{TV}}(P_k(\mathbf Z), P_k^*(\mathbf Z))$. In short, the bound arises from decomposing permutations into at most $m_k - 1$ many transpositions and then showing that any transposition---which swaps, say, indices $i$ and $j$---causes the distribution of the permuted vector to move at most $4\,\mathrm{d_{TV}}(P_{X \mid Z_i}, P_{X \mid Z_j}) \cdot \mathrm{d_{TV}}(P_{Y \mid Z_i}, P_{Y \mid Z_j})$ in TV distance.

Combining these two lemmas yields Theorem \ref{thm:validity_main}.

\subsection{Type~I error control with linearly decomposable statistics}\label{sec:validity_linear}

While the Type~I error control in Theorem~\ref{thm:validity_main} holds for any statistics $T$, many practical choices of $T$ often admit additional structures that further enable a simplification of the Type~I error guarantee. In particular, both examples considered in Section~\ref{sec:examples_Sec2} admit the following decomposition:
\begin{definition}\label{def:lin_decomp}
  We say that a test statistic $T$ is linearly decomposable if it can be written as a sum over bins,
  \[
    T(\mathbf x,\mathbf y,\mathbf z) = \sum_{k=1}^K T_k(\mathbf x_k,\mathbf y_k ,\mathbf z).
  \]
\end{definition}
In this section, we will see that under mild regularity assumptions on the summands $T_k$, we can obtain a tighter bound on the Type~I error of adaptive-LPT. In particular, this structure enables us to approximate the p-value in~\eqref{eq:pval_adaptive_LPT} by treating the permutation as acting independently across summands, thereby allowing the use of Berry--Esseen bounds to obtain a sharper control.
We define
  \[\textnormal{Range}_k = \max_{\sigma_k,\sigma'_k} T_k((\bX_k)_{\sigma_k},(\bY_k)_{\sigma'_k},\bZ) - \min_{\sigma_k,\sigma'_k} T_k((\bX_k)_{\sigma_k},(\bY_k)_{\sigma'_k},\bZ),\quad \text{and}\]\[\textnormal{Var}_k = \textnormal{Var}\big(T_k((\bX_k)_{\sigma_k},(\bY_k)_{\sigma'_k},\bZ)\mid \bX,\bY,\bZ\big),\]
  where the variance is computed with respect to the randomness of  $\sigma_k,\sigma'_k\overset{iid}{\sim} \text{Unif}(\cS_{m_k})$. 
With these definitions in place, we are now ready to state 
the next theorem that establishes a Type I error bound of adaptive-LPT, specifically for linearly decomposable statistics.

\begin{theorem}\label{thm:validity_linear}
 Under the setting of Theorem~\ref{thm:validity_main}, assume also that the test statistic $T$ is linearly decomposable. Then, the p-value in~\eqref{eq:pval_adaptive_LPT} satisfies
  \[
    \mathbb P \left( p\leq \alpha \mid \mathbf Z \right) \leq \alpha + 2\sqrt{\epsilon_n\delta_n},
  \]
  where we define
  \[
  \epsilon_n:=\EEst{\frac{\max_{k=1, \dots, K} \textnormal{Range}_k}{\sqrt{\sum_{k=1}^K \textnormal{Var}_k}}}{\bZ}.\] 
\end{theorem}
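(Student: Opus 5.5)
Write $T_k^{\mathrm{obs}}=T_k(\mathbf X_k,\mathbf Y_k,\mathbf Z)$ for the observed bin-level statistics. The plan is to refine Lemma~\ref{lem:initial_tv_bound} with a coupling argument that only loses in the regions where the permutation distribution is not smooth. Under the ideal law $\otimes_k P_k^*(\mathbf Z)$, which is what we would have if $(\mathbf X,\mathbf Y)$ were exchangeable within bins, the observed statistic and its permuted copies are exchangeable, so $p$ is exactly super-uniform. The true law $\otimes_k P_k(\mathbf Z)$ differs from it bin by bin, and the only question is how much the event $\{p\le\alpha\}$ can change.

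\textbf{Hybrid argument.} First reduce to comparing, conditional on $\mathbf Z$, two data configurations that differ in a single bin. Replace bins one at a time: draw bin $k$ from $P_k(\mathbf Z)$ versus from $P_k^*(\mathbf Z)$. By the maximal coupling, the two configurations coincide except on an event of probability $\mathrm{d_{TV}}(P_k,P_k^*)$. Summing these events over $k$ is Lemma~\ref{lem:initial_tv_bound}; by Lemma~\ref{lem:bin_tv_bound}, the resulting failure probabilities $\eta_k$ satisfy $\sum_k\eta_k\le\delta_n$.

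\textbf{Smoothness of the permutation distribution.} The key observation is that changing bin $k$ alters $T$ by at most $\mathrm{Range}_k$. Under $\sigma\sim\mathrm{Unif}(\Pi)$, the permuted statistic $T(\mathbf X_\sigma,\mathbf Y,\mathbf Z)=\sum_k T_k((\mathbf X_k)_{\sigma_k},\mathbf Y_k,\mathbf Z)$ is a sum of independent terms. The same holds for the ideal version with $(\sigma_k,\sigma_k')$, whose variance is $\sum_k\mathrm{Var}_k$. A Berry--Esseen (or anti-concentration) bound for sums of independent bounded variables then gives
\[
\mathbb P\Big(T\in[t,t+r]\Big)\lesssim \frac{r+\max_k\mathrm{Range}_k}{\sqrt{\sum_k\mathrm{Var}_k}} .
\]
So $p$, viewed as a function of the observed statistic, is Lipschitz in $T$ with constant of order $1/\sqrt{\sum_k\mathrm{Var}_k}$, up to additive errors of order $\max_k\mathrm{Range}_k/\sqrt{\sum_k\mathrm{Var}_k}$. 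Consequently a discrepancy confined to bin $k$ moves $p$ by at most about $\epsilon_n$ (after taking conditional expectations), rather than flipping $\{p\le\alpha\}$ outright.

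\textbf{Combining the two errors.} Interpolate between the exact permutation law and the ideal one. Use a threshold $\alpha\pm s$ together with the coupling: the event that the coupled statistics differ \emph{and} $p$ lies within $s$ of $\alpha$ has probability controlled both by $\delta_n$ and by the anti-concentration. This leads to a bound of the form $\mathbb P(p\le\alpha\mid\mathbf Z)\le \alpha+s+\delta_n\epsilon_n/s$, or equivalently one obtained via Cauchy--Schwarz by pairing the indicator of a coupling failure with a smoothness increment. Optimizing at $s=\sqrt{\epsilon_n\delta_n}$ gives $\alpha+2\sqrt{\epsilon_n\delta_n}$.

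\textbf{Main obstacle.} The hardest step is the last one: deriving the bound $\delta_n\epsilon_n/s$ rigorously. The changes in $p$ induced by the coupling must be shown to be small in an averaged sense, with the Berry--Esseen constant absorbed into the definition of $\epsilon_n$. My first attempt would be a Markov-type argument. The coupled $p$-values $p$ and $p^*$ satisfy $\mathbb E|p-p^*|\lesssim \sum_k \eta_k\cdot\epsilon_n\le\delta_n\epsilon_n$, since a change in one bin shifts the statistic by at most $\mathrm{Range}_k$ and hence shifts $p$ by at most $\mathrm{Range}_k/\sqrt{\sum_k\mathrm{Var}_k}$. Then $\mathbb P(p\le\alpha)\le\mathbb P(p^*\le\alpha+s)+\mathbb P(|p-p^*|>s)\le\alpha+s+\delta_n\epsilon_n/s$, and optimizing over $s$ finishes the proof.
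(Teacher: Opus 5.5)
There is a genuine gap, and it is at the step you flagged as hardest. Your two load-bearing claims are that a coupling failure in bin $k$ moves $T$ by at most $\mathrm{Range}_k$, and that $p$ is nearly a Lipschitz function of the observed statistic. Both presuppose that the two coupled configurations share the same permutation null distribution. That holds only if the coupled bin-$k$ data are rearrangements of one another, i.e.\ they have the same multiset of $X$-values and the same multiset of $Y$-values.

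A generic maximal coupling of $P_k(\bZ)$ and $P_k^*(\bZ)$ does not give this. On the failure event the two draws are unrelated, so both $T_k$ and the reference distribution that $p$ is computed against change. $\mathrm{Range}_k$ is then not even well defined (whose data?), and $|p-p^*|$ can be of order one.

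The missing idea is to condition on $\cF$, the $\sigma$-algebra generated by $\bZ$ and the within-bin empirical distributions of $\mathbf X$ and of $\mathbf Y$. Given $\cF$:
\begin{itemize}
\item $P_k(\cF)$ and $P_k^*(\cF)$ are laws on pairings of fixed multisets, and $P_k^*(\cF)$ is uniform over them;
\item $\mathrm{Range}_k$ and $\mathrm{Var}_k$ are $\cF$-measurable;
\item $p^*$ is exactly super-uniform, and $p$ is a fixed nonincreasing function of $\sum_k T_k$.
\end{itemize}
Because the multisets have the same law under $P_k(\bZ)$ and $P_k^*(\bZ)$, nothing is lost in the TV budget: $\EEst{\dtv(P_k(\cF),P_k^*(\cF))}{\bZ}=\dtv(P_k(\bZ),P_k^*(\bZ))$. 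Equivalently, you need a maximal coupling that first draws the multisets and then couples the conditionals.

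Two further steps then have to be redone. First, $\epsilon(\cF):=\max_k\mathrm{Range}_k/\sqrt{\sum_k\mathrm{Var}_k}$ is data-dependent and correlated with the failure probability $\eta(\cF):=\sum_k\dtv(P_k(\cF),P_k^*(\cF))$. So $\mathbb E|p-p^*|\lesssim\epsilon_n\delta_n$ does not follow. You must optimize the threshold conditionally on $\cF$, get a bound in $\sqrt{\epsilon(\cF)\eta(\cF)}$, and only then apply Cauchy--Schwarz: $\EEst{\sqrt{\epsilon(\cF)\eta(\cF)}}{\bZ}\le\sqrt{\epsilon_n\delta_n}$.

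Second, the constant cannot be absorbed into $\epsilon_n$, which the statement fixes. The permutation law is discrete, so Berry--Esseen makes $p$ Lipschitz in $T$ only up to an additive $1.12\,\epsilon(\cF)$. Feeding $\mathbb E[|p-p^*|\mid\cF]\le(1.12+1/\sqrt{2\pi})\,\epsilon(\cF)\eta(\cF)$ into your Markov step gives a constant of about $2.47$, not $2$. To reach $2$:
\begin{itemize}
\item Shift the threshold by the additive term, then apply Markov to $|T-T^*|$ using $\mathbb E[|T-T^*|\mid\cF]\le\eta(\cF)\max_k\mathrm{Range}_k$.
\item This yields $\alpha+1.12\,\epsilon_n+\frac{2}{\sqrt[4]{2\pi}}\sqrt{\epsilon_n\delta_n}$.
\item Then split cases. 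If $\delta_n\le4\epsilon_n$, Theorem~\ref{thm:validity_main} already gives $\alpha+\delta_n\le\alpha+2\sqrt{\epsilon_n\delta_n}$. Otherwise $1.12\,\epsilon_n\le0.56\sqrt{\epsilon_n\delta_n}$, so the total excess is at most $1.83\sqrt{\epsilon_n\delta_n}$.
\end{itemize}
Once repaired this way, your argument is essentially the paper's. Its Lemma~\ref{lem:normal_ks_dw} performs the same Markov-with-free-threshold step on the scale of the statistic, via a Wasserstein coupling with $\dw\le\sum_k\dtv(A_k,B_k)\,\mathrm{Range}(a_k)$, all conditional on $\cF$.
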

Combining Theorems \ref{thm:validity_main} and \ref{thm:validity_linear}, we get that for any linearly decomposable statistics, the Type I error of adaptive-LPT, conditional on $\bZ$ satisfies:
\[
  \mathbb P(p \leq \alpha \mid \mathbf Z) \leq \alpha + \min \{\delta_n, 2\sqrt{ \epsilon_n \delta_n} \}.
\]

This updated bound greatly strengthens the result of Theorem~\ref{thm:validity_main}:  the p-value $p$ is asymptotically valid as long as \emph{either} $\delta_n\to 0$ or $\epsilon_n\delta_n\to 0$. Furthermore, for most practical choices of $T$, the new error term $\epsilon_n$ is expected to scale as $\mathrm{O}_P(K^{-1/2})$. (For instance, in the betting example of Section~\ref{sec:examples_Sec2}, suppose  we assume $T_k$ takes values in $\{\pm 1\}$, i.e., no ``ties''. We then have $\textnormal{Range}_k=2$ and  $\textnormal{Var}_k = 1$, so that $\epsilon_n = \frac{2}{\sqrt{K}} = \frac{2}{\sqrt{\lfloor n/2\rfloor}} = \mathrm{O}_P(n^{-1/2})$.) In such settings, therefore, asymptotic validity holds as long as $\delta_n = \mathrm{o}_P(K^{1/2})$, rather than the stricter requirement $\delta_n = \mathrm{o}_P(1)$ if we rely only on Theorem~\ref{thm:validity_main}. 

However, since $\epsilon_n$ may not necessarily be small (e.g., if $T_k(\bX_k,\bY_k,\bZ)$ is heavy-tailed, and its range is large relative to its variance), the result of Theorem~\ref{thm:validity_main} may nonetheless be more favorable in certain scenarios, even for linearly decomposable test statistics.

\subsubsection{Implications for bin size}\label{sec:Thm3_bin_size}
To illustrate the benefit of this refined guarantee, we now consider its implications for bin size: how large of a bin can we afford to use, without losing asymptotic validity? 

In this section, for simplicity, we consider the setting $\mathcal Z = [0,1]$, with bins $B_1,\dots,B_K$ defined via the prespecified partition $\mathcal Z = [0,\frac{1}{K}]\cup (\frac{1}{K},\frac{2}{K}]\cup\dots\cup(\frac{K-1}{K},1]$. Recalling the notation of Section~\ref{sec:validity_smooth}, the diameter of the bins is therefore bounded by $h_n = \frac{1}{K}$. We will assume that $\epsilon_n = O(K^{-1/2})$, as discussed above.

Consider choosing the number of bins as $K\propto n^\nu$. What exponent $\nu$ should we choose to maintain asymptotic validity? Under the Lipschitz smoothness condition of Definition~\ref{defn:TV_ball}, following the same calculations as in Section~\ref{sec:validity_smooth}, we have 
\[\delta_n =O\left( \frac{nL^2}{K^2}\right).\]
Therefore, Theorem~\ref{thm:validity_main} yields asymptotic validity as long as we choose $K\propto n^\nu$ for some $\nu>\frac{1}{2}$. In particular, setting $K\propto n^{1/2}$ (a common choice in practice) does not lead to any guarantee with this theorem.

On the other hand, by Theorem~\ref{thm:validity_linear}, the excess Type I error can also be bounded by the term
\[\sqrt{\epsilon_n\delta_n} = \sqrt{O(K^{-1/2}) \cdot O\left(\frac{nL^2}{K^2}\right)} = O\left( \frac{n^{1/2}L}{K^{5/4}}\right).\]
This means that asymptotic validity is ensured when $K\propto n^\nu$ for $\nu>\frac{2}{5}$, which allows for wider bins; in particular, $K\propto n^{1/2}$ yields asymptotic validity. 
Comparing the two bounds, we see that Theorem~\ref{thm:validity_linear} leads to tighter control than Theorem~\ref{thm:validity_main} (that is, $\sqrt{\epsilon_n\delta_n}\ll \delta_n$) whenever $\nu< \frac{2}{3}$, while when $\nu>\frac{2}{3}$ the original theorem gives the better bound.
These observations are illustrated in Figure~\ref{fig:bin_width}.

\begin{figure}[t]
  \centering
  \fbox{
  \begin{tikzpicture}[x=12.5cm]
    \draw[-]  (0,0) -- (1.0,0) node[right=5pt] {\# bins $K$};
    
    \draw[->] (0.67,0.4) -- (0.78,0.4) node[midway,above=2pt] {\scriptsize \quad $
    \delta_n\ll\sqrt{\epsilon_n \delta_n} $};

    \draw[->] (0.65,0.4) -- (0.54,0.4) node[midway,above=2pt] {\scriptsize $\sqrt{\epsilon_n \delta_n} \ll \delta_n$ \quad };

    \draw[decorate,decoration={brace,amplitude=5pt}]
    (0.01,.2) -- (0.39,.2) node[midway,above=5pt] {no guarantee};

    \draw[decorate,decoration={brace,amplitude=5pt}]
    (0.99,-1) -- (0.51,-1) node[midway,below=5pt] {asymptotic validity via Theorem~\ref{thm:validity_main}};
    
    \draw[decorate,decoration={brace,amplitude=5pt}]
    (0.99,-2) -- (0.41,-2) node[midway,below=5pt] {\begin{tabular}{c} asymptotic validity via Theorem~\ref{thm:validity_linear} \\ (for linearly decomposable statistics) \end{tabular}};

    \draw (0.4,0.1) -- (0.4,-0.1) node[below=2pt] {$n^{2/5}$};
    \draw (0.5,0.1) -- (0.5,-0.1) node[below=2pt] {$n^{1/2}$};
    \draw (0.66,0.1) -- (0.66,-0.1) node[below=2pt] {$n^{2/3}$};
    \draw (1,0.1) -- (1,-0.1) node[below=2pt] {$n$};
    \draw (0,0.1) -- (0,-0.1) node[below=2pt] {$1$};
  \end{tikzpicture}}
  \caption{An illustration of the discussion of Section~\ref{sec:Thm3_bin_size}, illustrating the range of $K$ (the number of bins) for which our theorems guarantee asymptotic validity, in the regime $\epsilon_n=O(K^{-1/2})$.}
\label{fig:bin_width}
\end{figure}

\subsubsection{Implications for smoothness}\label{sec:Thm3_smoothness}

In the last section, we studied the Type I error control for adaptive-LPT under
Lipschitz smoothness of the conditional distributions, which may be too strong in practice. Thus, we now
ask the converse question: for a fixed bin size (i.e., $K \propto n$), how do different smoothness regimes affect Type~I error control? In other words, how does
\[
\sup_{P\in \mathcal{P}\cap H_0^{\mathrm{CI}}}\P_{(\bX,\bY,\bZ)\sim P^n}(p \leq \alpha \mid \mathbf Z)
\]
vary across different choices of classes $\mathcal{P}$ encoding different levels of smoothness?

In this section, we restrict attention to $\mathcal{Z}=[0,1]$ and study the instance of adaptive-LPT with $K=\lfloor n/2\rfloor$ bins, where each bin has size $m_k=2$ and is of the form $B_k=\{i_k,j_k\}$, with $i_k$ and $j_k$ denoting neighboring points in the $\mathcal{Z}$ space. In this case, $\delta_n$ simplifies as in~\eqref{eq:delta_n_nnpt_case} and
\[
  \zeta_n := \frac{\delta_n}{K} = \frac{4}{K} \sum_{k=1}^K \mathrm{d_{TV}}(P_{X \mid Z_{i_k}}, P_{X \mid Z_{j_k}}) \mathrm{d_{TV}}(P_{Y \mid Z_{i_k}}, P_{Y \mid Z_{j_k}})
\]
can be interpreted as the deviation in a typical bin from the idealized setting in which the conditional distributions $P_{X \mid Z}$ and $P_{Y \mid Z}$ within each bin are identical; we work with $\zeta_n$ rather than $\delta_n$ to emphasize that we only require bins to be well-behaved ``on average''. Now any bound on $\zeta_n$ induces a class of distributions $\mathcal{P}$, with larger admissible values of $\zeta_n$ corresponding to broader classes $\mathcal{P}$ farther from the ideal.

Note that, under Lipschitz smoothness of the conditional distributions as in the previous section, if $\mathcal Z = [0,1]$ then we would expect to have $\zeta_n =O(L^2/n^2)$ for Lipschitz constant $L$ (since the diameter of each bin should be $|Z_{i_k}-Z_{j_k}| \propto n^{-1}$, on average over all bins). But under more degenerate distributions, we might expect a substantially larger $\zeta_n$.

The results of Theorem~\ref{thm:validity_main} then ensure asymptotic validity as long as $\zeta_n =o(n^{-1})$. However, in the case of a linearly decomposable test statistic, if we again assume $\epsilon_n=O(K^{-1/2})=O(n^{-1/2})$ as before, then the guarantee of Theorem~\ref{thm:validity_linear} yields asymptotic validity of the test in a wider regime, $\zeta_n = o(n^{-1/2})$, since $\sqrt{\epsilon_n\delta_n} = O(n^{1/4}\zeta_n^{1/2})$. More generally, Theorem~\ref{thm:validity_linear} yields a better bound than Theorem~\ref{thm:validity_main} when $\zeta_n \gg n^{-3/2}$. These observations are illustrated in Figure~\ref{fig:smoothness}.

We emphasize that in this example (and in general), no smoothness condition (such as Lipschitzness) need hold uniformly over the entire space of conditional distributions $P_{X \mid Z}$ and $P_{Y \mid Z}$; we merely need the corresponding distributions to be sufficiently well-behaved \textit{on average}, since we only need to have control over the average total variation distance term $\zeta_n$.

\begin{figure}[t]
  \centering
  \fbox{
  \begin{tikzpicture}[x=6cm]
    \draw[{Latex}-{Latex}]  (-2.0,0) -- (0.0,0) node[right] {$\zeta_n$};
    
    \draw[->] (-1.54,0.4) -- (-1.7,0.4) node[midway,above=2pt] {\scriptsize $
    \delta_n\ll\sqrt{\epsilon_n \delta_n} $ \quad\quad\quad };
    \draw[->] (-1.46,0.4) -- (-1.3,0.4) node[midway,above=2pt] {\scriptsize \quad\quad $\sqrt{\epsilon_n \delta_n} \ll \delta_n$ };

    \draw[decorate,decoration={brace,amplitude=5pt}]
    (-0.49,.2) -- (-0.01,.2) node[midway,above=5pt] {no guarantee};

    \draw[decorate,decoration={brace,amplitude=5pt}]
    (-1.01,-1) -- (-1.99,-1) node[midway,below=5pt] {asymptotic validity via Theorem~\ref{thm:validity_main}};
    \draw[decorate,decoration={brace,amplitude=5pt}]
    (-0.51,-2) -- (-1.99,-2) node[midway,below=5pt] {\begin{tabular}{c} asymptotic validity via Theorem~\ref{thm:validity_linear} \\ (for linearly decomposable statistics) \end{tabular}};

    \draw (-0.5,0.1) -- (-0.5,-0.1) node[below=2pt] {$n^{-\frac{1}{2}}$};
    \draw (-1.0,0.1) -- (-1.0,-0.1) node[below=2pt] {$n^{-1}$};
    \draw (-1.5,0.1) -- (-1.5,-0.1) node[below=2pt] {$n^{-\frac{3}{2}}$};
  \end{tikzpicture}}
  \caption{An illustration of the discussion of Section~\ref{sec:Thm3_smoothness}, illustrating the range of $\zeta_n$ (capturing the smoothness of the conditional distributions) for which our theorems guarantee asymptotic validity, in the regime $\epsilon_n=O(K^{-1/2})$. On the above axis, moving to the right expands the class of distributions under consideration away from the within-bin homogeneity assumption, whereas moving to the left shrinks that class towards distributions which have exactly identical $P_{X \mid Z}$ and $P_{Y \mid Z}$ within bins. In particular, a Lipschitz smoothness assumption would guarantee $\zeta_n = \mathrm{O}(n^{-2})$.}
\label{fig:smoothness}\end{figure}

\subsection{Revisiting hardness of conditional independence}\label{sec:negative}
In the previous sections, we demonstrated how different orders of smoothness assumptions on the conditional distributions $P_{X\mid Z}$ and $P_{Y\mid Z}$ yields asymptotic validity of the adaptive-LPT test via Theorems~\ref{thm:validity_main} and~\ref{thm:validity_linear}. The core task there reduces to controlling the TV distance $\dtv\!\left(P_{X\mid Z=z},\,P_{X\mid Z=z'}\right)$ (and similarly with $X$ replaced by $Y$) for pairs $(z,z')$ that are close. While smoothness assumptions enable control of such TV terms, a natural question is how far this can be pushed: can one obtain sufficiently general control of these distances to ensure asymptotic validity under only mild regularity conditions on the joint model? 

Given the hardness result of \citet{shah2020hardness} (as reviewed in Section~\ref{sec:hardness_CI}), we know that this cannot be the case: if we could bound $\delta_n=o(1)$ uniformly over all joint distributions $P\in H_0^{\mathrm{CI}}$, then this would contradict the impossibility of distribution-free conditional independence testing. In this section, we explore this tension further, to develop a better understanding of how our results align with the hardness result.

As in earlier sections, we consider bins of size $m_k=2$, and suppose $\mathcal Z =\mathbb R$, so that our bins are constructed by taking neighboring pairs of $Z$ values. In this setting, $\delta_n$ is four times a sum of $\lfloor n/2\rfloor$ terms of the form
\[\textnormal{d}_{\textnormal{TV}}(P_{X \mid Z_i}, P_{X \mid Z_{N(i), n}}) \cdot \textnormal{d}_{\textnormal{TV}}(P_{Y \mid Z_i}, P_{Y \mid Z_{N(i), n}}),\]
where $Z_{N(i), n}$ denotes the right-nearest-neighbor to $Z_i$ within the observed set of $Z$ values (i.e., $\{Z_j\}_{j\in[n]\setminus \{i\}}$). Consequently, with large sample size, we expect $\delta_n\approx \mathbb{E}[\delta_n]$, and that 
\[\mathbb E[\delta_n] \propto  n \cdot \mathbb{E}\left[\textnormal{d}_{\textnormal{TV}}(P_{X \mid Z_1}, P_{X \mid Z_{N(1), n}}) \cdot \textnormal{d}_{\textnormal{TV}}(P_{Y \mid Z_1}, P_{Y \mid Z_{N(1), n}})\right].\]

The following result shows that the expectation on the right hand side must be vanishing for any joint distribution $P$; however, this convergence may be at an arbitrarily slow rate.

\begin{theorem} \label{thm:vanishing_tv}
  Let $P$ be any joint distribution on $\mathcal X \times\mathcal Y\times \mathbb R$, such that the conditional distributions $P_{X \mid Z = z}$ admit densities with respect to some common $\sigma$-finite measure on $\mathcal X$, and similarly for $P_{Y\mid Z=z}$. Let $(X_1,Y_1,Z_1),\dots,(X_n,Y_n,Z_n)\overset{iid}\sim P$. Then, if $Z_{N(1), n}$ denotes the right nearest neighbor to $Z_1$ among $Z_2,\dots,Z_n$,
  \[
    \lim_{n \to \infty }\mathbb E [ \textnormal{d}_{\textnormal{TV}}(P_{X \mid Z_1}, P_{X \mid Z_{N(1), n}}) \cdot \textnormal{d}_{\textnormal{TV}}(P_{Y \mid Z_1}, P_{Y \mid Z_{N(1), n}}) ] = 0.
  \]
  On the other hand, for any sequence $(a_n)_{n\ge 1}$ such that $a_n \to 0$ as $n\to\infty$, there exists a joint distribution $P$ such that
    \[
    \mathbb E [ \textnormal{d}_{\textnormal{TV}}(P_{X \mid Z_1}, P_{X \mid Z_{N(1), n}}) \cdot  \textnormal{d}_{\textnormal{TV}}(P_{Y\mid Z_1}, P_{Y \mid Z_{N(1), n}}) ]\geq a_n
  \]
  for sufficiently large $n$.
\end{theorem}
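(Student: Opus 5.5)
The first claim reduces to a statement about $X$ alone. Both TV factors lie in $[0,1]$, so the product is at most $\dtv(P_{X\mid Z_1},P_{X\mid Z_{N(1),n}})$, and it suffices to show that this expectation vanishes. (On the event that $Z_1$ is the maximum, where no right neighbor exists, we use any convention for $Z_{N(1),n}$; this event has probability at most $1/n$. Ties are broken by a uniformly random index, so that the right-neighbor map is injective.)

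Write $g(x,z)$ for a jointly measurable version of the conditional density of $X$ given $Z=z$ with respect to $\mu$. Then $\dtv(P_{X\mid z},P_{X\mid z'})=\tfrac12\int|g(x,z)-g(x,z')|\,d\mu(x)$. Since $\mu\times P_Z$ is $\sigma$-finite and measurable rectangles generate the product $\sigma$-algebra, for every $\varepsilon>0$ there is a simple function $g_\varepsilon(x,z)=\sum_{j=1}^J c_j\One{x\in A_j}\One{z\in I_j}$, with $I_j$ finite intervals, such that $\iint|g-g_\varepsilon|\,d\mu\,dP_Z<\varepsilon$. The triangle inequality then splits $2\,\E[\dtv(P_{X\mid Z_1},P_{X\mid Z_{N(1),n}})]$ into three terms:
\[
\E\int|g(\cdot,Z_1)-g_\varepsilon(\cdot,Z_1)|\,d\mu,\qquad \E\int|g_\varepsilon(\cdot,Z_1)-g_\varepsilon(\cdot,Z_{N(1),n})|\,d\mu,\qquad \E\int|g_\varepsilon(\cdot,Z_{N(1),n})-g(\cdot,Z_{N(1),n})|\,d\mu.
\]

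The first term is below $\varepsilon$. For the third, we need a domination lemma for the law of the neighbor. By exchangeability, for any $h\ge0$,
\[
\E[h(Z_{N(1),n})]=\frac1n\,\E\Big[\sum_i h(Z_{N(i),n})\Big]\le \frac1n\,\E\Big[\sum_j h(Z_j)\Big]=\E[h(Z)].
\]
The inequality holds because the right-neighbor map $i\mapsto N(i)$ is injective. Hence the third term is also at most $\varepsilon$.

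The middle term is bounded by $2\max_j|c_j|\,\mu(\cup_j A_j)$ times the probability that some endpoint $t$ of the $I_j$ separates $Z_1$ from $Z_{N(1),n}$. For a fixed $t$, this probability equals $\PP{Z_1<t\le Z_{N(1),n}}\le \E\big[(1-P_Z((Z_1,t)))^{n-1}\One{Z_1<t}\big]$, which tends to $0$ by dominated convergence. The exception is the event that $Z_1$ falls in a $P_Z$-null gap just left of $t$. To remove this, I would first move each endpoint to the left end of any such gap; this changes $g_\varepsilon$ only on a $P_Z$-null set, so it is harmless. Letting $n\to\infty$ and then $\varepsilon\to0$ proves the first claim. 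I expect this endpoint technicality, together with the neighbor-domination lemma, to be the main care points.

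For the second claim, I would take $X=Y\in\{0,1\}$ given $Z$, with $X\mid Z=z\sim\delta_{s(z)}$ and $s=\One{z\in A}$; densities are with respect to counting measure. Both TV factors then equal $\One{s(Z_1)\ne s(Z_{N(1),n})}$, so the target expectation is exactly $\PP{s(Z_1)\ne s(Z_{N(1),n})}$.

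Choose $b_k\downarrow0$, for instance $b_k=2^{-k}$, and set $w_k=b_k-b_{k+1}$. Let $P_Z=\sum_k w_k\,\textnormal{Unif}(J_k)$, where the $J_k$ are disjoint unit intervals placed far apart, and set $b_1=1$ so that the $w_k$ sum to one. Choose increasing $N_k$ such that $\sup_{n>N_{k-1}}a_n\le b_k/8$. Then partition each $J_k$ into alternating $A$/non-$A$ cells of width $r_k$, where $r_k\ll 1/(N_k)^2$. For $n\le N_k$, a point $Z_1\in J_k$ has its right neighbor in $J_k$ (up to edge effects of probability $o(1)$), and the gap between them is of order at least $1/(n w_k)$ with probability near one. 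With cells this tiny, the parity of the cell containing the neighbor is essentially a fair coin, so the crossing probability is at least $1/4$.

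Hence for $n\in(N_{k-1},N_k]$ the expectation is at least $\tfrac14\sum_{j\ge k}w_j=b_k/4\ge 2a_n$, once $n$ is large enough to absorb the edge terms. The quantitative coupling in this last step (cell width versus spacing, uniformly over $j\ge k$) needs a short but careful estimate.
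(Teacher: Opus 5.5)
Your proposal is correct in outline, and in both halves it takes a genuinely different route from the paper.

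\textbf{First claim.} The paper argues pointwise in $x$. It writes the TV distance as $\int (p(x\mid Z_1)-p(x\mid Z_{N(1),n}))_+\,\mathsf{d}\mu(x)$ and invokes Chatterjee's Lemma 9.5: $f(Z_{N(1),n})\to f(Z_1)$ in probability for every measurable $f$. It adds uniform integrability from his Lemma 9.4, $\E f(Z_{N(1),n})\le 2\E f(Z_1)$, to get $L^1$ convergence for each fixed $x$. It then integrates in $x$ by dominated convergence, with dominating function $p(x)$.

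You instead do three things:
\begin{itemize}
\item approximate the joint density in $L^1(\mu\otimes P_Z)$ by a rectangle step function;
\item transfer the approximation error to the neighbour with your exchangeability/injectivity domination lemma, which is your version of Lemma 9.4 (it is exact with a cyclic convention at the maximum, and holds with a harmless factor $2$ under the convention $N(i)=i$ there);
\item reduce the rest to the event that the neighbour straddles one of finitely many cut points.
\end{itemize}
In effect you reprove the special case of Lemma 9.5 that is needed, with step functions in place of Lusin's theorem. This makes the argument self-contained and removes the pointwise-in-$x$ and uniform-integrability bookkeeping, at the price of treating the cut points by hand.

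\textbf{Second claim.} The paper keeps $Z\sim\mathrm{Unif}[0,1]$ and packs all scales into one fat Cantor set, with gap lengths tuned so that $\PP{X_1\neq X_{N(1)}}\gtrsim\frac12-R_{\lfloor\log_2 n\rfloor}$. You decouple mass from resolution: mass $w_k$ sits on a block striped at a scale $r_k$ far below the sample spacing for all $n\le N_k$, so the rate is read off directly from the tail mass $b_k$. This is more transparent, at the cost of a less natural marginal for $Z$.

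\textbf{Two steps need repair when you write the details.}

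First, your bound $\E[(1-P_Z((Z_1,t)))^{n-1}\One{Z_1<t}]$ does not vanish when the left end $s$ of a $P_Z$-null gap $(s,t)$ is an atom (e.g.\ $P_Z=\frac12\delta_0+\frac12\mathrm{Unif}[1,2]$ with $t=1$). Moving the endpoint to $s$ changes $g_\varepsilon$ only on a null set if the new interval excludes $s$. The relevant event then becomes $\{Z_1\le s<Z_{N(1),n}\}$. Its probability does still vanish, but because of ties: under random tie-breaking, $Z_1$ is rarely the last member of its tie group at an atom.

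Second, in the construction, the claim that $Z_1\in J_j$ has its right neighbour in $J_j$ fails for blocks with $nw_j\lesssim1$, and you sum over all $j\ge k$. The conclusion survives. When $Z_1$ is the last point of $J_j$, its position is the maximum of the uniform points in $J_j$, whose density varies on scale $\gtrsim1/n\gg r_j$. So its stripe parity is nearly a fair coin, independent of where the neighbour lands. You must also discount the event that $Z_1$ is the overall maximum, which has probability up to $1/n$ depending on how the blocks are ordered. To absorb this, add $N_{k-1}\gtrsim1/b_k$ to your choice of $N_k$.
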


Note that the assumption of a common density for all conditionals $P_{X\mid Z}$ includes both the case where $X \mid Z$ is continuously distributed, or when $X \mid Z$ is discrete with countable support (shared over all values of $Z$). 

The first part of this result implies that we must have $\mathbb E[\delta_n] = o(n)$ for any joint distribution $P$. However, the hardness result of \citet{shah2020hardness} is not contradicted: the second part of the Theorem~\ref{thm:vanishing_tv} cautions us that any faster rate of convergence cannot hold universally. In particular, without further knowledge or conditions on the joint distribution $P$, we cannot assume that $\delta_n=o(1)$ (which is needed for the result of Theorem~\ref{thm:validity_main} to ensure asymptotic validity, as discussed in Section~\ref{sec:validity_general}), or $\delta_n=o(n^{1/2})$ (as required in Theorem~\ref{thm:validity_linear}). Consequently, no truly assumption-free test of conditional independence is possible within the LPT framework.

\section{Power}\label{sec:power}

This section develops a detailed power analysis of LPT tests. Along the way, our analysis will guide a prospective analyst in the procedure of choosing a test statistic to maximize the power of the LPT. The core contribution of this section can be split into three parts:
\begin{enumerate}
\item In Section \ref{sec:oracle_power_orc} we investigate the \textbf{oracle power} for the hypothesis test of interest, via a log-likelihood ratio test that achieves optimal power while controlling the Type~I error at pre-specified level $\alpha$.
\item Next in Sections \ref{sec:power_NNPT} and \ref{sec:linear_model}, we establish that \textbf{LPT can achieve near-oracle power} with a well chosen statistic. In particular, even with bins containing only $m=2$ points, the power of LPT is within a constant factor (in terms of effective signal strength) of the oracle.
\item Finally, in Section~\ref{sec:linear_model} we take a closer look at the role of bin size in the specific setting of a linear confounder, and give a precise \textbf{characterization of power for different choices of bin size}. In particular, as expected, we see that power is maximized by choosing bins that contain equal numbers of data points, and by choosing larger bins (although, as we have seen in the previous section, there is a tradeoff---choosing larger bins may come at the cost of losing Type I error control).
\end{enumerate}

\subsection{Oracle Power}\label{sec:oracle_power_orc}

As a benchmark, in the present section we first construct and analyze the oracle likelihood–ratio test that has access to the joint conditional model \(P_{X,Y\mid Z}\) and to the corresponding marginals \(P_{X\mid Z}\) and  \(P_{Y\mid Z}\); the resulting oracle power serves as a reference to our power throughout this section, and will serve as a point of comparison choosing bin statistics for the LPT.

In particular, for fixed $z\in \mathcal{Z}$, we denote the log-likelihood ratio function between $P_{X,Y\mid Z=z}$ and $P_{X\mid Z=z}\times P_{Y\mid Z=z}$ as
\[
\textnormal{LLR}(x,y\mid z):=\log \biggl(\frac{\mathsf{d}P_{X,Y\mid Z=z}(x,y)}{\mathsf{d}P_{X\mid Z=z}(x)\times \mathsf{d}P_{Y\mid Z=z}(y)}\biggr).
\]
For notational convenience, we further write $\textnormal{LLR}^{(i)}(x,y)$ to denote $\textnormal{LLR}(x,y\,|\,Z_i)$, the log-likelihood ratio corresponding to $Z_i$.

By the Neyman--Pearson lemma, any test that controls the Type~I error asymptotically at level $\alpha$ under $H_0^\prime: (X_i, Y_i) \sim P_{X \mid Z_i} \times P_{Y \mid Z_i}$ and achieves optimal power against $H_1^\prime: (X_i, Y_i) \sim P_{X, Y \mid Z_i}$ takes the form
\begin{equation}\label{eq:ORC_test}
    \phi_{\textnormal{ORC}}:=\One{\sum_{i=1}^n \textnormal{LLR}^{(i)}(X_i,Y_i)\geq \tau_{\alpha}},
\end{equation}
for an appropriate threshold $\tau_{\alpha}$. In the following theorem, we provide an asymptotic approximation to the power of this oracle test $\phi_{\textnormal{ORC}}$.

This approximation is primarily characterized by the following quantities. We first define the symmetrized KL divergence between $P_{X,Y\mid Z_i}$ and $P_{X\mid Z_i}\times P_{Y\mid Z_i}$, often referred to as the Jeffreys divergence:
\begin{align*}
    \overline{\mathrm{KL}}_{(i)}:=&~\mathrm{KL}(P_{X,Y\mid Z_i}\| P_{X\mid Z_i}\times P_{Y\mid Z_i}) + \mathrm{KL}(P_{X\mid Z_i}\times P_{Y\mid Z_i} \| P_{X,Y\mid Z_i})\\
  =&~\E_{P_{X,Y\mid Z_i}}[\textnormal{LLR}^{(i)}(X,Y)]-\E_{P_{X\mid Z_i}\times P_{Y\mid Z_i}}[\textnormal{LLR}^{(i)}(X,Y)],
\end{align*}
Secondly, we define the following measure
\begin{align*}
    \mathrm{V}_{\textnormal{KL},(i)}&:=\textnormal{Var}_{(X,Y)\sim P_{X,Y\mid Z_i}}(\textnormal{LLR}^{(i)}(X,Y)),
\end{align*}
for each $i\in [n]$. This is known as \emph{varentropy} in the information theory \citep{di2021stochastic}.
Finally, we define
\[
\mathrm{SNR_{ORC}}:=\frac{\sum_{i=1}^n \overline{\mathrm{KL}}_{(i)}}{\bigl(\sum_{i=1}^n\textnormal{V}_{\textnormal{KL},(i)}\bigr)^{1/2}},
\]
to be the \emph{signal-to-noise ratio} for the problem at hand. With the above notation, we are finally able to state the theorem.

\begin{theorem}[Informal]\label{thm:power_of_orc}
  Under suitable conditions on the log-likelihood ratio, the oracle likelihood ratio  test~\eqref{eq:ORC_test} satisfies
    \[
      \mathbb E[\phi_{\mathrm{ORC}} \mid \mathbf Z] = \Phi \Big(\Phi^{-1}(\alpha) +  \mathrm{SNR_{ORC}}\Big) + \mathrm{o}_P(1).
    \]
\end{theorem}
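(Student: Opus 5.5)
Conditional on $\mathbf Z$, the summands $L_i := \textnormal{LLR}^{(i)}(X_i,Y_i)$ are independent but not identically distributed. The plan is therefore a Lindeberg/Lyapunov-type central limit argument applied twice: once under the product null $H_0'$, to calibrate $\tau_\alpha$, and once under the true law $P_{X,Y\mid Z_i}$, to compute the power. Write $\mu_i^1 = \E_{P_{X,Y\mid Z_i}}[L_i]$ and $\mu_i^0 = \E_{P_{X\mid Z_i}\times P_{Y\mid Z_i}}[L_i]$, so that $\overline{\mathrm{KL}}_{(i)} = \mu_i^1-\mu_i^0$. Let $V_i^1 = \mathrm{V}_{\mathrm{KL},(i)}$ and let $V_i^0$ denote the variance of $L_i$ under the product measure. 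The ``suitable conditions'' I would impose, and have the formal version state, are:
\begin{itemize}
\item[(a)] a Lyapunov condition, e.g.\ $\sum_i \E|L_i-\mu_i|^{3} = \mathrm o\bigl((\sum_i V_i)^{3/2}\bigr)$ under both measures;
\item[(b)] asymptotic equality of the null and alternative variances, $\sum_i V_i^0/\sum_i V_i^1 \to 1$ in probability;
\item[(c)] $\mathrm{SNR}_{\mathrm{ORC}} = \mathrm O_P(1)$, i.e.\ a local-alternative regime, since otherwise both sides tend to one trivially.
\end{itemize}

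\textbf{Step 1: calibration under the null.} By Lyapunov's CLT (or Berry--Esseen for independent non-identical summands), under $H_0'$ conditional on $\mathbf Z$,
\[
\frac{\sum_i L_i - \sum_i\mu_i^0}{(\sum_i V_i^0)^{1/2}} \to N(0,1).
\]
Hence the level-$\alpha$ threshold satisfies
\[
\tau_\alpha = \sum_i \mu_i^0 + \Phi^{-1}(1-\alpha)\Bigl(\sum_i V_i^0\Bigr)^{1/2} + \mathrm o\Bigl(\Bigl(\sum_i V_i^0\Bigr)^{1/2}\Bigr).
\]

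\textbf{Step 2: power under the alternative.} Apply the same CLT with the centering $\mu_i^1$ and variances $V_i^1$. The rejection event becomes
\[
\frac{\sum_i (L_i-\mu_i^1)}{(\sum_i V_i^1)^{1/2}} \ge \frac{\tau_\alpha - \sum_i\mu_i^1}{(\sum_i V_i^1)^{1/2}} = -\mathrm{SNR}_{\mathrm{ORC}} + \Phi^{-1}(1-\alpha)\Bigl(\frac{\sum_i V_i^0}{\sum_i V_i^1}\Bigr)^{1/2} + \mathrm o(1).
\]
Using (b) and the identity $\Phi^{-1}(1-\alpha) = -\Phi^{-1}(\alpha)$, the power is $1-\Phi(-\Phi^{-1}(\alpha)-\mathrm{SNR}_{\mathrm{ORC}}) + \mathrm o_P(1)$, which equals the claimed expression. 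Uniformity of the CLT approximation over the threshold comes from Pólya's theorem, or directly from the Berry--Esseen sup-norm bound. Since all the quantities depend on $\mathbf Z$, the error terms are random, which is why the result is stated with $\mathrm o_P(1)$.

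\textbf{Main obstacle.} The delicate step is justifying (b), i.e.\ that the null and alternative variances of the LLR agree to first order. I would derive it from a local expansion. Write $\mathsf dP_{X,Y\mid Z}/\mathsf d(P_{X\mid Z}\times P_{Y\mid Z}) = 1+h_i$ with $h_i$ small. Then $\mathrm{Var}_0(\log(1+h_i))$ and $\mathrm{Var}_1(\log(1+h_i))$ both equal $\E_0[h_i^2]+\mathrm O(\E_0|h_i|^3)$, and similarly $\overline{\mathrm{KL}}_{(i)} \approx \E_0[h_i^2]$. This contiguity-type (LAN) structure makes $\mathrm{SNR}_{\mathrm{ORC}} \approx (\sum_i \E_0 h_i^2)^{1/2}$. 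Alternatively, I would simply assume (b) together with (a) as the technical conditions deferred to the appendix. Either way, the remaining steps are routine.
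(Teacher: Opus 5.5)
Your proposal is correct and follows essentially the same route as the paper: two Berry--Esseen (Lyapunov) normal approximations conditional on $\mathbf Z$, one under the product null to bracket $\tau_\alpha$ and one under $P_{X,Y\mid Z_i}$ for the power, with your conditions (a) and (b) matching the paper's aggregate Lyapunov assumption and the variance-ratio branch of its local-alternative assumption. The differences are minor. The paper states an explicit finite-sample error and a more general conclusion, valid when the ratio is merely bounded, in which $\Phi^{-1}(\alpha)$ is multiplied by $(\sum_i V_i^0/\sum_i V_i^1)^{1/2}$; and your condition (c) is unnecessary once (b) holds.
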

In other words, the quantity $\mathrm{SNR_{ORC}}$ characterizes the extent to which the power can be greater than $\alpha$ (i.e., better than random) for the best possible test.

The final condition about local alternatives is not necessary to get meaningful estimates of power through our methods and proofs; however, it does simplify the corresponding expression for power and we use it for expositional clarity.
See Appendix \ref{sec:power_of_orc_proof} for a precise statement (Theorem \ref{cor:orc_power}) and additional discussion. We also establish explicit finite-sample bounds (Theorem \ref{thm:general_orc_power}) in the appendix.

The oracle power derived above should be viewed primarily as a benchmark for the best achievable performance. The oracle likelihood-ratio test is constructed for testing a simple null against a simple alternative with complete knowledge of the underlying conditional distributions, whereas a valid conditional independence test must control Type~I error uniformly over a substantially larger, essentially infinite-dimensional, null class. Consequently, one cannot expect practically implementable procedures including LPT to exactly (or even asymptotically) attain the oracle power. Nevertheless, as we show in the following sections, suitably designed LPT procedures can achieve power that closely matches the oracle benchmark up to constant factors.

\subsection{Power of LPT}\label{sec:power_NNPT}
We now study the power of the adaptive-LPT test, and we establish that it has the potential to match oracle power up to a constant factor if the test statistic is chosen appropriately. More concretely, since the power of the oracle test is expressed in terms of a signal-to-noise ratio, $\mathrm{SNR_{ORC}}$, in the results above, our aim is to show that the LPT can achieve a comparable signal-to-noise ratio. We will restrict our attention to the linearly decomposable test statistics, as in Definition~\ref{def:lin_decomp}.

From this point on, for ease of the presentation, we will assume that the test statistic has been constructed to be centered with respect to permutations, satisfying
\[
  \sum_{\sigma \in \mathcal S_{m_k}} T_k((\mathbf X_k)_{\sigma}, \mathbf Y_k, \mathbf Z) = 0
\]
for each $k$; we may assume this without loss of generality since, by replacing $T_k$ with its centered version $T_k(\bX_k, \bY_k, \bZ) - \frac{1}{m_k!} \sum_{\sigma \in \mathcal S_{m_k}} T_k((\bX_k)_{\sigma}, \bY_k, \bZ)$, we can see that the p-value for the LPT (as defined in~\eqref{eq:pval_adaptive_LPT}) remains unchanged.

We now define a signal-to-noise ratio, analogous to the quantity $\mathrm{SNR_{ORC}}$ studied in Section~\ref{sec:oracle_power_orc}. In particular, for any linearly decomposable test statistic $T$, define
\[
  \mathrm{SNR_{LPT}} = \frac{\sum_{k=1}^K \mathbb E[T_k \mid \mathbf Z]}{\left( \sum_{k=1}^K \mathrm{Var}(T_k \mid \mathbf Z) \right)^{1/2}}.
\]
The next result allows us to characterize the power of the adaptive-LPT in terms of this signal-to-noise ratio, and is analogous to the result of Theorem~\ref{thm:power_of_orc} for the oracle test. Since this result involves some lengthy technical conditions, here we state an informal version of the theorem. 

\begin{theorem}[Informal]\label{thm:power_of_nnpt}
  Let $\phi_{\mathrm{LPT}}\in\{0,1\}$ denote the outcome of the LPT. Let $T$ be a linearly decomposable test statistic, with components $T_1, \dots, T_K$ that satisfy suitable conditions (analogous to those of Theorem \ref{thm:power_of_orc}). Then, the LPT with test statistic $T$ satisfies 
    \[
      \mathbb E[\phi_{\mathrm{LPT}} \mid \mathbf Z] = \Phi \left(\Phi^{-1}(\alpha) + \mathrm{SNR_{LPT}}\right) + \mathrm{o}_P(1).
    \]
\end{theorem}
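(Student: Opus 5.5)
Conditional on $\mathbf Z$, I will treat the LPT as a comparison between two normal approximations: one for the observed statistic and one for the permutation distribution. The test rejects when $T(\mathbf X,\mathbf Y,\mathbf Z)$ exceeds the upper-$\alpha$ quantile $q_{1-\alpha}$ of the permutation distribution of $T(\mathbf X_\sigma,\mathbf Y,\mathbf Z)$, $\sigma\sim\textnormal{Unif}(\Pi)$. So the plan has three parts: (i) pin down $q_{1-\alpha}$ asymptotically, (ii) pin down the distribution of the observed $T$ asymptotically, and (iii) combine them.

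\textbf{Step 1: the permutation quantile.} Since $T=\sum_k T_k$ and $\Pi$ is a product of the groups $\mathcal S_{m_k}$, a uniform $\sigma\in\Pi$ permutes each bin independently. Hence, given $(\mathbf X,\mathbf Y,\mathbf Z)$, the permuted statistic is a sum of independent terms $T_k((\mathbf X_k)_{\sigma_k},\mathbf Y_k,\mathbf Z)$. By the centering convention, each term has permutation mean zero. Let $\widehat V_k$ denote its conditional permutation variance. Applying a conditional Berry--Esseen bound, in the same spirit as the proof of Theorem~\ref{thm:validity_linear}, gives
\[
q_{1-\alpha} = \Phi^{-1}(1-\alpha)\Bigl(\sum_k \widehat V_k\Bigr)^{1/2}\bigl(1+\mathrm{o}_P(1)\bigr),
\]
provided a Lyapunov-type ratio such as $\max_k \textnormal{Range}_k / (\sum_k \widehat V_k)^{1/2}$ vanishes.

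\textbf{Step 2: concentrating the permutation variance.} Next I need $\sum_k \widehat V_k / \sum_k \mathrm{Var}(T_k\mid\mathbf Z) \to 1$ in probability. This is where the ``suitable conditions'' enter. The bins are conditionally independent given $\mathbf Z$, so a law of large numbers applies under a bounded fourth-moment condition. The real content is comparing the expectations: the condition I would impose is that $\sum_k \E[\widehat V_k\mid\mathbf Z]$ and $\sum_k \mathrm{Var}(T_k\mid\mathbf Z)$ are asymptotically equivalent. This is a local-alternative type condition: the dependence is weak enough that permuting within a bin barely changes the second moment, and the mean shift $(\E[T_k\mid\mathbf Z])^2$ is negligible relative to the variance. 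I expect this to be the main obstacle. A clean route is to assume it directly, mirroring the simplifying local-alternatives assumption in Theorem~\ref{thm:power_of_orc}. Alternatively, I would bound the difference by $\dtv$-type quantities, as in Lemma~\ref{lem:bin_tv_bound}.

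\textbf{Step 3: the observed statistic.} Conditionally on $\mathbf Z$, the $T_k(\mathbf X_k,\mathbf Y_k,\mathbf Z)$ are independent across bins, because the bins are disjoint and the data are i.i.d. A Lyapunov CLT with Berry--Esseen rate therefore gives
\[
\frac{T-\sum_k\E[T_k\mid\mathbf Z]}{\bigl(\sum_k\mathrm{Var}(T_k\mid\mathbf Z)\bigr)^{1/2}} \Rightarrow N(0,1).
\]
Dividing $q_{1-\alpha}$ by the same standard deviation $(\sum_k\mathrm{Var}(T_k\mid\mathbf Z))^{1/2}$, Steps 1 and 2 show that the rescaled threshold equals $\Phi^{-1}(1-\alpha)+\mathrm{o}_P(1)$. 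Therefore
\[
\P(T\ge q_{1-\alpha}\mid\mathbf Z) = 1-\Phi\bigl(\Phi^{-1}(1-\alpha)-\mathrm{SNR_{LPT}}\bigr)+\mathrm{o}_P(1) = \Phi\bigl(\Phi^{-1}(\alpha)+\mathrm{SNR_{LPT}}\bigr)+\mathrm{o}_P(1).
\]
The random $\mathrm{o}_P(1)$ perturbation of the threshold is absorbed by the Lipschitz continuity of $\Phi$. Randomization and ties in the permutation $p$-value contribute only an additional $\mathrm{o}(1)$ term, since the permutation distribution is asymptotically continuous.
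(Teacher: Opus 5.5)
Your proposal is correct and follows essentially the same route as the paper's proof (Theorems~\ref{thm:general_lpt_power} and~\ref{cor:nnpt_power}): a conditional Berry--Esseen approximation of the within-bin permutation distribution, followed by Chebyshev concentration of $\sum_k \widehat V_k$ around $\sum_k \Var(\sigma_k(T_k)\mid\mathbf Z)$ via aggregate fourth moments, then a second Berry--Esseen step for the observed $\sum_k T_k$ given $\mathbf Z$, and finally a sandwich on a high-probability event to absorb the random threshold; your variance-equivalence assumption is exactly the second branch of Assumption~\ref{assn:T_local_alternative}. The only difference in scope is that the paper keeps the general factor $\bigl(\sum_k\Var(\sigma_k(T_k)\mid\mathbf Z)/\sum_k\Var(T_k\mid\mathbf Z)\bigr)^{1/2}$ multiplying $\Phi^{-1}(\alpha)$ and requires only an $\mathrm{O}_P(1)$ bound on that ratio, which lets it also cover the branch $\mathrm{SNR_{LPT}}\to\infty$ (e.g.\ fixed alternatives, where the ratio need not tend to one); your argument yields this too if you replace the ratio-to-one assumption in Step~2 by $\mathrm{O}_P(1)$.
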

See Theorem \ref{cor:nnpt_power} for a precise statement and additional discussion and Appendix \ref{sec:power_of_nnpt_proof} for details; in fact it holds that the $\mathrm{o}_P(1)$ error term above can be controlled by uniformly upper bounds on certain moments of the $T_k$. We establish explicit finite-sample bounds (Theorem \ref{thm:general_lpt_power}) in the appendix as well.

This result, which tells us that the power of the LPT is simply a function of the signal-to-noise ratio $\mathrm{SNR_{LPT}}$, corresponds to the intuitive understanding of how one should aim to choose the test statistic $T$: under the alternative, it should be as large as possible (thereby maximizing the numerator of $\mathrm{SNR_{LPT}}$) as often as possible (thereby minimizing the denominator).

\subsubsection{Local permutation tests with oracle information}\label{sec:oracle}

The result of Theorem~\ref{thm:power_of_nnpt} above expresses the power of LPT in terms of a signal-to-noise ratio $\mathrm{SNR_{LPT}}$, but this does not yet answer the question of how the LPT compares to the oracle power. To do so, we now need to explore the relationship between $\mathrm{SNR_{LPT}}$ and $\mathrm{SNR_{ORC}}$. It turns out that if one has the same knowledge as the oracle test, i.e., access to the log-likelihood ratio function $\mathrm{LLR}(x, y \mid z)$, the restriction to local permutation tests is not too costly: the next result shows that a local permutation test, using the most conservative binning strategy possible of just $m=2$ points per bin, is within a constant power loss of the oracle test when the test statistic is properly chosen. 

\begin{theorem}[Informal]\label{thm:orc_vs_nnpt_oracle}
  Under suitable conditions on the log-likelihood, there exists a choice of $T$, reliant on oracle knowledge of the log-likelihood ratio, such that the LPT with bin size $m=2$ satisfies
  \[
    \mathrm{SNR_{LPT}} \geq \frac{\mathrm{SNR_{ORC}}}{4} - \mathrm{o}_P(1).
  \]
\end{theorem}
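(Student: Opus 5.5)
We will construct $T$ explicitly with bins $B_k=\{i_k,j_k\}$ of size $2$, using the oracle log-likelihood ratio. The natural choice is the permutation-centered LLR statistic
\[
T_k(\bX_k,\bY_k,\bZ) = \tfrac12\Bigl[\textnormal{LLR}^{(i_k)}(X_{i_k},Y_{i_k}) + \textnormal{LLR}^{(j_k)}(X_{j_k},Y_{j_k}) - \textnormal{LLR}^{(i_k)}(X_{j_k},Y_{i_k}) - \textnormal{LLR}^{(j_k)}(X_{i_k},Y_{j_k})\Bigr],
\]
which is antisymmetric under swapping $X_{i_k},X_{j_k}$ and hence centered over $\cS_2$. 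This is exactly the log-likelihood ratio for ``observed pairing'' versus ``swapped pairing''. The plan is to compute the numerator and denominator of $\mathrm{SNR_{LPT}}$ and compare them term by term with $\overline{\mathrm{KL}}_{(i)}$ and $\mathrm{V}_{\mathrm{KL},(i)}$. In doing so, we use the smoothness assumption (the ``suitable conditions'') to replace $P_{\cdot\mid Z_{j_k}}$ by $P_{\cdot\mid Z_{i_k}}$ up to errors that are $\mathrm{o}_P(1)$ after aggregation.

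\textbf{Numerator.} In the idealized case $Z_{i_k}=Z_{j_k}=z$, under the alternative the pairs $(X_{i_k},Y_{i_k})$ and $(X_{j_k},Y_{j_k})$ are i.i.d.\ from $P_{X,Y\mid z}$. Each cross term $(X_{j_k},Y_{i_k})$ is then distributed as $P_{X\mid z}\times P_{Y\mid z}$, so
\[
\E[T_k\mid\bZ]=\E_{P_{XY\mid z}}[\textnormal{LLR}]-\E_{P_X\times P_Y}[\textnormal{LLR}]=\overline{\mathrm{KL}}(z),
\]
which matches $\tfrac12(\overline{\mathrm{KL}}_{(i_k)}+\overline{\mathrm{KL}}_{(j_k)})$. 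Hence $\sum_k\E[T_k\mid\bZ]\approx\tfrac12\sum_i\overline{\mathrm{KL}}_{(i)}$. The half factor arises because each bin uses two samples but contributes only one Jeffreys term.

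\textbf{Denominator.} By $(a+b+c+d)^2\le4(a^2+b^2+c^2+d^2)$, $\Var(T_k\mid\bZ)\le\tfrac14\cdot4\sum\Var(\cdot)$ over the four LLR terms. The two diagonal variances are the varentropies. The cross-term variances are taken under the product measure, and these are the terms that need controlling. Under the local-alternative regime assumed in the formal version (where LLR is small, so $P_{XY}\approx P_X\times P_Y$), the product-measure variance of LLR equals $\mathrm{V}_{\mathrm{KL}}$ up to lower-order terms. Thus $\sum_k\Var(T_k\mid\bZ)\lesssim 4\sum_i \mathrm{V}_{\mathrm{KL},(i)}/2$ up to $\mathrm{o}(1)$ relative error. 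Combining the two bounds gives $\mathrm{SNR_{LPT}}\ge \tfrac{1/2}{\sqrt2}\,\mathrm{SNR_{ORC}}\ge \mathrm{SNR_{ORC}}/4$. The constant is loose, which leaves slack to absorb the cruder bounds used above.

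\textbf{Main obstacle.} The hardest step is making the within-bin approximation rigorous. When $Z_{i_k}\ne Z_{j_k}$, the cross terms evaluate $\textnormal{LLR}^{(i_k)}$ at $X_{j_k}\sim P_{X\mid Z_{j_k}}$ rather than $P_{X\mid Z_{i_k}}$, which biases the mean and the variance. I would bound this bias by $\|\textnormal{LLR}\|$ times TV or $\chi^2$ distances between neighboring conditionals, in the spirit of Lemma~\ref{lem:bin_tv_bound}, assuming uniform moment bounds on LLR. I would then require the aggregated bias to be $\mathrm{o}\bigl((\sum_i \mathrm{V}_{\mathrm{KL},(i)})^{1/2}\bigr)$, which is where the $\mathrm{o}_P(1)$ term comes from.
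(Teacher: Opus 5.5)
\textbf{Gap: your statistic is not bin-symmetric, so it does not define an LPT.} In the framework of Section~\ref{sec:methodology}, $T$ must satisfy $T(\mathbf x,\mathbf y,\mathbf z)=T(\mathbf x_\sigma,\mathbf y_\sigma,\mathbf z)$, with $\mathbf z$ held fixed. You check only antisymmetry under $X_{i_k}\leftrightarrow X_{j_k}$, which gives centering. Swapping the pairs $(X_{i_k},Y_{i_k})\leftrightarrow(X_{j_k},Y_{j_k})$ turns your $T_k$ into
\[
\tfrac12\bigl[\mathrm{LLR}^{(i_k)}(X_{j_k},Y_{j_k})+\mathrm{LLR}^{(j_k)}(X_{i_k},Y_{i_k})-\mathrm{LLR}^{(i_k)}(X_{i_k},Y_{j_k})-\mathrm{LLR}^{(j_k)}(X_{j_k},Y_{i_k})\bigr],
\]
which differs from $T_k$ unless $\mathrm{LLR}(\cdot\mid Z_{i_k})=\mathrm{LLR}(\cdot\mid Z_{j_k})$.

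This is not cosmetic. Bin-symmetry is exactly what the validity argument uses (proof of Lemma~\ref{lem:initial_tv_bound}). Without it, the test you compare to the oracle does not carry the LPT's Type~I guarantees, and those guarantees are the point of the comparison. It is also why the paper's construction looks the way it does. It evaluates all four LLR terms in a bin under a \emph{single} reference conditional, giving $f^{(i_k)}$, which is invariant under the joint swap. Since neither reference point is canonical, it forms both $T^{(i)}$ and $T^{(j)}$ and uses Titu's lemma to show the better of the two reaches $\mathrm{SNR_{ORC}}/(2(1+\sqrt M))$.

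\textbf{The repair fits your approach.} Symmetrize over the within-bin swap $\tau$: $\tilde T_k=\tfrac12\bigl[T_k(\mathbf x,\mathbf y,\mathbf z)+T_k(\mathbf x_\tau,\mathbf y_\tau,\mathbf z)\bigr]$. This gives exactly $\tilde T_k=\tfrac14\bigl(f^{(i_k)}+f^{(j_k)}\bigr)$, the average of the paper's two statistics, and it is bin-symmetric and still antisymmetric in $X$.
\begin{itemize}
\item Its idealized mean is still $\overline{\mathrm{KL}}$ per bin.
\item Since $\Var(\tilde T_k)\le\tfrac12\bigl(\Var(T_k)+\Var(T_k\circ\tau)\bigr)$, your crude bound survives: $\sum_k\Var(\tilde T_k\mid\bZ)\lesssim\sum_i\bigl(\mathrm V_{\mathrm{KL},(i)}+\mathrm V^{[0]}_{\mathrm{KL},(i)}\bigr)\le(1+M)\sum_i\mathrm V_{\mathrm{KL},(i)}$.
\item Here you should use the aggregate ratio bound of Assumption~\ref{assn:var_ratio_bdded} (with $M\to1$ under local alternatives) in place of your heuristic that ``LLR is small.''
\end{itemize}
This yields $\mathrm{SNR_{LPT}}\ge\mathrm{SNR_{ORC}}/(2\sqrt{1+M})-\mathrm o_P(1)$ for a single explicit statistic. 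That is $\mathrm{SNR_{ORC}}/(2\sqrt2)\ge\mathrm{SNR_{ORC}}/4$ at $M=1$, and never worse than the paper's constant since $\sqrt{1+M}\le1+\sqrt M$.

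\textbf{What the repair costs.} It lands on your ``main obstacle.'' Your unsymmetrized $T_k$ only incurred bias through the $X$-marginal in the cross terms. After symmetrization, the diagonal terms $\mathrm{LLR}^{(i_k)}(X_{j_k},Y_{j_k})$ are evaluated under the other point's model. So the mean and variance discrepancies now require closeness of the joint laws $P_{X,Y\mid Z_{i_k}}$ and $P_{X,Y\mid Z_{j_k}}$. You need the aggregated mean error to be $\mathrm o\bigl((\sum_k\Var(T_k\mid\bZ))^{1/2}\bigr)$ and the aggregated variance error to be $\mathrm o\bigl(\sum_k\Var(T_k\mid\bZ)\bigr)$. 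This is precisely what the paper assumes through $\delta_k^{(1)},\delta_k^{(2)}$ in Assumption~\ref{assn:delta_vanishing}.
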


See Theorem \ref{thm:orc_vs_nnpt_oracle_precise} for a precise statement and Appendix \ref{sec:orc_vs_nnpt_proof} for details.

We emphasize that this should not be interpreted as saying that one needs oracle knowledge to gain meaningful power, nor that any deviation from the likelihood given from the oracle would nullify the power of $\phi_{\mathrm{LPT}}$. Rather, the above is a demonstration that restricting from the space of all tests down to only LPT-style permutation tests (and in fact, LPT with only $m = 2$ points per bin) incurs only a constant factor cost in signal strengt as compared to oracle performance. In practice, using a data-driven test statistic $T$ (i.e., one that does not require oracle knowledge of the true model) can nonetheless achieve nontrivial power, as we will explore below.

\subsection{Special case: linear confounder}\label{sec:linear_model}

We now specialize our power guarantees to the setting of a linear confounder, to obtain more precise results on the power of the LPT and how it relates to the oracle power. Given Theorems \ref{thm:power_of_orc} and \ref{thm:power_of_nnpt}, the rest of the section focuses on computing the corresponding $\mathrm{SNR_{ORC}}$ and $\mathrm{SNR_{LPT}}$ quantities for this model, so that we may compare how the power of the LPT compares to the optimal oracle test.
In Section \ref{sec:linear_opt_power}, we establish the power of the optimal likelihood-ratio test, and in Section \ref{sec:linear_lpt_power}, we analyze a specific choice of the test statistic $T$ such that LPT matches the power of the optimal test up to a small constant. However, working in the special case of a linear confounder model allows us to attain additional insights unavailable in the general case: Theorem \ref{thm:power_lpt_confounder_model_general} shows that increasing the bin size in LPT grants mild power gains, and that adaptively picking bins of equal sizes can yield greater power than alternative binning strategies.

We now define the setting. Given a sample size $n\geq 1$, suppose
\begin{equation}\label{model:linear_gaussian_confounder}
  X = f_1(Z) + \beta_{1,n}\,U + \epsilon_1,
  \qquad
  Y = f_2(Z) + \beta_{2,n}\,U + \epsilon_2,
\end{equation}
where $Z,U,\epsilon_1,\epsilon_2$ are mutually independent, with $U,\epsilon_1,\epsilon_2\stackrel{\text{iid}}{\sim} N(0, 1)$.
In this model, we can quantify the confounding strength by the conditional correlation between $X$ and $Y$ given $Z$. Indeed, if $(X,Y,Z)$ follows \eqref{model:linear_gaussian_confounder}, then
\[
\mathrm{Cor}(X,Y\mid Z)
= \frac{\mathrm{Cov}(X,Y\mid Z)}
{\sqrt{\mathrm{Var}(X\mid Z)\,\mathrm{Var}(Y\mid Z)}} 
= \frac{\beta_{1,n}\,\beta_{2,n}}{\sqrt{(\beta_{1,n}^2+1)(\beta_{2,n}^2+1)}} 
= \rho_n.
\]
As one would expect, larger $\rho_n$ corresponds to stronger alternatives that are easier to detect, whereas sufficiently small $\rho_n$ yields negligible power. Because the LPT statistics considered below are one-sided, throughout this section and Appendix~\ref{sec:non_gaussianity} we assume that $\rho_n\geq 0$ eventually for simplicity, though the arguments may be adapted to two-sided versions as well. We assume that $\limsup_{n \to \infty} |\beta_{1, n}|  < \infty$ and $\limsup_{n \to \infty} |\beta_{2, n}|  < \infty$ to avoid trivial outcomes where $X$ and $Y$ are perfectly correlated, and we also assume that $\lim_{n \to \infty} \rho_n$ is defined to avoid edge cases where the limiting behavior of the test is undefined.

\paragraph{The Gaussianity assumption.} Before proceeding to our results, we briefly comment on the assumption of Gaussianity for the shared signal $U$ and the noise terms $\epsilon_1,\epsilon_2$. Can this assumption simply be replaced with some weaker moment conditions? In fact, Gaussianity plays a key role in our ability to calculate the power of the oracle, since analyzing the oracle method requires knowledge of the log-likelihood ratio function; if we only assume moment conditions then this function could take arbitrary form. On the other hand, our results for the power of LPT are straightforward to extend to a non-Gaussian setting (see Appendix~\ref{sec:non_gaussianity}).

\subsubsection{Optimal Power}\label{sec:linear_opt_power}

Our first result for this setting calculates the oracle power.

\begin{theorem}
\label{thm:gaussian_lrt_power}
Model class~\eqref{model:linear_gaussian_confounder} satisfies the conditions of Theorem \ref{thm:power_of_orc} and the power of the oracle likelihood-ratio test has signal-to-noise ratio
\[
    \mathrm{SNR_{ORC}} = \frac{\sqrt n\,|\rho_n|}{1 - \rho_n^2}.
\]
\end{theorem}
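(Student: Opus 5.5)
The plan is to compute $\overline{\mathrm{KL}}_{(i)}$ and $\mathrm{V}_{\textnormal{KL},(i)}$ in closed form. Conditional on $Z_i$, model~\eqref{model:linear_gaussian_confounder} makes $(X_i,Y_i)$ bivariate Gaussian with means $(f_1(Z_i),f_2(Z_i))$, variances $s_1=\beta_{1,n}^2+1$ and $s_2=\beta_{2,n}^2+1$, and correlation $\rho_n$. None of these quantities except the means depends on $Z_i$. The null product measure $P_{X\mid Z_i}\times P_{Y\mid Z_i}$ has the same marginals and zero correlation.

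Both $\overline{\mathrm{KL}}$ and the varentropy are invariant under affine reparametrization of each coordinate. So I will work with the standardized variables $u=(x-f_1(Z_i))/\sqrt{s_1}$ and $v=(y-f_2(Z_i))/\sqrt{s_2}$. In these variables the log-likelihood ratio is the same quadratic form for every $i$:
\[
\textnormal{LLR}(u,v)= -\tfrac12\log(1-\rho_n^2) + a\,(u^2+v^2) + b\,uv,\qquad a=-\frac{\rho_n^2}{2(1-\rho_n^2)},\quad b=\frac{\rho_n}{1-\rho_n^2}.
\]

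\emph{Jeffreys divergence.} Under both the joint law and the product law, $\E u^2=\E v^2=1$. The only difference is that $\E uv=\rho_n$ under the joint law and $0$ under the product law. Hence
\[
\overline{\mathrm{KL}}_{(i)}=b\rho_n=\frac{\rho_n^2}{1-\rho_n^2}.
\]

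\emph{Varentropy.} For a standard bivariate normal with correlation $\rho_n$ I will use Isserlis' formula:
\begin{align*}
\Var(u^2+v^2)&=4(1+\rho_n^2), & \Var(uv)&=1+\rho_n^2, & \Cov(u^2+v^2,uv)&=4\rho_n .
\end{align*}
These give
\[
\mathrm{V}_{\textnormal{KL},(i)}=4a^2(1+\rho_n^2)+b^2(1+\rho_n^2)+8ab\rho_n
=\frac{\rho_n^2(1+\rho_n^2)^2-4\rho_n^4}{(1-\rho_n^2)^2}=\rho_n^2 .
\]
Summing over $i$ and substituting into the definition of $\mathrm{SNR_{ORC}}$ yields
\[
\mathrm{SNR_{ORC}}=\frac{n\rho_n^2/(1-\rho_n^2)}{\sqrt{n}\,|\rho_n|}=\frac{\sqrt n\,|\rho_n|}{1-\rho_n^2}.
\]

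The remaining, and more delicate, step is to verify the technical conditions of Theorem~\ref{thm:power_of_orc}, i.e.\ those in the precise statement Theorem~\ref{cor:orc_power}. I expect these to be Lyapunov/Berry--Esseen type moment bounds on the summands $\textnormal{LLR}^{(i)}$ under both the null and the alternative, plus possibly a local-alternative condition. Because the standardized LLR does not depend on $Z_i$, the summands are i.i.d.\ given $\mathbf Z$, and the conditions reduce to bounds on a single Gaussian quadratic form.

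The assumption $\limsup|\beta_{j,n}|<\infty$ keeps $\rho_n$ bounded away from $1$. Therefore the coefficients $a$ and $b$ are uniformly bounded, and all polynomial moments of the quadratic form are controlled. The only care needed is when $\rho_n\to0$. In that regime the moments must be normalized by the variance $\rho_n^2$. Since $b\asymp\rho_n$ and $a\asymp\rho_n^2$, the normalized third absolute moment stays $O(1)$, and the Lyapunov ratio is $O(n^{-1/2})$. For the same reason, the null variance of the LLR, which is $4a^2+b^2\asymp\rho_n^2$, matches $\mathrm{V}_{\textnormal{KL}}$ to leading order whenever $\rho_n\to0$. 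When instead $\rho_n\not\to0$, $\mathrm{SNR_{ORC}}\to\infty$ and both sides of the power formula tend to $1$.

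Checking that these two regimes cover whatever precise form the conditions take (using the assumption that $\lim\rho_n$ exists) is where I expect the main bookkeeping effort to lie.
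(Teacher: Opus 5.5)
Your proposal is correct and follows essentially the same route as the paper. Both compute $\overline{\mathrm{KL}}_{(i)}=\rho_n^2/(1-\rho_n^2)$ and $\mathrm{V}_{\textnormal{KL},(i)}=\rho_n^2$ in closed form for the Gaussian quadratic-form LLR, and both then check the Lyapunov, variance-ratio and local-alternative conditions of Theorem~\ref{cor:orc_power} using the $i$-invariance of the summands. The differences are cosmetic. You standardize and apply Isserlis where the paper uses $\E[W^\top AW]=\tr(A\Sigma)$ and $\Var(W^\top AW)=2\tr((A\Sigma)^2)$, and the paper bounds third moments by Gaussian hypercontractivity rather than your coefficient-scaling argument. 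The paper's variance-ratio condition is exactly your observation that $(4a^2+b^2)/\rho_n^2=(1+\rho_n^2)/(1-\rho_n^2)^2$ is bounded and tends to $1$ when $\rho_n\to0$.
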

Thus, by Theorem~\ref{thm:power_of_orc},
the power can be computed as
\[
  \E\bigl[\phi_{\mathrm{ORC}}\mid \bZ\bigr] =
  \Phi\!\left(\Phi^{-1}(\alpha) + \frac{\sqrt{n}\,\rho_n}{1-\rho_n^2} \right)  + \mathrm{o}_P(1).
\]
From these results, we can see that the detection boundary is $\rho_n=\Theta(n^{-1/2})$; this is the regime in which the oracle test has nontrivial power.

\subsubsection{Power of LPT}\label{sec:linear_lpt_power}

Next, under the linear confounding model, we analyze the power of the LPT within a unified family of bin-level statistics. For each bin $B_k$ of size $m_k$, define
\begin{equation}\label{eq:bin-statistic-LPT}
T_k = \frac{1}{m_k} \sum_{\substack{i\in B_k}} \left( X_i - \overline{\mathbf X}_k \right) \left( Y_i - \overline{\mathbf Y}_k \right),
\end{equation}
where $\overline{\mathbf X}_k$ and $\overline{\mathbf Y}_k$ are the sample means of $\mathbf X_k$ and $\mathbf Y_k$. Importantly, note that this choice of the test statistic $T$ does not require assuming oracle knowledge of the exact likelihood ratio.

Define also ``error'' terms in each bin $B_k$:
\[
  S_{k}^{(1)} = \frac{1}{m_k} \sum_{i \in B_k} \left(f_1(Z_i) - \frac{1}{m_k}\sum_{j \in B_k} f_1(Z_j)\right)^2, \ 
  S_{k}^{(2)} = \frac{1}{m_k} \sum_{i \in B_k} \left(f_2(Z_i) - \frac{1}{m_k}\sum_{j \in B_k} f_2(Z_j)\right)^2.
\]
These quantities measure the deviation from the idealized case where points inside each bin have a single shared distribution (i.e. when $f_1(Z_i), f_2(Z_i)$ are constant within each bin).

\begin{theorem}\label{thm:power_lpt_confounder_model_general}
  Fix $\alpha\in(0,1/2)$ and consider the linear confounding model~\eqref{model:linear_gaussian_confounder}. For LPT with $K$ bins of sizes $m_1, \dots, m_K$ with $m_k \geq 2$ for all $k \in [K]$, and statistic~\eqref{eq:bin-statistic-LPT}, if the binning strategy satisfies $K \to \infty$ and
  \begin{equation}\label{eq:bin_size_assmp}
    \frac{\max_{k=1\dots, K} m_k}{\min_{k=1\dots, K} m_k} = \mathrm{O}_P(1)
  \end{equation}
  and the within-bin variation is sufficiently small:
  \begin{equation}\label{eq:linear_assmp_1}
    \frac{1}{K} \sum_{k=1}^K S_k^{(1)} = \mathrm{o}_P \left( 1 \right), \quad
    \frac{1}{K} \sum_{k=1}^K S_k^{(2)} = \mathrm{o}_P \left( 1 \right), \quad
    \frac{1}{K} \sum_{k=1}^K S_k^{(1)} S_k^{(2)} = \mathrm{o}_P \left( 1 \right), 
  \end{equation}
  and
  \begin{equation}\label{eq:linear_assmp_2}
    \left( \frac{1}{K} \sum_{k=1}^K S_k^{(1)} \right)^{1/2} \left( \frac{1}{K} \sum_{k = 1}^K S_k^{(2)} \right)^{1/2} = \mathrm{o}_P \left( \rho_n \right)
  \end{equation}
  then the conditional power of the LPT satisfies
  \[
    \mathbb E[\phi_{\mathrm{LPT}}\mid\mathbf Z]
    =\Phi\!\left(\Phi^{-1}(\alpha)+\mathrm{SNR_{LPT}}\right)+\mathrm{o}_P(1),
  \]
  where
\[
    \mathrm{SNR_{LPT}} = \frac{\sum_{k=1}^K \frac{m_k - 1}{m_k}}{\sqrt{\sum_{k=1}^K \frac{m_k-1}{m_k^2}}} \cdot \frac{\rho_n}{\sqrt{1 + \rho_n^2}} \cdot \left( 1 +  \mathrm{o}_P(1) \right).
\]
\end{theorem}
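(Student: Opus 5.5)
The plan is to reduce to Theorem~\ref{thm:power_of_nnpt}, in its precise form as Theorem~\ref{cor:nnpt_power}, applied to the bin statistics~\eqref{eq:bin-statistic-LPT} after permutation-centering. Two things then need to be done: check the moment and regularity conditions of that theorem, and compute $\sum_k \mathbb E[T_k\mid\bZ]$ and $\sum_k\mathrm{Var}(T_k\mid\bZ)$ to leading order. The first observation is that $T_k$ is already permutation-centered: averaging over $\sigma\in\mathcal S_{m_k}$ of $\frac1{m_k}\sum_i (X_{\sigma(i)}-\overline{\mathbf X}_k)(Y_i-\overline{\mathbf Y}_k)$ gives $\frac1{m_k}\sum_i(\overline{\mathbf X}_k-\overline{\mathbf X}_k)(\cdot)=0$, so no correction term is needed.

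Next, decompose within each bin. Write $X_i = f_1(Z_i) + W_i$ and $Y_i = f_2(Z_i)+V_i$, where $(W_i,V_i)$ are i.i.d.\ bivariate Gaussian with variances $1+\beta_{1,n}^2$ and $1+\beta_{2,n}^2$ and correlation $\rho_n$. Let $C$ denote the within-bin centering operator. Then
\[
T_k = \tfrac1{m_k}\big(\langle Cf_1,Cf_2\rangle + \langle Cf_1,CV\rangle+\langle CW,Cf_2\rangle+\langle CW,CV\rangle\big).
\]
Standardize $W,V$ by their standard deviations; the scale factor cancels in the SNR. The conditional mean is then
\[
\mathbb E[T_k\mid\bZ] = \tfrac1{m_k}\langle Cf_1,Cf_2\rangle + \rho_n\tfrac{m_k-1}{m_k}.
\]
By Cauchy--Schwarz, the deterministic term is at most $\sqrt{S_k^{(1)}S_k^{(2)}}$ (after rescaling). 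Summing over $k$ and applying Cauchy--Schwarz across bins, assumption~\eqref{eq:linear_assmp_2} shows this term is $\mathrm{o}_P\big(\rho_n\sum_k\frac{m_k-1}{m_k}\big)$, using $m_k\ge2$ so that $\sum_k\frac{m_k-1}{m_k}\asymp K$.

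For the variance, compute $\mathrm{Var}\big(\frac1{m_k}\langle CW,CV\rangle\big)$ by Isserlis' theorem. Since $C$ is a projection of rank $m_k-1$, and $(W,V)$ with correlation $\rho$ gives $\mathrm{Var}(W^\top C V)=(1+\rho^2)\,\mathrm{tr}(C)$, this equals $\frac{(m_k-1)(1+\rho_n^2)}{m_k^2}$. The cross terms $\frac1{m_k}\langle Cf_1,CV\rangle$ have variance $S_k^{(1)}/m_k$, and similarly for $f_2$. Their covariance with the main term vanishes, because third Gaussian moments are zero. By assumption~\eqref{eq:linear_assmp_1} their total contribution is $\mathrm{o}_P(K)$, which is $\mathrm{o}_P\big(\sum_k \frac{m_k-1}{m_k^2}\big)$ since $\max_k m_k$ is bounded relative to $\min_k m_k$ by~\eqref{eq:bin_size_assmp}. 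Here I would have to check carefully that the normalization matches, i.e.\ that $\sum_k (m_k-1)/m_k^2 \gtrsim K/\max_k m_k$. If $m_k$ can grow, I would instead compare $\sum_k S_k^{(j)}/m_k$ directly against $\sum_k 1/m_k$, which is where~\eqref{eq:bin_size_assmp} enters. Taking the ratio gives the stated $\mathrm{SNR_{LPT}}$ up to a factor $(1+\mathrm{o}_P(1))$, with the numerator's $\mathrm{o}_P(1)$ relative error coming from~\eqref{eq:linear_assmp_2}. When $\rho_n$ is so small that $\mathrm{SNR}\to0$, the additive errors still vanish.

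The main obstacle is verifying the technical conditions of Theorem~\ref{cor:nnpt_power}. These presumably require a Lyapunov-type ratio $\sum_k\mathbb E|T_k-\mathbb E T_k|^3/(\sum_k\mathrm{Var}\,T_k)^{3/2}\to0$, together with control of the permutation distribution: the permutation variance $\sum_k\mathrm{Var}_k$ must concentrate around its null-type value $\sum_k \frac{m_k-1}{m_k^2}\cdot(\text{const})$ so that the permutation critical value is $\Phi^{-1}(1-\alpha)$ times the right scale. I would bound third and fourth moments of each $T_k$ by Gaussian hypercontractivity as $O(1+S_k^{(1)}+S_k^{(2)}+S_k^{(1)}S_k^{(2)})$ times powers of $1/m_k$. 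The average-in-$k$ assumptions~\eqref{eq:linear_assmp_1} (including the product term $\frac1K\sum S_k^{(1)}S_k^{(2)}$, which controls the permutation variance of the deterministic part) together with $K\to\infty$ then give the Lyapunov ratio $O_P(K^{-1/2})$ and a law of large numbers for the permutation variance. The step I expect to be most delicate is this permutation-variance concentration under the alternative, where $\mathrm{Var}_k$ involves fourth-order empirical moments within bins.
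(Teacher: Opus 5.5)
Your reduction to Theorem~\ref{cor:nnpt_power}, the centering check, the decomposition of $T_k$, and the leading-order mean and variance all match the paper's proof. You drop the cross term $2\Cov(A_{k,1},A_{k,2}\mid\bZ)=2\beta_{1,n}\beta_{2,n}S_k^{(12)}/m_k$, but it is negligible by the same Cauchy--Schwarz argument. The gap is in the step you flag as the main obstacle. Your claim that \eqref{eq:linear_assmp_1} and $K\to\infty$ give the Lyapunov condition and variance concentration for the \emph{permutation} array is false once bin sizes grow.

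Conditionally on the data, the drift part of $\sigma_k(T_k)$ is $\frac{1}{m_k}\sum_i (Cf_1)_{\sigma(i)}(Cf_2)_i$. Its variance is $c_k/(m_k-1)$ with $c_k=S_k^{(1)}S_k^{(2)}$. Its fourth moment, however, contains the $A_4B_4/m$ term of Lemma~\ref{lem:unif_perm_fourth}, which can be of order $c_k^2/m_k$. For example, if $f_1$ and $f_2$ each have a single spike in bin $k$, the drift part is a multiple of $\One{\sigma(1)=1}-1/m_k$, whose kurtosis grows like $m_k$. Against the noise floor $V_n^\pi\asymp K/\bar m$ (with $\bar m=n/K$), this contributes $\bar m\sum_k c_k^2/K^2$ to $M_{4,n}^\pi/(V_n^\pi)^2$, and analogously to the third-moment ratio. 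Assumption \eqref{eq:linear_assmp_1} only bounds this by $\bar m\cdot\mathrm{o}_P(1)$. A single spiked bin with $c_k=K/\log K$ satisfies \eqref{eq:linear_assmp_1} yet makes the ratio of order $\bar m/(\log K)^2$.

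The paper closes this with two ingredients your plan lacks.

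First, it uses \eqref{eq:linear_assmp_2}, not \eqref{eq:linear_assmp_1}, to bound $\max_k c_k\le(\sum_k S_k^{(1)})(\sum_k S_k^{(2)})=\mathrm{o}_P(K^2\rho_n^2)$. This gives $\bar m\sum_k c_k^2/K^2=\mathrm{o}_P(K\bar m\rho_n^2)$, which is $\mathrm{o}_P(1)$ only on subsequences where $\mathrm{SNR_{LPT}}\asymp\sqrt{K\bar m}\,\rho_n$ stays bounded. That regime also forces $\rho_n\to0$, which is what yields $V_n^\pi/V_n=1+\mathrm{o}_P(1)$; for fixed $\rho_n$ this ratio is in general not $1$.

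Second, when $\mathrm{SNR_{LPT}}\to\infty$, the permutation Lyapunov ratio genuinely need not vanish under the theorem's hypotheses. The spiked-bin example with fixed $\rho_n$ also satisfies \eqref{eq:linear_assmp_2} and \eqref{eq:bin_size_assmp}, so Theorem~\ref{cor:nnpt_power} cannot be applied in that regime at all. The paper argues directly instead: $\sum_kT_k=\mu_n(1+\mathrm{o}_P(1))$ by Chebyshev. On $\{\sum_kT_k>0\}$ the p-value satisfies $p\le\widehat V_n^\pi/(\sum_kT_k)^2$, with $\E[\widehat V_n^\pi\mid\bZ]=V_n^\pi=\mathrm{O}_P(V_n)$, so $p\to0$.

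Without this regime split, and without using \eqref{eq:linear_assmp_2} for the permutation moments, your verification does not close. Separately, your $\mathrm{O}_P(K^{-1/2})$ rate is unjustified even for the observed array, where only $\max_k\Var(T_k\mid\bZ)/V_n=\mathrm{o}_P(1)$ is available; that weaker statement does suffice there.
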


\begin{corollary}\label{cor:equal_bin_cor}
  For bins of equal size $m = m_1 = \dots = m_K$, if $K \to \infty$ and \eqref{eq:linear_assmp_1} and \eqref{eq:linear_assmp_2} hold, then LPT with statistic~\eqref{eq:bin-statistic-LPT} and model~\eqref{model:linear_gaussian_confounder} satisfies
  \[
    \mathrm{SNR_{LPT}} = \sqrt{\frac{m-1}{m}} \cdot \frac{\sqrt{n} \rho_n}{\sqrt{1 + \rho_n^2}} + \mathrm{o}_P(1).
  \]
\end{corollary}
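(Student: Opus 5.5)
This is a direct specialization of Theorem~\ref{thm:power_lpt_confounder_model_general}. We first verify its hypotheses and then simplify the SNR expression.

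\textbf{Hypotheses.} With $m_1=\dots=m_K=m$, the ratio in \eqref{eq:bin_size_assmp} equals $1$, so that condition holds trivially. We have $m\geq 2$ since we are in the setting of the theorem. Moreover, $K\to\infty$, \eqref{eq:linear_assmp_1} and \eqref{eq:linear_assmp_2} are assumed in the corollary. Theorem~\ref{thm:power_lpt_confounder_model_general} therefore applies.

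\textbf{Simplifying the SNR.} With equal bins the combinatorial prefactor becomes
\[
\frac{K\frac{m-1}{m}}{\sqrt{K\frac{m-1}{m^2}}}=\sqrt{K}\sqrt{m-1}=\sqrt{Km}\,\sqrt{\tfrac{m-1}{m}}.
\]
The bins are of equal size $m$ and cover $Km$ points, where $Km=m\lfloor n/m\rfloor\in(n-m,n]$. For fixed $m$ this gives $\sqrt{Km}=\sqrt n\,(1+O(m/n))$. More generally, $K\to\infty$ forces $m/n\le 1/K\to0$, so $\sqrt{Km}=\sqrt n(1+o(1))$ still holds. If the bins are allowed to leave out more points, I would interpret $n$ as the number of binned points, or note that the footnote convention gives $Km= m\lfloor n/m\rfloor$. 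Hence
\[
\mathrm{SNR_{LPT}}=\sqrt{\tfrac{m-1}{m}}\cdot\frac{\sqrt n\rho_n}{\sqrt{1+\rho_n^2}}\,(1+o_P(1)).
\]

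\textbf{Main obstacle: multiplicative versus additive error.} The theorem gives a $(1+o_P(1))$ factor, while the corollary states an additive $o_P(1)$. Writing $A_n:=\sqrt{(m-1)/m}\,\sqrt n\rho_n/\sqrt{1+\rho_n^2}$, we have $\mathrm{SNR_{LPT}}=A_n+A_n\cdot o_P(1)$, and this is $A_n+o_P(1)$ only when $A_n=O(1)$, i.e.\ $\rho_n=O(n^{-1/2})$. I would handle this by splitting on the order of $A_n$:

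\emph{Bounded case.} Along subsequences where $A_n$ stays bounded, $A_n\cdot o_P(1)=o_P(1)$ directly.

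\emph{Divergent case.} Along subsequences where $A_n\to\infty$, both expressions diverge and $\Phi(\Phi^{-1}(\alpha)+\cdot)\to1$. The statements are then equivalent for the power conclusion, which is what the corollary is used for. To present it cleanly, I would state the equivalence in terms of power, or note that the proof of Theorem~\ref{thm:power_lpt_confounder_model_general} actually yields an error of the form $A_n\cdot o_P(1)$, which is additive $o_P(1)$ in the local regime $\rho_n\asymp n^{-1/2}$ of interest.

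The subsequence argument is the step I expect to need care. The remaining algebra is routine.
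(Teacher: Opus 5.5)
Your proposal is correct and follows the paper's route: the corollary is obtained by substituting $m_k=m$ into Theorem~\ref{thm:power_lpt_confounder_model_general}, so the prefactor becomes $\sqrt{K(m-1)}=\sqrt{\tfrac{m-1}{m}}\sqrt{Km}$ with $Km\approx n$. Your concern about multiplicative versus additive error is well founded, because the theorem only gives a $(1+\mathrm{o}_P(1))$ factor (the non-Gaussian analogue, Corollary~\ref{cor:equal_bin_nongaussian}, is stated that way), and the additive form can genuinely fail when $\sqrt{n}\rho_n\to\infty$---for instance with $m=2$, fixed $\rho_n\equiv\rho>0$, bounded $f_1=f_2$, and a binning for which $\frac1K\sum_k S_k^{(1)}\to0$ more slowly than $K^{-1/2}$, all hypotheses hold yet the gap is of order $\sqrt{K}\cdot\frac1K\sum_k S_k^{(1)}\to\infty$---so your resolution (read the additive claim in the bounded-$\sqrt{n}\rho_n$ regime, or through the resulting power formula) is the right one.
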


Combining these results with Theorem~\ref{thm:power_of_nnpt} allows us to compute the power of the LPT: for instance, with equal bin sizes $m_1 = \dots = m_K=m$,  we have
\[
    \E[\phi_{\textnormal{LPT}} \mid \mathbf Z] = \Phi\!\left(
    \Phi^{-1}(\alpha) + 
       \sqrt{\frac{m-1}{m}} \cdot \frac{\sqrt{n} \rho_n}{\sqrt{1 + \rho_n^2}}
      \right) + \mathrm{o}_P(1).
  \]

\paragraph{Examining the assumptions.} To better understand these results, we briefly discuss the assumptions. First, note that requiring $m_k \geq 2$ for all $k$ is no actual restriction: if $m_k = 1$ that bin is invariant under permutation and does not change the outcome of $\phi_{\mathrm{LPT}}$, so if the binning strategy contains bins of size $1$, the above can still be applied after ``dropping'' that bin (and so $K$ should be interpreted as the number of \textit{nontrivial} bins). However, \eqref{eq:bin_size_assmp} is crucial: as our method relies on an eventual application of the central limit theorem to $\sum_{k=1}^K T_k$, it becomes necessary to ensure that the bins are of comparable size so that one term does not dominate the sum. 

The smoothness assumptions \eqref{eq:linear_assmp_1} and \eqref{eq:linear_assmp_2} on the error terms $S^{(1)}_k,S^{(2)}_k$ are quite lax. In essence, these conditions require that the values of $f_1(Z), f_2(Z)$ are not too disparate within bins. In particular, the first condition \eqref{eq:linear_assmp_1} is quite weak and can easily be derived from mild assumptions. For example, if $\mathcal{Z} = \mathbb{R}^d$ and $f_1,f_2$ are bounded functions, then measure theoretic arguments (e.g., Lusin's theorem) imply that
\[\textnormal{If $\max_k \max_{i,i'\in B_k}\|Z_i-Z_{i'}\| = \mathrm{o}_P(1)$ then~\eqref{eq:linear_assmp_1} holds.} \]
Condition \eqref{eq:linear_assmp_2} is more stringent, as it depends on $\rho_n$, which is potentially quite small in local alternatives. However, there are two features of this requirement that are quite nice. First, as it is a product bound, the binning needs to only control either $S_k^{(1)}$ or $S_k^{(2)}$, but not both for this condition to hold. Second, the actual rate at which $S_k^{(1)}$ and $S_k^{(2)}$ must vanish is quite slow. For example, take an analogous setting to Section \ref{sec:validity_smooth} and assume $f_1, f_2$ to be Lipschitz; then condition \eqref{eq:linear_assmp_2} is implied if we assume that
\[
  \max_k \max_{i, j \in B_k} d(Z_i, Z_j) = \mathrm o(n^{-1/4}),
\]
i.e., the binning strategy produces bins of diameter $\ll n^{-1/4}$.

\subsubsection{Comparing the LPT to the oracle}
To summarize our results for the linear confounder setting, comparing the LPT (with the specific choice of $T$ above) to the oracle, let us first consider the setting of a constant bin size $m_k=m$ for ease of comparison. We have seen that the signal-to-noise ratio for the oracle test, and for the LPT, are asymptotically given by
\[\mathrm{SNR_{ORC}} \approx \dfrac{\sqrt n \rho_n}{1 - \rho_n^2}, \quad
\mathrm{SNR_{LPT}} \approx \sqrt{\frac{m - 1}{m}} \cdot \dfrac{\sqrt n \rho_n}{\sqrt{1 + \rho_n^2}}.\]
This implies that the detection threshold for LPT matches that of the oracle test: in the local alternative regime $\rho_n \to 0$, the signal-to-noise ratio is
  \[
    \mathrm{SNR_{LPT}} \approx  \sqrt{n}\,\sqrt{\frac{m-1}{m}}\sqrt{\frac{\rho_n^2}{\rho_n^2+1}} \asymp \sqrt{n}\,\sqrt{\frac{m-1}{m}}\,\rho_n,
  \]
  so the LPT exhibits the same detection boundary $\rho_n=\Theta(n^{-1/2})$ as the oracle test, which has
  \[
    \mathrm{SNR_{ORC}} = \sqrt \frac{n \rho_n^2}{1 - \rho_n^2} \asymp \sqrt{n} \rho_n
  \]
  
Moreover, these calculations also show that the potential gains due to larger bin size are limited. Again assuming $\rho_n\to0$ as $n\to\infty$, we can compare the LPT to the oracle as
\[\lim_{n \to \infty} \dfrac{\mathrm{SNR_{LPT}}}{\mathrm{SNR_{ORC}}} \to \sqrt{\dfrac{m - 1}{m}}.\]
In particular, the power of LPT increases with $m$ as we might expect, but the gain is bounded by a constant factor. At the extreme, for the constant bin size $m=2$, the LPT power loss is asymptotically expressed by the constant factor $\frac{1}{\sqrt{2}} \approx 0.707$, as compared to the signal-to-noise ratio of the oracle, meaning that there is an inherent limit to what can be gained by increasing the bin size. Since larger bin size can lead to loss of validity (as explored in Section~\ref{sec:validity}), this suggests that choosing large $m$ is risky and offers limited gain.

Finally, when implementing LPT, we are free to choose bins $B_k$ of varying sizes $m_k$---but in fact Theorem~\ref{thm:power_lpt_confounder_model_general} implies that power is maximized whenever bins are all of equal size. To see this, fix any number of bins $K$, and assume $m=n/K$ is an integer. Then we have
  \[
    \mathrm{SNR_{LPT}} = \frac{\sum_{k=1}^K \frac{m_k - 1}{m_k}}{\sqrt{\sum_{k=1}^K \frac{m_k-1}{m_k^2}}} \cdot \frac{ \rho_n}{\sqrt{1 + \rho_n^2}} \leq \sqrt{\sum_{k=1}^K m_k - 1} \cdot \frac{ \rho_n}{\sqrt{1 + \rho_n^2}} = \sqrt{n - K} \cdot \frac{\rho_n}{\sqrt{1 + \rho_n^2}},
  \]
  where the inequality step holds since $(\sum_{k=1}^K \frac{m_k - 1}{m_k})^2\leq {\sum_{k=1}^K (m_k - 1)} \cdot {\sum_{k=1}^K \frac{m_k-1}{m_k^2}} $, by Cauchy--Schwarz. However, if $m_k=m$ for all $k$, then this inequality becomes an equality---and therefore, $ \mathrm{SNR_{LPT}}$ is maximized by choosing bins of equal size.

\section{Numerical Simulations}\label{sec:sims}

In this section, we present a set of simulations to demonstrate the validity and power of our method.\footnote{Code to reproduce the experiments can be found at \url{https://github.com/davlichen/cond-independence-testing}.}

\subsection{Experiment 1: validity of adaptive-LPT}\label{sec:experiment1}

In the first experiment, we evaluate the Type~I error control of the adaptive-LPT test under a suitably designed numerical setting, and therefore verify the validity results established in Section \ref{sec:validity}. Our primary focus is to analyze performance of adaptive-LPT under the null model, and particularly study the role of the conditional distributions, $P_{X\mid Z}$ and $P_{Y\mid Z}$.

We take $Z \sim \mathrm{Unif}([0, \theta])$ and conditional on $Z$, $X, Y \iidsim \mathrm{Unif}([Z, Z+1])$ so that the null hypothesis $H_0^{\mathrm{CI}}$ is true. To implement adaptive-LPT, we form bins of size $m=2$, by pairing $Z_{(2i - 1)}$ with $Z_{(2i)}$ for each $i \in [n / 2]$. As before, here $Z_{(i)}$ denotes the $i$-th order statistic of observed $Z$ samples.
For each bin $B_k=\{i,j\}$, by construction, the conditional distributions satisfy
\begin{align*}\dtv(P_{X\mid Z_i},P_{X\mid Z_j}) &= \dtv(P_{Y\mid Z_i},P_{Y\mid Z_j}) \\&= \dtv\big(\mathrm{Unif}([Z_i, Z_i+1]),\mathrm{Unif}([Z_j, Z_j+1])\big) = \min\{|Z_i - Z_j|, 1 \}.\end{align*}
Since we pair adjacent $Z$ values, we expect to have $|Z_i - Z_j|\propto \theta/n$ for a typical bin $B_k = \{i,j\}$. Recalling the notation of Theorem~\ref{thm:validity_main}, then, we expect $\delta_n=\mathrm{O}_P(\theta^2/n)$. Theorem~\ref{thm:validity_main} immediately implies that we expect Type I error to be bounded as $\alpha + \mathrm{O}_P(\theta^2/n)$, and consequently, this implies asymptotic validity for $\theta = \mathrm{o}_P(n^{1/2})$.

To run the LPT, we construct a simple covariance based statistic:
\[
  T_k(\mathbf X_k, \mathbf Y_k) = (X_i - X_j)(Y_i - Y_j) \text{ for } B_k = \{i, j\}.
\]
(Recalling Section~\ref{sec:examples_Sec2_covariance}, this is the same except that we do not take an absolute value.)
Since this is a linearly decomposable test statistic, the stronger Type I error guarantee of Theorem~\ref{thm:validity_linear} holds as well: we have $\epsilon_n=\mathrm{O}_P(n^{-0.5})$, and so Type I error is bounded as $\alpha + \mathrm{O}_P(\theta/n^{0.75})$, which implies asymptotic validity for $\theta = \mathrm{o}_P(n^{0.75})$ (a stronger result than the scaling implied by Theorem~\ref{thm:validity_main}).

We vary sample size $n\in\{50, 100, 200, 400, 800, 1600, 3200, 6400\}$, and then Figure \ref{fig:ex1validity} displays the Type~I error curves against $n$, under four regimes, characterized by $\theta = n^a$ for $a\in\{0,0.5,0.6,0.75\}$; as larger values of $a$ correspond to larger intra-bin total variation distances (scaling with $n$), they should correspond to worse excess Type I error. As predicted by the theory, we see that asymptotic validity seems to be achieved for both the $\theta = n^{0.5}$ and $\theta = n^{0.6}$ settings, but not $\theta = n^{0.75}$.

\begin{figure}
  \centering
  \includegraphics[width=0.60\textwidth]{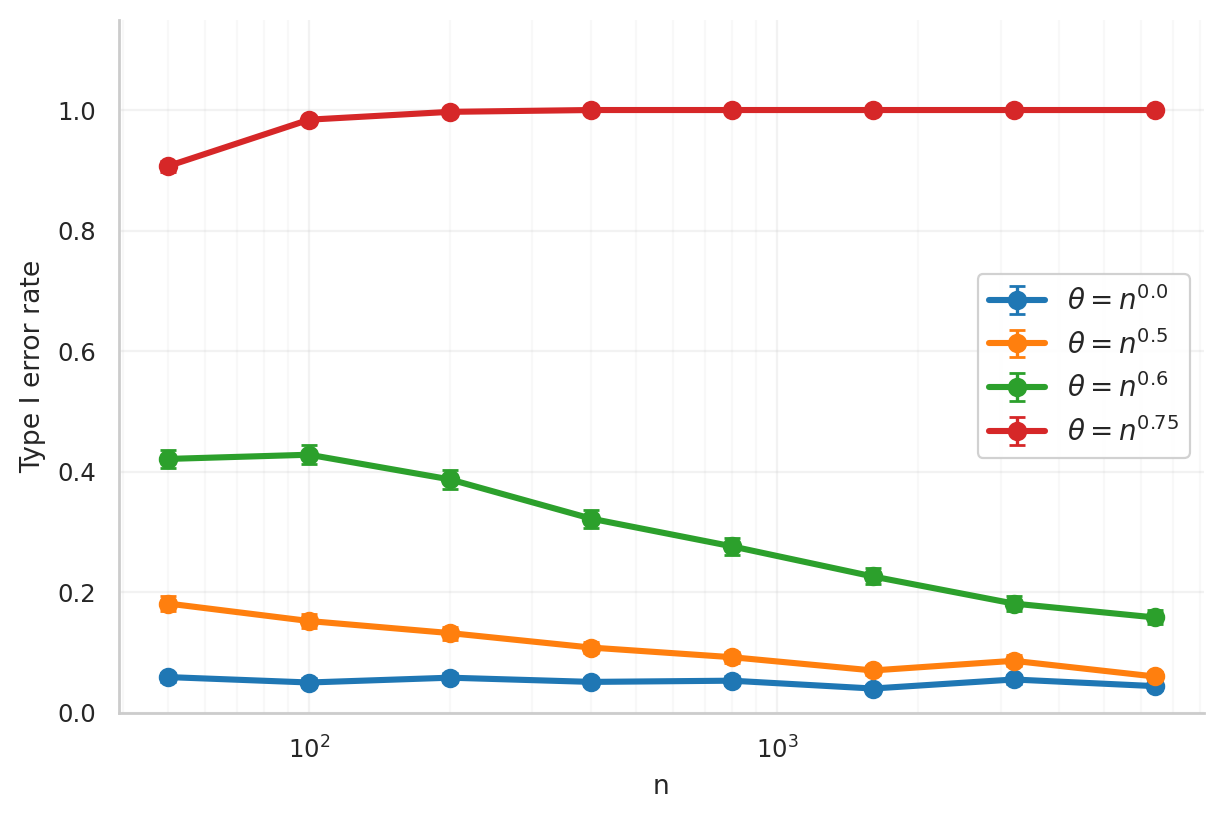}
  \caption{The plot displays Type I error rate of adaptive-LPT in Experiment 1, averaged over 1000 independent trials; standard error bars are shown, but are not easily visible as they are smaller than the points. See Section~\ref{sec:experiment1} for more details.}
  \label{fig:ex1validity}
\end{figure}

\subsection{Experiment 2: power in the linear confounder model}\label{sec:experiment2}

In the second experiment, we demonstrate the performance of adaptive-LPT in the linear confounder model~\eqref{model:linear_gaussian_confounder}, and validate the power results of Section \ref{sec:power}.
In particular, we draw $Z \sim \mathrm{Unif}([0, 1]), U \sim N(0, 1)$, and $\epsilon_1, \epsilon_2 \sim N(0, 1)$ and set
\[
  X = f(Z) + \beta U + \epsilon_1, \ Y = f(Z) + \beta U + \epsilon_2,
\]
with $f(z) = \frac{32}{3}z^{3}-16z^{2}+\frac{19}{3}z $. Note that under this model, $\beta$ alone characterizes the dependence between $X$ and $Y$, conditional on $Z$. When $\beta=0$, the null hypothesis $H_0^{\mathrm{CI}}$ holds. As $\beta$ increases, the model becomes progressively more different from the distributions that satisfy $H_0^{\mathrm{CI}}$.

For a bin $B_k$ with size $m_k$, we run LPT with the simple covariance based U-statistic, given by
\[
T_k = \frac{1}{m_k} \sum_{\substack{i\in B_k}} \left( X_i - \overline{\mathbf X}_k \right) \left( Y_i - \overline{\mathbf Y}_k \right),
\]
and form $T = \sum_{k=1}^K T_k$. This is the statistic discussed in Section~\ref{sec:examples_Sec2_covariance} except that we do not take an absolute value.
We study the power of adaptive-LPT under two binning strategies:
\begin{itemize}
    \item The first strategy is adaptive: we group points into bins of equal size by taking $B_1$ to be the indices corresponding to the first $m$ order statistics of $\mathbf Z$, $B_2$ to be the indices corresponding to the next $m$ order statistics, and so on. 
    \item The second strategy groups points by partitioning the unit interval into continuous intervals of width $w = m/n$ and grouping points which fall into the same interval together, i.e., the first bin is given by $B_1 = \{i : Z_i\in[0,\frac{m}{n}]\}$, and so on. 
\end{itemize}

Note that the expected number of points in each interval is exactly $m$ for both strategies, making them comparable. We vary $n \in \{200, 400, 800, 1600, 3200, 6400\}$, and evaluate the performance of adaptive-LPT. We validate our theoretical results from Section~\ref{sec:linear_model}, and illustrate the effect of design choices, specifically $m$, in governing power of the LPT. However, it is also important to note that, $m$ which controls the refinement of the bins, also affects the Type I error control (in the setting $\beta=0$). Therefore, in order to have a complete study of the effect of $m$ on power of adaptive-LPT, we should look at both Type~I error and power.

Firstly, the left side of Figure \ref{fig:ex2} displays the Type I error, in the setting $\beta=0$. We see that Type I error inflation is larger for binning schemes with larger bins, as expected. Moreover, we observe that for small sample sizes, the largest bin choices seem completely unreliable, having Type I error close to $1$. We remark that even though large bins (such as those of size $n^{3/5}$) satisfy the conditions for asymptotic validity in Theorem \ref{thm:validity_linear}, the convergence can be quite slow even for reasonably well-behaved distributions. The aforementioned model choice is one instance of that behavior.
  
Next, the right side of Figure \ref{fig:ex2} displays the power of the procedure with $\beta = 0.2$ 
as we vary both the binning strategy and bin size. The choice $\beta=0.2$ enables a meaningful comparison since with small sample sizes, even the oracle test fails to detect the departure from the null, while as sample size increases, the power eventually increases. Therefore, the power-curve for oracle test gives a non-trivial baseline. Next, we observe that data-adaptive bins show higher power than their fixed-partition counterparts, at each value of $m$, and for data-adaptive bins, increasing the bin size $m$ leads to higher power (as expected), but has diminishing returns: we can see that even a small constant bin size $m=6$ has power nearly matching the oracle. Note that larger bins, such as $m=n^{0.5}$, appear to show higher power than the oracle, but this is not a contradiction since the Type I error is severely inflated for such choices.

\begin{figure}
  \centering
  \includegraphics[width=0.45\textwidth]{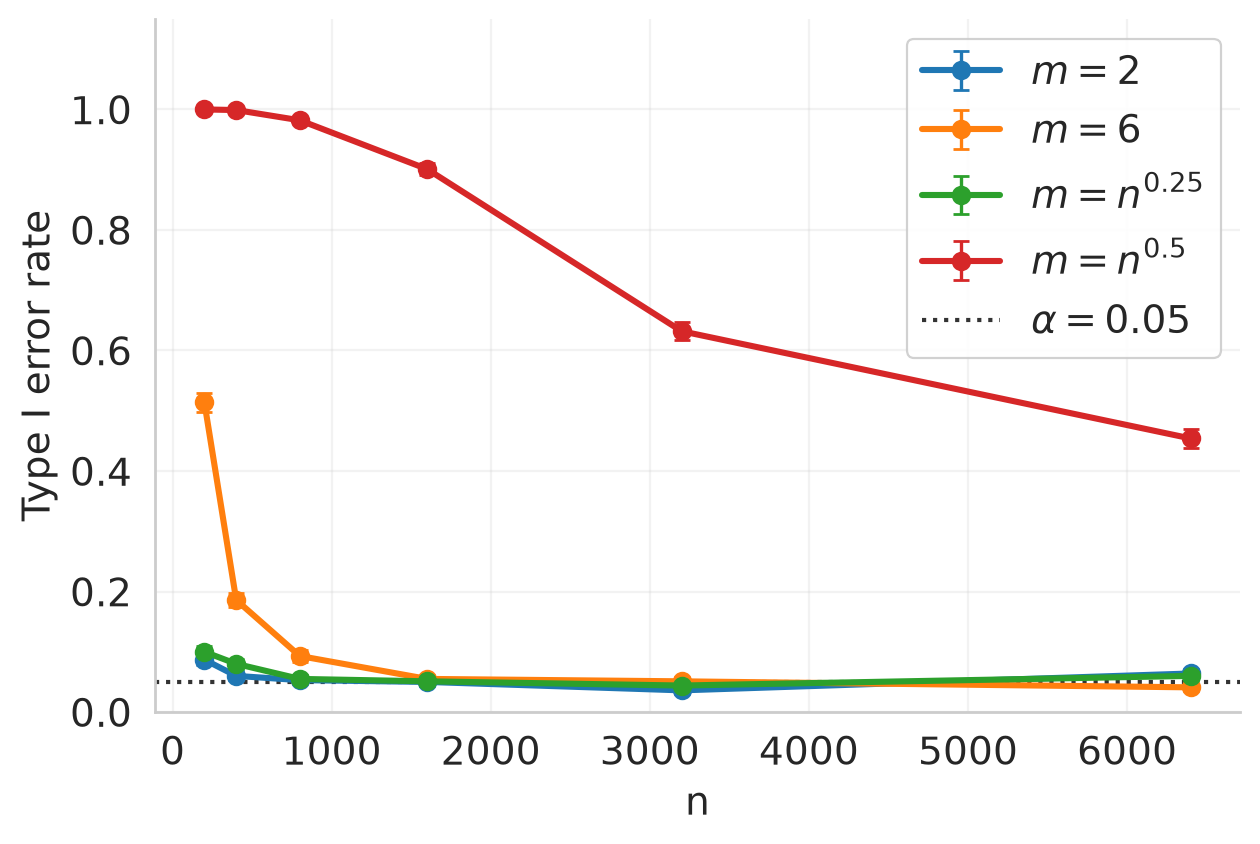}
  \includegraphics[width=0.45\textwidth]{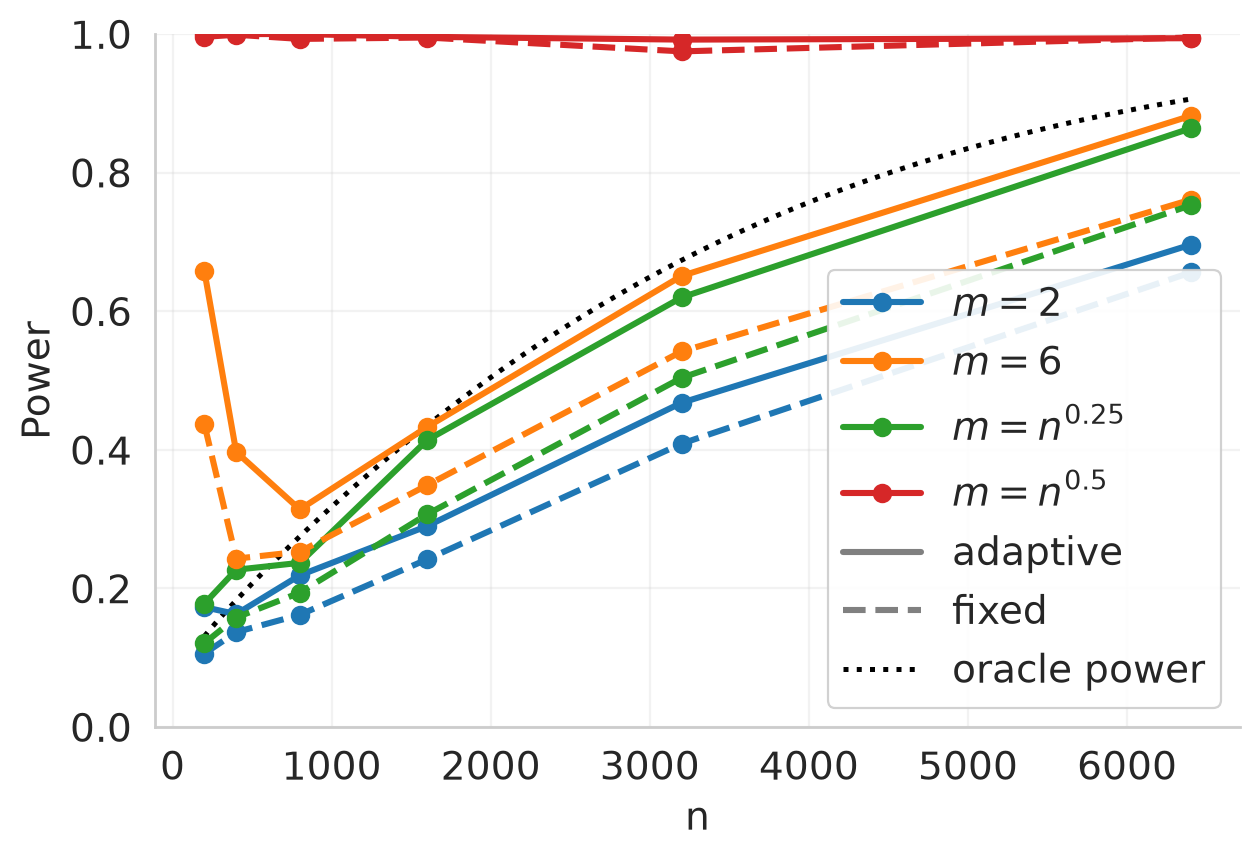}
  \caption{Type I error (left) and power (right) in Experiment 2 (Section~\ref{sec:experiment2}). Solid lines use adaptive equal-size bins of size $m$; dashed lines use fixed intervals of width $w = m/n$, so the expected bin size is $m$ for both (matching colors across panels share an expected bin size). As $n$ grows the test approaches validity, but very large bins (e.g.\ $m = n^{0.5}$) converge slowly. The black dotted line is the oracle test (Theorem~\ref{thm:power_lpt_confounder_model_general}). These are averages over 1000 independent trials; standard error bars are shown, but are not easily visible as they are smaller than the points.}
  \label{fig:ex2}
\end{figure}

\subsection{Experiment 3: adaptive binning}\label{sec:experiment3}

In the final experiment, we consider a data-generating model with strong structural properties to highlight the advantages of data-adaptive binning under the LPT framework. This supports the practicality of our proposal beyond the setting considered in Experiment $2$ and Section \ref{sec:linear_lpt_power}.

In this setting, we set $A = \{ (x_1, x_2) \mid \sqrt{x_1^2 + x_2^2} \in [1, 1.5] \}\subseteq\mathbb{R}^2$ to be the annulus with inner radius $1$ and outer radius $1.5$, and draw $Z \sim \mathrm{Unif}(A)$, that is $Z$ is sampled uniformly from this region. Then, writing $Z = (Z_1, Z_2)$, we draw $X, Y \sim N(|Z|, 1)$ with $\mathrm{Cov}(X, Y\mid Z) = 0.5 \sin(5 \arctan(Z_1 / Z_2))$.

This setup is picked specifically such that the distribution of $(X, Y, Z)$ has a strong structure which does not fit the usual Cartesian grid, meaning that a fixed-partition binning strategy (such as the standard strategy of an axis-aligned grid in $\mathbb R^2$, which seems most natural) may not suit the structure of the data.

In this case, we use a kernel-based statistic from the universal unconditional permutation test of \cite{gretton2007kernel}. Fix the Gaussian kernel $k(u,v)=\frac{\exp(-(u-v)^2)}{2}$. For each bin $B_k = \{i_1, \dots, i_m\}$, define matrices $K^{(k)}_X$ and $K^{(k)}_Y$ with entries
\[
  (K^{(k)}_X)_{ab} = k\!\left(X_{i_a}, X_{i_b}\right) = \exp\!\left(-\frac{(X_{i_a}-X_{i_b})^2}{2}\right),
\]
and
\[
  (K^{(k)}_Y)_{ab} = k\!\left(Y_{i_a}, Y_{i_b}\right) = \exp\!\left(-\frac{(Y_{i_a}-Y_{i_b})^2}{2}\right).
\]

Let the (double-)centering matrix be
\[
H_k := I_{m_k} - \frac{1}{m_k}\mathbf{1}\mathbf{1}^\top,
\]
and define the doubly-centered matrices
\[
\widetilde K^{(k)}_X := H_k K^{(k)}_X H_k, 
\qquad
\widetilde K^{(k)}_Y := H_k K^{(k)}_Y H_k.
\]

Then the per-bin statistic is
\[
T_k := \frac{1}{m_k^2}\,\mathrm{tr}\!\left(\widetilde K^{(k)}_X \,\widetilde K^{(k)}_Y\right).
\]

As in Experiment $2$, we test two different binning strategies. The first is a data-adaptive strategy, which via simulated annealing optimizes the total sum of bin diameters, i.e.\ for a fixed number of bins $K$ of fixed sizes, we (approximately) minimize the objective
\[
  \sum_{k=1}^K \max_{i, j \in B_k} |Z_i - Z_j|
\]
over all possible bins of those sizes. The second strategy uses a fixed partition, by partitioning $\mathbb R^2$ into squares of various side lengths and bins together points which have $Z$ values in the same square. See Figure~\ref{fig:ex3_bins} for an illustration. To make the two strategies comparable, 
a data-adaptive strategy with an equal number of data points $m$ in each bin is compared with an implementation of the fixed partition by choosing a grid size in $\mathbb R^2$ that results in the same average number of data points $m$ per bin. We run simulations for $n \in \{ 100, 250, 500, 750, 1000, 1250 \}$.

The results are shown in Figure~\ref{fig:ex3}. We see that both binning procedures attain approximate validity under the null. However, the data-adaptive binning strategy enjoys greater power, as compared to the fixed-partition strategy constructed by partitioning $\mathbb R^2$ into a fixed grid with the same average number of data points $m$ per bin. Moreover, note that increasing the size of the bins has diminishing returns for power, (e.g. the power gain from $m=4$ to $m=8$ is much larger than that of $m=8$ to $m=12$), and these effects are most pronounced in moderate sample sizes. 

\begin{figure}[ht]
  \centering
  \includegraphics[width=0.45\textwidth]{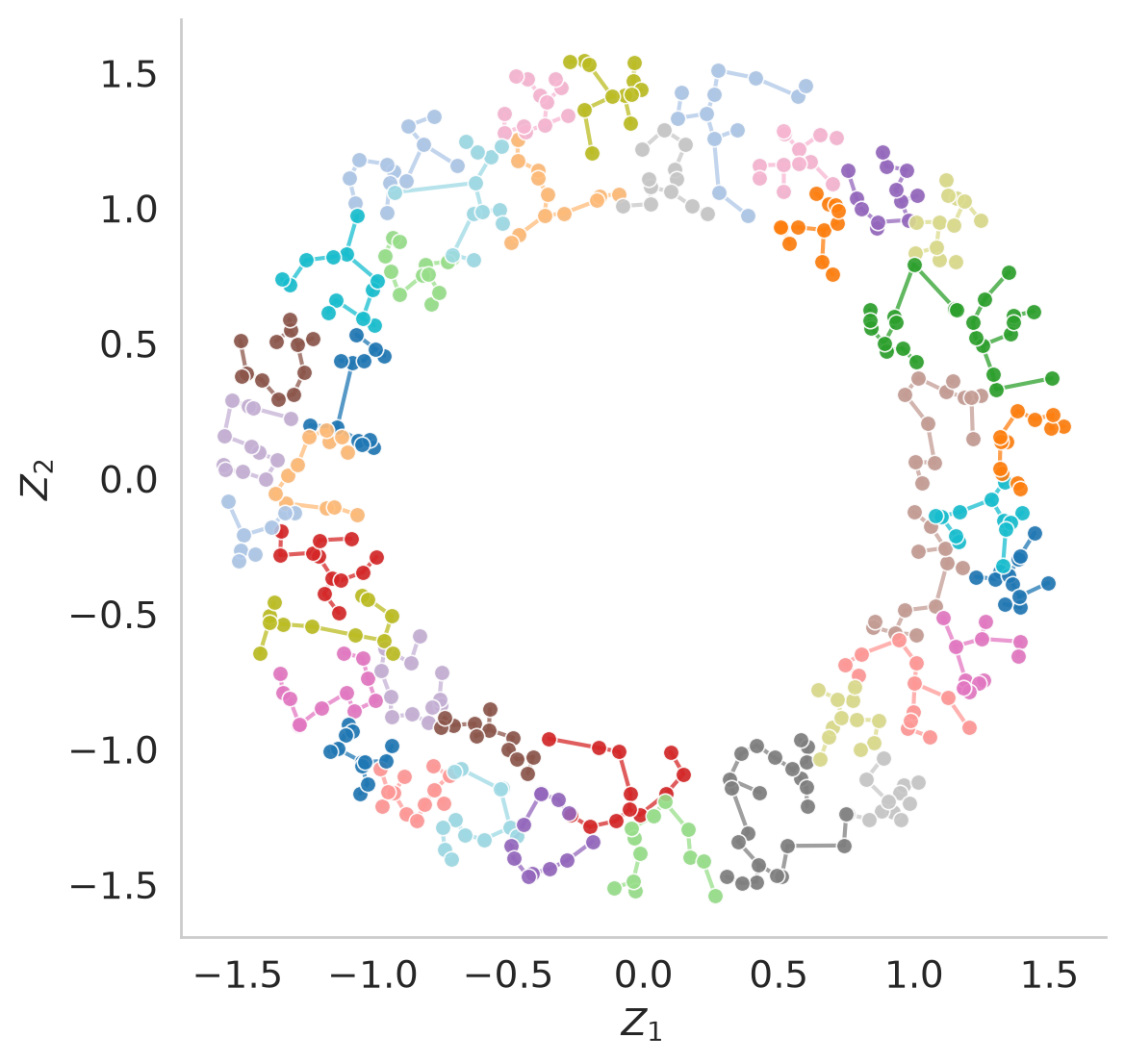}
  \includegraphics[width=0.45\textwidth]{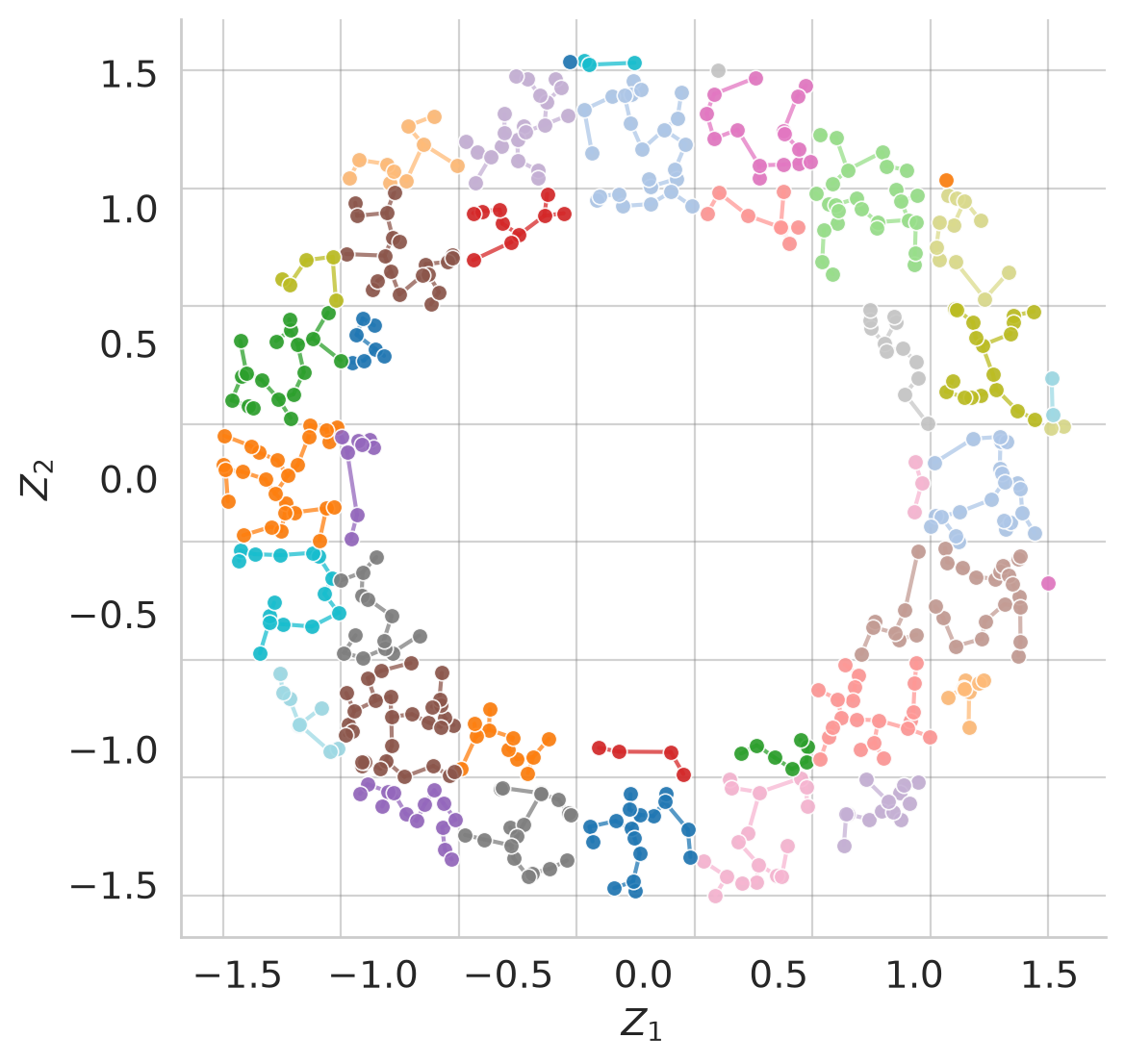} \\
  \caption{Demonstration of different binning procedures in Experiment 3 (see Section~\ref{sec:experiment3}), in a single trial (of $n=500$ points). Bins share a common color. Data-adaptive binning (left) is the result of simulated annealing to minimize the diameter of each bin of size $12$. Fixed-partition binning (right) is the result of dividing $\mathbb R^2$ into bins (shown as a grid on the plot), where the grid spacing is approximately $0.406$; this value is chosen such that the expected number of points in each nonempty bin is $12$ points and that the two methods are comparable.}
  \label{fig:ex3_bins}
\end{figure}

\begin{figure}[!htb]
  \centering
  \includegraphics[width=0.45\textwidth]{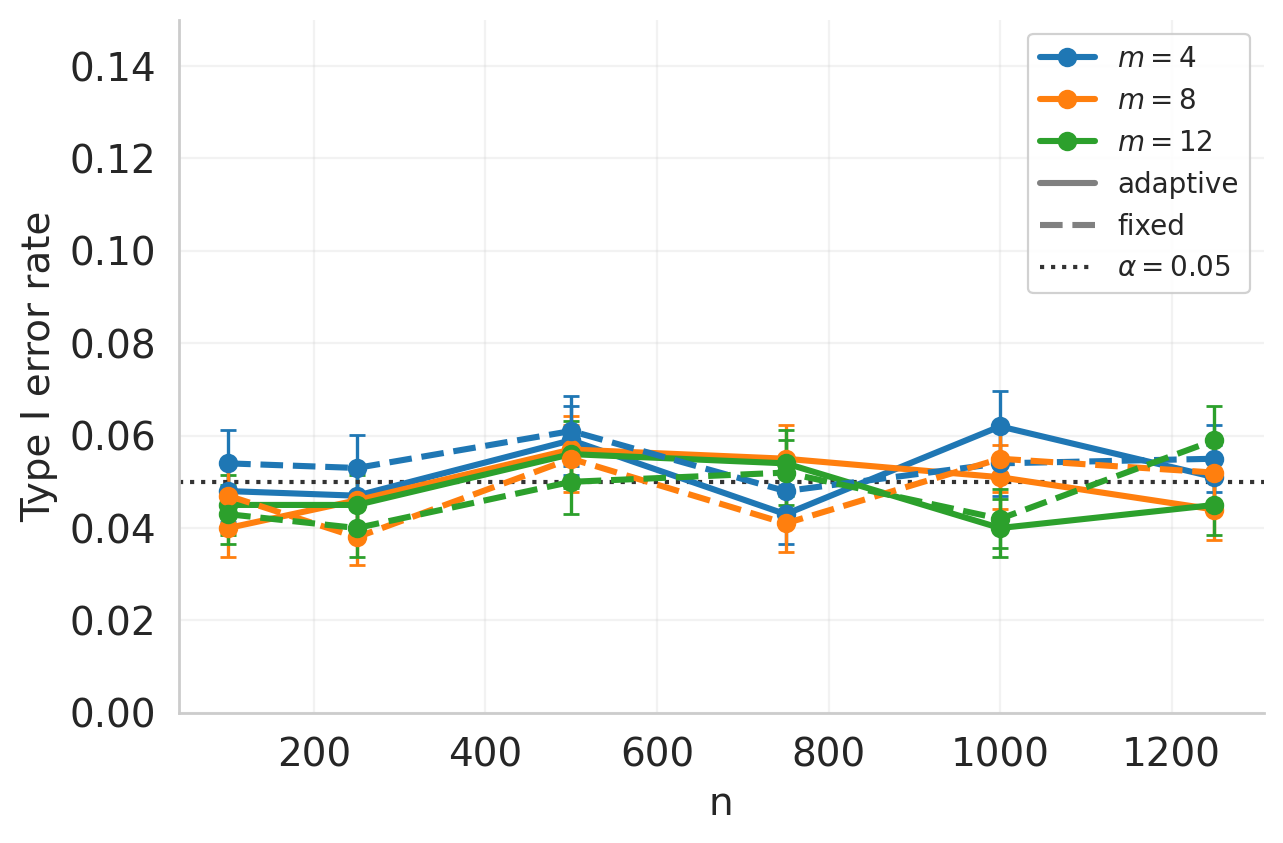}
  \includegraphics[width=0.45\textwidth]{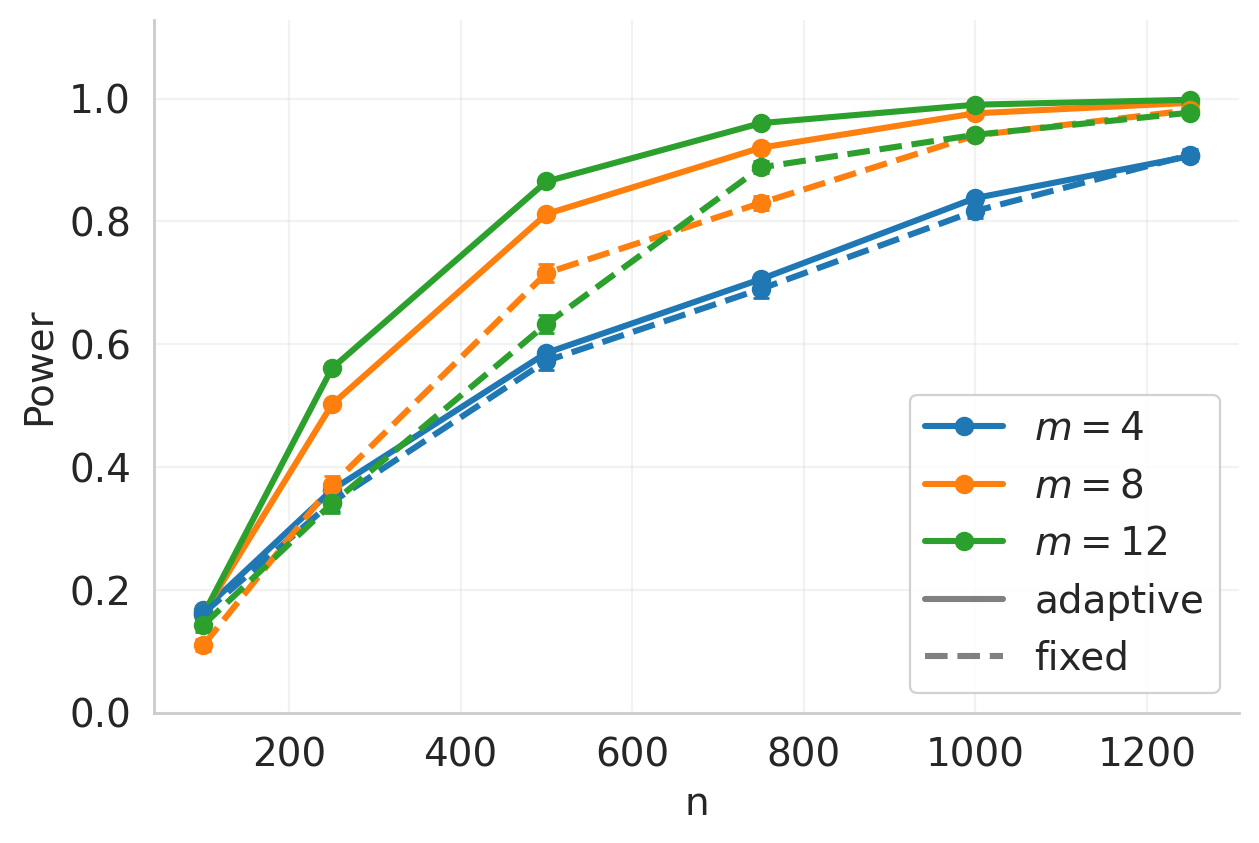} \\
  \caption{Results for Experiment 3 (see Section~\ref{sec:experiment3}), showing Type I error (left) and power (right), averaged over 1000 independent trials; standard error bars are shown, but are not easily visible as they are smaller than the points. Solid lines represent a data-adaptive binning strategy, whereas dashed lines represent a fixed-partition binning strategy.}
  \label{fig:ex3}
\end{figure}

\section{Conclusion}

In this paper, we propose data-adaptive binning strategies for local permutation tests, to construct flexible and powerful tests for conditional independence. We began by investigating such tests' validity properties and provided upper bounds on the excess Type I error based on the underlying smoothness properties of the conditional distributions, generalizing the results of \cite{kim2022local} that are limited to prefixed bins. In that process, we derived a stronger Type I guarantee under linear decomposability of the test statistic. We further derived asymptotic approximations to the power of the LPT tests and demonstrated that adaptively binned local permutation tests may closely approximate the power of oracle likelihood-ratio tests. Finally, specializing to a linear confounding model, our power guarantees closely characterize the dependence of the chosen binning strategy on the resulting power of the LPT, enabling us to further establish the optimality of equally-sized bins and a quantitative characterization of the diminishing returns of bin size to power.

We close by mentioning some potential directions for future work. It would be interesting to examine the effect of different test statistics when used for local permutation tests. Various unconditional test statistics (e.g.\ kernel and distance measures such as \cite{gretton2007kernel}, used in Section \ref{sec:sims} above) have received detailed analysis, and it would be interesting to see if their efficiency guarantees from the unconditional testing carry over to the conditional independence setting. For example, Theorem \ref{thm:power_lpt_confounder_model_general} showed that a very simple non-oracle statistic with a very simple binning strategy can achieve oracle performance in a specific model class.  the extent to which this generalizes---and to which distributional classes, with which test statistics---is an interesting open direction.

\subsection*{Acknowledgements}
D.C. and R.F.B. were partially supported by the National Science Foundation via grant DMS-2023109. R.F.B. was partially supported by 
 the Office of Naval Research via grant N00014-24-1-2544.
 
\bibliographystyle{apalike}

\bibliography{ref}

\clearpage

\appendix

\section{Additional results} \label{sec:additional}

In this section, we develop several additional results that extend the theoretical findings of this work. In Section~\ref{app:other_metrics}, we extend the validity results of Section~\ref{sec:validity}, which relied on total variation distance between conditional distributions, to allow alternative measures of distance. In Section~\ref{sec:non_gaussianity}, we extend the power analysis for the linear confounder model (Section~\ref{sec:linear_model}) to remove the Gaussianity assumption. Finally, in Section~\ref{sec:binary_power}, we develop a power analysis for the betting-based test statistic constructed in Section~\ref{ex:betting}.

\subsection{Validity bounds for other probability metrics} \label{app:other_metrics}

\subsubsection{Definitions and identities}

First, let us give some definitions. Let $P$ and $Q$ be two probability measures with densities $p$ and $q$ according to a shared dominating measure $\mu$. We thereby define the following probability metrics.

\begin{definition}
  For any $\gamma \geq 1$, the generalized $\gamma$-Hellinger distance is given by
  \[
    \mathrm d_{\mathrm H,\gamma}(P, Q) = \left( \frac{1}{2} \int |p^{1/\gamma} - q^{1/\gamma}|^\gamma \mathsf{d}\mu \right)^{1/\gamma}.
  \]
\end{definition}

$\gamma = 1$ corresponds to the total variation distance $\mathrm{d_{TV}}$, and $\gamma = 2$ to the usual Hellinger distance $\mathrm d_{\mathrm H}$.

\begin{definition}
  For any $\gamma > 0$, $\gamma \neq 1$, the R\'enyi divergence of order $\gamma$ is given by
  \[
    \mathrm{d_{R, \gamma}}(P \| Q) = \dfrac{1}{\gamma - 1} \log \left ( \int \left( \dfrac{p}{q} \right)^\gamma q\mathsf{d}\mu  \right),
  \]
  while for $\gamma=1$, the R\'enyi divergence of order $1$ is simply the KL divergence.
\end{definition}

\subsubsection{Validity results}
We first state generalized versions of
Theorem \ref{thm:validity_main}, that are based upon the Hellinger distance.

\begin{theorem}\label{thm:validity_hellinger}
  For any $\alpha \in (0, 1)$, Type I error is bounded as
  \[
    \mathbb P \left( p\leq \alpha \mid \mathbf Z \right) \leq \alpha + \left( \sum_{k=1}^K 8 m_k \left( \max_{i, j \in B_k} \mathrm d_{\mathrm H}^2(P_{X \mid Z_i}, P_{X \mid Z_j}) \right) \left( \max_{i, j \in B_k} \mathrm d_{\mathrm H}^2(P_{Y \mid Z_i}, P_{Y \mid Z_j}) \right) \right)^{1/2}.
  \]
\end{theorem}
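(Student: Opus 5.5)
We follow the same two-step structure as the proof of Theorem~\ref{thm:validity_main}, but we pass through Hellinger distance to exploit its tensorization. By the argument behind Lemma~\ref{lem:initial_tv_bound}, the excess Type~I error is bounded by $\dtv\big(\otimes_k P_k(\bZ),\otimes_k P^*_k(\bZ)\big)$. Instead of bounding this by a sum of per-bin TV terms, we use $\dtv(P,Q)\le \sqrt{2}\,\mathrm d_{\mathrm H}(P,Q)$, with $\mathrm d_{\mathrm H}^2=\frac12\int(\sqrt p-\sqrt q)^2$. We then apply the subadditivity of squared Hellinger distance over product measures, $\mathrm d_{\mathrm H}^2(\otimes_k P_k,\otimes_k Q_k)\le \sum_k \mathrm d_{\mathrm H}^2(P_k,Q_k)$. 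Together these give
\[
\mathbb P(p\le\alpha\mid\bZ)\le \alpha+\Big(2\sum_{k=1}^K \mathrm d_{\mathrm H}^2\big(P_k(\bZ),P_k^*(\bZ)\big)\Big)^{1/2}.
\]
It therefore suffices to prove the per-bin bound
\[
\mathrm d_{\mathrm H}^2(P_k,P_k^*)\le 4m_k\,\max_{i,j\in B_k}\mathrm d_{\mathrm H}^2(P_{X\mid Z_i},P_{X\mid Z_j})\,\max_{i,j\in B_k}\mathrm d_{\mathrm H}^2(P_{Y\mid Z_i},P_{Y\mid Z_j}).
\]

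For the per-bin bound, fix a bin of size $m$ and write $f_i,g_i$ for the conditional densities of $X,Y$ given $Z_i$. Conditional on $\bZ$, the law $P_k$ of $(\mathbf x,\mathbf y)$ has density $p(\mathbf x,\mathbf y)=\frac1{m!}\sum_\sigma\prod_i f_{\sigma(i)}(x_i)g_{\sigma(i)}(y_i)$. The law $P_k^*$ has density $q(\mathbf x,\mathbf y)=F(\mathbf x)G(\mathbf y)$, where $F=\frac1{m!}\sum_\sigma\prod_i f_{\sigma(i)}(x_i)$ and $G$ is defined analogously; $q$ is the product of the two marginals of $p$. We plan to bound the Hellinger affinity directly rather than go through transpositions as in Lemma~\ref{lem:bin_tv_bound}. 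The natural first move is to fix a reference index and write $f_i=f+\Delta_i$ and $g_i=g+\Gamma_i$. Then $p-q$ equals the ``covariance'' across $\sigma$ of $\prod_i f_{\sigma(i)}(x_i)$ and $\prod_i g_{\sigma(i)}(y_i)$. This covariance vanishes whenever either family is constant, which is the source of the product structure.

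To convert this into squared Hellinger distances, we would work with square roots. Set $a_\sigma(\mathbf x)=\prod_i\sqrt{f_{\sigma(i)}(x_i)}$, and define $b_\sigma(\mathbf y)$ analogously. We then use $\mathrm d_{\mathrm H}^2(P,Q)\le \chi^2$-type or $\int(p-q)^2/(\sqrt p+\sqrt q)^2$ bounds. Alternatively, we represent both laws as mixtures over $\sigma$ (for $p$) and over $(\sigma,\sigma')$ (for $q$), and use joint convexity of $\mathrm d_{\mathrm H}^2$ to couple $\sigma'$ with $\sigma$. A cleaner route is the following: $\sigma'=\sigma\circ\tau$, where $\tau$ is uniform, and joint convexity reduces matters to comparing $\prod_i f_{\sigma(i)}(x_i)g_{\sigma(i)}(y_i)$ with its $\tau$-mismatched analogue. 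Convexity alone loses the product structure, however, since it gives a sum of $X$ and $Y$ terms. So we must keep the averaging over $\sigma$ inside, and expand $\sqrt{p}-\sqrt{q}$ to second order in $(\sqrt{f_i}-\sqrt{f_j})$ and $(\sqrt{g_i}-\sqrt{g_j})$. The first-order terms cancel by the marginal-matching property, and the cross term contributes $\sum_{i}\|\sqrt{f_i}-\sqrt{\bar f}\|^2\|\sqrt{g_i}-\sqrt{\bar g}\|^2$. This sum is at most $m$ times the product of the maximal squared Hellinger distances, which matches the factor $m_k$ (rather than $m_k-1$) in the statement.

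The main obstacle is precisely this last step: rigorously showing that $\mathrm d_{\mathrm H}^2(P_k,P_k^*)$ is controlled by the product rather than the sum of the $X$- and $Y$-discrepancies. Hellinger distance is nonlinear in the densities, and a naive Taylor expansion of the square root is not uniform. We would first try to bound $\mathrm d_{\mathrm H}^2(p,q)\le \frac12\int (p-q)^2/q$, noting that $q$ is a product and hence admits a tractable expansion via the covariance representation. If that fails for lack of density lower bounds, we would instead use $\mathrm d_{\mathrm H}^2$ with the mixture over $\sigma$ of square-root vectors. In that approach, $\sqrt{p}\ge\frac1{m!}\sum_\sigma a_\sigma b_\sigma$ by concavity, and the resulting inner products factor as products of the affinities $\int\sqrt{f_if_j}=1-\mathrm d_{\mathrm H}^2(f_i,f_j)$. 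Elementary inequalities of the form $1-(1-u)(1-v)-\ldots\le 2uv$ should then yield the product bound, with constant $4$ up to bookkeeping.
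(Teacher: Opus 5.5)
The first step of your proposal is correct. Reducing to $\alpha+\bigl(2\sum_k\mathrm d_{\mathrm H}^2(P_k(\bZ),P_k^*(\bZ))\bigr)^{1/2}$ via $\dtv\le\sqrt2\,\mathrm d_{\mathrm H}$ and subadditivity of $\mathrm d_{\mathrm H}^2$ over products is exactly the paper's Lemma~\ref{lem:initial_h2_bound}. The gap is the per-bin bound, which carries all the content. You acknowledge it is unproven, and neither of your two concrete routes can deliver it.

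Write $\Delta_X,\Delta_Y$ for the two maxima in the statement, and $h^X_{ij}=\mathrm d_{\mathrm H}^2(P_{X\mid Z_i},P_{X\mid Z_j})$, with $h^Y_{ij}$ defined similarly.

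\emph{Concavity route.} Set $L:=\frac1{m!}\sum_\sigma a_\sigma b_\sigma\le\sqrt p$. Then $\int L\sqrt q\le\|L\|_{L^2}$, where
\[
\|L\|_{L^2}^2=\E_{\sigma,\sigma'}\prod_{i}\bigl(1-h^X_{\sigma(i)\sigma'(i)}\bigr)\bigl(1-h^Y_{\sigma(i)\sigma'(i)}\bigr).
\]
This falls short of $1$ at first order in each discrepancy separately, so any bound obtained this way is additive, not a product. Concretely, if the $P_{Y\mid Z_i}$ all coincide but the $P_{X\mid Z_i}$ do not, then $P_k=P_k^*$, yet your bound is at least $1-\|L\|_{L^2}>0$. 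Jensen discards exactly the mass whose cancellation produces the product.

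\emph{$\chi^2$ route.} This has the wrong dependence on $m$. Take $P_{X\mid Z_i}=(1-\epsilon)\mathrm{Unif}[0,1]+\epsilon\,\mathrm{Unif}(i,i+1]$ for $i=1,\dots,m$, and $P_{Y\mid Z_i}$ likewise with $\eta$. Then $\Delta_X=\epsilon$ and $\Delta_Y=\eta$. Configurations with exactly one $X$-tail and one $Y$-tail coordinate give $\chi^2(P_k\|P_k^*)=m^2\epsilon\eta\,(1+o(1))$ as $\epsilon,\eta\to0$. Meanwhile $\mathrm d_{\mathrm H}^2(P_k,P_k^*)$ is only of order $m\epsilon\eta$. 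So $\frac12\chi^2$ exceeds your target $4m\Delta_X\Delta_Y$ once $m>8$.

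The Taylor route is, as you say, only heuristic.

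\textbf{The missing idea} is how to pass from a general relative permutation to single transpositions while paying only a linear factor in $m$. Chaining transpositions as in the TV proof fails here: the triangle inequality holds for $\mathrm d_{\mathrm H}$, not for $\mathrm d_{\mathrm H}^2$, so chaining would cost $(m-1)^2$. The paper proceeds as follows.
\begin{enumerate}
\item Set $R_\pi=\mathcal L(X_\sigma,Y_{\sigma\circ\pi}\mid\bZ)$, so that $P_k=R_{\mathrm{Id}}$ and $P_k^*=\E_\pi R_\pi$. Convexity gives $\mathrm d_{\mathrm H}^2(P_k,P_k^*)\le\E_\pi\,\mathrm d_{\mathrm H}^2(R_{\mathrm{Id}},R_\pi)$.
\item With a permutation-invariant dominating measure $\mu$ and $f_\pi=\sqrt{\mathsf dR_\pi/\mathsf d\mu}$, this right-hand side equals $\E_\pi\|f_\pi-\bar f\|_{L^2(\mu)}^2$. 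That is a variance over $\pi$ in a Hilbert space.
\item Apply pointwise the Poincar\'e (spectral-gap) inequality for the random-transposition walk on $\cS_m$, namely $\Var f(\sigma)\le\frac{m-1}{4}\E[(f(\sigma)-f(\sigma\circ\tau))^2]$ (Lemma~\ref{lem:permutation_variance}). This gives $\mathrm d_{\mathrm H}^2(P_k,P_k^*)\le\frac{m-1}{2}\E_\tau\,\mathrm d_{\mathrm H}^2(R_{\mathrm{Id}},R_\tau)$.
\item For a fixed transposition $\tau$, average over $\kappa$ and $\kappa\circ\tau$. This writes $R_{\mathrm{Id}}$ and $R_\tau$ as mixtures of the paired forms $\frac12(P\times Q+P'\times Q')$ and $\frac12(P\times Q'+P'\times Q)$.
\end{enumerate}
Convexity loses the product structure only if you push it down to a single $\sigma$; stopped at these paired mixtures, it preserves it. Lemma~\ref{lem:dh2_product_swap} is your $\int(p-q)^2/(\sqrt p+\sqrt q)^2$ computation, carried out where the denominator factorizes. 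It bounds each term by $2\cdot2\Delta_X\cdot2\Delta_Y$. This yields $4(m_k-1)\Delta_X\Delta_Y$ per bin, and $2\cdot4(m_k-1)\le8m_k$ gives the theorem.
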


Moreover, applying Lemmas \ref{lem:gen_hellinger} and \ref{lem:renyi} of Appendix \ref{sec:app_tech} gives the following corollaries.

\begin{corollary}\label{cor:hellinger}
  For any $\alpha \in (0, 1)$, and $1 \leq \gamma \leq 2$, Type I error is bounded as
  \[
    \mathbb P \left( p\leq \alpha \mid \mathbf Z \right) \leq \alpha +
    \left( \sum_{k=1}^K C_\gamma m_k \left( \max_{i, j \in B_k} \mathrm d_{\mathrm H,\gamma}^{\gamma}(P_{X \mid Z_i}, P_{X \mid Z_j}) \right) \left( \max_{i, j \in B_k} \mathrm d_{\mathrm H,\gamma}^{\gamma}(P_{Y \mid Z_i}, P_{Y \mid Z_j}) \right) \right)^{1/2},
  \]
  and for $\gamma > 2$, the bound is
  \[
    \mathbb P \left( p\leq \alpha \mid \mathbf Z \right) \leq \alpha +
    \left( \sum_{k=1}^K C_\gamma m_k \left( \max_{i, j \in B_k} \mathrm d_{\mathrm H,\gamma}^{2}(P_{X \mid Z_i}, P_{X \mid Z_j}) \right) \left( \max_{i, j \in B_k} \mathrm d_{\mathrm H,\gamma}^{2}(P_{Y \mid Z_i}, P_{Y \mid Z_j}) \right) \right)^{1/2}.
  \]
  where $C_\gamma$ is a constant depending only on $\gamma$.
\end{corollary}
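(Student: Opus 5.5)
The plan is to deduce the corollary directly from Theorem~\ref{thm:validity_hellinger}. That theorem already has the required shape: it bounds the Type I error by the square root of $\sum_k 8m_k$ times a product of maximal squared Hellinger distances within each bin. So it suffices to show a comparison inequality between the ordinary Hellinger distance and the generalized one, namely
\[
\mathrm d_{\mathrm H}^2(P,Q)\leq c_\gamma\, \mathrm d_{\mathrm H,\gamma}^{\gamma}(P,Q)\quad (1\le\gamma\le 2),\qquad
\mathrm d_{\mathrm H}^2(P,Q)\leq c_\gamma\, \mathrm d_{\mathrm H,\gamma}^{2}(P,Q)\quad (\gamma>2).
\]
This is presumably the content of Lemma~\ref{lem:gen_hellinger}. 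Since the bound in Theorem~\ref{thm:validity_hellinger} is monotone in each $\max_{i,j\in B_k}\mathrm d_{\mathrm H}^2$, we substitute the comparison for both the $X$-factor and the $Y$-factor. This yields the stated bounds with $C_\gamma=8c_\gamma^2$. No probabilistic work beyond Theorem~\ref{thm:validity_hellinger} is needed.

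\textbf{Case $1\le\gamma\le 2$.} Here I would prove a pointwise inequality $(\sqrt a-\sqrt b)^2\le c_\gamma |a^{1/\gamma}-b^{1/\gamma}|^\gamma$ for $a,b\ge 0$. We then integrate it with $a=p$, $b=q$; the factors of $\tfrac12$ in the definitions are harmless and absorbed into $c_\gamma$. Both sides are homogeneous of degree one, so we may set $b=1$ and $a=t\in[0,1]$. The claim reduces to showing that the ratio $(1-\sqrt t)^2/(1-t^{1/\gamma})^\gamma$ is bounded on $[0,1)$.
\begin{itemize}
\item At $t=0$ the ratio equals $1$.
\item Near $t=1$, write $s=1-t$. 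The numerator behaves like $s^2/4$ and the denominator like $(s/\gamma)^\gamma$. Since $\gamma\le 2$, we have $s^2\lesssim s^\gamma$, so the ratio stays bounded.
\item The ratio is continuous in between.
\end{itemize}
Hence it is bounded by some $c_\gamma$.

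\textbf{Case $\gamma>2$.} Pointwise comparison fails here, so I would use Hölder's inequality instead. Set $u=p^{1/\gamma}$ and $v=q^{1/\gamma}$, so that $\sqrt p-\sqrt q=u^{\gamma/2}-v^{\gamma/2}$. The mean value theorem gives
\[
|u^{\gamma/2}-v^{\gamma/2}|\le \tfrac{\gamma}{2}|u-v|\max(u,v)^{\gamma/2-1}.
\]
Squaring and applying Hölder with exponents $\gamma/2$ and $\gamma/(\gamma-2)$ gives
\[
\int(\sqrt p-\sqrt q)^2\,\mathsf d\mu\le \tfrac{\gamma^2}{4}\Big(\int|u-v|^\gamma\,\mathsf d\mu\Big)^{2/\gamma}\Big(\int \max(u,v)^\gamma\,\mathsf d\mu\Big)^{(\gamma-2)/\gamma}.
\]
The last factor is at most $(\int(p+q)\,\mathsf d\mu)^{(\gamma-2)/\gamma}=2^{(\gamma-2)/\gamma}$. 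Therefore $\mathrm d_{\mathrm H}^2\le c_\gamma \mathrm d_{\mathrm H,\gamma}^2$.

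\textbf{Main obstacle.} The step needing the most care is the case split. For $\gamma\le 2$ the exponent $\gamma$ in $\mathrm d_{\mathrm H,\gamma}^\gamma$ must come out of the local behavior near $p=q$. For $\gamma>2$ the Hölder step is what converts it to the exponent $2$. Once those two comparisons are in hand, the corollary follows immediately by substitution into Theorem~\ref{thm:validity_hellinger}.
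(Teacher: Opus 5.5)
Your proposal is correct and follows the paper's route. The corollary is obtained by inserting a comparison between $\mathrm d_{\mathrm H}^2$ and $\mathrm d_{\mathrm H,\gamma}$ into Theorem~\ref{thm:validity_hellinger}, and the paper takes that comparison from Lemma~\ref{lem:gen_hellinger} (Lemma~4 of \cite{kim2022local}), giving $C_\gamma=8c_\gamma^2$ exactly as you say.

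Your self-contained proofs of the two comparisons are valid. They differ from the paper's in two minor ways. For $1\le\gamma\le2$, the lemma's first part gives $\mathrm d_{\mathrm H}^2\le\mathrm d_{\mathrm H,\gamma}^\gamma$ directly, i.e.\ $c_\gamma=1$. For $\gamma>2$, your H\"older/mean-value argument covers every such $\gamma$ in one step, with $c_\gamma=\gamma^2/4$; the lemma's second part is restricted to $\alpha\le2$, so it reaches only $\gamma\le4$ directly and would have to be iterated for larger $\gamma$.
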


\begin{corollary}\label{cor:renyi}
  For any $\alpha \in (0, 1)$, and $\gamma > 0$, Type I error is bounded as
  \[
    \mathbb P \left( p\leq \alpha \mid \mathbf Z \right) \leq \alpha +
    \left( \sum_{k=1}^K C'_\gamma m_k \left( \max_{i, j \in B_k} \mathrm{d_{R, \gamma}}(P_{X \mid Z_i} \| P_{X \mid Z_j}) \right) \left( \max_{i, j \in B_k} \mathrm{d_{R, \gamma}}(P_{Y \mid Z_i} \| P_{Y \mid Z_j}) \right) \right)^{1/2},
  \]
  where $C'_\gamma$ is a constant depending only on $\gamma$.
\end{corollary}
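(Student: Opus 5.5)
The statement to prove is Corollary~\ref{cor:renyi}. The plan is to deduce it from Theorem~\ref{thm:validity_hellinger} by bounding each squared Hellinger distance by a constant multiple of the corresponding R\'enyi divergence. Concretely, I will use Lemma~\ref{lem:renyi} in the form
\[
\mathrm d_{\mathrm H}^2(P,Q)\le c_\gamma\,\mathrm{d_{R,\gamma}}(P\|Q)\qquad\text{for every }\gamma>0,
\]
with $c_\gamma$ depending only on $\gamma$. Substituting this bound into Theorem~\ref{thm:validity_hellinger} separately for the $X$-factor and the $Y$-factor in each bin gives the claim with $C'_\gamma = 8c_\gamma^2$. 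The maxima over $i,j\in B_k$ pass through because the bound holds pairwise and the right-hand side is monotone in each divergence.

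If Lemma~\ref{lem:renyi} is not available in exactly this form, I would prove the inequality directly in three steps.
\begin{enumerate}
\item Since $d_{\mathrm H}^2=\tfrac12\int(\sqrt p-\sqrt q)^2\,\mathsf d\mu$, we have $\mathrm d_{\mathrm H}^2(P,Q)=1-\int\sqrt{pq}\,\mathsf d\mu = 1-\exp\{-\tfrac12 \mathrm{d_{R,1/2}}(P\|Q)\}$.
\item Using $1-e^{-x}\le x$, this gives $\mathrm d_{\mathrm H}^2\le \tfrac12\mathrm{d_{R,1/2}}$.
\item Finally, relate orders. For $\gamma\ge 1/2$, R\'enyi divergence is nondecreasing in its order, so $\mathrm{d_{R,1/2}}\le \mathrm{d_{R,\gamma}}$. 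For $\gamma<1/2$, use the skew-symmetry identity $(1-\gamma)\mathrm{d_{R,\gamma}}(P\|Q)=\gamma\,\mathrm{d_{R,1-\gamma}}(Q\|P)$ together with the known comparison $\frac{\gamma}{1-\gamma}\,\mathrm{d_{R,1/2}}\le \mathrm{d_{R,\gamma}}$ for $\gamma\in(0,1/2)$ (van Erven--Harremo\"es). This gives $c_\gamma=\tfrac12\max\{1,(1-\gamma)/\gamma\}$.
\end{enumerate}

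The only delicate point is this small-order regime $\gamma<1/2$, where monotonicity runs the wrong way and the constant must blow up as $\gamma\to0$. That blow-up is harmless, since the corollary allows $C'_\gamma$ to depend on $\gamma$. The rest is direct substitution.
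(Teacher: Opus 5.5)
Your proposal is correct and is essentially the paper's argument: the corollary follows by substituting Lemma~\ref{lem:renyi}, $\mathrm d_{\mathrm H}^2\le C_\gamma\,\mathrm{d_{R,\gamma}}$ with $C_\gamma=\tfrac12\max\{1,(1-\gamma)/\gamma\}$, into Theorem~\ref{thm:validity_hellinger}, which gives $C'_\gamma=8C_\gamma^2$. Your fallback derivation of that lemma also coincides with the paper's proof: the identity $\mathrm d_{\mathrm H}^2=1-\exp\{-\tfrac12\mathrm{d_{R,1/2}}\}$, then $1-e^{-x}\le x$, monotonicity in the order for $\gamma\ge 1/2$, and for $\gamma<1/2$ the order comparison that the paper obtains from log-convexity of $t\mapsto\int p^tq^{1-t}$ (the skew-symmetry identity you cite is not actually needed there).
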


In particular, we can compare Corollary~\ref{cor:hellinger} to our original validity result, Theorem \ref{thm:validity_main}, since the generalized $\gamma$-Hellinger distance, with $\gamma=1$, corresponds to total variation distance. In fact, neither result is strictly stronger than the other. While Corollary~\ref{cor:hellinger} is stronger in the sense that it allows for any value of $\gamma$, it is weaker in terms of the result that we obtain for $\gamma=1$, i.e., for total variation distance: while Theorem~\ref{thm:validity_main} bounds Type I error as $\leq \alpha + \delta_n$, applying Corollary~\ref{cor:hellinger} with $\gamma=1$ offers a weaker bound of the form $\leq \alpha + \mathrm{O}(\sqrt{\delta_n})$. 
From a technical level, this arises because the total variation bound is proven without ever passing to the Hellinger distance, whereas the generalized Hellinger and R\'enyi bounds are proven after passing through to the Hellinger distance.
Nonetheless, both results imply asymptotic validity whenever $\delta_n\to 0$.

\subsubsection{Validity under smoothness assumptions}
We will next examine some implications of the above validity results: what guarantees can be obtained by assuming some form of smoothness on the data distribution? In particular,
we will show that the above results are a strict generalization of the validity results from \cite{kim2022local}. 

We first define the relevant Lipschitz classes.

\begin{definition}
  For any positive constant $L$, let $\mathcal P_{\mathrm{H}, \gamma}(L)$ be the collection of distributions $P_{X, Y, Z}$ satisfying that $X \independent Y \mid Z$,
  \[
    \mathrm d_{\mathrm H,\gamma}(P_{X \mid Z = z}, P_{X \mid Z = z'}) \leq L d(z, z'),
    \text{ and }
    \mathrm d_{\mathrm H,\gamma}(P_{Y \mid Z = z}, P_{Y \mid Z = z'}) \leq L d(z, z').
  \]
\end{definition}

Similarly, for R\'enyi divergences, we have the following.
\begin{definition}
  For any positive constant $L$ and $\gamma > 0$, let $\mathcal P_{\mathrm{R}, \gamma}(L)$ be the collection of distributions $P_{X, Y, Z}$ satisfying that $X \independent Y \mid Z$,
  \[
    \mathrm{d_{R, \gamma}^{1/2}}(P_{X \mid Z = z} \| P_{X \mid Z = z'}) \leq L d(z, z'),
    \text{ and }
    \mathrm{d_{R, \gamma}^{1/2}}(P_{Y \mid Z = z} \| P_{Y \mid Z = z'}) \leq L d(z, z').
  \]
\end{definition}

Then, the results in Table~\ref{table:type_I_bounds} result from simply inserting the Lipschitz bounds into the relevant theorem or corollary. In particular, we note that the last three rows of Table~\ref{table:type_I_bounds} match the results of \cite{kim2022local}.

\begin{table}[h]
  \centering
  \begin{tabular}{lcc}
    \toprule
    Class of distributions & Relevant result & Excess Type I error bound \\
    \addlinespace[0.4em]
    \hline
    \addlinespace[0.4em]
    $\mathcal P_{\mathrm{TV}}(L)$ & Theorem \ref{thm:validity_main} & $4 n L^2 w^2$ \\
    $\mathcal P_{\mathrm{H}, \gamma}(L), 1 \leq \gamma \leq 2$ & Corollary \ref{cor:hellinger} & $C_\gamma  n^{1/2} L^\gamma w^\gamma$ \\
    $\mathcal P_{\mathrm{H}, \gamma}(L), \gamma > 2$ & Corollary \ref{cor:hellinger} & $C_\gamma  n^{1/2} L^2 w^2$ \\
    $\mathcal P_{\mathrm{R}, \gamma}(L), \gamma > 0$ & Corollary \ref{cor:renyi} & $C'_\gamma  n^{1/2} L^2 w^2$ \\
    \bottomrule
  \end{tabular}
  \caption{The first row should be interpreted, for example, as: ``as an immediate implication of Theorem \ref{thm:validity_main}, for all distributions in $\mathcal P_{\mathrm{TV}}$ and $\alpha \in (0, 1)$, it holds that $\mathbb P(p \leq \alpha \mid \mathbf Z) \leq \alpha + 4nL^2w^2$.'' The other rows have the same interpretation. Note that, up to a constant factor, the bound on the first row for $\mathcal P_{\mathrm{TV}} = \mathcal P_{\mathrm{H}, 1}$ is the square of the corresponding bound for $\mathcal P_{\mathrm{TV}}$ implied by Corollary \ref{cor:hellinger} on the second row. }
  \label{table:type_I_bounds}
\end{table}

\subsection{Non-Gaussianity in the linear confounder model}\label{sec:non_gaussianity}

Now, we examine the role of the Gaussianity assumption in the power analysis of Section~\ref{sec:linear_model}. 

The only result that strictly requires Gaussianity is Theorem \ref{thm:gaussian_lrt_power} (the calculation of the oracle power). This is because Theorem \ref{thm:gaussian_lrt_power} requires an explicit likelihood to construct the oracle test, whose test statistic is given by the log-likelihood ratio. 

On the other hand, the LPT power calculation extends under the mixed moment conditions stated below, without requiring a Gaussian likelihood. We now provide a generalization of Theorem~\ref{thm:power_lpt_confounder_model_general} and Corollary~\ref{cor:equal_bin_cor} to this broader class of distributions.

Formally, consider the model:
\begin{equation}\label{model:linear_nongaussian_confounder}
  X = f_1(Z) + \beta_{1,n}\,U + \epsilon_1,
  \qquad
  Y = f_2(Z) + \beta_{2,n}\,U + \epsilon_2,
\end{equation}
where $U, \epsilon_1, \epsilon_2$ are mean zero and unit variance, and where $Z,U,\epsilon_1,\epsilon_2$ are mutually independent. Finally, write $\mu_4 = \mathbb E[U^4] < \infty$, and assume also that $\mathbb E[\epsilon_1^4],\mathbb E[\epsilon_2^4], \mathbb E[U^8] < \infty$. The following variant of Theorem \ref{thm:power_lpt_confounder_model_general} holds.

\begin{theorem}\label{thm:power_nongaussian}
  Fix $\alpha\in(0,1/2)$, and suppose the data are drawn from
  model~\eqref{model:linear_nongaussian_confounder} under the moment conditions
  stated above. Consider LPT with statistic~\eqref{eq:bin-statistic-LPT} and
  $K$ bins of sizes $m_1,\dots,m_K$, where $m_k\geq2$, $K\to\infty$, and
  \eqref{eq:bin_size_assmp} holds. If \eqref{eq:linear_assmp_1} and
  \eqref{eq:linear_assmp_2} hold, then its conditional power satisfies
  \[
    \mathbb E[\phi_{\mathrm{LPT}}\mid\mathbf Z]
    =\Phi\!\left(\Phi^{-1}(\alpha)+\mathrm{SNR_{LPT}}\right)+\mathrm{o}_P(1),
  \]
  where
  \[
  \mathrm{SNR_{LPT}} =  \frac{\sum_{k=1}^K \frac{m_k - 1}{m_k} \rho_n}{\sqrt{\sum_{k=1}^K \frac{m_k - 1}{m_k^2} \left(1 + \rho_n^2(\mu_4 - 2) - \frac{\rho_n^2(\mu_4 - 3)}{m_k}\right)}} \cdot ( 1 + \mathrm{o}_P(1)).
  \]
\end{theorem}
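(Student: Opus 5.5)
The plan is to follow the route of Theorem~\ref{thm:power_lpt_confounder_model_general} and reduce everything to the general LPT power result (Theorem~\ref{thm:power_of_nnpt}, precise version Theorem~\ref{cor:nnpt_power}). That result gives power $\Phi(\Phi^{-1}(\alpha)+\mathrm{SNR_{LPT}})+\mathrm{o}_P(1)$ as soon as the bin statistics satisfy its moment/Lyapunov-type conditions. So the work splits into two parts: computing $\mathbb E[T_k\mid\bZ]$ and $\mathrm{Var}(T_k\mid\bZ)$ under the non-Gaussian model~\eqref{model:linear_nongaussian_confounder}, and verifying the technical conditions. Gaussianity was never used for the oracle likelihood on this side, so only the moment computations change.

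Write $X_i=f_1(Z_i)+W_i$ and $Y_i=f_2(Z_i)+V_i$, with $W_i=\beta_{1}U_i+\epsilon_{1i}$ and $V_i=\beta_{2}U_i+\epsilon_{2i}$. Within bin $k$, let $a_i,b_i$ be the centered values $f_1(Z_i)-\overline{f_1}_k$ and $f_2(Z_i)-\overline{f_2}_k$. Then
\[T_k=\tfrac1{m_k}\textstyle\sum_{i}(a_i+W_i-\overline W_k)(b_i+V_i-\overline V_k).\]
Since $(W_i,V_i)$ are i.i.d.\ across $i$ with $\Cov(W,V)=\beta_1\beta_2$, the main term has mean $\frac{m_k-1}{m_k}\beta_1\beta_2$. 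The cross terms $a_i(V_i-\overline V_k)$ have mean zero, and the term $\frac1{m_k}\sum a_ib_i$ is bounded by $\sqrt{S_k^{(1)}S_k^{(2)}}$. After normalizing by $\sqrt{(1+\beta_1^2)(1+\beta_2^2)}$, which does not change the test, this gives the $\rho_n$ numerator up to a contribution of order $(\frac1K\sum S^{(1)}_k)^{1/2}(\frac1K\sum S^{(2)}_k)^{1/2}=\mathrm o_P(\rho_n)$ by~\eqref{eq:linear_assmp_2} and Cauchy--Schwarz.

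For the variance, I will write the noise part as the bilinear form $\frac1{m_k}\mathbf W^\top H_k\mathbf V$ and compute its second moment. This uses the standard formula $\Var(\mathbf W^\top A\mathbf V)$ for i.i.d.\ rows $(W_i,V_i)$, which involves $\E[W^2V^2]$ through the diagonal $\sum_i A_{ii}^2$, plus $\Cov(W,V)^2$ and $\Var W\Var V$ terms through $\tr(A^2)$ and $\tr(AA^\top)$. Here $A=H_k$, with $\tr H_k^2=m_k-1$ and $\sum_i (H_k)_{ii}^2=(m_k-1)^2/m_k$. Expanding $\E[W^2V^2]$ with $\mu_4=\E U^4$ produces the excess kurtosis term $\mu_4-3$, and after normalization this yields
\[\tfrac{m_k-1}{m_k^2}\Bigl(1+\rho_n^2(\mu_4-2)-\tfrac{\rho_n^2(\mu_4-3)}{m_k}\Bigr).\]
I expect the fourth moments of $\epsilon_j$ to enter only at order $\beta$-independent products that cancel. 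If they do not cancel, they must be absorbed; to check, I would first redo the calculation with the Gaussian case $\mu_4=3$ and confirm it recovers Theorem~\ref{thm:power_lpt_confounder_model_general}. Terms involving $a_i,b_i$ contribute variance of order $S_k^{(1)}+S_k^{(2)}$, which is $\mathrm o_P(1)$ on average by~\eqref{eq:linear_assmp_1}. Because the noise variance per bin is bounded below by a constant times $1/m_k$ and bin sizes are comparable by~\eqref{eq:bin_size_assmp}, these terms are relatively negligible, giving the factor $(1+\mathrm o_P(1))$.

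The main obstacle is verifying the conditions of Theorem~\ref{cor:nnpt_power}. These are a Lyapunov/Berry--Esseen condition for $\sum_k T_k$ under the data law, and an analogous one for the permutation distribution $\sum_k T_k((\bX_k)_{\sigma_k},\bY_k)$ conditional on the data. For the latter we also need its conditional variance to concentrate around its expectation, which requires a law of large numbers for fourth-order products like $W_i^2V_j^2$. This is where $\E[U^8]<\infty$ and the fourth moments of $\epsilon$ are needed: bounded ratio of bin sizes plus these moments give uniformly bounded $(2+\delta)$-moments of $m_kT_k$ and a weak LLN across the $K\to\infty$ independent bins. The $S_k$ contributions are handled with~\eqref{eq:linear_assmp_1} (including the $S^{(1)}_kS^{(2)}_k$ average) via Markov's inequality. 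Plugging the mean and variance into $\mathrm{SNR_{LPT}}$ then finishes the proof.
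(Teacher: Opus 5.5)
Your computation of $\E[T_k\mid\mathbf Z]$ and $\Var(T_k\mid\mathbf Z)$ is correct and reproduces Lemma~\ref{lem:VarT_k}. Two small corrections. First, the fourth moments of $\epsilon_1,\epsilon_2$ never enter: $\E[W^2V^2]=(\mu_4-1)\beta_{1,n}^2\beta_{2,n}^2+(\beta_{1,n}^2+1)(\beta_{2,n}^2+1)$ uses only $\E[\epsilon_1^2]\E[\epsilon_2^2]$. Second, the drift terms have variance $(\beta_{2,n}^2+1)S_k^{(1)}/m_k$ etc., not order $S_k^{(1)}$; the factor $1/m_k$ is what makes them negligible against the $1/m_k$ noise floor when bins grow.

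The genuine gap is the final step. Theorem~\ref{cor:nnpt_power} does not give $\Phi(\Phi^{-1}(\alpha)+\mathrm{SNR_{LPT}})$ from Lyapunov conditions alone. Under Assumptions~\ref{assn:T_bdd_kurtosis} and~\ref{assn:T_cond_var_nontrivial} it gives $\Phi\bigl(\Phi^{-1}(\alpha)\sqrt{V^\pi/V}+\mathrm{SNR_{LPT}}\bigr)$, where $V=\sum_k\Var(T_k\mid\mathbf Z)$ and $V^\pi=\sum_k\Var(\sigma_k(T_k)\mid\mathbf Z)$ is the variance of the permutation null. The stated formula needs Assumption~\ref{assn:T_local_alternative} on top. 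You never compute $V^\pi$, and it is not equal to $V$. By Lemma~\ref{lem:unif_perm_var}, $\E_\sigma[\sigma_k(T_k)^2\mid\mathbf X,\mathbf Y,\mathbf Z]=Q_{X,k}Q_{Y,k}/(m_k-1)$, with $Q_{X,k}=\frac1{m_k}\sum_{i\in B_k}(X_i-\overline{\mathbf X}_k)^2$ and similarly $Q_{Y,k}$. Its $\mathbf Z$-expectation (drift-free, normalized by $(\beta_{1,n}^2+1)(\beta_{2,n}^2+1)$) is $\frac{m_k-1}{m_k^2}\bigl(1+\rho_n^2(\frac{2}{m_k-1}+\frac{\mu_4-3}{m_k})\bigr)$. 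The observed statistic instead has $\frac{m_k-1}{m_k^2}\bigl(1+\rho_n^2(\mu_4-2-\frac{\mu_4-3}{m_k})\bigr)$; for $m_k=2$, $\mu_4=3$ these are $1+2\rho_n^2$ versus $1+\rho_n^2$. Hence $V^\pi/V=\mathrm{O}_P(1)$, but it is $1+\mathrm{o}_P(1)$ only when $\rho_n\to0$.

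What is missing is a case split on the signal. If $\mathrm{SNR_{LPT}}=\mathrm{O}_P(1)$: the prefactor $\sum_k\frac{m_k-1}{m_k}/(\sum_k\frac{m_k-1}{m_k^2})^{1/2}\asymp\sqrt{K\bar m}\to\infty$ (with $\bar m=n/K$), and the bracketed factors in the denominator are bounded, so $\rho_n\to0$ and the ratio tends to one. If $\mathrm{SNR_{LPT}}\to\infty$, you need a separate argument that power tends to one. For example, Chebyshev gives $\sum_kT_k=\mu_n(1+\mathrm{o}_P(1))$ with $\mu_n=\sum_k\E[T_k\mid\mathbf Z]$. On $\{\sum_kT_k>0\}$, Chebyshev over the centered permutation law gives $p\le\widehat V^\pi/(\sum_kT_k)^2$, where $\widehat V^\pi=\sum_k\E_\sigma[\sigma_k(T_k)^2\mid\mathbf X,\mathbf Y,\mathbf Z]$ has $\E[\widehat V^\pi\mid\mathbf Z]=V^\pi$. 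Then $p\to0$ by Markov.

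The moment verification you sketch would also not go through. The theorem you invoke uses the aggregate fourth-moment ratios $\sum_k\E[(T_k-\E[T_k\mid\mathbf Z])^4\mid\mathbf Z]/V^2$ and $\sum_k\E[\sigma_k(T_k)^4\mid\mathbf Z]/(V^\pi)^2$; fourth moments are what concentrate the permutation variance via Chebyshev. Uniformly bounded moments of $m_kT_k$ are not available, for two reasons. First, $m_kT_k$ has variance of order $m_k$, which may diverge under \eqref{eq:bin_size_assmp}. Second, and more importantly, \eqref{eq:linear_assmp_1} controls $S_k^{(j)}$ only on average, so individual bins may carry large drift.

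For the observed array the aggregate ratio is still fine, via $\frac1{K^2}\sum_k(S_k^{(j)})^2\le(\frac1K\sum_kS_k^{(j)})^2$. For the permutation array, Lemma~\ref{lem:unif_perm_fourth} produces a term $R_{X,k}R_{Y,k}/m_k^3$, with $R_{X,k}=\frac1{m_k}\sum_{i\in B_k}(X_i-\overline{\mathbf X}_k)^4$. Its drift part can only be bounded through $\frac1{m_k}\sum_{i\in B_k}(f_j(Z_i)-\overline{f_j(\mathbf Z)}_k)^4\le m_k(S_k^{(j)})^2$. This contributes a term of order $\bar m\sum_k(S_k^{(1)}S_k^{(2)})^2/K^2$ to the Lyapunov ratio, which \eqref{eq:linear_assmp_1} does not make small when $\bar m\to\infty$. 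The paper controls it by bounding $\max_kS_k^{(1)}S_k^{(2)}\le(\sum_kS_k^{(1)})(\sum_kS_k^{(2)})=\mathrm{o}_P(K^2\rho_n^2)$ via \eqref{eq:linear_assmp_2}. That yields $\mathrm{o}_P(K\bar m\rho_n^2)$, which is small only in the bounded-signal regime. This is a second reason the case split, with the direct argument when $\mathrm{SNR_{LPT}}\to\infty$, is indispensable.
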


\begin{corollary}\label{cor:equal_bin_nongaussian}
  Under the assumptions of Theorem \ref{thm:power_nongaussian}, LPT with bins of equal size $m$ satisfies
  \[
  \mathrm{SNR_{LPT}} =  \sqrt{\frac{m-1}{m}} \cdot \frac{\sqrt{n} \rho_n}{\sqrt{1 + \rho_n^2(\mu_4 - 2) - \frac{\rho_n^2(\mu_4 - 3)}{m}}} \cdot (1 + \mathrm{o}_P(1)).
  \]
\end{corollary}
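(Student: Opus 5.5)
The statement to prove is Corollary~\ref{cor:equal_bin_nongaussian}. The plan is to specialize Theorem~\ref{thm:power_nongaussian} to $m_1=\dots=m_K=m$ and simplify the displayed $\mathrm{SNR_{LPT}}$ algebraically. First, we check the hypotheses. With equal bin sizes the ratio in \eqref{eq:bin_size_assmp} is identically $1$, so that condition holds trivially. The remaining assumptions ($m\ge 2$, $K\to\infty$, \eqref{eq:linear_assmp_1}, \eqref{eq:linear_assmp_2}, and the moment conditions) are inherited directly from the hypotheses of the corollary. Hence Theorem~\ref{thm:power_nongaussian} applies.

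Next, substitute $m_k=m$ into the expression for $\mathrm{SNR_{LPT}}$. The numerator becomes $K\frac{m-1}{m}\rho_n$. Every summand in the denominator is the same, so the denominator is
\[
\sqrt{K\frac{m-1}{m^2}}\,\sqrt{1+\rho_n^2(\mu_4-2)-\frac{\rho_n^2(\mu_4-3)}{m}}.
\]
Taking the ratio gives
\[
\sqrt{K}\cdot\frac{m-1}{m}\cdot\frac{m}{\sqrt{m-1}} = \sqrt{K(m-1)} = \sqrt{Km}\sqrt{\tfrac{m-1}{m}},
\]
multiplied by $\rho_n$ and divided by the same variance factor, with the $(1+\mathrm{o}_P(1))$ factor carried through unchanged.

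The only step that needs care is replacing $Km$ by $n$. When $m$ divides $n$ this is exact. Otherwise $Km=m\lfloor n/m\rfloor\in(n-m,n]$ under the equal-size binning convention from the footnote in Section~\ref{sec:methodology}. Because $K\to\infty$ and $m=n/K$ up to rounding, we have $m/n\le 1/K\to 0$, so $\sqrt{Km}=\sqrt{n}\,(1+\mathrm{o}(1))$. If $m$ were allowed to grow, this bound still gives $Km/n\ge 1-1/K\to1$. The $(1+\mathrm{o}(1))$ correction is absorbed into the existing multiplicative $(1+\mathrm{o}_P(1))$ factor.

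Finally, note that the variance factor $1+\rho_n^2(\mu_4-2)-\rho_n^2(\mu_4-3)/m$ is bounded below by a positive constant, so the expression is well defined. Since $m\ge2$ and $\mu_4\ge1$ by Jensen's inequality, it is at least
\[
1+\rho_n^2\left(\tfrac{\mu_4}{2}-\tfrac12\right)\ge 1.
\]
This yields the stated formula. No step is a real obstacle; the only subtlety is the rounding of $Km$ to $n$.
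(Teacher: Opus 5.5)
Your argument is correct and follows the paper's route: the corollary is a direct specialization of Theorem~\ref{thm:power_nongaussian}. You substitute $m_k=m$, simplify the ratio to $\sqrt{K(m-1)}\,\rho_n$ over the variance factor, and absorb $Km=n(1+\mathrm{o}(1))$ (justified by your bound $m/n\le 1/K\to 0$) into the multiplicative $(1+\mathrm{o}_P(1))$.

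One side remark in your last paragraph is false, though nothing depends on it. The bracket $\mu_4-2-(\mu_4-3)/m$ is minimized at $m=2$ only when $\mu_4\ge 3$. For $\mu_4<3$ it decreases in $m$ toward $\mu_4-2$, so your bound $1+\rho_n^2(\mu_4/2-1/2)\ge 1$ fails. For example, Rademacher $U$ ($\mu_4=1$) with $m=3$ gives the factor $1-\rho_n^2/3<1$. The correct bound, which the paper uses in the proof of Theorem~\ref{thm:power_nongaussian}, is that the bracket is at least $-1$ because $\mu_4\ge 1$. Hence the factor is at least $1-\rho_n^2$, and this is bounded away from zero because $\limsup_n|\beta_{l,n}|<\infty$ forces $\sup_n|\rho_n|<1$. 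Positivity of the factor is in any case already implicit once Theorem~\ref{thm:power_nongaussian} applies, so you can drop that paragraph or replace it with this bound.
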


The proof is essentially the same as Theorem \ref{thm:power_lpt_confounder_model_general}, so we defer it to Appendix \ref{sec:power_linear_proof}.

Theorem \ref{thm:power_nongaussian} is quite similar to Theorem \ref{thm:power_lpt_confounder_model_general}.
In fact, several of the implications highlighted previously still hold. The detection threshold, for example, remains at $\rho_n \propto n^{-1/2}$. One still monotonically gains power from increasing the bin size, and such gains are in general mild; at $m = 2$ we have $\mathrm{SNR_{LPT}} = \frac{\sqrt{n} \rho_n}{\sqrt{2 + \rho^2(\mu_4 - 1)}}$ and as $m \to \infty$, we have $\mathrm{SNR_{LPT}} \to \frac{\sqrt n \rho_n}{\sqrt{1 + \rho^2(\mu_4 - 2)}}$, meaning that we gain at most a factor of
\[
  \frac{\frac{\sqrt n \rho_n}{\sqrt{1 + \rho^2(\mu_4 - 2)}}}{\frac{\sqrt{n} \rho_n}{\sqrt{2 + \rho^2(\mu_4 - 1)}}} = \sqrt{1 + \frac{1 + \rho_n^2}{1 + \rho_n^2(\mu_4 - 2)}}.
\]

One difference here, as compared to the Gaussian setting of Theorem~\ref{thm:power_lpt_confounder_model_general}, is that it is no longer necessarily the case that the SNR is maximized by equal size bins: this is due to the additional term $\frac{\rho_n^2(\mu_4-3)}{m_k}$ appearing in $\mathrm{SNR_{LPT}}$ (in the Gaussian case, this term is not present since the fourth moment is simply $\mu_4 =3$.) But, if the bin sizes $m_k$ are all reasonably large, the same conclusion holds, since we can then approximate 
  \[
  \mathrm{SNR_{LPT}} \approx   \frac{\sum_{k=1}^K \frac{m_k - 1}{m_k}}{\sqrt{\sum_{k=1}^K \frac{m_k - 1}{m_k^2}}} \frac{\rho_n}{\sqrt{1 + \rho_n^2(\mu_4 - 2) }}
  \]
  which is again maximized by choosing bins of equal size.

\subsection{Binary test statistics}\label{sec:binary_power}

We now give the statistic of Example \ref{ex:betting} a detailed analysis, as it trades off a little bit of power for much greater simplicity in both the requisite moment conditions and the formulation of the statistic, as well as the interpretation of the signal-to-noise ratio.

In greater detail, in this section we consider the following setting: bins of size $m_k=2$, with $B_k = \{i_k,j_k\}$ for each $k=1,\dots,K$ where $K=\lfloor n/2\rfloor$, and statistics of the form
\[
  T(\mathbf X, \mathbf Y, \mathbf Z) = \sum_{k=1}^K T_{\mathrm{bet}}(X_{i_k},X_{j_k},Y_{i_k},Y_{j_k})
\]
where each $T_{\mathrm{bet}}$ has values in $\{-1, 1\}$.\footnote{Here, we ignore ties for simplicity.} Let $p_k = \P(T_k=1)$ be the probability that the ``bet'' in the $k$-th bin was correct.

Statistics of this form almost always enjoy the bound of Theorem \ref{thm:validity_linear}, as noted in its subsequent discussion. Specifically, in the notation of Theorem \ref{thm:validity_linear}, we have
\[
  \epsilon_n = \frac{\max_{1, \dots, K}  \mathrm{Range}_k}{\sqrt{\sum_{k=1}^K \mathrm{Var}_k}} = \frac{2}{\sqrt K}
\]
as $\mathrm{Range}_k = 2$ and $\mathrm{Var}_k = 1$. 

\subsubsection{Binary test statistics with oracle information}

We now proceed to a short analysis of its power. Applying Theorem \ref{cor:nnpt_power} (in Appendix \ref{sec:power_of_nnpt_proof}, the full version of Theorem \ref{thm:power_of_nnpt}) immediately gives the following.

\begin{corollary}\label{thm:nnpt_guess_power}
   Whenever $\liminf \sum_{k=1}^K p_k > 0$ and $\liminf \sum_{k=1}^K (1 - p_k) > 0$, the power of the LPT satisfies
  \[
    \mathbb E[\phi_{\mathrm{LPT}} \mid \mathbf Z] = \Phi \left(\Phi^{-1}(\alpha) \cdot \frac{\sqrt{K}}{\left(4\sum_{k=1}^K p_k(1 - p_k)\right)^{1/2}} + \frac{\sum_{k=1}^K (p_k - 1/2)}{\left(\sum_{k=1}^K p_k(1 - p_k)\right)^{1/2}} \right) + \mathrm{o}_P(1).
  \]
\end{corollary}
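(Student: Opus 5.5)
The plan is to bypass most of the generality of Theorem~\ref{cor:nnpt_power} and use the explicit structure of bins of size two, which makes the permutation null distribution exactly computable. Fix $\bZ$ and write $T_k = T_{\mathrm{bet}}(X_{i_k},X_{j_k},Y_{i_k},Y_{j_k})\in\{-1,1\}$.

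\textbf{The permutation distribution.} A bin-preserving permutation acts on each bin $B_k=\{i_k,j_k\}$ either as the identity or as the swap. By the antisymmetry $T_{\mathrm{bet}}(x',x,y,y') = -T_{\mathrm{bet}}(x,x',y,y')$, the swap on bin $k$ sends $T_k\mapsto -T_k$. Hence, for $\sigma\sim\mathrm{Unif}(\Pi)$,
\[
T(\bX_\sigma,\bY,\bZ) = \sum_{k=1}^K \xi_k T_k \eqd \sum_{k=1}^K \xi_k ,
\]
with $\xi_k$ i.i.d.\ Rademacher. Here we use $T_k\in\{\pm1\}$ and that $(\xi_kT_k)_k$ is again i.i.d.\ Rademacher for any fixed signs $T_k$. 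The permutation null law is therefore free of the data: it is that of $2\,\mathrm{Bin}(K,1/2)-K$.

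So $\{p\le\alpha\}$ is the event $\{T\ge c_{K,\alpha}\}$ for a deterministic critical value $c_{K,\alpha}$. This holds up to the boundary atom of the discrete law, which only affects whether the inequality is strict and will be absorbed into the error. By the Berry--Esseen theorem for the binomial,
\[
c_{K,\alpha} = -\Phi^{-1}(\alpha)\sqrt K + \mathrm O(1).
\]
This is the step where the factor $\sqrt K$ in the first term of the statement comes from. In the language of Theorem~\ref{cor:nnpt_power}, the permutation variance is $K$, while the true variance is $4\sum_k p_k(1-p_k)$.

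\textbf{The distribution of $T$.} Conditional on $\bZ$, the $T_k$ are independent, with
\[
\mathbb E[T_k\mid\bZ]=2p_k-1,\qquad \Var(T_k\mid\bZ)=4p_k(1-p_k).
\]
Set $s_K^2 = 4\sum_k p_k(1-p_k)$. The summands are bounded by $1$, so the Berry--Esseen (Lyapunov) bound gives
\[
\sup_t\Big|\P\Big(\tfrac{T-\sum_k(2p_k-1)}{s_K}\ge t \,\Big|\, \bZ\Big) - (1-\Phi(t))\Big| \lesssim \frac{1}{s_K}.
\]
Plugging in $t=(c_{K,\alpha}-\sum_k(2p_k-1))/s_K$ and using the symmetry of $\Phi$ gives
\[
\Phi\Big(\frac{\Phi^{-1}(\alpha)\sqrt K + 2\sum_k(p_k-1/2) + \mathrm O(1)}{2\big(\sum_k p_k(1-p_k)\big)^{1/2}}\Big).
\]
This is exactly the displayed expression, up to the $\mathrm O(1)/s_K$ shift. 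That shift is removed using the Lipschitz continuity of $\Phi$, at cost $\mathrm O(1/s_K)$.

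\textbf{Main obstacle.} Everything reduces to showing that $s_K^2=4\sum_k p_k(1-p_k)\to\infty$ (in probability), so that both the CLT error and the $\mathrm O(1)$ quantile shift vanish. This is where the hypotheses on $\sum_k p_k$ and $\sum_k(1-p_k)$ must be used, and where I expect the real work to lie. The route is to invoke the moment conditions of Theorem~\ref{cor:nnpt_power} (bounded third absolute moments relative to the variance), which for $\pm1$ variables reduce to non-degeneracy of $\sum_k p_k(1-p_k)$.

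If the variance stays bounded, a separate argument is needed. In that case all but finitely many $p_k$ are near $0$ or $1$, and $\sum_k(p_k-1/2)$ is of order $K$ with a definite sign. The power then tends to $0$ or $1$, and so does the right-hand side, consistently. I would handle this by a direct Chebyshev argument in that case, and by the CLT otherwise, splitting along subsequences.
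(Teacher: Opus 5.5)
Your argument is correct in the regime $s_K^2=4\sum_k p_k(1-p_k)\to\infty$, and it takes a different route from the paper. The paper obtains the corollary by substituting $\Var(\sigma_k(T_k)\mid\bZ)=1$, $\E[T_k\mid\bZ]=2p_k-1$ and $\Var(T_k\mid\bZ)=4p_k(1-p_k)$ into Theorem~\ref{cor:nnpt_power}. You instead use that the permutation null is exactly the data-free law of $\sum_k\xi_k$. This lets you skip the concentration step for the random permutation variance in Theorem~\ref{thm:general_lpt_power}. It also lets you skip the variance-comparability Assumption~\ref{assn:T_cond_var_nontrivial}, which here reads $K/(4\sum_k p_k(1-p_k))=\mathrm O_P(1)$. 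Your error is $\mathrm O(1/s_K)$, so you only need $s_K\to\infty$, which is strictly weaker than what the paper's route requires.

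The gap is the step you flag as the main obstacle, and it cannot be closed the way you propose. The stated hypotheses do not give $s_K\to\infty$. Take every $p_k\in\{0,1\}$ except one bin with $p_k=1/2$; then $s_K=1$, while $\sum_k p_k$ and $\sum_k(1-p_k)$ can both diverge. Invoking the moment conditions of Theorem~\ref{cor:nnpt_power} is circular: they are hypotheses of that theorem, not consequences of yours. For $\pm1$ summands, the observed-array ratio in Assumption~\ref{assn:T_bdd_kurtosis} is $\asymp 1/\sum_k p_k(1-p_k)$, so assuming it is the same as assuming $s_K\to\infty$.

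Your fallback for bounded $s_K$ rests on a false claim. Bounded variance does force most $p_k$ to be near $0$ or $1$, but not all on the same side, so $\sum_k(p_k-1/2)$ need not be of order $K$ with a definite sign. In the example above, arrange the extreme bins so that $\sum_{k\neq 1}(2p_k-1)=c_{K,\alpha}-1$; parity permits this. Then $T\in\{c_{K,\alpha}-2,\,c_{K,\alpha}\}$ with probability $1/2$ each, so the power is exactly $1/2$. The right-hand side, however, equals $\Phi\bigl(c_{K,\alpha}-1+\Phi^{-1}(\alpha)\sqrt K\bigr)$. Its argument is $\mathrm O(1)$ and generally nonzero, so the right-hand side does not tend to $1/2$; this is precisely your $\mathrm O(1)/s_K$ quantile shift failing to vanish. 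No Chebyshev or subsequence argument can rescue this case.

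You need to add the hypothesis $\sum_k p_k(1-p_k)\to\infty$ in probability; with it, your argument is complete. The paper's one-line appeal to Theorem~\ref{cor:nnpt_power} tacitly needs even more, namely $\sum_k p_k(1-p_k)\gtrsim K$.
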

Note that in local alternatives where information is low, and it is not not possible to construct a bet that is substantially better than random, $p_k \approx 1/2$ and $\frac{\sqrt K}{\left(4 \sum_{k=1}^K p_k(1 - p_k)\right)^{1/2}} \approx 1$. One immediate benefit of this regime is technical: all the moment requirements of Theorem \ref{cor:nnpt_power} are essentially trivial. Namely, they all reduce to $\liminf \sum_{k=1}^K p_k > 0$ and $\liminf \sum_{k=1}^K (1 - p_k) > 0$, which is almost trivially satisfied whenever the betting strategy is not essentially always right (or essentially always wrong). Next, we see that the signal-to-noise ratio is given as
\begin{equation}\label{eq:binary_SNR_formula}
  \mathrm{SNR_{LPT}} = \frac{\sum_{k=1}^K (p_k - 1/2)}{\left(\sum_{k=1}^K p_k(1 - p_k)\right)^{1/2}}.
\end{equation}
within which one may interpret the numerator as ``how much better is my strategy than random guessing?'' since of course random guessing will be correct $1/2$ of the time. In particular, since $K = \lfloor n / 2 \rfloor$,
\[
  \mathrm{SNR_{LPT}} \geq \sqrt{2n} \cdot \frac{1}{K} \sum_{k=1}^K\left(p_k - \frac{1}{2}\right)
\]
as long as one can employ a strategy which is on average better than random guessing (i.e. $\frac{1}{K} \sum_{k=1}^K\left(p_k - \frac{1}{2}\right) > 0$), the LPT accumulates power at a reasonable rate. One can characterize a lower bound on this rate when oracle knowledge is available, akin to Theorem \ref{thm:orc_vs_nnpt_oracle}.

\begin{lemma}\label{lem:binary_power_lb_tv}
  For any bin $B_k = \{i, j\}$, let $\delta_k = \dtv(P_{X, Y \mid Z_i}, P_{X, Y \mid Z_j})$ be the within-bin total variation distance between the conditional distributions. Then, with access to the true likelihood ratios, the oracle guessing strategy achieves,
  \[
    p_k - \frac{1}{2} \geq \frac{1}{2} \max \left\{ \mathrm{d_{TV}}(P_{X, Y \mid Z_i}, P_{X \mid Z_i} \times P_{Y \mid Z_i}), \mathrm{d_{TV}}(P_{X, Y \mid Z_j}, P_{X \mid Z_j} \times P_{Y \mid Z_j}) \right\} - \delta_k.
  \]
\end{lemma}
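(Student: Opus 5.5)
The plan is to compute the success probability of the oracle bet exactly as a total variation distance, and then lower bound that distance by a data-processing argument. Fix a bin $B_k=\{i,j\}$ and write $P_i = P_{X,Y\mid Z_i}$ and $P_j=P_{X,Y\mid Z_j}$. Conditional on $\mathbf Z$, the two data points are independent, with $W=((X_i,Y_i),(X_j,Y_j))\sim P_i\otimes P_j$. Let $s$ denote the swap map $((x,y),(x',y'))\mapsto((x,y'),(x',y))$, and let $R$ be the law of $s(W)$. For $w=((x,y),(x',y'))$, the density of $R$ at $w$ is $p_i(x,y')\,p_j(x',y)$.

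\textbf{Step 1: the oracle bet and its success probability.} The oracle bet is $+1$ exactly on the Neyman--Pearson region
\[A=\{w:\ p_i(x,y)p_j(x',y') > p_i(x,y')p_j(x',y)\},\]
that is, the bet compares the likelihoods of the two pairings. The bet is correct iff $W\in A$ and wrong iff $W\in s(A)$; ties are ignored as in the footnote, or randomized. Since $s$ is an involution, $\P(W\in s(A))=R(A)$. Hence
\[2p_k-1=(P_i\otimes P_j)(A)-R(A)=\dtv(P_i\otimes P_j,R),\]
where the last equality is the standard fact that the likelihood-ratio set attains the total variation distance. So $p_k-\tfrac12=\tfrac12\dtv(P_i\otimes P_j,R)$.

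\textbf{Step 2: data processing onto one coordinate pair.} Project onto the first pair. Under $P_i\otimes P_j$ the first pair has law $P_i$. Under $R$ the first pair is $(X_i,Y_j)$, with law $P_{X\mid Z_i}\times P_{Y\mid Z_j}$. Since total variation cannot increase under a measurable map, $\dtv(P_i\otimes P_j,R)\ge \dtv(P_i,\,P_{X\mid Z_i}\times P_{Y\mid Z_j})$. The triangle inequality then gives
\[\dtv(P_i,\,P_{X\mid Z_i}\times P_{Y\mid Z_j}) \ge \dtv(P_i,P_{X\mid Z_i}\times P_{Y\mid Z_i}) - \dtv(P_{Y\mid Z_i},P_{Y\mid Z_j}).\]
Here we use that $\dtv$ between two product measures sharing the factor $P_{X\mid Z_i}$ is at most $\dtv$ between the differing factors. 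Finally, $\dtv(P_{Y\mid Z_i},P_{Y\mid Z_j})\le \delta_k$, since marginalizing $P_i$ and $P_j$ to $Y$ is again data processing. Projecting instead onto the second pair gives the same bound with the roles of $i$ and $j$ exchanged, and taking the better of the two yields the maximum.

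\textbf{Step 3: combine, and where care is needed.} Combining Steps 1 and 2 gives
\[p_k-\tfrac12\ge \tfrac12\max\{\cdot,\cdot\}-\tfrac12\delta_k,\]
which is slightly stronger than the stated bound with $-\delta_k$. The step needing the most care is Step 1. One must define the oracle strategy as the comparison of the joint likelihoods under the actual heterogeneous pair $(P_i,P_j)$, not as a comparison using $P_i$ for both points. One must also check that ties (the event $p_i(x,y)p_j(x',y')=p_i(x,y')p_j(x',y)$) contribute equally to the correct and wrong outcomes, so that the identity $2p_k-1=\dtv$ holds. If the strategy is instead built from a single reference distribution $P_i$, the fallback is to apply Steps 1 and 2 with $P_j$ replaced by $P_i$. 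This gives $\tfrac12\dtv(P_i,P_{X\mid Z_i}\times P_{Y\mid Z_i})$ for the idealized pair. One then pays at most $\dtv(P_i\otimes P_j,P_i\otimes P_i)\le\delta_k$ in the probability of the event $A$, which is exactly the $-\delta_k$ in the statement.
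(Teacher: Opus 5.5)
Your fallback paragraph is the paper's proof, and it should be your main argument. The paper builds the bet from the single law $P_{X,Y\mid Z_i}$, via the region $E_k=\{f_i(x_i,y_i)f_i(x_j,y_j)\ge f_i(x_i,y_j)f_i(x_j,y_i)\}$. It shows $2\int\One{E_k}\,r-1=\dtv(r,s)$ for the idealized pair $r,s$, lower bounds $\dtv(r,s)$ by marginalizing out $(x_j,y_j)$, and pays $\delta_k$ to pass from $P_{X,Y\mid Z_i}\otimes P_{X,Y\mid Z_i}$ to the true $P_{X,Y\mid Z_i}\otimes P_{X,Y\mid Z_j}$. Repeating with $j$, and noting that the optimal oracle bet does at least as well as either, gives the maximum. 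Your main route has a genuine problem, and your Step 3 advice (use the heterogeneous pair, not $P_i$ for both points) points in the wrong direction.

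Steps 1--2 are correct computations for the heterogeneous Neyman--Pearson bet $A$, but that bet is not an admissible LPT bet. It is antisymmetric under exchanging the two $Y$'s, which is what you used. It is not bin-symmetric, however: exchanging the two data pairs while holding $\mathbf Z$ fixed changes it. Equivalently, it is not antisymmetric under exchanging the two $X$'s. That exchange is exactly what the within-bin permutation does, and that antisymmetry is what the centering behind \eqref{eq:binary_SNR_formula} requires.

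Concretely, take a conditional-independence null with $P_{X\mid Z}$ constant and $P_{Y\mid Z_i}\neq P_{Y\mid Z_j}$. Then $A$ depends only on the two $Y$ values. Ignoring ties, your bet has $p_k-\tfrac12=\tfrac12\dtv(P_{Y\mid Z_i}\otimes P_{Y\mid Z_j},P_{Y\mid Z_j}\otimes P_{Y\mid Z_i})>0$, even though there is no dependence to detect. It is also unchanged when the $X$'s are swapped, so this bin contributes nothing to the permutation test. The $p_k$ you bound is therefore not the quantity that enters $\mathrm{SNR_{LPT}}$. Your sharper $-\tfrac12\delta_k$ comes only from optimizing over this inadmissible class. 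This is why the paper, here and in the proof of Theorem~\ref{thm:orc_vs_nnpt_oracle_precise}, evaluates likelihoods under a single reference $Z_\ell$.

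Finally, the fallback still has to produce the maximum, since one $P_i$-based bet yields only the $i$-term. You have two options:
\begin{itemize}
\item choose, as a function of $\mathbf Z$ alone, whichever of the $P_i$- and $P_j$-based bets has the larger bound; both are bin-symmetric, so the selected bet is too;
\item argue as the paper does that the optimal bin-symmetric oracle bet dominates both.
\end{itemize}
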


This first result has two critical parts paralleling Theorem \ref{thm:orc_vs_nnpt_oracle}. The first feature of importance is that in the binary statistic case, the power of a statistic based on oracle information scales with the TV distance to the null, rather than the KL divergence as in Section \ref{sec:oracle}, so that one does not achieve performance comparable to the oracle Neyman--Pearson test in general. The second is that this total variation separation must be larger than the within-bin total variation between the conditional distributions, or else that within-bin heterogeneity threatens to make the above bound trivial.

Moreover, we can still recover scaling comparable to $\mathrm{SNR_{ORC}}$ when the log-likelihood is well concentrated.

\begin{lemma}\label{lem:binary_power_lb_orc}
  Let $\delta_k$ be as the previous lemma and suppose that $n = 2K$ is even for convenience. Suppose that the log-likelihood ratios $\mathrm{LLR^{(i)}}(X_i, Y_i)$ are $\sigma$-sub-Gaussian under both the null and the alternative: that is, there exists $\sigma > 0$ such that
    \[
    \mathbb E_{P_{X, Y \mid Z_i}} \left[ \exp\left(t \left(\mathrm{LLR}^{(i)}(X, Y) - \mathbb E_{P_{X, Y \mid Z_i}}[\mathrm{LLR}^{(i)}(X, Y)] \right)\right) \right] \leq \exp \left( \frac{t^2\sigma^2}{2} \right)
    \]
    and
    \[
    \mathbb E_{P_{X \mid Z_i} \times P_{Y \mid Z_i}} \left[ \exp\left(t \left(\mathrm{LLR}^{(i)}(X, Y) - \mathbb E_{P_{X \mid Z_i} \times P_{Y \mid Z_i}}[\mathrm{LLR}^{(i)}(X, Y)] \right)\right) \right] \leq \exp \left( \frac{t^2\sigma^2}{2} \right)
    \]
    for all $i \in [n]$. Moreover, if $\sigma^2 \leq M\, \mathrm{V}_{\mathrm{KL}, (i)}$ for all $i \in [n]$ and some constant $M \geq 1$, then the oracle guessing strategy of Lemma \ref{lem:binary_power_lb_tv} also achieves
    \[
      \mathrm{SNR_{LPT}} \geq \frac{\mathrm{SNR_{ORC}}}{8 M \sqrt{\log K}}  - \frac{2 \sum_{k=1}^K \delta_k}{\sqrt K} - \mathrm{o}_P(1).
    \]
\end{lemma}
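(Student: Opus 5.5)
The plan is to combine Lemma~\ref{lem:binary_power_lb_tv} with a comparison between the total variation distance and the Jeffreys divergence that holds for sub-Gaussian log-likelihood ratios.

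\textbf{Step 1: reduce to a sum of TV distances.} Start from the binary signal-to-noise ratio~\eqref{eq:binary_SNR_formula}. Since $p_k(1-p_k)\le 1/4$, whenever the numerator is nonnegative (otherwise the claimed inequality is trivial) we get
\[
  \mathrm{SNR_{LPT}}\ \ge\ \frac{2}{\sqrt K}\sum_{k=1}^K\Big(p_k-\tfrac12\Big).
\]
Write $\tau_i:=\dtv(P_{X,Y\mid Z_i},P_{X\mid Z_i}\times P_{Y\mid Z_i})$. Lemma~\ref{lem:binary_power_lb_tv} gives $p_k-\tfrac12\ge \tfrac12\max\{\tau_{i_k},\tau_{j_k}\}-\delta_k$, and the maximum dominates the average of the two terms. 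Hence
\[
  \mathrm{SNR_{LPT}}\ \ge\ \frac{1}{2\sqrt K}\sum_{i=1}^n \tau_i-\frac{2\sum_k\delta_k}{\sqrt K}.
\]
It remains to lower bound $\sum_i\tau_i/\sqrt K$ by $\mathrm{SNR_{ORC}}/(8M\sqrt{\log K})-\mathrm{o}(1)$.

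\textbf{Step 2: a per-point bound, Jeffreys divergence versus TV (the main obstacle).} Fix $i$, let $L=\mathrm{LLR}^{(i)}$, $P=P_{X,Y\mid Z_i}$ and $Q=P_{X\mid Z_i}\times P_{Y\mid Z_i}$, so that $\overline{\mathrm{KL}}_{(i)}=\E_P L-\E_Q L$. Let $c$ be the midpoint $(\E_PL+\E_QL)/2$. Because $\int (p-q)=0$,
\[
  \overline{\mathrm{KL}}_{(i)}=\int (L-c)(p-q)\,\mathsf d\mu .
\]
Split this integral on the window $\{|L-c|\le t\}$ with $t=\overline{\mathrm{KL}}_{(i)}/2+\sigma\sqrt{2\log K}$.
\begin{itemize}
  \item On the window, the contribution is at most $2t\,\tau_i$.
  \item Outside the window, under $P$ (respectively $Q$), $L$ deviates from its own mean by more than $\sigma\sqrt{2\log K}$. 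The sub-Gaussian assumption then bounds each tail term $\E|L-c|\One{|L-c|>t}$ by $O(\sigma/K)$, via a standard integration of the tail.
\end{itemize}
This gives
\[
  \overline{\mathrm{KL}}_{(i)}\le(\overline{\mathrm{KL}}_{(i)}+2\sigma\sqrt{2\log K})\,\tau_i+O(\sigma/K).
\]
Split into two cases.
\begin{itemize}
  \item If $\tau_i\le 1/2$, rearranging gives $\overline{\mathrm{KL}}_{(i)}\le 4\sqrt2\,\sigma\sqrt{\log K}\,\tau_i+O(\sigma/K)$.
  \item If $\tau_i>1/2$, use concentration at the threshold $c$: if $\overline{\mathrm{KL}}_{(i)}\ge 4\sigma\sqrt{\log K}$, then $P(L>c)\ge 1-K^{-2}$ and $Q(L>c)\le K^{-2}$. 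The other regime is handled directly by $\overline{\mathrm{KL}}_{(i)}<4\sigma\sqrt{\log K}\le 8\sigma\sqrt{\log K}\,\tau_i$.
\end{itemize}
I expect the main work to be making this case analysis clean and keeping track of the constants. In particular, the regime where $\overline{\mathrm{KL}}_{(i)}$ is large relative to $\sigma\sqrt{\log K}$ needs $\overline{\mathrm{KL}}_{(i)}$ itself to be bounded by $O(\sigma\sqrt{\log K})$ to close. I expect this follows from the sub-Gaussian mgf applied to $\E_Q[e^{L}]=1$, which gives $\overline{\mathrm{KL}}_{(i)}\lesssim\sigma^2$; if that is too lossy, I would accept a slightly worse constant there. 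The target in all cases is
\[
  \overline{\mathrm{KL}}_{(i)}\le 8\sigma\sqrt{\log K}\,\tau_i+O(\sigma/K).
\]

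\textbf{Step 3: aggregate.} The sub-Gaussian assumption forces $\mathrm V_{\mathrm{KL},(i)}\le\sigma^2$, while the hypothesis gives $\mathrm V_{\mathrm{KL},(i)}\ge\sigma^2/M$. Hence $\big(\sum_i\mathrm V_{\mathrm{KL},(i)}\big)^{1/2}\ge\sigma\sqrt{n/M}$. Summing the bound from Step 2 over $i$ and using $n=2K$,
\[
  \mathrm{SNR_{ORC}}\le \frac{8\sigma\sqrt{\log K}\sum_i\tau_i+O(\sigma n/K)}{\sigma\sqrt{n/M}}
  \le 8\sqrt{M\log K}\,\frac{\sum_i\tau_i}{\sqrt{2K}}+O\!\Big(\sqrt{M}/\sqrt K\Big).
\]
The error term is $\mathrm{o}(1)$. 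Since $M\ge1$ we have $\sqrt M\le M$, so after rearranging, $\sum_i\tau_i/\sqrt K$ dominates $\mathrm{SNR_{ORC}}/(8M\sqrt{\log K})$ up to the $\mathrm{o}_P(1)$ remainder and absolute-constant bookkeeping, which the factor $\tfrac12$ from Step 1 must absorb. Combining this with Step 1 yields the stated inequality.
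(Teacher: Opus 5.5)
Your route is essentially the paper's. Step~1 is exactly the paper's reduction. Your window $\{|L-c|\le t\}$ with $t=\overline{\mathrm{KL}}_{(i)}/2+\sigma\sqrt{2\log K}$ is precisely the interval $[\mu_{0,i}-a,\ \mu_{1,i}+a]$ onto which Lemma~\ref{lem:dtv_kl_comparison} clips $\mathrm{LLR}^{(i)}$, where $\mu_{0,i}=\E_Q L$, $\mu_{1,i}=\E_P L$, $a=C\sigma$ and $C=\sqrt{2\log K}$. Clipping rather than truncating the integral is slightly cleaner: only the upper tail under $P$ and the lower tail under $Q$ survive, each at most $\frac{\sigma^2}{a}e^{-a^2/(2\sigma^2)}$. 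Your truncation also pays $t$ times the tail mass, so your remainder is $O(\sigma\sqrt{\log K}/K+\overline{\mathrm{KL}}_{(i)}/K)$ rather than $O(\sigma/K)$; this is still negligible after aggregation. As written, however, Step~2 leaves the large-$\overline{\mathrm{KL}}$ regime open, and the constant you carry into Step~3 is too weak.

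\textbf{Closing the large-$\overline{\mathrm{KL}}$ regime.} The concentration-at-$c$ idea is a dead end: it shows $\tau_i\approx 1$ but never upper bounds $\overline{\mathrm{KL}}_{(i)}$. The paper closes this regime with $\overline{\mathrm{KL}}_{(i)}\le\sigma^2$, which needs two ingredients. First, $\E_Q[e^{L}]=1$ with the null mgf bound at $t=1$ gives $\mu_{0,i}\ge-\sigma^2/2$. Second, $\E_P[e^{-L}]=1$ with the alternative mgf bound at $t=-1$ gives $\mu_{1,i}\le\sigma^2/2$. The identity you cite delivers only the first half. For $K\ge e^{\sigma^2/8}$ this gives $\overline{\mathrm{KL}}_{(i)}\le 2a$, hence $\overline{\mathrm{KL}}_{(i)}+2a\le 4a$. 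Your window bound then becomes $\overline{\mathrm{KL}}_{(i)}\le 4\sqrt2\,\sigma\sqrt{\log K}\,\tau_i+\text{remainder}$, with no case split at all.

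\textbf{Why the constant matters.} With your target constant $8$, Step~3 yields only $\mathrm{SNR_{LPT}}\ge \mathrm{SNR_{ORC}}/(8\sqrt{2M\log K})-\cdots$. This is weaker than the claim whenever $1\le M<2$, so the factor $\tfrac12$ from Step~1 is not absorbed. With $4\sqrt2$ you get $\mathrm{SNR_{ORC}}/(8\sqrt{M\log K})$, which implies the statement and is in fact better by $\sqrt M$. The gain comes from your aggregation: you invoke $\sigma^2\le M\cdot\frac1n\sum_i\mathrm V_{\mathrm{KL},(i)}$ once. The paper instead replaces $\sigma$ by $\sqrt{M\,\mathrm V_{\mathrm{KL},(i)}}$ and then bounds $\mathrm V_{\mathrm{KL},(i)}$ by $M$ times the average, paying $\sqrt M$ twice.

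Finally, the case $\sum_k(p_k-\tfrac12)<0$ is not trivial: there, $p_k(1-p_k)\le\tfrac14$ pushes the inequality the wrong way. The paper tacitly makes the same assumption, that the bets beat random guessing on aggregate.
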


A few remarks are in order. As with Theorem \ref{thm:orc_vs_nnpt_oracle}, this result does not imply that the LPT requires oracle knowledge to be powerful; it merely gives a quantitative baseline for the optimal strategy, and a benchmark which good strategies might hope to match. Next, we discuss the constant $M$: sub-Gaussianity forces $\mathrm{V}_{\mathrm{KL}, (i)} \leq \sigma^2$, so the constant $M$ measures how tight the sub-Gaussian parameter is relative to the actual varentropy. For instance, $M = 1$ exactly when the log-likelihood ratio is Gaussian under the alternative, and for any sub-Gaussian distribution, such an $M$ must exist, just not uniformly over all sub-Gaussian distributions. Finally, there is nothing fundamental about sub-Gaussianity rather than any other type of concentration, and similar results can be proved under various concentration assumptions.

\subsubsection{Special case: linear confounder}

Following the organization of the main text, we give an explicit example and analysis in the case of the linear confounder model \eqref{model:linear_gaussian_confounder}.

We may consider the simple binary $T_k$ given by, in each bin $B_{k} = \{i, j\}$,
\begin{equation}\label{eq:betting_strat}
  T_k = 2 \cdot \mathbbm 1 \{ (X_{i} - X_{j})(Y_{i} - Y_{j}) \geq 0 \} - 1.
\end{equation}
which corresponds to guessing that the larger $X$ goes with the larger $Y$ value. With this statistic, we get the following power guarantee.
\begin{theorem}\label{thm:gaussian_binary_power}
  Consider the linear confounding model (\ref{model:linear_gaussian_confounder}), and suppose that the binning satisfies assumptions \eqref{eq:linear_assmp_1} and \eqref{eq:linear_assmp_2}. For LPT with the statistic (\ref{eq:betting_strat}), the power satisfies
  \[
  \mathrm{SNR_{LPT}} = \frac{\sqrt{2n}\, q_n}{\sqrt{1 - 4q_n^2}} \cdot (1 + \mathrm{o}_P(1))
  \]
        where
  \[
    q_n = \frac{1}{\pi} \arcsin(\rho_n).
  \]
\end{theorem}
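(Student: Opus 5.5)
Since $T$ is linearly decomposable into binary summands $T_k\in\{-1,+1\}$ with bins of size $2$, the plan is to apply the binary-statistic result, Corollary~\ref{thm:nnpt_guess_power}, whose signal-to-noise ratio is given by \eqref{eq:binary_SNR_formula}. The whole task then reduces to computing $p_k=\P(T_k=1\mid\bZ)$ in the linear confounder model \eqref{model:linear_gaussian_confounder}, up to errors controlled by the within-bin variation.

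\textbf{Step 1: the homogeneous case.} Write $B_k=\{i,j\}$. Then
\[
X_i-X_j=(f_1(Z_i)-f_1(Z_j))+\beta_{1,n}(U_i-U_j)+(\epsilon_{1i}-\epsilon_{1j}),
\]
and similarly for $Y_i-Y_j$. Conditional on $\bZ$, the pair $(X_i-X_j,\,Y_i-Y_j)$ is bivariate Gaussian with correlation exactly $\rho_n$ and means $\Delta^{(1)}_k=f_1(Z_i)-f_1(Z_j)$ and $\Delta^{(2)}_k=f_2(Z_i)-f_2(Z_j)$. If both means vanish, Sheppard's formula gives
\[
\P\big((X_i-X_j)(Y_i-Y_j)\ge 0\big)=\tfrac12+\tfrac1\pi\arcsin\rho_n=\tfrac12+q_n,
\]
so that $p_k(1-p_k)=\tfrac14-q_n^2$. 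Substituting into \eqref{eq:binary_SNR_formula} with $K=n/2$ yields
\[
\frac{K q_n}{\sqrt{K(1/4-q_n^2)}}=\frac{\sqrt{2n}\,q_n}{\sqrt{1-4q_n^2}}.
\]

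\textbf{Step 2: the perturbation from nonzero means.} Standardize each difference by its variance $2(\beta_{\cdot,n}^2+1)$, which is bounded above and below. The means become $a_k\propto\Delta^{(1)}_k$ and $b_k\propto\Delta^{(2)}_k$; note that $S_k^{(1)}=(\Delta_k^{(1)})^2/4$. Define
\[
g(a,b)=\P\big((W_1+a)(W_2+b)\ge0\big)
\]
for a standard bivariate normal $(W_1,W_2)$ with correlation $\rho_n$. By the symmetry $(W_1,W_2)\mapsto(-W_1,-W_2)$, $g$ is even under $(a,b)\mapsto(-a,-b)$. A Taylor expansion then gives
\[
g(a,b)-g(0,0)=c_1(a^2+b^2)\,O(\rho_n)+c_2\,ab+O(a^4+b^4),
\]
with bounded constants. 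Crucially, the pure $a^2$ term must vanish at $\rho_n=0$: if $\rho=0$ and $b=0$, then $g=\tfrac12$ regardless of $a$. I expect $\partial_a^2 g(0,0)=O(\rho_n)$, and more generally I would use $|g(a,b)-g(0,0)|\lesssim |ab|+\rho_n(a^2+b^2)\wedge 1$.

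Summing over bins, the numerator error is
\[
\sum_k|g_k-g_0|\lesssim \sum_k\sqrt{S_k^{(1)}S_k^{(2)}}+\rho_n\sum_k\big(S_k^{(1)}+S_k^{(2)}\big).
\]
By Cauchy--Schwarz together with \eqref{eq:linear_assmp_2} and \eqref{eq:linear_assmp_1}, this is $K\cdot \mathrm o_P(\rho_n)=K\cdot\mathrm o_P(q_n)$, using $q_n\asymp\rho_n$ when $\rho_n\to0$. When $\rho_n$ does not vanish, the $\mathrm o_P(1)$ from \eqref{eq:linear_assmp_1} suffices. The denominator changes by a relative $\mathrm o_P(1)$, since $p_k(1-p_k)$ is Lipschitz in $p_k$ and the average perturbation tends to zero. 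Both effects are absorbed into the $(1+\mathrm o_P(1))$ factor.

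\textbf{Step 3: the hypotheses of Corollary~\ref{thm:nnpt_guess_power}.} The sums $\sum_k p_k$ and $\sum_k(1-p_k)$ grow like $K$ as long as $\lim\rho_n<1$, which holds by the boundedness of the $\beta$'s. The first term inside $\Phi$ in the corollary has coefficient $\sqrt K/(4\sum_k p_k(1-p_k))^{1/2}=(1-4q_n^2)^{-1/2}(1+\mathrm o_P(1))$. This tends to $1$ in the local regime. In the fixed-$\rho$ regime it is a distinct factor, so the conclusion should be read as a statement about $\mathrm{SNR_{LPT}}$ as defined in \eqref{eq:binary_SNR_formula}, which is exactly what the theorem asserts.

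\textbf{Main obstacle.} The delicate step is Step 2: establishing that the Gaussian orthant probability's sensitivity to a mean shift in only one coordinate is $O(\rho_n)$ times the squared shift. This is what makes the product condition \eqref{eq:linear_assmp_2} sufficient. I would prove it by conditioning on $W_1$: since $W_2\mid W_1\sim N(\rho_n W_1,1-\rho_n^2)$, one can write $g(a,0)-\tfrac12$ as an integral that is odd in $\rho_n$ and then bound it directly.
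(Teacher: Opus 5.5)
Your proposal is correct and follows essentially the same route as the paper. That route is: Sheppard's formula for the driftless bin; then the quadratic drift bound $|R_k|\lesssim |\delta_{1,k}\delta_{2,k}|+\rho_n(\delta_{1,k}^2+\delta_{2,k}^2)$ (Lemma~\ref{lem:binary_drift_perturbation}); then Cauchy--Schwarz with \eqref{eq:linear_assmp_1}--\eqref{eq:linear_assmp_2}. The inequality $\rho_n\le\arcsin\rho_n=\pi q_n$ covers both the vanishing and non-vanishing $\rho_n$ regimes at once, so you do not need to split into cases.

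One caution: your local Taylor expansion cannot deliver that bound by itself, because its quartic remainder carries no factor $\rho_n$. The paper gets the bound globally by telescoping $T_k-T_k^0$ into one-coordinate sign flips and conditioning on the flipped coordinate, exactly as in your last paragraph. In that conditioning step the other coordinate's drift must be kept in the conditional-bias bound; that is what produces the $|\delta_{1,k}\delta_{2,k}|$ term.
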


This shows that the detection threshold still matches the oracle test in local alternatives, since for small values of $\rho_n$ we get that $q_n \approx \frac{1}{\pi} \rho_n$, so that the signal-to-noise ratio when $\rho_n \to 0$ is on the order of
\[
  \lim_{n \to \infty} \mathrm{SNR_{LPT}} = \lim_{n \to \infty} \frac{\sqrt{2n}\, q_n}{\sqrt{1 - 4q_n^2}} = \frac{\sqrt 2}{\pi} \sqrt{n} \rho_n
\]
which yields the same $\rho_n = \Theta(n^{-1/2})$ threshold as the oracle test, meaning that the restriction to binary $\{\pm 1\}$ valued statistics costs only a constant factor reduction of power. See Table \ref{table:SNR_comp} for details on the constants.

\begin{table}[h]
  \centering
  \begin{tabular}{lcc}
    \toprule
    Test & Asymptotic $\mathrm{SNR}$ & $\lim_{n \to \infty} \frac{\mathrm{SNR_{LPT}}}{\mathrm{SNR_{ORC}}}$ \\
    \addlinespace[0.4em]
    \hline
    \addlinespace[0.4em]
    Oracle likelihood-ratio test & $\dfrac{\sqrt{n} \rho_n}{1 - \rho_n^2}$ & N/A \\
    LPT (Theorem \ref{thm:power_lpt_confounder_model_general}) & $\sqrt{\dfrac{m-1}{m}} \cdot \dfrac{\sqrt n \rho_n}{\sqrt{1 + \rho_n^2}}$ & $\sqrt{\dfrac{m - 1}{m}}$ \\
    Binary LPT (Theorem \ref{thm:gaussian_binary_power}) & $\frac{\sqrt 2}{\pi} \sqrt{n} \rho_n$ & $\dfrac{\sqrt 2}{\pi}$ \\
    \bottomrule
  \end{tabular}
  \caption{Power of LPT with a binary (Theorem \ref{thm:gaussian_binary_power}) and non-binary (Theorem \ref{thm:power_lpt_confounder_model_general}) statistics (with constant bin size $m$) compared to the oracle likelihood-ratio test of Section \ref{sec:oracle_power_orc}. Numerically, $\frac{\sqrt{2}}{\pi} \approx 0.450$, and at $m = 2, \sqrt{\frac{m - 1}{m}} \approx 0.707$.}
  \label{table:SNR_comp}
\end{table}

\section{Proofs from section \ref{sec:validity}}

\subsection{Proof of Theorem \ref{thm:validity_main}}

Theorem \ref{thm:validity_main} follows immediately from substituting the bound in Lemma \ref{lem:bin_tv_bound} into the Type I error bound of Lemma \ref{lem:initial_tv_bound}.
\hfill $\square$

\subsubsection{Proof of Lemma \ref{lem:initial_tv_bound}}
We first claim that $\mathbb P^*(p \leq \alpha \mid \mathbf Z) \leq \alpha$ holds, where $\mathbb P^*$ denotes that the data is drawn from $P^*_k(\mathbf Z)$ within bins and where $P^*_k(\mathbf Z)$ is as defined in Section~\ref{sec:proof_sketch}. Analogously, we write $\mathbb E^*$ for expectation under the same law.

Now, suppose $(\bX,\bY)\sim \otimes_{k=1}^K P^*_k(\mathbf Z)$ conditional on $\bZ$. Thus, by construction, for any $\sigma\in \Pi$, 

\begin{equation}\label{eq:eq_dist_of_XY}
 (\bX, \bY)\overset{D}{=}(\bX_\sigma, \bY).
\end{equation}
Next, we define $p:\mathcal{X}^n\times \mathcal{Y}^n\mapsto [0,1]$ by 
\begin{equation*}
       p(\bx,\by) = \frac{1}{|\Pi|}\sum_{\sigma \in \Pi} \One{T(\bx_{\sigma},\by,\bZ) \geq T(\bx, \by, \bZ)},
  \end{equation*}
and note that $p=p(\bX,\bY)$. Furthermore, by~\eqref{eq:eq_dist_of_XY},
\begin{align*}
    \P^\star(p(\bX,\bY) \leq \alpha \mid \mathbf Z) &= \frac{1}{|\Pi|}\sum_{\sigma\in \Pi} \P^\star(p(\bX_\sigma,\bY) \leq \alpha \mid \mathbf Z)\\
     &= \frac{1}{|\Pi|}\sum_{\sigma\in \Pi} \E^\star\bigl[\One{p(\bX_\sigma,\bY) \leq \alpha} \bigm| \mathbf Z\bigr]\\
     &= \E^\star\left[\frac{1}{|\Pi|}\sum_{\sigma\in \Pi} \One{\frac{1}{|\Pi|}\sum_{\sigma' \in \Pi} \One{T(\bX_{\sigma'},\bY,\bZ) \geq T(\bX_\sigma, \bY, \mathbf Z)} \leq \alpha} \ \middle| \ \mathbf Z\right]\le \alpha,
\end{align*}
where the penultimate step follows by noting that for any fixed $\sigma\in \Pi$, we have $\sigma\circ \Pi=\Pi$ and the last step follows since by noting the deterministic inequality that for any collection $\{t_1,\ldots,t_m\}$ and for any $\alpha\in (0,1)$,
\[
\frac{1}{m}\sum_{j=1}^m \One{\sum_{k=1}^m \One{t_k\ge t_j}\le \alpha} \le \alpha.
\]
For a general version of this deterministic inequality see \citet[Lemma~3]{harrison2012conservative}. This proves our claim.

Finally, since $T(\bX, \bY, \bZ) = T(\bX_\sigma, \bY_\sigma, \bZ)$ by symmetry of $T$, the distribution of $T$ under $(\bX,\bY)\sim \otimes_{k=1}^K P_k(\mathbf Z)$ matches its distribution under$(\bX,\bY)\sim \otimes_{i=1}^n P_{X,Y\mid Z_i}$. Hence, by the definition of total-variation distance, we have
\[
  \mathbb P(p \leq \alpha \mid \mathbf Z) \leq \mathbb P^*(p \leq \alpha \mid \mathbf Z) + \dtv\left(\prod_{k=1}^K P_k(\mathbf Z), \prod_{k=1}^K P^*_k(\mathbf Z)\right) \leq \alpha + \sum_{k=1}^K \dtv\left(P_k(\mathbf Z), P_k^*(\mathbf Z) \right)
\]
as desired. \qed

\subsubsection{Proof of Lemma \ref{lem:bin_tv_bound}}

We break the proof into Lemmas \ref{lem:dtv_product}, \ref{lem:dtv_transposition}, and \ref{lem:dtv_product_swap} to reduce notational burden. In particular, Lemma \ref{lem:bin_tv_bound} follows immediately from applying Lemma \ref{lem:dtv_product} to $\mathbf X_k$ and $\mathbf Y_k$ conditional on $\mathbf Z$. \qed

\begin{lemma}\label{lem:dtv_product}
    Let $m\geq 1$, and define random variables $X=(X_1,\dots,X_m)$, $Y=(Y_1,\dots,Y_m)$, where
    \[X_i\sim P_{X,i},\quad Y_i\sim P_{Y,i},\]
    with $X_1,\dots,X_m,Y_1,\dots,Y_m$ mutually independent.
    Define
    \[\epsilon_X = \max_{1\leq i, j\leq m}\dtv(P_{X,i},P_{X,j}), \quad \epsilon_Y = \max_{1\leq i, j\leq m}\dtv(P_{Y,i},P_{Y,j}).\]
    For any permutation $\sigma\in\cS_m$, with
    \[X_\sigma = (X_{\sigma(1)},\dots,X_{\sigma(m)}),\quad Y_\sigma = (Y_{\sigma(1)},\dots,Y_{\sigma(m)}).\]
    we have
    \[\dtv\big((X_\sigma,Y_\sigma),(X_{\sigma_X},Y_{\sigma_Y})\big)\leq 4(m-1)\cdot \epsilon_X\epsilon_Y,\]
    where $\sigma,\sigma_X,\sigma_Y\iidsim\textnormal{Unif}(\cS_m)$ are sampled independently of $(X,Y)$. 
\end{lemma}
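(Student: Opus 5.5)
The plan is to reduce the comparison to a chain of single transpositions acting on $Y$ only, and to show that each transposition costs at most $4\epsilon_X\epsilon_Y$. A cycle decomposition of any permutation in $\cS_m$ uses at most $m-1$ transpositions, which gives the factor $m-1$.

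\textbf{Reduction.} Since $\sigma_X,\sigma_Y$ are i.i.d.\ uniform, we can write $(\sigma_X,\sigma_Y)\eqd(\sigma,\sigma\circ\tau)$ with $\tau\sim\textnormal{Unif}(\cS_m)$ independent of $\sigma$. Hence $(X_{\sigma_X},Y_{\sigma_Y})\eqd(X_\sigma,(Y_\sigma)_\tau)$. Conditioning on $\tau$ and using convexity of TV,
\[
\dtv\big((X_\sigma,Y_\sigma),(X_{\sigma_X},Y_{\sigma_Y})\big)\le \max_{\rho\in\cS_m}\dtv\big((X_\sigma,Y_\sigma),(X_\sigma,(Y_\sigma)_\rho)\big).
\]
Write a fixed $\rho$ as $t_1\circ\cdots\circ t_r$ with $r\le m-1$ transpositions, and telescope over the partial products $\rho_s=t_1\circ\cdots\circ t_s$. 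It then suffices to show, for every fixed $\rho'$ and every transposition $t=(a\,b)$,
\[
\dtv\big((X_\sigma,(Y_\sigma)_{\rho'}),(X_\sigma,(Y_\sigma)_{\rho'\circ t})\big)\le 4\epsilon_X\epsilon_Y .
\]

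\textbf{Single transposition.} Apply the positional permutation $\rho'^{-1}$ to the $X$ coordinates of both laws (a bijection, so TV is unchanged), and replace $\sigma$ by $\sigma\circ\rho'$, which is again uniform. This reduces to comparing $(X_\sigma,Y_\sigma)$ with the same vector after swapping positions $a,b$ of the $Y$-block only. Condition on $\sigma$ and on all coordinates other than positions $a,b$. Positions $a,b$ then carry $(X_i,X_j)$ and $(Y_i,Y_j)$ with $i=\sigma(a)$, $j=\sigma(b)$, independent of one another and of the rest.

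Let $\mu$ be the law of $(X_i,X_j)$, $\nu$ the law of $(Y_i,Y_j)$, and $s$ the coordinate swap. Using the maximal-overlap decomposition $P_{X,i}=(1-e)Q+eR_i$ and $P_{X,j}=(1-e)Q+eR_j$ with $e=\dtv(P_{X,i},P_{X,j})\le\epsilon_X$, the symmetric part $(1-e)^2Q\otimes Q$ cancels in $\mu-s\mu$. Therefore $\|\mu-s\mu\|_1\le 2(1-(1-e)^2)\le 4\epsilon_X$, and similarly $\|\nu-s\nu\|_1\le 4\epsilon_Y$.

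We need to compare $\mu\otimes\nu$ with $\mu\otimes s\nu$, and a single conditional comparison alone only gives the sum-type bound. The fix is to pair $\sigma$ with $\sigma\circ(a\,b)$, which has the same probability. Under this pairing the $X$-law at positions $(a,b)$ becomes $s\mu$ and the $Y$-law becomes $s\nu$, with everything else identical. Averaging each pair, the signed difference becomes
\[
\tfrac12\big[(\mu\otimes\nu-\mu\otimes s\nu)+(s\mu\otimes s\nu-s\mu\otimes\nu)\big]=\tfrac12(\mu-s\mu)\otimes(\nu-s\nu),
\]
tensored with the common law of the remaining coordinates. Taking half its $L^1$ norm gives $\tfrac12\cdot\tfrac12\cdot 4\epsilon_X\cdot 4\epsilon_Y=4\epsilon_X\epsilon_Y$. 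Averaging over the other coordinates and over $\sigma$, by the triangle inequality for the resulting mixtures, yields the per-transposition bound. Summing over at most $m-1$ transpositions gives the lemma.

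\textbf{Main obstacle.} The delicate step is the pairing argument. One must check that after the reduction by $\rho'$, the pairing $\sigma\leftrightarrow\sigma\circ(a\,b)$ still acts as a simultaneous swap on both blocks. One must also verify that the remaining coordinates have identical conditional laws within each pair; they do, because $\sigma$ and $\sigma\circ(a\,b)$ agree off $\{a,b\}$. I expect this bookkeeping, rather than the overlap bound, to need the most care.
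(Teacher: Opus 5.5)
Your proposal is correct and follows essentially the same route as the paper: you reduce to a fixed relative permutation via $(\sigma,\sigma\circ\tau)$, telescope over at most $m-1$ transpositions, and bound each transposition by pairing $\sigma$ with $\sigma\circ(a\,b)$, which is exactly the paper's auxiliary $\nu\sim\tfrac12\delta_{\mathrm{Id}}+\tfrac12\delta_\pi$, followed by a product-swap bound. The differences are cosmetic: you absorb the partial product $\rho'$ by reindexing the $X$-block and substituting $\sigma\mapsto\sigma\circ\rho'$ rather than relabeling the $Y$-distributions, and you use the exact factorization $\tfrac12(\mu-s\mu)\otimes(\nu-s\nu)$ in place of the overlap-decomposition argument of Lemma~\ref{lem:dtv_product_swap} (your identity in fact shows that lemma holds with equality).
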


\begin{proof}[Proof of Lemma~\ref{lem:dtv_product}]

First, we will use the fact that
\[\textnormal{If $(\sigma,\pi)\sim\textnormal{Unif}(\cS_m)\times \textnormal{Unif}(\cS_m)$ then $(\sigma,\sigma\circ\pi)\sim\textnormal{Unif}(\cS_m)\times \textnormal{Unif}(\cS_m)$},\]
which holds by the group structure of $\cS_m$ (here $\circ$ denotes the composition of permutations). Therefore,
it is equivalent to prove that
\begin{equation}\label{eqn:random_pi_lem:dtv_product}\begin{split}\dtv\big((X_\sigma,Y_\sigma),&(X_\sigma,Y_{\sigma\circ\pi})\big)\leq 4(m-1)\epsilon_X\epsilon_Y\\&\textnormal{ where }(X,Y,\sigma,\pi)\sim \prod_{i=1}^m P_{X,i}\times \prod_{i=1}^m P_{Y,i}\times \textnormal{Unif}(\cS_m)\times \textnormal{Unif}(\cS_m).\end{split}\end{equation}
In fact, we will prove a strictly stronger statement: we will show that for any \emph{fixed} permutation $\pi\in\cS_m$, 
\begin{equation}\label{eqn:fixed_pi_lem:dtv_product}\begin{split}\dtv\big((X_\sigma,Y_\sigma),&(X_\sigma,Y_{\sigma\circ\pi})\big)\leq 4(m-1)\epsilon_X\epsilon_Y\\&\textnormal{ where }(X,Y,\sigma)\sim \prod_{i=1}^m P_{X,i}\times \prod_{i=1}^m P_{Y,i}\times \textnormal{Unif}(\cS_m).\end{split}\end{equation}
(This is strictly stronger than~\eqref{eqn:random_pi_lem:dtv_product} because, if $\dtv\big((X_\sigma,Y_\sigma),(X_\sigma,Y_{\sigma\circ\pi})\big)\leq 4(m-1)\epsilon_X\epsilon_Y$ holds for every fixed $\pi$, then it also holds if we average over the distribution $\pi\sim\textnormal{Unif}(\cS_m)$.)

From this point on, let $\pi\in\cS_m$ be fixed. If $\pi$ is the identity permutation then the claim~\eqref{eqn:fixed_pi_lem:dtv_product} is trivial. Otherwise, we can write $\pi$ as a composition of transpositions,
\[\pi = \pi_L \circ \dots \circ\pi_1,\]
where each $\pi_\ell$ is a transposition (i.e., a permutation that swaps two indices), and where the total number of transpositions satisfies $L\leq m-1$ (see e.g. Chapter 3.5 of \cite{dummit2003abstract}).
By the triangle inequality, we therefore have
\begin{equation}\label{eqn:triangle_ineq_lem:dtv_product}\begin{split}\dtv\big((X_\sigma,Y_\sigma),(X_\sigma,Y_{\sigma\circ\pi})\big)\leq {}&\\\dtv\big((X_\sigma,Y_\sigma),(X_\sigma,&Y_{\sigma\circ\pi_1})\big) + \dtv\big((X_\sigma,Y_{\sigma\circ\pi_1}),(X_\sigma,Y_{\sigma\circ\pi_2\circ\pi_1})\big) \\&{}+ \dots + \dtv\big((X_\sigma,Y_{\sigma\circ\pi_{L-1}\circ\dots\circ \pi_1}),(X_\sigma,Y_{\sigma\circ\pi_L\circ\dots\circ \pi_1})\big).\end{split}\end{equation}
To help with notation, define
\[Y^{(\ell)} = Y_{\pi^{(\ell)}}\textnormal{ where }\pi^{(\ell)}=\begin{cases} \textnormal{Id}, & \ell = 1,\\ \pi_{\ell-1}\circ\dots\circ\pi_1, & \ell\in\{2,\dots,L\}.\end{cases}\]
We can then rewrite~\eqref{eqn:triangle_ineq_lem:dtv_product} as
\[\dtv\big((X_\sigma,Y_\sigma),(X_\sigma,Y_{\sigma\circ\pi})\big)\leq  \sum_{\ell=1}^L \dtv\big((X_\sigma,Y^{(\ell)}_\sigma),(X_\sigma,Y^{(\ell)}_{\sigma\circ\pi_\ell})\big).\]
Next we will bound each term on the right-hand side. Fix any $\ell\in[L]$. We then have
\[(X,Y^{(\ell)},\sigma)\sim \prod_{i=1}^m P_{X,i} \times \prod_{i=1}^m P_{Y,\pi^{(\ell)}(i)} \times\textnormal{Unif}(\cS_m).\]
By Lemma~\ref{lem:dtv_transposition} (applied with distributions $P_{X,1},\dots,P_{X,m}$ and $P_{Y,\pi^{(\ell)}(1)},\dots,P_{Y,\pi^{(\ell)}(m)}$), we have
\[\dtv\big((X_\sigma,Y^{(\ell)}_\sigma),(X_\sigma,Y^{(\ell)}_{\sigma\circ\pi_\ell})\big)\leq 4\epsilon_X\epsilon_Y,\]
which completes the proof since $L\leq m-1$.
\end{proof}

\begin{lemma}\label{lem:dtv_transposition}
    In the setting of Lemma~\ref{lem:dtv_product}, let $\pi\in\cS_m$ be a fixed transposition.
    Then
    \[\dtv\big((X_\sigma,Y_\sigma),(X_\sigma,Y_{\sigma\circ\pi})\big)\leq 4\epsilon_X\epsilon_Y,\]
    where $\sigma\sim\textnormal{Unif}(\cS_m)$ is sampled independently of $(X,Y)$. 
\end{lemma}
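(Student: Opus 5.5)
The plan is to condition on $\sigma$, pair each $\sigma$ with $\sigma\circ\pi$, and reduce the claim to an exact factorization of a signed measure on the two swapped coordinates. Write the transposition as $\pi=(a\;b)$.

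\emph{Conditional laws given $\sigma$.} Given $\sigma$, the vector $(X_\sigma,Y_\sigma)$ has independent coordinates, with $(X_{\sigma(i)},Y_{\sigma(i)})\sim P_{X,\sigma(i)}\times P_{Y,\sigma(i)}$ at each position $i$. The vector $(X_\sigma,Y_{\sigma\circ\pi})$ has the same law at every position $i\notin\{a,b\}$. At positions $a,b$, with $u=\sigma(a)$ and $v=\sigma(b)$, its laws are $P_{X,u}\times P_{Y,v}$ and $P_{X,v}\times P_{Y,u}$ respectively.

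\emph{Pairing $\sigma$ with $\sigma\circ\pi$.} The map $\sigma\mapsto\sigma\circ\pi$ is a fixed-point-free involution of $\mathcal S_m$, so it partitions $\mathcal S_m$ into pairs $\{\sigma,\sigma\circ\pi\}$. These two permutations agree off $\{a,b\}$ and exchange $u$ and $v$. Both laws in the lemma are uniform mixtures over these pairs, and TV is jointly convex. It therefore suffices to bound, for a single pair, the TV between the two pair-averaged laws. On coordinates $i\notin\{a,b\}$ both are the same product measure, so that factor can be dropped and only the coordinates $(x_a,y_a,x_b,y_b)$ remain.

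\emph{The key factorization.} Abbreviate $A=P_{X,u}$, $A'=P_{X,v}$, $B=P_{Y,u}$, $B'=P_{Y,v}$, and write $\mu_a\nu_b$ for the product of $\mu$ on the $a$-coordinate and $\nu$ on the $b$-coordinate. The null mixture is
\[\tfrac12\big[A_aB_aA'_bB'_b+A'_aB'_aA_bB_b\big],\]
and the swapped mixture is
\[\tfrac12\big[A_aB'_aA'_bB_b+A'_aB_aA_bB'_b\big].\]
Expanding the product shows that their difference equals
\[\tfrac12\,(A_aA'_b-A'_aA_b)\otimes(B_aB'_b-B'_aB_b),\]
where the first factor lives on the $x$-coordinates and the second on the $y$-coordinates. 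Checking this algebra is the crux, and it is exactly where the product $\epsilon_X\epsilon_Y$ (rather than a sum) comes from.

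\emph{Bounding the norm.} The total-variation norm of a product of signed measures on a product space is the product of their norms. For the $x$-factor, the triangle inequality gives
\[\|A\otimes A'-A'\otimes A\|\le\|A\otimes A'-A'\otimes A'\|+\|A'\otimes A'-A'\otimes A\|\le 2\|A-A'\|=4\,\dtv(A,A')\le4\epsilon_X,\]
and the same argument gives $\|B\otimes B'-B'\otimes B\|\le 4\epsilon_Y$. Since $\dtv$ is half the norm of the difference,
\[\dtv\le \tfrac12\cdot\tfrac12\cdot 4\epsilon_X\cdot 4\epsilon_Y=4\epsilon_X\epsilon_Y.\]
Averaging over the pairs by convexity then gives the lemma.

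The main obstacle is recognizing the pairing step. Conditioning on a single $\sigma$ only yields a bound that is a sum of $\epsilon_X$ and $\epsilon_Y$ terms. Symmetrizing over $\{\sigma,\sigma\circ\pi\}$ is what makes the difference factor as a tensor product, and this relies on $\sigma$ being uniform.
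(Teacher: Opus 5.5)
Your proof is correct and follows the paper's argument. Your pairing of $\sigma$ with $\sigma\circ\pi$ is the paper's auxiliary $\nu\sim\tfrac12\delta_{\mathrm{Id}}+\tfrac12\delta_\pi$, and from there both proofs use joint convexity, removal of the common coordinates, and the same triangle-inequality bound $\dtv(A\otimes A',A'\otimes A)\le 2\dtv(A,A')$. The only difference is that you get the product bound by factoring the signed difference of the two mixtures as a tensor product and using multiplicativity of the total-variation norm, rather than the mixture decomposition of Lemma~\ref{lem:dtv_product_swap}; your factorization in fact shows that lemma holds with equality.
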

\begin{proof}[Proof of Lemma~\ref{lem:dtv_transposition}]
    Let $\nu\sim\frac12\delta_{\textnormal{Id}}+
    \frac12\delta_\pi$ be independent of $\sigma$ and $(X,Y)$.
    Since $\sigma\circ\nu$ is uniform on $\cS_m$, the distance in the
    lemma equals
    \[
      \dtv\big((X_{\sigma\circ\nu},Y_{\sigma\circ\nu}),
      (X_{\sigma\circ\nu},Y_{\sigma\circ\nu\circ\pi})\big).
    \]
    By joint convexity of total variation distance, it suffices to
    bound this distance conditionally on each fixed value of $\sigma$.

    Fix $\sigma\in\cS_m$. Suppose $\pi$ exchanges positions $a$ and
    $b$, and set $i=\sigma(a)$ and $j=\sigma(b)$. Define the laws on
    the two affected $X$-positions by
    \[
      A_X=P_{X,i}\times P_{X,j},
      \qquad
      A_X'=P_{X,j}\times P_{X,i},
    \]
    and similarly define
    \[
      A_Y=P_{Y,i}\times P_{Y,j},
      \qquad
      A_Y'=P_{Y,j}\times P_{Y,i}.
    \]
    The coordinates outside positions $a$ and $b$ have the same
    product law in both distributions. After removing this common
    factor and averaging over $\nu$, the two full laws on the affected
    coordinates are
    \[
      \frac12\left(A_X\times A_Y+A_X'\times A_Y'\right)
      \quad\text{and}\quad
      \frac12\left(A_X\times A_Y'+A_X'\times A_Y\right).
    \]
    Lemma~\ref{lem:dtv_product_swap} therefore gives the upper bound
    \[
      \dtv(A_X,A_X')\,\dtv(A_Y,A_Y').
    \]
    By the triangle inequality and invariance of total variation under
    tensoring with a common probability measure,
    \begin{align*}
      \dtv(A_X,A_X')
      &\leq
      \dtv(P_{X,i}\times P_{X,j},P_{X,j}\times P_{X,j})\\
      &\quad+
      \dtv(P_{X,j}\times P_{X,j},P_{X,j}\times P_{X,i})\\
      &=2\dtv(P_{X,i},P_{X,j})
      \leq2\epsilon_X.
    \end{align*}
    The same argument gives $\dtv(A_Y,A_Y')\leq2\epsilon_Y$.
    Thus the conditional distance is at most
    $4\epsilon_X\epsilon_Y$ for every fixed $\sigma$. Averaging over
    $\sigma$ completes the proof.
\end{proof}

\begin{lemma}\label{lem:dtv_product_swap}
    Let $P,P'$ be distributions on $\mathcal{X}$ and let $Q,Q'$ be distributions on $\mathcal{Y}$. Then
    \[\dtv\left(\frac{1}{2}(P\times Q + P'\times Q'), \frac{1}{2}(P\times Q' + P'\times Q)\right) \leq \dtv(P,P')\,\dtv(Q,Q').\]
\end{lemma}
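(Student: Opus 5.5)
Write $\mu = \frac12(P\times Q + P'\times Q')$ and $\nu = \frac12(P\times Q' + P'\times Q)$. The plan is to compute the signed difference $\mu-\nu$ explicitly and show that it factors as a product measure. Expanding gives
\[
\mu - \nu = \tfrac12\bigl(P\times Q + P'\times Q' - P\times Q' - P'\times Q\bigr) = \tfrac12\,(P-P')\times(Q-Q'),
\]
where $(P-P')\times(Q-Q')$ denotes the product of the two signed measures. This factorization is the whole content of the lemma. It says that the difference between the ``correctly paired'' mixture and the ``swapped'' mixture is a tensor product of the individual differences. That is exactly why the bound involves the product of the two TV distances rather than their sum.

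Next, fix a common dominating measure $\lambda$ for $P,P'$ and $\kappa$ for $Q,Q'$ (for instance $\lambda = P+P'$ and $\kappa = Q+Q'$). Write the densities as $p,p'$ and $q,q'$. Then $\mu-\nu$ has density $\frac12(p(x)-p'(x))(q(y)-q'(y))$ with respect to $\lambda\times\kappa$. Using $\dtv(\mu,\nu)=\frac12\int|d\mu - d\nu|$ and Fubini (Tonelli, since the integrand is nonnegative), we get
\[
\dtv(\mu,\nu) = \frac12\cdot\frac12\int|p-p'|\,d\lambda\int|q-q'|\,d\kappa = \frac12\cdot 2\dtv(P,P')\cdot 2\dtv(Q,Q')\cdot\frac12 .
\]
Tracking the constants carefully, $\frac12\int\!\!\int \frac12|p-p'||q-q'| = \frac14\cdot(2\dtv(P,P'))(2\dtv(Q,Q')) = \dtv(P,P')\dtv(Q,Q')$. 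So we in fact get equality, which is at least the claimed inequality.

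The only step needing care is the factorization together with the bookkeeping of the factors of $\tfrac12$ in the TV normalization. There is no genuine obstacle: measurability and the existence of the dominating measures are standard, and Tonelli applies because $|p-p'|\,|q-q'|$ is a nonnegative product of measurable functions.
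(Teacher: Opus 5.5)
Your proof is correct, and it takes a different route from the paper. You use the algebraic identity $\mu-\nu=\tfrac12(P-P')\times(Q-Q')$. You then use the fact that the total-variation norm is multiplicative on a product of signed measures (via densities with respect to $\lambda\times\kappa$ and Tonelli). This gives the exact identity $\dtv(\mu,\nu)=\dtv(P,P')\,\dtv(Q,Q')$.

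The paper instead writes $P=(1-a)P_0+aP_1$ and $P'=(1-a)P_0+aP_2$ with $a=\dtv(P,P')$, and does the same for $Q,Q'$ with $b=\dtv(Q,Q')$. It expands both mixtures and observes that they agree except in the component of weight $ab$, namely $\tfrac12(P_1\times Q_1+P_2\times Q_2)$ versus $\tfrac12(P_1\times Q_2+P_2\times Q_1)$. It then bounds the distance between those two components by $1$.

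The paper's argument stays at the level of probability measures and mixtures, a coupling-style argument with no densities or signed measures, and only records the inequality. With the canonical mutually singular choice of $P_1,P_2$ and $Q_1,Q_2$, that residual distance is exactly $1$, so it too could be pushed to equality. Your argument is shorter and makes tightness explicit, so the product bound cannot be improved.

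Your factorization $(pq+p'q')-(pq'+p'q)=(p-p')(q-q')$ is exactly the identity the paper itself uses in the Hellinger analogue, Lemma~\ref{lem:dh2_product_swap}.
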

\begin{proof}[Proof of Lemma~\ref{lem:dtv_product_swap}]
    By definition of total variation distance, we can write
    \[\begin{cases} P = (1-a)\cdot P_0 + a\cdot P_1, \\ P' = (1-a)\cdot P_0 + a\cdot P_2,\end{cases} \quad \begin{cases} Q = (1-b)\cdot Q_0 + b\cdot Q_1, \\ Q' = (1-b)\cdot Q_0 + b\cdot Q_2,\end{cases}\]
    for $a=\dtv(P,P')$, $b=\dtv(Q,Q')$, and for some distributions $P_0,P_1,P_2$ on $\mathcal{X}$ and $Q_0,Q_1,Q_2$ on $\mathcal{Y}$. We can then calculate
    \begin{multline*}
        \frac{1}{2}(P\times Q + P'\times Q')
        = (1-a)(1-b) \cdot P_0\times Q_0 + a(1-b)\cdot \frac{1}{2}(P_1+P_2)\times Q_0 \\+(1-a)b\cdot P_0 \times \frac{1}{2}(Q_1 + Q_2) + ab \cdot \frac{1}{2}(P_1\times Q_1 + P_2\times Q_2)
    \end{multline*}
and similarly
\begin{multline*}
        \frac{1}{2}(P\times Q' + P'\times Q)
        = (1-a)(1-b) \cdot P_0\times Q_0 + a(1-b)\cdot \frac{1}{2}(P_1+P_2)\times Q_0 \\+(1-a)b\cdot P_0 \times \frac{1}{2}(Q_1 + Q_2) + ab \cdot \frac{1}{2}(P_1\times Q_2 + P_2\times Q_1).
    \end{multline*}
These two decompositions differ only in the last term, which then yields that
\[\dtv\left(\frac{1}{2}(P\times Q + P'\times Q'),\frac{1}{2}(P\times Q' + P'\times Q)\right)\leq ab = \dtv(P,P')\,\dtv(Q,Q'),\]
as desired.
\end{proof}

\subsection{Proof of Theorem \ref{thm:validity_linear}}

Now we turn to establishing Type~I error control of adaptive-LPT with linearly decomposable statistics. We begin with stating and proving a lemma that forms the main step of the proof. We then use this lemma to prove the theorem.

For any two random variables $U,V$ taking values in  $(\RR, \mathcal{B}(\RR))$, we write $\dks(U,V)$ to denote the Kolmogorov--Smirnov distance, given by 
\[\sup_{t\in \RR} |\mathbb P(U \leq t) - \mathbb P(V\le t)|.\]

\begin{lemma}\label{lem:KS_multinomial}
  Let $a_k = (a_{k,1},\dots,a_{k,L})\in \RR^L$ be fixed vector for each $k\in [K]$. Suppose that, for each $k\in [K]$, 
  \[
  A_k = a_{k,I_k}, \quad B_k = a_{k,J_k}\qquad \text{with}\,\, I_k\sim \sum_{\ell=1}^L p_{k,\ell}\cdot \delta_\ell, \text{ and } J_k\sim \textnormal{Unif}([L]),
  \]
  for some $(p_{k,1},\ldots, p_{k,L})\in [0,1]^L$ such that $\sum_{\ell=1}^L p_{k,\ell}=1$.
  Assume further that $A_1,\dots,A_K$ are mutually independent, and likewise $B_1,\dots,B_K$.
                Then,
    \begin{multline*}
        \dks\left(\sum_{k=1}^K A_k,\sum_{k=1}^K B_k\right) \leq \frac{1.12 \cdot \max_{k\in [K]}\textnormal{Range}(a_k)}{\sqrt{\sum_{k=1}^K\textnormal{Var}(a_k)}} \\+ \frac{2}{\sqrt[4]{2\pi}}\sqrt{\sum_{k=1}^K \dtv(A_k,B_k)\cdot \frac{\max_{k\in [K]}\textnormal{Range}(a_k)}{\sqrt{\sum_{k=1}^K\textnormal{Var}(a_k)}}},\hspace{1.4cm}
    \end{multline*}
    where we write for any $k\in [K]$ $\textnormal{Range}(a_k) = \max_\ell a_{k,\ell} - \min_\ell a_{k,\ell}$ and  $\textnormal{Var}(a_k) = \frac{1}{L}\sum_\ell (a_{k,\ell} - \overline{a}_k)^2$.
\end{lemma}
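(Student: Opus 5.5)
The bound has the shape of a smoothing argument: couple each $A_k$ with $B_k$, control the perturbation $\sum_k (A_k - B_k)$ through a width $h$, and pay for the shift with Gaussian anti-concentration of $\sum_k B_k$, obtained from Berry--Esseen. The constant $2/\sqrt[4]{2\pi}$ is what optimizing over $h$ gives.

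Write $R = \max_k \textnormal{Range}(a_k)$, $s^2 = \sum_k \textnormal{Var}(a_k)$ and $\delta = \sum_k \dtv(A_k,B_k)$. For each $k$, I would take a maximal coupling of $(A_k,B_k)$, so that $\P(A_k\neq B_k) = \dtv(A_k,B_k)$. The couplings are built independently across $k$, which keeps $A_1,\dots,A_K$ mutually independent and $B_1,\dots,B_K$ mutually independent, as the lemma requires. Set $W=\sum_k B_k$ and $D = \sum_k (A_k - B_k)$, so that $\sum_k A_k = W + D$. Since $A_k$ and $B_k$ both take values among the entries of $a_k$, we have $|A_k - B_k| \le R\,\One{A_k\neq B_k}$, and therefore
\[
\E|D|\le R\,\delta .
\]

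Next, I apply Berry--Esseen to $W$, a sum of independent bounded variables with $\Var(B_k)=\textnormal{Var}(a_k)$. The centered third moment satisfies $\E|B_k-\E B_k|^3 \le \textnormal{Range}(a_k)\,\textnormal{Var}(a_k)$. With the constant $0.56$ for non-identically distributed summands, this gives
\[
\sup_t\bigl|\P(W\le t)-\Phi\bigl((t-\E W)/s\bigr)\bigr| \le 0.56\,R\,s^2/s^3 = 0.56\,R/s .
\]

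For the upper bound, fix $t$ and $h>0$. Then
\[
\P(W+D\le t) \le \P(W\le t+h) + \P(D< -h).
\]
Markov's inequality bounds the second term by $R\delta/h$. For the first term, I pass to the Gaussian, use that the Gaussian density is at most $1/(s\sqrt{2\pi})$, and pass back, which gives
\[
\P(W\le t+h) \le \P(W\le t) + \frac{h}{s\sqrt{2\pi}} + 2\cdot 0.56\,\frac{R}{s}.
\]
The lower bound is symmetric, using $\P(W+D\le t)\ge \P(W\le t-h)-\P(D>h)$. This yields
\[
\dks\Big(\sum_k A_k,\sum_k B_k\Big) \le \frac{1.12\,R}{s} + \frac{h}{s\sqrt{2\pi}} + \frac{R\delta}{h}.
\]

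Finally, choose $h=\sqrt{R\delta\, s\sqrt{2\pi}}$. The last two terms then sum to
\[
2\sqrt{\frac{R\delta}{s\sqrt{2\pi}}} = \frac{2}{\sqrt[4]{2\pi}}\sqrt{\delta\, R/s},
\]
which is exactly the claimed bound.

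The main obstacle is the Berry--Esseen step. The constant $0.56$ needs the version for non-identically distributed summands, and the third-moment bound must be checked with the uniform distribution over the entries of $a_k$. Note also that $D$ is not independent of $W$; this is harmless, because the decomposition $\P(W+D\le t)\le \P(W\le t+h)+\P(D<-h)$ holds for any joint law.
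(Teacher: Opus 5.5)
Your proof is correct and is essentially the paper's argument. The paper routes it through the $1$-Wasserstein distance: it bounds $\dw(\sum_k A_k,\sum_k B_k)\le\sum_k \dtv(A_k,B_k)\,\textnormal{Range}(a_k)$ and then invokes a general smoothing lemma (Lemma~\ref{lem:normal_ks_dw}) that performs exactly your Markov-plus-Gaussian-anti-concentration step with the same choice of $h$, whereas you inline both steps by building the product of maximal couplings directly, with the constants $1.12$ and $2/\sqrt[4]{2\pi}$ arising identically.
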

\begin{proof}[Proof of Lemma~\ref{lem:KS_multinomial}]
 
We start by noting that $B_k$'s are independent across $k\in [K]$, and we observe that
\begin{align*}
    \EE{B_k} = \overline{a}_k = \frac{1}{L}\sum_{\ell=1}^L a_{k,\ell},& \qquad \textnormal{Var}(B_k) = \frac{1}{L}\sum_{\ell=1}^L(a_{k,\ell} - \overline{a}_k)^2 = \textnormal{Var}(a_k),\quad \text{and}\\
    \EE{|B_k - \EE{B_k}|^3} &= \frac{1}{L}\sum_{\ell=1}^L|a_{k,\ell} - \overline{a}_k|^3 \leq \textnormal{Range}(a_k)\cdot \textnormal{Var}(a_k).
\end{align*}
Hence, by the Berry--Esseen theorem,
    \[\dks\left(\sum_{k=1}^K B_k, \mathcal{N}\Big(\sum_{k=1}^K \overline{a}_k, \sum_{k=1}^K \sigma^2_k\Big)\right) \leq \frac{0.56\sum_{k=1}^K\EE{|B_k - \EE{B_k}|^3}}{(\sum_{k=1}^K \textnormal{Var}(B_k))^{3/2}} \leq \frac{0.56\max_{k\in [K]}\textnormal{Range}(a_k)}{\sqrt{\sum_{k=1}^K\textnormal{Var}(a_k)}},\]    
    By Lemma~\ref{lem:normal_ks_dw}, therefore, it holds that
    \[\dks\left(\sum_{k=1}^K A_k,\sum_{k=1}^K B_k\right) \leq \frac{1.12 \cdot \max_{k\in [K]}\textnormal{Range}(a_k)}{\sqrt{\sum_{k=1}^K\textnormal{Var}(a_k)}} + 2\sqrt{\frac{\dw\left(\sum_{k=1}^K A_k,\sum_{k=1}^K B_k\right)}{\sqrt{2\pi\sum_{k=1}^K \textnormal{Var}(a_k)}}},\]
    where $\dw$ is the $1$-Wasserstein distance. Now it remains to bound the $1$-Wasserstein distance appearing on the right hand side of the above inequality. Since the $A_k$'s and the $B_k$'s are mutually independent,
    \[\dw\left(\sum_{k=1}^K A_k,\sum_{k=1}^K B_k\right) \leq \sum_{k=1}^K \dw(A_k,B_k).\]
    Further, since $A_k,B_k$ both have the common support $\{a_{k,1},\dots,a_{k,L}\}$, we have
    \[\dw(A_k,B_k) \leq \dtv(A_k,B_k) \cdot \textnormal{Range}(a_k).\]
    Consequently, it holds that
    \[\dw\left(\sum_{k=1}^K A_k,\sum_{k=1}^K B_k\right) \leq \sum_{k=1}^K \dtv(A_k,B_k)\cdot\max_{k=1,\dots,K}\textnormal{Range}(a_k).\]
    Combining all these arguments, the proof follows.
        \end{proof}

\begin{proof}[Proof of Theorem \ref{thm:validity_linear}]
We start by defining $\cF$ to be the $\sigma$-algebra generated by $\bZ$, and the empirical distributions $\bigl\{\frac{1}{m_k}\sum_{i\in B_k}\delta_{X_i}, \frac{1}{m_k}\sum_{i\in B_k}\delta_{Y_i}\bigr\}_{k=1}^K$. Thereby, let $P_k(\cF)$ and $P^*_k(\cF)$ be the conditional distributions, given $\cF$, of $((\bX_k)_{\sigma_k},(\bY_k)_{\sigma_k})$ and $((\bX_k)_{\sigma_k},(\bY_k)_{\sigma'_k})$, respectively, where $\sigma_k, \sigma_k' \sim \mathrm{Unif}(\mathcal S_{m_k})$. Note that $P_k(\cF)$ and $P^*_k(\cF)$ have the same interpretation as $P_k(\bZ)$ and $P^*_k(\bZ)$ except that we now condition on more information. Note that $P^*_k(\mathcal F)$ is uniform over all permutations of (the realized values of) $\mathbf X_k$ and $\mathbf Y_k$.

As in the proof of Theorem \ref{thm:validity_main}, let $\mathbb P^*$ denote that the data is drawn from $P^*_k(\mathbf Z)$ in each bin. 
By an argument analogous to that in the proof of Lemma \ref{lem:initial_tv_bound}, we obtain
\begin{equation}\label{eq:linear_p*_bound}
  \mathbb P^*(p \leq \alpha \mid \mathcal F) \leq \alpha.
\end{equation}

Next, note that for linearly decomposable statistics, the p-value $p$ specializes to
\[
  p =  \frac{1}{|\Pi|} \sum_{\sigma \in \Pi} \One{\sum_{k=1}^K T_k((\mathbf X_k)_{\sigma_k}, \mathbf Y, \mathbf Z) \geq \sum_{k=1}^K T_k(\mathbf X_k, \mathbf Y, \mathbf Z)}.
\]
Now, we apply Lemma~\ref{lem:KS_multinomial} with $a_k=\bigl(T_k((\mathbf X_k)_{\sigma_k}, \mathbf Y_k, \mathbf Z): \sigma_k\in S_{m_k}\bigr)$, $L=(m_k)!$ and $(p_{k,1},\ldots, p_{k,L})$ be the probability mass function of  $T_k((\mathbf X_k), \mathbf Y_k, \mathbf Z)$ given $\mathcal F$. Henceforth, under the notation of Lemma~\ref{lem:KS_multinomial},  by bin-symmetry of $T_k$, $A_k\eqd T_k((\mathbf X_k), \mathbf Y_k, \mathbf Z)$ and $B_k \eqd T_k((\mathbf X'_k), \mathbf Y'_k, \mathbf Z)$ where $(\bX'_k,\bY'_k)\sim P^*_k(\mathcal F)$.  Note $B_k$ is indeed uniformly chosen from $a_k$ as for any fixed permutations $\tau, \tau' \in \mathcal S_{m_k}$, by the symmetry of $T_k$,
\[
  T_k((\mathbf X_k)_{\tau}, \mathbf Y_k, \mathbf Z) = 
  T_k((\mathbf X_k)_{\tau' \circ \tau}, (\mathbf Y_k)_{\tau'}, \mathbf Z)
\]
and
\[
  \mathbb P^*(((\mathbf X_k)_{\sigma_k}, \mathbf Y) = ((\mathbf X_k)_{\tau' \circ \tau}, (\mathbf Y_k)_{\tau'}) \mid \mathcal F) = \frac{m_k!}{(m_k!)^2} = \frac{1}{m_k!}.
\]

Note $B_k$ is uniformly chosen from $a_k$ as for any fixed permutations $\tau, \tau' \in \mathcal S_{m_k}$, by the bin-symmetry of $T_k$,
\[
  T_k((\mathbf X_k)_{\tau}, \mathbf Y_k, \mathbf Z) = 
  T_k((\mathbf X_k)_{\tau' \circ \tau}, (\mathbf Y_k)_{\tau'}, \mathbf Z)
\]
and
\[
  \mathbb P^*(((\mathbf X_k)_{\sigma_k}, \mathbf Y) = ((\mathbf X_k)_{\tau' \circ \tau}, (\mathbf Y_k)_{\tau'}) \mid \mathcal F) = \frac{m_k!}{(m_k!)^2} = \frac{1}{m_k!}.
\]

Therefore, Lemma~\ref{lem:KS_multinomial} gives
\begin{multline*}
  \mathbb P(p\leq \alpha \mid \cF) - \mathbb P^*(p\leq \alpha \mid \cF) \\
  \leq \frac{1.12 \cdot \max_{k\in [K]}\textnormal{Range}_k}{\sqrt{\sum_{k=1}^K\textnormal{Var}_k}}
  + \frac{2}{\sqrt[4]{2 \pi}}\sqrt{\frac{\max_{k\in [K]} \textnormal{Range}_k}{\sqrt{\sum_{k=1}^K \textnormal{Var}_k}}} \cdot \sqrt{\sum_{k=1}^K \dtv(P_k(\cF),P^*_k(\cF))}.
\end{multline*}
Moreover, by Lemma \ref{lem:cond_tv_ineq}, $\dtv(P_k(\bZ),P^*_k(\bZ)) \leq \EEst{\dtv(P_k(\cF),P^*_k(\cF))}{\bZ}$ and by the Cauchy--Schwarz inequality,
\begin{align*}
  &\EEst{\frac{2}{\sqrt[4]{2 \pi}}\sqrt{\frac{\max_{k\in [K]} \textnormal{Range}_k}{\sqrt{\sum_{k=1}^K \textnormal{Var}_k}}} \cdot \sqrt{\sum_{k=1}^K \dtv(P_k(\cF),P^*_k(\cF))}}{ \mathbf Z} \\
  &\qquad \leq \frac{2}{\sqrt[4]{2 \pi}}\sqrt{\EEst{\frac{\max_{k\in [K]} \textnormal{Range}_k}{\sqrt{\sum_{k=1}^K \textnormal{Var}_k}}}{\mathbf Z}} \cdot \sqrt{\sum_{k=1}^K \dtv(P_k(\bZ),P^*_k(\bZ)) } \leq \frac{2}{\sqrt[4]{2\pi}} \sqrt{\epsilon_n \delta_n}.
\end{align*}

Finally, by \eqref{eq:linear_p*_bound} and applying the tower law, this gives
\[
  \mathbb P(p\leq \alpha \mid \bZ) \leq \alpha + 1.12 \cdot \epsilon_n + \frac{2}{\sqrt[4]{2\pi}} \sqrt{\epsilon_n \delta_n} \leq \alpha + \frac{2}{\sqrt[4]{2\pi}} \left( \sqrt{\epsilon_n} + \sqrt{\delta_n}\right) \sqrt{\epsilon_n}
\]
(where we use $2/\sqrt[4]{2\pi} \approx 1.26 > 1.12$).
Now, if $4\epsilon_n < \delta_n$, the aforementioned bound gives
\[
  \mathbb P(p \leq \alpha \mid \mathbf Z) \leq \alpha + \frac{3}{\sqrt[4]{2 \pi}} \sqrt{\epsilon_n \delta_n} \approx 1.89 \sqrt{\epsilon \delta_n}.
\]
Otherwise, if $4\epsilon_n \geq \delta_n$, Theorem \ref{thm:validity_main} gives
\[
  \mathbb P(p\leq \alpha \mid \bZ) \leq \alpha + \delta_n \leq \alpha + 2\sqrt{\epsilon_n \delta_n}.
\]
Combining both cases, the result follows.
\end{proof}

\subsection{Proof of Theorem \ref{thm:vanishing_tv}}

Throughout this proof, we denote the densities, evaluated at $x\in \mathcal{X}$, of the conditional distributions $P_{X \mid Z=z}$ as $p(x \mid z)$, the marginal density of $X $ as $p(x)$, and the common dominating measure as $\mu(x)$.

\paragraph{Proof of first part: convergence in TV.}
We can write
\[
  \dtv(p(\cdot \mid Z_1), p(\cdot \mid Z_{N(1), n})) = \int_\RR \left( p(x \mid Z_1) - p(x \mid Z_{N(1), n})\right)_+ \mathsf{d}\mu(x),
\]
where for nay $a\in \mathbb R$, we write $(a)_+$ to denote $\max\{a,0\}$. Therefore, by Fubini's theorem, interchanging the order of integration (this is where the $\sigma$-finiteness of $\mu$ is needed), it suffices to study
\[
  \lim_{n \to \infty} \int  \E \left[ (p(x \mid Z_1) - p(x \mid Z_{N(1), n}))_+ \right] \mathsf{d}\mu(x).
\]
Note that
\[
  \E \left[ \left(p(x \mid Z_1) - p(x \mid Z_{N(1), n})\right)_+ \right] \leq \E[p(x \mid Z_1)]  = p(x)
\]
so that if for all $x\in \mathcal{X}$, 
\begin{equation}\label{eq:convergence_tv_at_x}
    \lim_{n \to \infty} \E \left[ (p(x \mid Z_1) - p(x \mid Z_{N(1), n}))_+ \right]\to 0,
\end{equation}
then we may apply dominated convergence to obtain
\[
  \int \lim_{n \to \infty} \E \left[ (p(x \mid Z_1) - p(x \mid Z_{N(1), n}))_+ \right] \mathsf{d}\mu(x) = 0.
\]
To see \eqref{eq:convergence_tv_at_x}, by Lemma \ref{lem:conv_in_prob}, 
$p(x \mid Z_1) - p(x \mid Z_{N(1), n}) \to 0$ in probability. Moreover, since $\{p(x \mid Z_{N(1), n})\}_{n=1}^\infty$ is uniformly integrable by Lemma \ref{lem:ui}, we may conclude that $p(x \mid Z_{N(1), n}) \to p(x \mid Z_1)$ in $L^1$ as well, so that we have
\[
  0\le\lim_{n \to \infty} \E \left[ (p(x \mid Z_1) - p(x \mid Z_{N(1), n}))_+ \right] \leq
  \lim_{n \to \infty} \E \left[ |p(x \mid Z_1) - p(x \mid Z_{N(1), n})| \right] = 0,
\]
which establishes~\eqref{eq:convergence_tv_at_x}.

\paragraph{Proof of second part: arbitrarily slow convergence.} For the second half of the result, we will explicitly construct a distribution on $\mathcal{X}\times \mathcal{Z}$, for which $\mathbb E \left[  \dtv(p(\cdot \mid Z_1), p(\cdot \mid Z_{N(1), n})) \right]$ converges arbitrarily slowly. In particular, suppose $Z_1,\ldots, \iidsim \text{Unif}([0, 1])$, and for some set $S \subset [0, 1]$ and for each $i$, $X_i = 1_{Z_i \in S}$. In this case, the TV distance simplifies to
\[
  \dtv(p(\cdot \mid Z_1), p(\cdot \mid Z_{N(1), n})) = \begin{cases}
    0 & 1_{Z_1 \in S} = 1_{Z_{N(1), n} \in S} \\
    1 & \text{otherwise}.
  \end{cases}
\]
so that now onwards, we need to study
\[
\E\left[\dtv(p(\cdot \mid Z_1), p(\cdot \mid Z_{N(1), n}))\right]=\P(1_{Z_1 \in S} \neq 1_{Z_{N(1), n} \in S}) = \P(X_1 \neq X_{N(1)}).
\]
The rest of the proof constructs the set iteratively. First, fix a sequence $\{s_m\}_{m=1}^\infty$ (explicit choice will be stated later in the proof) such that
\[
  \sum_{m=1}^\infty 2^{m-1} s_m = \frac{1}{2}
\]
and set $S_0 = [0, 1]$; then, set $S_1 = [0, \frac{1 - s_1}{2}] \cup [\frac{1 + s_1}{2}, 1]$, which is obtained from $S_0$ by removing the middle segment of length $s_1$. Then, $S_2$ is obtained from $S_1$ by removing the middle segments of length $s_2$ from the two intervals composing $S_1$; generalizing, $S_{m}$ is attained from $S_{m - 1}$ by removing the middle intervals of length $s_{m}$ of the $2^{m-1}$ segments of $S_{m - 1}$. Then, set
\[
  S = \bigcap_{m=1}^\infty S_m.
\]
Note that $S$ is Borel and has Lebesgue measure $1/2$ by construction. See Figure \ref{fig:drawing_s} for an illustration.

\begin{figure}[ht]
  \centering
  {
    \def\svgwidth{0.8\columnwidth}
    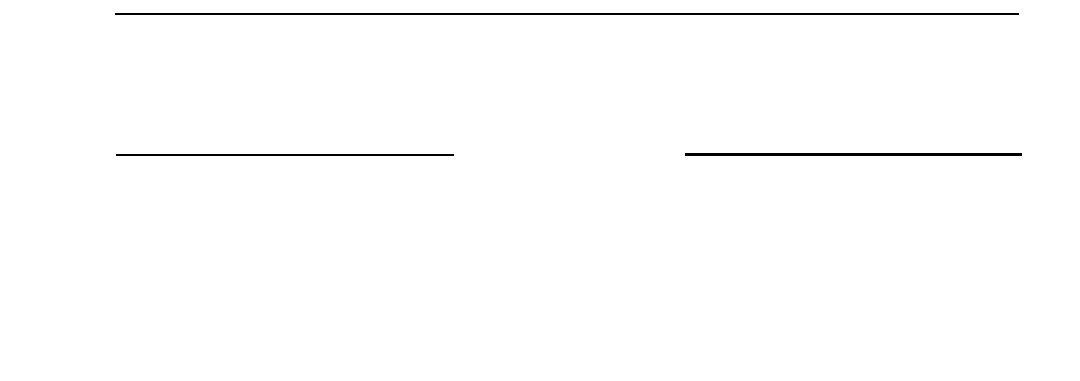
      }
  \caption{We construct a Cantor-like set $S = \bigcap_{m=1}^\infty S_m$ and consider the $Z_1$ which fall into the gaps $T_{m, \ell}$. As long as we choose the lengths $s_m$ of the gaps $T_{m, \ell}$ to decay sufficiently slowly, we will get slow convergence in expectation for the total variation distance between $X_1$ and $X_{n,1}$. To get an intuitive grasp on why the rate of removal matters, consider widening the lowest-most removals $T_{3, \bullet}$ and narrowing the earlier gaps $T_{1, \bullet}, T_{2, \bullet}$ (as shown in bottom figure). The failure to converge in expected TV happens at points which have nearby points both in and out of $S$; this action of widening/narrowing (which corresponds to slowing the rate at which $R_n \to \frac{1}{2}$) ``spreads out'' that failure state.}
  \label{fig:drawing_s}
\end{figure}

We now set some additional notation. 
Let $\lambda(E)$ be the Lebesgue measure of $E$, $p_n(z)$ be the density of $Z_{N(1), n} - Z_1$ (with respect to the Lebesgue measure) conditional on $Z_1$ for $z \in (0, 1 - Z_1]$.

We define
  \[
    f(z_1, z_2) = \begin{cases}
      0 & 1_{z_1 \in S} = 1_{z_2 \in S} \\
      1 & \text{otherwise}
    \end{cases},
  \]
  and write $S_{m, \ell}$ be the $\ell$-th interval of $S_m$, where $1 \leq \ell \leq 2^m$, $T_{m, \ell}$ be the $\ell$-th interval which is removed from $S_{m - 1}$ to get $S_m$, where $1 \leq \ell \leq 2^{m - 1}$, and finally set $R_m = \sum_{k=1}^m 2^{k-1} s_k$ the total amount of mass removed after step $m$, so that each interval $S_{m, \ell}$ is of length $\frac{1 - R_m}{2^m}$, and $R_m \to \frac{1}{2}$ as $m \to \infty$.

Then, for any $z \in (0, 1 - Z_1]$, we have
\begin{align*}
  \P(Z_{N(1), n} - Z_1 \geq z \mid Z_1) &= \P(Z_2, \dots, Z_n \notin [Z_1, Z_1 + z] \text{ and there exists } Z_i \text{ s.t. } Z_i > Z_1 \mid Z_1) \\
 &=\P(\text{there exists}~~ Z_i \text{ s.t. } Z_i > Z_1 \mid Z_2, \dots, Z_n \notin [Z_1, Z_1 + z], Z_1) \\
 &\qquad \times \P(Z_2, \dots, Z_n \notin [Z_1, Z_1 + z] \mid Z_1) \\
 &= \left( 1 - \left(\frac{Z_1}{1 - z}\right)^{n-1} \right) \cdot (1 - z)^{n - 1} = (1 - z)^{n - 1} - Z_1^{n-1}
\end{align*}
and so
\[
  p_n(z) = (n - 1)(1 - z)^{n - 2}
\]
holds for $z \in (0, 1 - Z_1]$. Then, we have that 
\[
  \P(X_1 \neq X_{N(1)} \mid Z_1) = \int_{0}^{1 - Z_1} f(z, Z_1) p_n(z)\mathsf{d}z \geq \int_0^{\epsilon} f(z, Z_1) p_n(z)\mathsf{d}z.
\]

Now take any $m \geq 3$ and consider the set $G_\epsilon$ such that any $Z_1 \in G_\epsilon$ satisfies:
\begin{enumerate}
\item $Z_1 \leq \frac{1}{2}$;
\item after choosing $\epsilon = \frac{1 - R_{m - 1}}{2^{m - 1}}$, the interval $[Z_1, Z_1 + \epsilon]$ contains $S_{m, \ell'}$ for some $\ell'$.
\end{enumerate}
We will compute the measure of $G_\epsilon$ later. Then, noting that $\epsilon \leq 1/2 \leq 1 - Z_1$ and that $p_n(z)$ is decreasing, for $Z_1 \in G_\epsilon$,
\[
  \int_0^{\epsilon} f(z, Z_1) p_n(z)\mathsf{d}z \geq p_n(\epsilon) \cdot \lambda(S_{m, \ell'} \cap S).
\]
We may compute that
\[
  \lambda(S_{m, \ell'} \cap S) = \frac{1 - R_m}{2^m} - \sum_{k=m+1}^\infty 2^{k - (m + 1)}s_k = \frac{1}{2^{m+1}}
\]
and, since $0 \leq \epsilon \leq 1/2$, we have
\[
  1 - \epsilon \geq \exp(-2\epsilon) \implies p_n(\epsilon) = (n - 1)(1 - \epsilon)^{n-2} \gtrsim n\exp(-2n \epsilon) = n \exp \left( -n \cdot \frac{1 - R_{m-1}}{2^{m-2}} \right).
\]
Now we just need to compute the measure of $G_\epsilon$. In particular, consider $Z_1 \in T_{k, \ell}$ for $k, \ell$ as follows:
take any $k > m \geq 3$, and let $1 \leq \ell \leq 2^{k - 4}$ or $2^{k - 3} + 1 \leq \ell \leq 2^{k - 3} + 2^{k - 4}$. Since $\ell \leq 2^{k-3} + 2^{k-4} < 2^{k - 2}$, $Z_1 \in T_{k, \ell} \implies Z_1 \leq \frac{1}{2}$; moreover, the condition on $\ell$ ensures that $T_{k, \ell}$ is on the left side of the bisection of the various intervals $S_{m, \ell'}$. The important part is that for $k >m$, there are $2^{k - 3}$ many $\ell$ which satisfy the condition. Then,
 \[
   \lambda \left( G_\epsilon \right) \geq \sum_{k=m+1}^\infty 2^{k-3}s_k = \frac{\frac{1}{2} - R_m}{4} \gtrsim \frac{1}{2} - R_m.
 \]
 Finally, by integrating the conditional probability over all $G_\epsilon$, we get a lower bound
 \begin{align*}
   \P(X_1 \neq X_{N(1)}) &\geq \lambda \left(G_\epsilon\right) \cdot p_n(\epsilon) \cdot \lambda(S_{m, \ell'} \cap S) \\
                         &\gtrsim  \left( \frac{1}{2} - R_m \right) \cdot n \exp\left( -n \cdot \frac{1 - R_{m - 1}}{2^{m-2}} \right) \cdot \frac{1}{2^{m+1}} \\
                         &\gtrsim  \frac{n}{2^{m+1}} \exp\left( -\frac{n}{2^{m-2}} \right) \left( \frac{1}{2} - R_m \right) \\
   \intertext{and by taking $m = \lfloor \log_2 n \rfloor$ (recall that this bound holds for any choice of $m$), we get}
                         &\gtrsim \frac{1}{2} - R_{\lfloor \log_2 n \rfloor}
 \end{align*}
 Now, since the only requirement we have imposed on the removed mass $R_m$ is that $R_{m} \to \frac{1}{2}$ as $m \to \infty$, we may pick the sequence $\{s_m\}_{m=1}^\infty$ such that $\frac{1}{2} - R_{\lfloor \log_2 n \rfloor}$ converges arbitrarily slowly, which implies the theorem.
 \qed

\section{Proofs from Section \ref{sec:power} }
\subsection{Proof of Theorem \ref{thm:power_of_orc}} \label{sec:power_of_orc_proof}
 
Everything in this section is stated conditional on $\mathbf Z$.
 
In this section, we study the power of the oracle test $\phi_{\textnormal{ORC}}$ defined in~\eqref{eq:ORC_test}. Recall that the test $\phi_{\textnormal{ORC}}$ assumes the knowledge of the log-likelihood ratios, $\mathrm{LLR}^{(i)}(X_i, Y_i)$, and then applies the Neyman--Pearson lemma to construct the uniformly most powerful test between
\[
  H_0: X, Y \mid Z \sim P_{X \mid Z_i} \times P_{Y \mid Z_i} \text{ and } H_1: X, Y \mid Z \sim P_{X, Y \mid Z_i}.
\]
In particular, we will assume that $\phi_{\textnormal{ORC}}$ admits asymptotic Type~I error control so that it rejects if and only if $$\sum_{i=1}^n \mathrm{LLR}^{(i)}(X_i, Y_i) > \tau_\alpha,$$  where $\tau_\alpha$ is an appropriate cutoff to admit asymptotic Type~I error control.
 
We define
\begin{align}\label{eqn:eta_0 and eta_1}
    \eta_0 &:= 0.56\,\frac{\sum_{i=1}^n\E_{P_{X\mid Z_i}\times P_{Y\mid Z_i}}\left[\left|\mathrm{LLR}^{(i)}(X, Y)-\E_{P_{X\mid Z_i}\times P_{Y\mid Z_i}}\left[\mathrm{LLR}^{(i)}(X, Y)\right]\right|^3\right]}{\left(\sum_{i=1}^n \mathrm V^{[0]}_{\mathrm{KL}, (i)}\right)^{3/2}}~~\text{and}\notag \\
    \eta_1 &:= 0.56\,\frac{\sum_{i=1}^n\E_{P_{X,Y\mid Z_i}}\left[\left|\mathrm{LLR}^{(i)}(X, Y)- \E_{P_{X,Y\mid Z_i}}\left[\mathrm{LLR}^{(i)}(X, Y)\right]\right|^3\right]}{\left(\sum_{i=1}^n \mathrm V_{\mathrm{KL}, (i)}\right)^{3/2}}.
\end{align}
We will shortly see that $\eta_0$ and $\eta_1$ appear from the Berry--Esseen Theorem and appears as the approximation error terms of our power guarantees. Typically, we would expect each of them to diminish with sample size $n$.
 
We start by giving a general finite-sample power guarantee of the oracle test~\eqref{eq:ORC_test}, which will later yield an asymptotic approximation.
 
\begin{theorem}\label{thm:general_orc_power}
  Suppose $0\leq\eta_0\leq \tfrac12\min\{\alpha,1-\alpha\}$ and $\eta_1<\infty$. Under the setting of Section~\ref{sec:oracle_power_orc} and the notation defined above, it holds that
  \[
    \left|\E\left[\phi_{\textnormal{ORC}}\mid \bZ\right] - \Phi\left(\Phi^{-1}(\alpha)\cdot\frac{\bigl(\sum_{i=1}^n\textnormal{V}^{[0]}_{\textnormal{KL},(i)}\bigr)^{1/2}}{\bigl(\sum_{i=1}^n\textnormal{V}_{\textnormal{KL},(i)}\bigr)^{1/2}}+\mathrm{SNR}_{\textnormal{ORC}}\right)\right|\le \eta,
  \]
  where
  \[
    \eta:=\ \frac{1}{\sqrt{2\pi}} \frac{\bigl(\sum_{i=1}^n\textnormal{V}^{[0]}_{\textnormal{KL},(i)}\bigr)^{1/2}}{\bigl(\sum_{i=1}^n\textnormal{V}_{\textnormal{KL},(i)}\bigr)^{1/2}}\left(\Phi^{-1}(\alpha + \eta_0) - \Phi^{-1}(\alpha - \eta_0)\right)+\eta_1.
  \] 
\end{theorem}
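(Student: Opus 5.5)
The plan is to apply the Berry--Esseen theorem twice to the oracle statistic $S:=\sum_{i=1}^n \mathrm{LLR}^{(i)}(X_i,Y_i)$, conditional on $\bZ$. The first application is under the null product law $\otimes_i P_{X\mid Z_i}\times P_{Y\mid Z_i}$, where it locates the threshold $\tau_\alpha$. The second is under the alternative $\otimes_i P_{X,Y\mid Z_i}$, where it evaluates the rejection probability.

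\emph{Notation.} Write $\mu_0=\sum_i \E_{P_{X\mid Z_i}\times P_{Y\mid Z_i}}[\mathrm{LLR}^{(i)}]$ and $s_0^2=\sum_i \mathrm{V}^{[0]}_{\mathrm{KL},(i)}$ for the null mean and variance of $S$. Write $\mu_1=\sum_i\E_{P_{X,Y\mid Z_i}}[\mathrm{LLR}^{(i)}]$ and $s_1^2=\sum_i \mathrm{V}_{\mathrm{KL},(i)}$ for the alternative mean and variance. By the definition of $\overline{\mathrm{KL}}_{(i)}$, we have $\mu_1-\mu_0=\sum_i\overline{\mathrm{KL}}_{(i)}$, so $\mathrm{SNR}_{\mathrm{ORC}}=(\mu_1-\mu_0)/s_1$. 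Independence of the summands given $\bZ$ lets us apply Berry--Esseen (constant $0.56$) directly. This yields
\[
\sup_t\big|\P_0(S>t)-\bigl(1-\Phi((t-\mu_0)/s_0)\bigr)\big|\le\eta_0,
\qquad
\sup_t\big|\P_1(S>t)-\bigl(1-\Phi((t-\mu_1)/s_1)\bigr)\big|\le\eta_1 .
\]

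\emph{Step 1: locating the threshold.} The cutoff $\tau_\alpha$ is chosen so that $\P_0(S>\tau_\alpha)=\alpha$. If $S$ has atoms, we take the appropriate quantile, possibly with randomization at the boundary; this does not affect the Kolmogorov-type bounds. Set $c:=(\tau_\alpha-\mu_0)/s_0$. The null bound gives $|1-\Phi(c)-\alpha|\le\eta_0$. Equivalently, $\Phi(-c)\in[\alpha-\eta_0,\alpha+\eta_0]$, and so
\[
-c\in\bigl[\Phi^{-1}(\alpha-\eta_0),\,\Phi^{-1}(\alpha+\eta_0)\bigr].
\]
The hypothesis $\eta_0\le\tfrac12\min\{\alpha,1-\alpha\}$ is used exactly here. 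It keeps $\alpha\pm\eta_0$ inside $(0,1)$, so both endpoints are finite. The interval also contains $\Phi^{-1}(\alpha)$, and therefore $|-c-\Phi^{-1}(\alpha)|\le\Phi^{-1}(\alpha+\eta_0)-\Phi^{-1}(\alpha-\eta_0)$.

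\emph{Step 2: power under the alternative.} The rejection probability is $\E[\phi_{\mathrm{ORC}}\mid\bZ]=\P_1(S>\tau_\alpha)$. By the alternative Berry--Esseen bound it lies within $\eta_1$ of $\Phi\bigl((\mu_1-\tau_\alpha)/s_1\bigr)$. Algebraically,
\[
\frac{\mu_1-\tau_\alpha}{s_1}=\frac{s_0}{s_1}(-c)+\mathrm{SNR}_{\mathrm{ORC}}.
\]
We compare this with the target argument $\frac{s_0}{s_1}\Phi^{-1}(\alpha)+\mathrm{SNR}_{\mathrm{ORC}}$. Since $\Phi$ is $1/\sqrt{2\pi}$-Lipschitz, the gap between the two $\Phi$-values is at most
\[
\frac{1}{\sqrt{2\pi}}\frac{s_0}{s_1}\bigl|{-c}-\Phi^{-1}(\alpha)\bigr|
\le \frac{1}{\sqrt{2\pi}}\frac{s_0}{s_1}\bigl(\Phi^{-1}(\alpha+\eta_0)-\Phi^{-1}(\alpha-\eta_0)\bigr).
\]
Adding the $\eta_1$ error from Step 2 via the triangle inequality gives exactly the stated $\eta$.

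The only delicate step is Step 1. There we invert a Kolmogorov-distance approximation to pin down the data-free threshold, which needs the finiteness condition on $\eta_0$ and some care with atoms or randomization so that $\P_0(S>\tau_\alpha)=\alpha$ holds. Everything else is bookkeeping with Berry--Esseen and the Lipschitz property of $\Phi$.
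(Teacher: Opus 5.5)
Your proposal is correct and follows essentially the same route as the paper: Berry--Esseen under the null product law to bracket $\tau_\alpha$ between $\sum_i\mu_{0,i}+s_0\Phi^{-1}(1-\alpha\mp\eta_0)$, a second Berry--Esseen application under $\otimes_i P_{X,Y\mid Z_i}$, and the $1/\sqrt{2\pi}$-Lipschitz property of $\Phi$. Your formulation via $c=(\tau_\alpha-\mu_0)/s_0$ gives the upper and lower bounds at once, where the paper proves the lower bound and says the upper bound is analogous. You are also more explicit about atoms and randomization, although the quantile bracketing $\P_0(S>\tau_\alpha)\le\alpha\le\P_0(S\ge\tau_\alpha)$ already suffices without exact size.
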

 
\begin{proof}
    We start by defining some convenient notation: for each $i\in [n]$, let
\begin{multline*}
    \mu_{0,i}=\E_{P_{X\mid Z_i}\times P_{Y\mid Z_i}}\left[\mathrm{LLR}^{(i)}(X, Y)\right],\quad \sigma_{0,i}^2=\mathrm{Var}_{P_{X\mid Z_i}\times P_{Y\mid Z_i}}\left(\mathrm{LLR}^{(i)}(X, Y)\right),\text{~and}\\ \rho_{0,i}^3=\E_{P_{X\mid Z_i}\times P_{Y\mid Z_i}}\left[\left|\mathrm{LLR}^{(i)}(X, Y)-\mu_{0,i}\right|^3\right]\hspace{4cm}
\end{multline*}
denote the mean, variance and the third central moment of the log-likelihood ratio $\mathrm{LLR}^{(i)}(X_i, Y_i)$ under $H_0$. Analogously we write
\begin{multline*}
   \mu_{1,i}=\E_{P_{X,Y\mid Z_i}}\left[\mathrm{LLR}^{(i)}(X, Y)\right],\quad \sigma_{1,i}^2=\mathrm{Var}_{P_{X,Y\mid Z_i}}\left(\mathrm{LLR}^{(i)}(X, Y)\right),\text{~and}\\
 \rho_{1,i}^3=\E_{P_{X,Y\mid Z_i}}\left[\left|\mathrm{LLR}^{(i)}(X, Y)-\mu_{1,i}\right|^3\right] \hspace{4cm}
  \end{multline*}
to denote the mean, variance and the third central moment of the log-likelihood ratio $\mathrm{LLR}^{(i)}(X_i, Y_i)$ under $H_1$. Note that $\sigma_{0, i}^2 = \mathrm V^{[0]}_{\mathrm{KL}, (i)}$ and $\sigma_{1, i}^2 = \mathrm V_{\mathrm{KL}, (i)}$.
    
Now under $H_0$, an application of the Berry--Esseen Theorem gives the following upper and lower bounds to the cutoff $\tau_\alpha$:
    \begin{equation*}
        \sum_{i=1}^n\mu_{0,i}+\left(\sum_{i=1}^n \sigma_{0,i}^2\right)^{1/2} \Phi^{-1}(1 - \alpha - \eta_0)\le \tau_\alpha
        \le \sum_{i=1}^n\mu_{0,i}+\left(\sum_{i=1}^n \sigma_{0,i}^2\right)^{1/2} \Phi^{-1}(1 - \alpha + \eta_0).
    \end{equation*}
    Write
    \[
      q_{\alpha,+}:=\sum_{i=1}^n\mu_{0,i}
      +\left(\sum_{i=1}^n \sigma_{0,i}^2\right)^{1/2}
       \Phi^{-1}(1-\alpha+\eta_0).
    \]
    By applying the upper bound on $\tau_\alpha$, we have that
    \begin{align*}
        &\P_{H_1} \left(\sum_{i=1}^n \mathrm{LLR}^{(i)}(X_i, Y_i) \ge \tau_{\alpha}\right) 
         \ge \P_{H_1} \left(\sum_{i=1}^n \mathrm{LLR}^{(i)}(X_i, Y_i)\ge q_{\alpha,+}\right)\\
        &\quad =\P_{H_1} \Biggl(\sum_{i=1}^n \frac{\left(\mathrm{LLR}^{(i)}(X_i, Y_i)-\mu_{1,i}\right)}{\left(\sum_{i=1}^n \sigma_{1,i}^2\right)^{1/2}} 
        \ge \frac{q_{\alpha,+}-\sum_{i=1}^n\mu_{1,i}}{\left(\sum_{i=1}^n \sigma_{1,i}^2\right)^{1/2}}\Biggr).
        \end{align*}
        Now, by noting that the power of oracle test is given by
        $$\E\left[\phi_{\textnormal{ORC}}\mid \bZ\right]=\P_{H_1} \left(\sum_{i=1}^n \mathrm{LLR}^{(i)}(X_i, Y_i)\ge \tau_{\alpha}\right),$$ 
        another application of the Berry--Esseen Theorem yields the following lower bound
        \[
        \overline\Phi\left(\frac{q_{\alpha,+}-\sum_{i=1}^n\mu_{1,i}}{\left(\sum_{i=1}^n \sigma_{1,i}^2\right)^{1/2}}\right)-\eta_1,
        \]
        which by noting that $|\Phi(x)-\Phi(y)|\le \frac{1}{\sqrt{2\pi}}|x-y|$ for any $x,y\in \RR$, we obtain the following lower bound:
        \begin{multline*}
        \ge \overline\Phi\left(\frac{\sum_{i=1}^n(\mu_{0,i}-\mu_{1,i})+\left(\sum_{i=1}^n \sigma_{0,i}^2\right)^{1/2} \Phi^{-1}(1-\alpha)}{\left(\sum_{i=1}^n \sigma_{1,i}^2\right)^{1/2}}\right) \\-\frac{1}{\sqrt{2\pi}}\frac{\left(\sum_{i=1}^n \sigma_{0,i}^2\right)^{1/2}}{\left(\sum_{i=1}^n \sigma_{1,i}^2\right)^{1/2}}\left(\Phi^{-1}(1-\alpha+\eta_0)-\Phi^{-1}(1-\alpha-\eta_0)\right)-\eta_1.
    \end{multline*}
    This completes the proof for lower bound, after recalling that
    \[
      \mathrm{SNR_{ORC}} = \frac{\sum_{i=1}^n (\mu_{1, i} - \mu_{0, i})}{\left( \sum_{i=1}^n \sigma_{1, i}^2 \right)^{1/2}}
    \]
    by definition. The upper bound on power follows similarly.
\end{proof}
 
While the finite-sample characterization of power holds in general, to state a more interpretable asymptotic characterization of the same, we make the following assumptions on the data generating model.
 
\begin{assumption}[Aggregate Lyapunov condition]\label{assn:bdd_skewness}
  All third absolute centered moments below are finite, and the null and
  alternative log-likelihood-ratio arrays satisfy
  \begin{align*}
    \frac{\sum_{i=1}^n
      \E_{P_{X\mid Z_i}\times P_{Y\mid Z_i}}
      \!\left[\left|\mathrm{LLR}^{(i)}(X,Y)
      -\E_{P_{X\mid Z_i}\times P_{Y\mid Z_i}}
      [\mathrm{LLR}^{(i)}(X,Y)]\right|^3\right]}
         {\left(\sum_{i=1}^n \mathrm V^{[0]}_{\mathrm{KL},(i)}\right)^{3/2}}
      &=\mathrm{o}_P(1),\\
    \frac{\sum_{i=1}^n
      \E_{P_{X,Y\mid Z_i}}
      \!\left[\left|\mathrm{LLR}^{(i)}(X,Y)
      -\E_{P_{X,Y\mid Z_i}}
      [\mathrm{LLR}^{(i)}(X,Y)]\right|^3\right]}
         {\left(\sum_{i=1}^n \mathrm V_{\mathrm{KL},(i)}\right)^{3/2}}
      &=\mathrm{o}_P(1).
  \end{align*}
  Equivalently, \(\eta_0=\mathrm{o}_P(1)\) and
  \(\eta_1=\mathrm{o}_P(1)\) in~\eqref{eqn:eta_0 and eta_1}.
\end{assumption}

A familiar sufficient condition for Assumption~\ref{assn:bdd_skewness} is a
uniform skewness bound together with variance negligibility. For example,
under the null, suppose that for every $i$,
\[
  \E_{P_{X\mid Z_i}\times P_{Y\mid Z_i}}
  \!\left[\left|\mathrm{LLR}^{(i)}(X,Y)
  -\E_{P_{X\mid Z_i}\times P_{Y\mid Z_i}}
  [\mathrm{LLR}^{(i)}(X,Y)]\right|^3\right]
  \leq M\bigl(\mathrm V^{[0]}_{\mathrm{KL},(i)}\bigr)^{3/2}.
\]
Then
\begin{align*}
  &\frac{\sum_{i=1}^n
    \E_{P_{X\mid Z_i}\times P_{Y\mid Z_i}}
    \!\left[\left|\mathrm{LLR}^{(i)}(X,Y)
    -\E_{P_{X\mid Z_i}\times P_{Y\mid Z_i}}
    [\mathrm{LLR}^{(i)}(X,Y)]\right|^3\right]}
       {\left(\sum_{i=1}^n\mathrm V^{[0]}_{\mathrm{KL},(i)}\right)^{3/2}}\\
  &\qquad\leq M\left(
    \frac{\max_i\mathrm V^{[0]}_{\mathrm{KL},(i)}}
         {\sum_{i=1}^n\mathrm V^{[0]}_{\mathrm{KL},(i)}}
    \right)^{1/2}
  =\mathrm{o}_P(1).
\end{align*}
The identical calculation under $P_{X,Y\mid Z_i}$, with
$\mathrm V_{\mathrm{KL},(i)}$ in place of
$\mathrm V^{[0]}_{\mathrm{KL},(i)}$, proves the alternative part. Thus the
previously used uniform skewness condition, together with the first part of
Assumption~\ref{assn:cond_var_nontrivial}, implies the aggregate condition
above.
 
\begin{assumption}\label{assn:cond_var_nontrivial}
    Suppose the conditional variances of the log-likelihood ratios satisfy
    \[
      \frac{\max_{i} \mathrm V^{[0]}_{\mathrm{KL}, (i)}}{\sum_{i=1}^n \mathrm V^{[0]}_{\mathrm{KL}, (i)} } = \mathrm{o}_P(1),\qquad  
      \frac{\max_{i}\mathrm V_{\mathrm{KL}, (i)}}{\sum_{i=1}^n \mathrm V_{\mathrm{KL}, (i)}} = \mathrm{o}_P(1)
    \]
 and that the ratio of null and alternative variances are on average bounded, i.e.
    \[
        \frac{\sum_{i=1}^n \mathrm V_{\mathrm{KL}, (i)}^{[0]}}{\sum_{i=1}^n \mathrm V_{\mathrm{KL}, (i)}} = \mathrm{O}_P(1)
    \]
\end{assumption}
 
\begin{assumption}\label{assn:local_alternative}
  Either $\mathrm{SNR_{ORC}} \to \infty$ in probability or 
  \[
      \frac{\sum_{i=1}^n \mathrm V_{\mathrm{KL}, (i)}^{[0]}}{\sum_{i=1}^n \mathrm V_{\mathrm{KL}, (i)}}  = 1 + \mathrm{o}_P(1).
  \]
\end{assumption}
When the aforementioned assumptions on the log-likelihood ratio hold, we can give the following precise version of Theorem \ref{thm:power_of_orc}.
 
\begin{theorem}[full version of Theorem \ref{thm:power_of_orc}]\label{cor:orc_power}
  Under the setting of Section~\ref{sec:oracle_power_orc} and assumptions~\ref{assn:bdd_skewness} and \ref{assn:cond_var_nontrivial}, it holds that
  \begin{equation}\label{eq:orc_conc_1}
      \mathbb E[\phi_{\mathrm{ORC}} \mid \mathbf Z] = \Phi \left(\Phi^{-1}(\alpha) \cdot \frac{\left(\sum_{i=1}^n \mathrm V_{\mathrm{KL}, (i)}^{[0]}\right)^{1/2}}{\left(\sum_{i=1}^n \mathrm V_{\mathrm{KL}, (i)}\right)^{1/2}}+\mathrm{SNR_{ORC}} \right) + \mathrm{o}_P(1).
  \end{equation}
    Furthermore, when Assumption~\ref{assn:local_alternative} holds, 
    we have 
  \begin{equation}\label{eq:orc_conc_2}
    \mathbb E[\phi_{\mathrm{ORC}} \mid \mathbf Z] = \Phi \left(\Phi^{-1}(\alpha) + \mathrm{SNR_{ORC}} \right) + \mathrm{o}_P(1).
  \end{equation}
\end{theorem}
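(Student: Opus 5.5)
The plan is to derive Theorem~\ref{cor:orc_power} directly from the finite-sample bound of Theorem~\ref{thm:general_orc_power}, by showing that the error term $\eta$ there is $\mathrm{o}_P(1)$ under Assumptions~\ref{assn:bdd_skewness} and~\ref{assn:cond_var_nontrivial}. Write
\[
r_n := \frac{\bigl(\sum_{i=1}^n \mathrm V^{[0]}_{\mathrm{KL},(i)}\bigr)^{1/2}}{\bigl(\sum_{i=1}^n \mathrm V_{\mathrm{KL},(i)}\bigr)^{1/2}}.
\]
Theorem~\ref{thm:general_orc_power} needs $\eta_0 \le \tfrac12\min\{\alpha,1-\alpha\}$ and $\eta_1<\infty$. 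Assumption~\ref{assn:bdd_skewness} gives $\eta_0,\eta_1=\mathrm{o}_P(1)$, so these conditions hold on an event whose probability tends to one. On that event, Theorem~\ref{thm:general_orc_power} bounds the difference between $\E[\phi_{\mathrm{ORC}}\mid\bZ]$ and $\Phi(\Phi^{-1}(\alpha) r_n + \mathrm{SNR_{ORC}})$ by $\eta$. Off the event, the difference is at most $1$. Hence it suffices to show $\eta=\mathrm{o}_P(1)$.

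The term $\eta_1$ in $\eta$ is $\mathrm{o}_P(1)$ by assumption. For the first term, $\Phi^{-1}$ is continuous on a neighborhood of $\alpha\in(0,1)$, so $\Phi^{-1}(\alpha+\eta_0)-\Phi^{-1}(\alpha-\eta_0)\to 0$ in probability whenever $\eta_0\to0$ in probability. For a quantitative version, on $\{\eta_0\le\frac12\min(\alpha,1-\alpha)\}$ the mean value theorem bounds this difference by $2\eta_0\cdot\sup_{|u-\alpha|\le \min(\alpha,1-\alpha)/2} 1/\phi(\Phi^{-1}(u))$. The supremum is a constant $C_\alpha$, so the difference is at most $C_\alpha\eta_0$. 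The prefactor $r_n/\sqrt{2\pi}$ is $\mathrm{O}_P(1)$ by the last display of Assumption~\ref{assn:cond_var_nontrivial}. Since $\mathrm{O}_P(1)\cdot\mathrm{o}_P(1)=\mathrm{o}_P(1)$, this establishes~\eqref{eq:orc_conc_1}. The first part of Assumption~\ref{assn:cond_var_nontrivial} is used only implicitly, as the sufficient condition for the Lyapunov terms discussed above.

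For~\eqref{eq:orc_conc_2} under Assumption~\ref{assn:local_alternative}, we compare $\Phi(\Phi^{-1}(\alpha)r_n+\mathrm{SNR_{ORC}})$ with $\Phi(\Phi^{-1}(\alpha)+\mathrm{SNR_{ORC}})$ and split into the two cases of the assumption.

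\textbf{Case $r_n^2=1+\mathrm{o}_P(1)$.} Then $r_n\to1$ in probability. Since $\Phi$ is $1/\sqrt{2\pi}$-Lipschitz, the difference is at most $|\Phi^{-1}(\alpha)|\,|r_n-1|/\sqrt{2\pi}=\mathrm{o}_P(1)$.

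\textbf{Case $\mathrm{SNR_{ORC}}\to\infty$ in probability.} Both arguments of $\Phi$ diverge, because $\Phi^{-1}(\alpha)r_n=\mathrm{O}_P(1)$. Therefore both $\Phi$ values tend to $1$ in probability, and their difference is $\mathrm{o}_P(1)$. Formally, for any $M$, on the event $\{\mathrm{SNR_{ORC}}\ge 2M,\ |\Phi^{-1}(\alpha)| r_n\le M\}$ both values are at least $\Phi(M)$. That event has probability tending to one, and then we let $M\to\infty$.

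The main obstacle is bookkeeping rather than any deep step: the conditions of Theorem~\ref{thm:general_orc_power} hold only with probability tending to one. The $\mathrm{o}_P$ statements must therefore be handled by intersecting with this high-probability event and using that the quantities involved are bounded by $1$ off it. The rest follows from continuity of $\Phi$ and $\Phi^{-1}$.
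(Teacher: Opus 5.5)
Your proposal is correct and follows the paper's proof essentially step for step. Like the paper, you show that the error $\eta$ of Theorem~\ref{thm:general_orc_power} is $\mathrm{o}_P(1)$ using Assumption~\ref{assn:bdd_skewness} and the $\mathrm{O}_P(1)$ variance ratio, then split on the two branches of Assumption~\ref{assn:local_alternative}; you simply make the high-probability-event bookkeeping and the Lipschitz and mean-value estimates explicit, and you could go further and note that the first part of Assumption~\ref{assn:cond_var_nontrivial} is not used at all, since Assumption~\ref{assn:bdd_skewness} is already stated in aggregate form.
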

 
\begin{proof}
  Assumption~\ref{assn:bdd_skewness} gives
  $\eta_0=\mathrm{o}_P(1)$ and $\eta_1=\mathrm{o}_P(1)$ directly. The
  finite-sample requirement
  $\eta_0\leq\tfrac12\min\{\alpha,1-\alpha\}$ therefore holds with
  probability tending to one. The
  aggregate variance ratio in Assumption~\ref{assn:cond_var_nontrivial} is
  $\mathrm{O}_P(1)$, so continuity of $\Phi^{-1}$ shows that the full error
  $\eta$ in Theorem~\ref{thm:general_orc_power} is $\mathrm{o}_P(1)$.
  Equation~\eqref{eq:orc_conc_1} therefore follows from that theorem.

  Under Assumption~\ref{assn:local_alternative}, if
  $\mathrm{SNR_{ORC}}\to\infty$, both \eqref{eq:orc_conc_1} and
  \eqref{eq:orc_conc_2} converge to one. Otherwise the null-to-alternative
  variance ratio is $1+\mathrm{o}_P(1)$, and \eqref{eq:orc_conc_1} reduces to
  \eqref{eq:orc_conc_2}.
\end{proof}

Assumption~\ref{assn:bdd_skewness} now states the aggregate Lyapunov
conditions needed for the two Berry--Esseen approximations directly; the
uniform skewness condition displayed above is a convenient sufficient
condition. Assumption~\ref{assn:cond_var_nontrivial} additionally rules out a
single dominant log-likelihood contribution and compares the aggregate null
and alternative variances. The latter comparison is needed because the oracle
threshold is calibrated using the null variance but evaluated under the
alternative.
 
Broadly speaking, we expect the effect of the term $\frac{\sum_{i=1}^n \mathrm V_{\mathrm{KL}, (i)}^{[0]}}{\sum_{i=1}^n \mathrm V_{\mathrm{KL}, (i)}}$ in the first power expression of the above theorem to be essentially trivial. To understand this, let us consider two specific cases:
\begin{itemize}
    \item  For any fixed alternative, as $n \to \infty$, $\mathrm{SNR_{ORC}} \to \infty$ and power is asymptotically 1, washing out any effect of that term. 
    \item On the other hand, in a sequence of local alternatives which gets increasingly close to the null such that $\limsup_{n \to \infty} \mathrm{SNR_{ORC}} < \infty$, we expect the null variance and the alternative variance to be eventually indistinguishable, i.e. implying
    \[
      \lim_{n \to \infty} \frac{\sum_{i=1}^n \mathrm V_{\mathrm{KL}, (i)}^{[0]}}{\sum_{i=1}^n \mathrm V_{\mathrm{KL}, (i)}} = 1.
    \]
\end{itemize}
Both of these cases are covered in Assumption~\ref{assn:local_alternative}, in which case we get the more stringent second power guarantee.

\subsection{Proof of Theorem \ref{thm:power_of_nnpt}}\label{sec:power_of_nnpt_proof}
We will follow the same structure as in Appendix~\ref{sec:power_of_orc_proof}.
We start by defining some notation. Throughout this section, suppose that for each $k\in [K]$,
$\sigma_k\sim \textnormal{Unif}(\mathcal S_{m_k})$, and accordingly we write $\sigma_k(T_k) = T_k((\mathbf X_k)_{\sigma}, \mathbf Y_k, \mathbf Z)$. 

Similar to the analysis in Appendix~\ref{sec:power_of_orc_proof}, we define 
\begin{align*}
    \eta_0 = \frac{\sum_{k=1}^K\E[|\sigma_k(T_k)|^3 \mid \mathbf Z]}{\delta\bigl(\sum_{k=1}^K\Var(\sigma_k(T_k) \mid \mathbf Z)-\sqrt{\delta^{-1}\sum_{k=1}^K\E[\sigma_k(T_k)^4\mid \mathbf Z]}\,\bigr)^{3/2}} \text{ and }\\
     \eta_1 = \frac{\sum_{k=1}^K \E[|T_k-\E[T_k \mid \mathbf Z]|^3 \mid \mathbf Z]}{\left(\sum_{k=1}^K \mathrm{Var}(T_k \mid \mathbf Z)\right)^{3/2}},
\end{align*}
which appear as Berry--Esseen approximation terms in our analysis.
Now, we state a general finite-sample bound on the power of $\phi_{\rm LPT}$ conditional on $\bZ$.
\begin{theorem}\label{thm:general_lpt_power}
Fix $\alpha\in(0,1/2)$ and $\delta>0$. Suppose that for all $k \in [K]$, $\EEst{\sigma_k(T_k)^4}{\bZ} < \infty$, that the denominator defining $\eta_0$ is positive, and that $0\leq\eta_0\leq\tfrac12\min\{\alpha,1-\alpha\}$ and $\eta_1<\infty$. Under the setting of Section~\ref{sec:oracle_power_orc} and the notation defined above,
    \[
        \left|\E\left[\phi_{\textnormal{LPT}}\mid \bZ\right] - \Phi\left(\Phi^{-1}(\alpha) \cdot \frac{\bigl(\sum_{k=1}^K \mathrm{Var}(\sigma_k(T_k) \mid \mathbf Z)\bigr)^{1/2}}{\bigl(\sum_{k=1}^K\mathrm{Var}(T_k \mid \mathbf Z)\bigr)^{1/2}} + \mathrm{SNR_{LPT}}\right)\right|\le \eta_\star,
    \]
    where 
    \begin{multline*}
      \eta_\star:=\frac{\bigl(\sum_{k=1}^K \Var(\sigma_k(T_k) \mid \mathbf Z) +  \sqrt{\delta^{-1}\sum_{k=1}^K\E[\sigma_k(T_k)^4 \mid \mathbf Z]}\bigr)^{1/2}}{\sqrt{2\pi}\,\bigl(\sum_{k=1}^K\mathrm{Var}(T_k \mid \mathbf Z)\bigr)^{1/2}}\left(\Phi^{-1}(\alpha+\eta_0)-\Phi^{-1}(\alpha-\eta_0)\right)\\
       +\frac{\Phi^{-1}(1-\alpha)}{\sqrt{2\pi}\delta^{1/4}} \frac{\bigl(\sum_{k=1}^K\E[\sigma_k(T_k)^4 \mid \mathbf Z]\bigr)^{1/4}}{\bigl(\sum_{k=1}^K\mathrm{Var}(T_k \mid \mathbf Z)\bigr)^{1/2}}+\eta_1 + 2\delta.
    \end{multline*}
\end{theorem}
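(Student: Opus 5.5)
The proof will mirror the oracle argument in Theorem~\ref{thm:general_orc_power}, with one extra ingredient. The LPT threshold is not deterministic; it is the $(1-\alpha)$-quantile of the permutation distribution of $\sum_k \sigma_k(T_k)$ conditional on the data. Write $V_\star = \sum_k \Var(\sigma_k(T_k)\mid \bX,\bY,\bZ)$ for the realized permutation variance. Because each $T_k$ is centered under permutations, the permutation distribution has mean zero and is a sum of independent (across bins) terms. This holds because a uniform bin-preserving permutation is a product of independent uniform permutations $\sigma_k$.

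\textbf{Step 1: locate the permutation threshold.} Apply Berry--Esseen conditionally on $(\bX,\bY,\bZ)$. On the event where $V_\star$ is within $\sqrt{\delta^{-1}\sum_k \E[\sigma_k(T_k)^4\mid\bZ]}$ of its mean $\sum_k\Var(\sigma_k(T_k)\mid\bZ)$, and where the conditional third moments are controlled, the permutation quantile $\tau_\alpha$ is sandwiched between
\[
\sqrt{V_\star}\,\Phi^{-1}(1-\alpha\mp\eta_0).
\]
This mirrors the cutoff sandwich in the oracle proof.

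The denominator in $\eta_0$, namely variance minus deviation, is exactly the lower bound on $V_\star$ on this good event. The numerator $\sum\E|\sigma_k(T_k)|^3$ divided by $\delta$ comes from Markov's inequality applied to the random third-moment sum.

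\textbf{Step 2: control the bad event.} Bound the probability that $V_\star$ deviates by Chebyshev's inequality. Use $\Var(V_\star\mid \bZ)\le \sum_k \E[\sigma_k(T_k)^4\mid\bZ]$, which holds since the bin contributions are independent given $\bZ$. This costs $\delta$. The Markov step for the third moments costs another $\delta$, giving the $2\delta$ term.

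\textbf{Step 3: compute the power.} Power equals $\P(\sum_k T_k \ge \tau_\alpha\mid\bZ)$. Replace $\tau_\alpha$ by the deterministic upper and lower envelopes
\[
\Bigl(\textstyle\sum_k\Var(\sigma_k(T_k)\mid\bZ)\pm\sqrt{\delta^{-1}\sum_k\E[\sigma_k(T_k)^4\mid\bZ]}\Bigr)^{1/2}\Phi^{-1}(1-\alpha\pm\eta_0).
\]
Then apply Berry--Esseen to the independent $T_k$ under the alternative, contributing $\eta_1$, and standardize by $\bigl(\sum_k\Var(T_k\mid\bZ)\bigr)^{1/2}$. The Lipschitz bound $|\Phi(x)-\Phi(y)|\le|x-y|/\sqrt{2\pi}$ converts the quantile perturbation $\Phi^{-1}(\alpha+\eta_0)-\Phi^{-1}(\alpha-\eta_0)$ into the first error term.

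The variance perturbation is handled using $\sqrt{a+b}-\sqrt a\le \sqrt b$, which gives $(\delta^{-1}\sum\E\sigma^4)^{1/4}$. Multiplied by $|\Phi^{-1}(1-\alpha)|$, this yields the second term.

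\textbf{Main obstacle.} The main difficulty is Step 1. We must make the conditional Berry--Esseen approximation of the random permutation distribution uniform enough while both its variance and its third moments are random. The quantile sandwich also needs $\eta_0\le\tfrac12\min\{\alpha,1-\alpha\}$ so that the inverse-$\Phi$ arguments stay in $(0,1)$.
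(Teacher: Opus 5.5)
Your proposal is correct and follows essentially the same route as the paper: a conditional Berry--Esseen approximation of the permutation p-value (your quantile sandwich is exactly the paper's event inclusion on the good event), a good event built from Markov's inequality on the summed conditional third moments and Chebyshev's inequality on the realized permutation variance via $\Var(V_\star\mid\bZ)\le\sum_k\E[\sigma_k(T_k)^4\mid\bZ]$, costing $2\delta$, and finally Berry--Esseen for $\sum_k T_k$ given $\bZ$ combined with the Lipschitz bound on $\Phi$ and $\sqrt{a+b}\le\sqrt a+\sqrt b$. One piece of bookkeeping needs care in the upper bound, and the paper glosses over it too. When $\alpha>1/3$, the constant $\Phi^{-1}(1-\alpha-\eta_0)$ may be negative, and then you should pair it with the upper rather than the lower variance envelope; both envelopes lie within $\delta^{-1/4}\bigl(\sum_k\E[\sigma_k(T_k)^4\mid\bZ]\bigr)^{1/4}$ of $\bigl(\sum_k\Var(\sigma_k(T_k)\mid\bZ)\bigr)^{1/2}$, so the stated $\eta_\star$ is unaffected.
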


\begin{proof}
The proof is split into three key steps. In Step~1, we obtain a data-dependent approximation to $p$, where the dominant terms involve the bin-specific $T_k$ and its conditional moments with respect to the resampling distribution. Next, in Step~2, we derive concentration of these conditional moments and in Step~3, we approximate the distribution of $T_k$ given $\bZ$ to complete the proof.

\paragraph{Step~1: a data-dependent approximation to p-value of $\Phi_{\rm LPT}$.}
Given the notation above, observe that the p-value of $\phi_{\rm LPT}$ can be expressed as
\[
p=\EEst{\sum_{k=1}^K\sigma_k(T_k)\ge \sum_{k=1}^K T_k}{\bX,\bY,\bZ},
\]
where the expectation is taken over the randomness of $(\sigma_1,\ldots,\sigma_{K})$ with $\sigma_k\sim \textnormal{Unif}(\mathcal{S}_{m_k})$ and $(\sigma_1,\ldots,\sigma_{K})$ are mutually independent. Note that the summands are independent, and by definition of $T_k$ in~Section \ref{sec:power_NNPT}, we have
\begin{equation}\label{eq:symmetry_sigma_kT_k}
    \EEst{\sigma_k(T_k)}{\bX,\bY,\bZ} = \frac{1}{m_k!}\sum_{\sigma\in \mathcal S_{m_k}} \sigma(T_k) = 0,
\end{equation}
and the conditional second and third absolute moments of $\sigma_k(T_k)$ given $\mathbf X, \mathbf Y, \mathbf Z$ are:
\begin{align*}
    \EEst{(\sigma_k(T_k))^2}{\bX,\bY,\bZ} = \frac{1}{m_k!}\sum_{\sigma\in \mathcal S_{m_k}} \sigma(T_k)^2,~\text{and}\\
     \EEst{|\sigma_k(T_k)|^3}{\bX,\bY,\bZ} = \frac{1}{m_k!}\sum_{\sigma\in \mathcal S_{m_k}} |\sigma(T_k)|^3,
\end{align*}
 where the expectation is taken over the randomness of $\sigma_k\sim \textnormal{Unif}(S_{m_k})$. Further, note that $\mathbb E[(\sigma_k(T_k))^2 \mid \mathbf Z] = \Var(\sigma_k(T_k) \mid \mathbf Z)$, and we will predominantly use the latter.
  The Berry--Esseen theorem implies that 
  \[
    \left|p-\overline\Phi\!\left(\frac{\sum_{k=1}^K T_k}{\bigl(\sum_{k=1}^K \EEst{(\sigma_k(T_k))^2}{\bX,\bY,\bZ}\bigr)^{1/2}}\right)\right|
    \le \frac{\sum_{k=1}^K  \EEst{|\sigma_k(T_k)|^3}{\bX,\bY,\bZ}}{\bigl(\sum_{k=1}^K \EEst{(\sigma_k(T_k))^2}{\bX,\bY,\bZ}\bigr)^{3/2}}.
  \]
  \paragraph{Step~2: concentration of the conditional moments.}
  Now, we define the event
  \begin{multline*}
    \mathcal{A}:=\Biggl\{\sum_{k=1}^K  \EEst{|\sigma_k(T_k)|^3}{\bX,\bY,\bZ} \le \delta^{-1}\sum_{k=1}^K  \EEst{|\sigma_k(T_k)|^3}{\bZ},\\
    \left|\sum_{k=1}^K  \EEst{(\sigma_k(T_k))^2}{\bX,\bY,\bZ} -\sum_{k=1}^K\Var(\sigma_k(T_k)\mid \bZ)\right|\le \sqrt{\delta^{-1}\sum_{k=1}^K  \EEst{|\sigma_k(T_k)|^4}{\bZ}}\Biggr\}.
  \end{multline*}
  By Markov's inequality, 
  \[
  \PPst{\sum_{k=1}^K  \EEst{|\sigma_k(T_k)|^3}{\bX,\bY,\bZ} \ge \delta^{-1}\sum_{k=1}^K  \EEst{|\sigma_k(T_k)|^3}{\bZ}}{\bZ} \le \delta.
  \]
  Further, by~\eqref{eq:symmetry_sigma_kT_k}, noting that $\EEst{\EEst{\sigma_k(T_k)}{\bX,\bY,\bZ}}{\bZ}=0$, and by Chebyshev's inequality, we have
  \begin{multline*}
      \P\left(\left|\sum_{k=1}^K  \EEst{(\sigma_k(T_k))^2}{\bX,\bY,\bZ} -\sum_{k=1}^K\Var(\sigma_k(T_k)\mid \bZ)\right| \right. \\
      \left. \ge \sqrt{\delta^{-1}\sum_{k=1}^K  \Var\bigl(\EEst{(\sigma_k(T_k))^2}{\bX,\bY,\bZ}\bigm|\bZ\bigr)}\biggm|\bZ\right) \le \delta.
  \end{multline*}

  Finally, by Jensen's inequality and the tower law,
  \begin{align*}
      \sum_{k=1}^K\Var\bigl(\EEst{(\sigma_k(T_k))^2}{\bX,\bY,\bZ}\bigm|\bZ\bigr)&\le \sum_{k=1}^K\EEst{\bigl(\EEst{(\sigma_k(T_k))^2}{\bX,\bY,\bZ}\bigm|\bZ\bigr)^2}{\bZ}\\
      &\le \sum_k\EEst{\EEst{(\sigma_k(T_k))^4}{\bX,\bY,\bZ}}{\bZ}\\
      &= \sum_k\EEst{(\sigma_k(T_k))^4}{\bZ}.
  \end{align*}
Hence, it follows that
\begin{equation}\label{eq:concentr_sec_cond_moment_LPT}
    \PPst{\left|\sum_{k=1}^K  \EEst{(\sigma_k(T_k))^2}{\bX,\bY,\bZ} -\sum_{k=1}^K\Var(\sigma_k(T_k)\mid \bZ)\right|\ge \sqrt{\delta^{-1}\sum_{k=1}^K  \EEst{(\sigma_k(T_k))^4}{\bZ}}}{\bZ} \le \delta,
\end{equation}

and by a union bound,
  \[
    \P(\mathcal{A}) \geq 1-2\delta.
  \]
  \paragraph{Step~3: approximating conditional distribution of $\sum_{k=1}^K T_k$, and completing the proof.}
  Note that, on $\mathcal{A}$,
  \[
    \left|p-\overline\Phi\!\left(\frac{\sum_{k=1}^K T_k}{\bigl(\sum_{k=1}^K \EEst{(\sigma_k(T_k))^2}{\bX,\bY,\bZ}\bigr)^{1/2}}\right)\right|\le \eta_0,
  \]
  and consequently,
  \begin{equation}\label{eq:nested_event}
    \left\{\frac{\sum_{k=1}^K T_k}{\bigl(\sum_{k=1}^K \EEst{(\sigma_k(T_k))^2}{\bX,\bY,\bZ}\bigr)^{1/2}}\ge \Phi^{-1}(1-\alpha+\eta_0)\right\} \cap \mathcal{A} \subset \{p\le \alpha\} \cap \mathcal{A}.
  \end{equation}
  
  Therefore, it follows that by definition of $\mathcal{A}$,
  \begin{align*}
      \P(\{p\le \alpha\}\cap \mathcal{A}\mid \bZ)&\ge \P\!\left(\biggl\{\frac{\sum_{k=1}^K T_k}{\bigl(\sum_{k=1}^K \EEst{(\sigma_k(T_k))^2}{\bX,\bY,\bZ}\bigr)^{1/2}}\ge \Phi^{-1}(1-\alpha+\eta_0)\biggr\}\cap \mathcal{A}\right)\\
      &\ge \P\!\left(\biggl\{\frac{\sum_{k=1}^K T_k}{U_\delta^{1/2}}\ge \Phi^{-1}(1-\alpha+\eta_0)\biggr\}\cap \mathcal{A}\right)\\
      &\ge \P\!\left(\frac{\sum_{k=1}^K T_k}{U_\delta^{1/2}}\ge \Phi^{-1}(1-\alpha+\eta_0)\right)-2\delta.
  \end{align*}
  where we write
    \[
    U_\delta:= \sum_{k=1}^K\Var(\sigma_k(T_k)\mid \bZ) +  \sqrt{\delta^{-1}\sum_{k=1}^K  \EEst{(\sigma_k(T_k))^4}{\bZ}}
    \]
  for compactness. Next, observe that conditional on $\bZ$, $\sum_{k=1}^K T_k$ has independent summands, with the $k$-th summand $T_k$ having mean $\EEst{T_k}{\bZ}$, variance $\Var(T_k\mid \bZ)$ and third moment $\EEst{|T_k|^3}{\bZ}$.
  Therefore, another application of Berry--Esseen Theorem, and~\eqref{eq:concentr_sec_cond_moment_LPT} gives that
  \begin{multline*}
      \mathbb P(p \leq \alpha \mid \bZ) \geq \P(\{p\le \alpha\}\cap \mathcal{A}\mid \bZ)\\ \ge
  \Phi\!\left(\frac{\sum_{k=1}^K\EEst{T_k}{\bZ}}{\bigl(\sum_{k=1}^K\Var(T_k\mid \bZ)\bigr)^{1/2}}-\Phi^{-1}(1-\alpha+\eta_0)\frac{U_\delta^{1/2}}{\bigl(\sum_{k=1}^K\Var(T_k\mid \bZ)\bigr)^{1/2}}\right)- 2\delta - \eta_1.
  \end{multline*}
  Now, note that by triangle inequality and the fact that $\Phi^{-1}(1-t)=-\Phi^{-1}(t)$, we have
  \begin{multline*}
      \left|\Phi^{-1}(1-\alpha+\eta_0)\frac{U_\delta^{1/2}}{\bigl(\sum_{k=1}^K\Var(T_k\mid \bZ)\bigr)^{1/2}}-\Phi^{-1}(1-\alpha)\frac{ \bigl(\sum_{k=1}^K\Var(\sigma_k(T_k)\mid \bZ)\bigr)^{1/2}}{\bigl(\sum_{k=1}^K\Var(T_k\mid \bZ)\bigr)^{1/2}}\right|\\
      \le (\Phi^{-1}(\alpha+\eta_0) - \Phi^{-1}(\alpha-\eta_0)) \frac{U_\delta^{1/2}}{\bigl(\sum_{k=1}^K\Var(T_k\mid \bZ)\bigr)^{1/2}}\\
      + \Phi^{-1}(1-\alpha)\frac{U_\delta^{1/2}-\bigl(\sum_{k=1}^K\Var(\sigma_k(T_k)\mid \bZ)\bigr)^{1/2}}{\bigl(\sum_{k=1}^K\Var(T_k\mid \bZ)\bigr)^{1/2}}.
      \end{multline*}
    Moreover, since
  \[
  U_\delta^{1/2}\le \left(\sum_{k=1}^K\Var(\sigma_k(T_k)\mid \bZ)\right)^{1/2} +  \delta^{-1/4}\left(\sum_{k=1}^K  \EEst{(\sigma_k(T_k))^4}{\bZ}\right)^{1/4},
  \]
  and that $\Phi(\cdot)$ is $1/\sqrt{2\pi}$-Lipschitz,
                                we obtain
  \begin{align*}
    \P(p\le \alpha\mid \bZ)\ge \Phi\!\left(\frac{\sum_{k=1}^K\EEst{T_k}{\bZ}}{\bigl(\sum_{k=1}^K\mathrm{Var}(T_k \mid \mathbf Z)\bigr)^{1/2}}-\Phi^{-1}(1-\alpha)\frac{\bigl(\sum_{k=1}^K \Var(\sigma_k(T_k) \mid \mathbf Z)\bigr)^{1/2}}{\bigl(\sum_{k=1}^K\mathrm{Var}(T_k \mid \mathbf Z)\bigr)^{1/2}}\right)-\eta_\star.
  \end{align*}
    
  The proof of the upper bound follows by an analogous argument and noting that
  \[
    \{p \leq \alpha \} \cap \mathcal A \subset \left\{\frac{\sum_{k=1}^K T_k}{\bigl(\sum_{k=1}^K \E[\sigma_k(T_k)^2 \mid \mathbf X, \mathbf Y, \mathbf Z]\bigr)^{1/2}}\ge \Phi^{-1}(1-\alpha-\eta_0)\right\} \cap \mathcal{A}
  \]
  holds instead of \eqref{eq:nested_event}.
\end{proof}

While the finite-sample characterization of power holds in general, some mild model assumptions give us an interpretable asymptotic characterization of the same, that is a precise version of Theorem \ref{thm:power_of_nnpt}. We first state our assumptions explicitly.

\begin{assumption}[Aggregate Lyapunov condition]\label{assn:T_bdd_kurtosis}
  All fourth moments below are finite, and the observed and permutation
  triangular arrays satisfy
  \begin{align*}
    \frac{\sum_{k=1}^K
      \mathbb E[(T_k-\mathbb E[T_k\mid\mathbf Z])^4\mid\mathbf Z]}
         {\bigl(\sum_{k=1}^K\mathrm{Var}(T_k\mid\mathbf Z)\bigr)^2}
      &=\mathrm{o}_P(1),\\
    \frac{\sum_{k=1}^K\mathbb E[\sigma_k(T_k)^4\mid\mathbf Z]}
         {\bigl(\sum_{k=1}^K\mathrm{Var}(\sigma_k(T_k)\mid\mathbf Z)\bigr)^2}
      &=\mathrm{o}_P(1).
  \end{align*}
\end{assumption}

A corresponding uniform kurtosis bound implies
Assumption~\ref{assn:T_bdd_kurtosis} whenever no single variance dominates.
For either the observed or permutation array, if the centered summands $W_k$
have variances $v_k$ and satisfy
\[
  \mathbb E[W_k^4\mid\mathbf Z]\leq Mv_k^2
  \quad\text{and}\quad
  \frac{\max_k v_k}{\sum_k v_k}=\mathrm{o}_P(1),
\]
then
\[
  \frac{\sum_k\mathbb E[W_k^4\mid\mathbf Z]}
       {(\sum_kv_k)^2}
  \leq M\frac{\max_kv_k}{\sum_kv_k}
  =\mathrm{o}_P(1).
\]

\begin{assumption}[Variance comparability]\label{assn:T_cond_var_nontrivial}
  The aggregate permutation variance is not asymptotically larger than the
  aggregate observed variance:
  \[
    \frac{\sum_{k=1}^K \mathrm{Var}(\sigma_k(T_k) \mid \mathbf Z)}
         {\sum_{k=1}^K \mathrm{Var}(T_k \mid \mathbf Z)}
    = \mathrm{O}_P(1).
  \]
\end{assumption}

\begin{assumption}\label{assn:T_local_alternative}
  Either $\mathrm{SNR_{LPT}} \to \infty$ in probability or 
  \[
       \frac{\sum_{k=1}^K \mathrm{Var}(\sigma_k(T_k) \mid \mathbf Z)}{\sum_{k=1}^K \mathrm{Var}(T_k \mid \mathbf Z)} = 1 + \mathrm{o}_P(1).
  \]
\end{assumption}

\begin{theorem}[full version of Theorem \ref{thm:power_of_nnpt}] \label{cor:nnpt_power}
  Fix $\alpha\in(0,1/2)$. Under the setting of Section~\ref{sec:power_NNPT} and assumptions~\ref{assn:T_bdd_kurtosis} and \ref{assn:T_cond_var_nontrivial}, it holds that
  \begin{equation*}
      \mathbb E[\phi_{\mathrm{LPT}} \mid \mathbf Z] = \Phi \left(  \Phi^{-1}(\alpha)\cdot \frac{\left(\sum_{k=1}^K \mathrm{Var}(\sigma_k(T_k) \mid \mathbf Z) \right)^{1/2}}{\left(\sum_{k=1}^K \mathrm{Var}(T_k \mid \mathbf Z)\right)^{1/2}} + \mathrm{SNR_{LPT}}\right) + \mathrm{o}_P(1).
  \end{equation*}
    Furthermore, when Assumption \ref{assn:T_local_alternative} holds, we have
    \begin{equation*}\label{eq:lpt_conc_2}
      \mathbb E[\phi_{\mathrm{LPT}} \mid \mathbf Z] = \Phi(\Phi^{-1}(\alpha) + \mathrm{SNR_{LPT}}) + \mathrm{o}_P(1).
    \end{equation*}
\end{theorem}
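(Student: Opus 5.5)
The plan is to derive Theorem~\ref{cor:nnpt_power} from the finite-sample bound of Theorem~\ref{thm:general_lpt_power}, in the same way Theorem~\ref{cor:orc_power} follows from Theorem~\ref{thm:general_orc_power}. The one new feature is the free parameter $\delta>0$. I would let $\delta=\delta_n\to0$ slowly, chosen so that every error term in $\eta_\star$ vanishes in probability. Write $V_\sigma=\sum_k\Var(\sigma_k(T_k)\mid\bZ)$, $V_T=\sum_k\Var(T_k\mid\bZ)$, $M_\sigma=\sum_k\E[\sigma_k(T_k)^4\mid\bZ]$, and $M_T=\sum_k\E[(T_k-\E[T_k\mid\bZ])^4\mid\bZ]$. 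Assumption~\ref{assn:T_bdd_kurtosis} then says $M_\sigma/V_\sigma^2=\mathrm{o}_P(1)$ and $M_T/V_T^2=\mathrm{o}_P(1)$.

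\textbf{Step 1: the Berry--Esseen terms.} For $\eta_1$, I would use Cauchy--Schwarz on the third moments:
\[
\textstyle\sum_k\E|W_k|^3\le(\sum_k\E W_k^4)^{1/2}(\sum_k\E W_k^2)^{1/2}.
\]
This gives $\eta_1\le (M_T/V_T^2)^{1/2}=\mathrm{o}_P(1)$. The same inequality bounds the numerator of $\eta_0$ by $M_\sigma^{1/2}V_\sigma^{1/2}$. The denominator of $\eta_0$ equals $\delta\,V_\sigma^{3/2}\bigl(1-(\delta^{-1}M_\sigma/V_\sigma^2)^{1/2}\bigr)^{3/2}$. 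Choose $\delta_n\to0$ so slowly that $\delta_n^{-1}M_\sigma/V_\sigma^2=\mathrm{o}_P(1)$ and $\delta_n^{-1}(M_\sigma/V_\sigma^2)^{1/2}=\mathrm{o}_P(1)$. This is possible because a sequence that is $\mathrm{o}_P(1)$ stays $\mathrm{o}_P(1)$ after dividing by a suitably slow deterministic $\delta_n$. With this choice the denominator is positive with probability tending to one, and
\[
\eta_0\lesssim\delta_n^{-1}(M_\sigma/V_\sigma^2)^{1/2}=\mathrm{o}_P(1).
\]
In particular, the requirement $\eta_0\le\frac12\min\{\alpha,1-\alpha\}$ holds with probability tending to one.

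\textbf{Step 2: the remaining terms of $\eta_\star$.} The first term of $\eta_\star$ is
\[
\bigl(V_\sigma/V_T\bigr)^{1/2}\bigl(1+(\delta^{-1}M_\sigma/V_\sigma^2)^{1/2}\bigr)^{1/2}\cdot\bigl(\Phi^{-1}(\alpha+\eta_0)-\Phi^{-1}(\alpha-\eta_0)\bigr).
\]
The ratio $V_\sigma/V_T$ is $\mathrm{O}_P(1)$ by Assumption~\ref{assn:T_cond_var_nontrivial}, the middle factor tends to $1$, and the last factor is $\mathrm{o}_P(1)$ by continuity of $\Phi^{-1}$ at $\alpha$. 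The second term is $\delta^{-1/4}(M_\sigma/V_\sigma^2)^{1/4}(V_\sigma/V_T)^{1/2}$, which is $\mathrm{o}_P(1)$ by the choice of $\delta_n$ and the same assumption. Finally $2\delta_n\to0$. Together these give $\eta_\star=\mathrm{o}_P(1)$. On the event where the hypotheses of Theorem~\ref{thm:general_lpt_power} fail, which has vanishing probability, the difference is bounded by $1$, so the first display of the theorem follows.

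\textbf{Step 3: the simplified form.} Under Assumption~\ref{assn:T_local_alternative} there are two cases.
\begin{itemize}
\item If the variance ratio is $1+\mathrm{o}_P(1)$, the arguments of $\Phi$ in the two expressions differ by $\Phi^{-1}(\alpha)\,\mathrm{o}_P(1)$. Lipschitz continuity of $\Phi$ then gives the second display.
\item If $\mathrm{SNR_{LPT}}\to\infty$ in probability, the ratio is $\mathrm{O}_P(1)$ and $\Phi^{-1}(\alpha)<0$ is fixed, so both arguments diverge. Both expressions then tend to $1$.
\end{itemize}

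I expect the main obstacle to be the delicate choice of $\delta_n$ in Step 1. It must balance the $\delta^{-1}$ blow-up in $\eta_0$ and the $\delta^{-1/4}$ blow-up in the second term against the additive $2\delta$, using only $\mathrm{o}_P$ information. I would handle this with the standard diagonal argument: for $\xi_n=\mathrm{o}_P(1)$, there is a deterministic $\delta_n\to0$ with $\xi_n/\delta_n^2=\mathrm{o}_P(1)$.
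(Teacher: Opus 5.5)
Your proposal is correct and follows essentially the same route as the paper. Both arguments choose a deterministic $\delta_n\downarrow0$ by a diagonal argument so that $(M_\sigma/V_\sigma^2)^{1/2}=\mathrm{o}_P(\delta_n)$, bound the third-moment sums in $\eta_0$ and $\eta_1$ by Cauchy--Schwarz, verify each term of $\eta_\star$ using Assumption~\ref{assn:T_cond_var_nontrivial}, and split Assumption~\ref{assn:T_local_alternative} into its two branches; the only difference is that you spell out the bounded-difference step on the vanishing-probability event where the hypotheses of Theorem~\ref{thm:general_lpt_power} fail, which the paper leaves implicit.
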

\begin{proof}
  We verify the hypotheses and error terms in
  Theorem~\ref{thm:general_lpt_power} directly. Write
  \begin{align*}
    V^\pi&:=\sum_{k=1}^K \Var(\sigma_k(T_k)\mid\mathbf Z),
    &M_4^\pi&:=\sum_{k=1}^K\E[\sigma_k(T_k)^4\mid\mathbf Z],\\
    V&:=\sum_{k=1}^K\Var(T_k\mid\mathbf Z),
    &M_4&:=\sum_{k=1}^K
      \E[(T_k-\E[T_k\mid\mathbf Z])^4\mid\mathbf Z],
  \end{align*}
  and set
  \[
    r_n:=\frac{M_4}{V^2},\qquad
    r_n^\pi:=\frac{M_4^\pi}{(V^\pi)^2}.
  \]
  Assumption~\ref{assn:T_bdd_kurtosis} is exactly the assertion that
  $r_n=\mathrm{o}_P(1)$ and $r_n^\pi=\mathrm{o}_P(1)$. Choose a
  deterministic sequence $\delta_n\downarrow0$ sufficiently slowly that
  \begin{equation}\label{eq:delta_cond}
    \sqrt{r_n^\pi}=\mathrm{o}_P(\delta_n).
  \end{equation}
  Such a diagonal choice exists because $r_n^\pi=\mathrm{o}_P(1)$.

  We now check the finite-sample theorem's conditions. First,
  \[
    \frac{\sqrt{\delta_n^{-1}M_4^\pi}}{V^\pi}
    =\sqrt{\frac{r_n^\pi}{\delta_n}}=\mathrm{o}_P(1),
  \]
  so the denominator defining $\eta_0$ is positive with probability tending
  to one. Cauchy--Schwarz, first over each summand and then over $k$, gives
  \[
    \sum_{k=1}^K\E[|\sigma_k(T_k)|^3\mid\mathbf Z]
    \leq (V^\pi M_4^\pi)^{1/2}.
  \]
  Consequently,
  \[
    \eta_0
    \leq\frac{\sqrt{r_n^\pi}}
                   {\delta_n(1-\sqrt{r_n^\pi/\delta_n})^{3/2}}
    =\mathrm{o}_P(1),
  \]
  and hence the required bound
  $0\leq\eta_0\leq\tfrac12\min\{\alpha,1-\alpha\}$ also holds with
  probability tending to one. The same Cauchy--Schwarz argument for the
  observed array gives
  \[
    \eta_1
    \leq\frac{(VM_4)^{1/2}}{V^{3/2}}
    =\sqrt{r_n}=\mathrm{o}_P(1).
  \]

  It remains to check the terms in $\eta_\star$. Assumption
  \ref{assn:T_cond_var_nontrivial}, together with
  $\sqrt{\delta_n^{-1}M_4^\pi}=\mathrm{o}_P(V^\pi)$, shows that the
  prefactor multiplying the $\Phi^{-1}$ difference is $\mathrm{O}_P(1)$.
  That difference is $\mathrm{o}_P(1)$ because $\eta_0=\mathrm{o}_P(1)$.
  The remaining fourth-moment term satisfies
  \[
    \frac{(M_4^\pi)^{1/4}}{\delta_n^{1/4}V^{1/2}}
    =\left(\frac{V^\pi}{V}\right)^{1/2}
      \left(\frac{r_n^\pi}{\delta_n}\right)^{1/4}
    =\mathrm{o}_P(1).
  \]
  Together with $\eta_1=\mathrm{o}_P(1)$ and $\delta_n\to0$, this proves
  $\eta_\star=\mathrm{o}_P(1)$ and hence the first assertion.

  Under Assumption~\ref{assn:T_local_alternative}, either $V^\pi/V=1+
  \mathrm{o}_P(1)$, in which case the first assertion immediately reduces to
  the second, or $\mathrm{SNR}_{\mathrm{LPT}}\to\infty$, in which case both
  displayed power expressions converge to one. This proves the second
  assertion.
\end{proof}

Assumptions~\ref{assn:T_bdd_kurtosis}--\ref{assn:T_local_alternative}
serve the same role as assumptions~\ref{assn:bdd_skewness}--\ref{assn:local_alternative}
in the oracle analysis. The distinction is that the LPT null distribution is
generated by local permutations rather than supplied by oracle knowledge.

Assumption~\ref{assn:T_bdd_kurtosis} is an aggregate Lyapunov condition for
the observed and permutation arrays. It is weaker than requiring a uniform
per-bin kurtosis bound together with a no-dominant-bin condition: those more
familiar conditions imply the two displayed aggregate ratios, but are not
needed by the proof. Assumption~\ref{assn:T_cond_var_nontrivial} only compares
the two aggregate variances, ensuring that the simulated null is not
asymptotically more dispersed than the observed statistic.

Assumption~\ref{assn:T_local_alternative} identifies the two regimes in which
$\mathrm{SNR_{LPT}}$ alone governs power. Under fixed alternatives one expects
$\mathrm{SNR_{LPT}}\to\infty$. Under local alternatives, mild regularity
instead gives
\[
  \frac{\sum_{k=1}^K \Var(\sigma_k(T_k)\mid \mathbf Z)}
       {\sum_{k=1}^K \Var(T_k\mid \mathbf Z)}
  =1+\mathrm{o}_P(1).
\]
The model-specific calculations in Section~\ref{sec:linear_model} verify
these alternatives explicitly.

Finally, the proof shows exactly how the approximation error depends on the
two aggregate Lyapunov ratios and the aggregate variance ratio. Thus the first
power guarantee can also be made uniform over classes on which these three
quantities obey uniform rates.

\subsection{Proof of Theorem \ref{thm:orc_vs_nnpt_oracle}}\label{sec:orc_vs_nnpt_proof}
Now, in this section, we prove Theorem \ref{thm:orc_vs_nnpt_oracle}, thereby establishing that $\mathrm{SNR_{LPT}}$ and $\mathrm{SNR_{ORC}}$ are closely related, and with appropriate design choices LPT attains near-optimal power.

To study this, we first define the test-statistic suitably incorporating the oracle knowledge. Take bins $B_1, \dots, B_K$, each of size $2$, and write $B_k = \{i_k, j_k\}$. Then, define for each $\ell \in [n]$,
\[
  f^{(\ell)}(x_i, y_i, x_j, y_j) = \mathrm{LLR}(x_i, y_i \mid Z_\ell) + \mathrm{LLR}(x_j, y_j \mid Z_\ell)
  - \mathrm{LLR}(x_i, y_j \mid Z_\ell) - \mathrm{LLR}(x_j, y_i \mid Z_\ell).
\]
Thereby, for each $k \in [K] $, let
\begin{align*}
  T_k^{(i)}(X_{i_k}, Y_{i_k}, X_{j_k}, Y_{j_k})
    &= f^{(i_k)}(X_{i_k}, Y_{i_k}, X_{j_k}, Y_{j_k}),\\
  T_k^{(j)}(X_{i_k}, Y_{i_k}, X_{j_k}, Y_{j_k})
    &= f^{(j_k)}(X_{i_k}, Y_{i_k}, X_{j_k}, Y_{j_k}).
\end{align*}
and form $T^{(i)} = \sum_{k=1}^K T_k^{(i_k)}$ and $T^{(j)} = \sum_{k=1}^K T_k^{(j_k)}$. To put in words, we consider the statistic formed by (arbitrarily) choosing one of $\{i_{k},j_{k}\}$ to determine the relevant conditional model and evaluate its log-likelihood ratio on all possible pairings of $\{X_{i_k}, Y_{i_k}, X_{j_k}, Y_{j_k}\}$. This construction ensures the desired symmetry and antisymmetry properties, laid out in Section~\ref{sec:power_NNPT}:
\begin{align*}
    &T_k^{(i)}(X_{i_k}, Y_{i_k}, X_{j_k}, Y_{j_k}) = T_k^{(i)}(X_{j_k}, Y_{j_k}, X_{i_k}, Y_{i_k}),\\
    &T_k^{(i)}(X_{i_k}, Y_{i_k}, X_{j_k}, Y_{j_k}) = -T_k^{(i)}(X_{j_k}, Y_{i_k}, X_{i_k}, Y_{j_k})
\end{align*}
and similarly for $T_k^{(j)}$. Now, observe that working directly with either of $T^{(i)}$ or $T^{(j)}$ comes with technical barriers. In particular, within the same bin $B_k$, conditional on $\mathbf{Z}$, $(X_{i_k},Y_{i_k})$ and $(X_{j_k},Y_{j_k})$ come from different conditional distributions. Intuitively, however, the ideal bin choices should be such that $Z_{i_k}\approx Z_{j_k}$, so that approximately $(X_{i_k},Y_{i_k})$ and $(X_{j_k},Y_{j_k})$ are sampled i.i.d.\ from the same conditional distribution. To capture this ideal scenario,
we introduce phantom data points $(X_{i_k}', Y_{i_k}', Z_{i_k})$ and $(X_{j_k}', Y_{j_k}', Z_{j_k})$ where $(X_{i_k}', Y_{i_k}') \sim P_{X, Y \mid Z_{i_k}}$ and $(X_{j_k}', Y_{j_k}') \sim P_{X, Y \mid Z_{j_k}}$ independently of everything else. Then, we define
\[
  (T_k^{(i)})' = f^{(i_k)}(X_{i_k}, Y_{i_k}, X_{i_k}', Y_{i_k}') \text{ and }
  (T_k^{(j)})' = f^{(j_k)}(X_{j_k}', Y_{j_k}', X_{j_k}, Y_{j_k}),
\]
and finally capture the departure of $T^{(i)}$ and $T^{(j)}$ from the ideal scenario by letting
\[
  \delta_{k}^{(1)} := \max \left \{ \left|\mathbb E\left[T_k^{(i)}\right] - \mathbb E\left[(T_k^{(i)})'\right]\right|, \left|\mathbb E\left[T_k^{(j)}\right] - \mathbb E\left[(T_k^{(j)})'\right]\right| \right\}
\]
and 
\[
  \delta_{k}^{(2)} := \max \left\{ \left|\mathrm{Var}\left(T_k^{(i)}\right) - \mathrm{Var}\left((T_k^{(i)})'\right)\right|, \left|\mathrm{Var}\left(T_k^{(j)}\right) - \mathrm{Var}\left((T_k^{(j)})'\right)\right| \right\}.
\]
These quantities measure how the distribution of $T_k$ varies from the ideal setting where the conditional distributions inside each bin are identical. Note that as sample size increases, both of these quantities vanish, as $P_{X, Y \mid Z_{i_k}}\approx P_{X, Y \mid Z_{j_k}}$. Whenever these vanish sufficiently fast, and the variances under null and alternative distributions are comparable, $\mathrm{SNR_{LPT}}$ is comparable to $\mathrm{SNR_{ORC}}$. We formally write down the assumptions below.

\begin{assumption}\label{assn:delta_vanishing}
  The error terms from above satisfy
  \[
    \sum_{k=1}^K \delta_k^{(1)} = \mathrm{o}_P\left(\sqrt{\sum_{k=1}^K \Var(T_k\mid \mathbf Z)}\right) \text{ and } \sum_{k=1}^K \delta_k^{(2)} = \mathrm{o}_P\left(\sum_{k=1}^K \Var(T_k\mid \mathbf Z)\right).
  \]
\end{assumption}

\begin{assumption}\label{assn:var_ratio_bdded}
  The ratio between the average variances of log-likelihood ratio statistics under null and alternative models are bounded, i.e. there is a constant $M$ such that
  \[
    \frac{\sum_{i=1}^n \mathrm V_{\mathrm{KL}, (i)}^{[0]}}{\sum_{i=1}^n \mathrm V_{\mathrm{KL}, (i)}} < M\qquad \text{a.e.}.
  \]

\end{assumption}

\begin{theorem}[full version of Theorem \ref{thm:orc_vs_nnpt_oracle}]\label{thm:orc_vs_nnpt_oracle_precise}
  Under the above setting, and Assumptions \ref{assn:delta_vanishing} and \ref{assn:var_ratio_bdded}, at least one of the LPT tests based on $T^{(i)}$ or $T^{(j)}$ satisfies
  \[
    \mathrm{SNR_{LPT}} \geq \frac{\mathrm{SNR_{ORC}}}{\sqrt{4 + 4M + 8 \sqrt{M}}} - \mathrm{o}_P(1).
  \]
\end{theorem}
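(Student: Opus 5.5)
The plan is to compute $\mathrm{SNR_{LPT}}$ for each of the two statistics $T^{(i)}$, $T^{(j)}$ in an idealized ``phantom'' model, bound it below in terms of $\mathrm{SNR_{ORC}}$, and then transfer the bound to the real statistics using Assumption~\ref{assn:delta_vanishing}.

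\textbf{Step 0: centering.} We first check that each $T_k^{(i)}$ (and likewise $T_k^{(j)}$) is centered in the sense of Section~\ref{sec:power_NNPT}. With $m=2$, $\mathcal S_2$ contains only the identity and the swap. The antisymmetry identity displayed above says that swapping the $X$'s negates $T_k^{(i)}$, so the permutation average is zero. Hence $\mathrm{SNR_{LPT}}$ is simply $\sum_k \E[T_k\mid\bZ]/(\sum_k \Var(T_k\mid\bZ))^{1/2}$.

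\textbf{Step 1: the idealized model.} Work with the phantom statistic
\[
(T_k^{(i)})' = A + B - C - D,
\]
where
\[
A=\mathrm{LLR}^{(i_k)}(X_{i_k},Y_{i_k}),\quad B=\mathrm{LLR}^{(i_k)}(X'_{i_k},Y'_{i_k}),\quad C=\mathrm{LLR}^{(i_k)}(X_{i_k},Y'_{i_k}),\quad D=\mathrm{LLR}^{(i_k)}(X'_{i_k},Y_{i_k}).
\]
\begin{itemize}
\item The pairs $(X_{i_k},Y_{i_k})$ and $(X'_{i_k},Y'_{i_k})$ are i.i.d.\ draws from $P_{X,Y\mid Z_{i_k}}$. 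So $A$ and $B$ each have the alternative law of the LLR, with variance $\mathrm V_{\mathrm{KL},(i_k)}$.
\item The mismatched pairs $(X_{i_k},Y'_{i_k})$ and $(X'_{i_k},Y_{i_k})$ are each distributed as $P_{X\mid Z_{i_k}}\times P_{Y\mid Z_{i_k}}$. So $C$ and $D$ each have the null law of the LLR, with variance $\mathrm V^{[0]}_{\mathrm{KL},(i_k)}$.
\item Therefore $\E[(T_k^{(i)})'] = 2\,\overline{\mathrm{KL}}_{(i_k)}$ exactly.
\item Using only Minkowski's inequality for standard deviations (this ignores the dependence between $C$ and $D$),
\[
\Var\big((T_k^{(i)})'\big)\le\Big(2\sqrt{\mathrm V_{\mathrm{KL},(i_k)}}+2\sqrt{\mathrm V^{[0]}_{\mathrm{KL},(i_k)}}\Big)^2 .
\]
\end{itemize}
The same holds for $(T_k^{(j)})'$ with $j_k$ in place of $i_k$.

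\textbf{Step 2: choosing between $T^{(i)}$ and $T^{(j)}$.} Since $\{i_k\}\cup\{j_k\}$ covers the indices, the two idealized numerators sum to $2\sum_{\ell}\overline{\mathrm{KL}}_{(\ell)}$. Hence the larger of the two, say that of $T^{(i)}$, is at least $\sum_{\ell=1}^n\overline{\mathrm{KL}}_{(\ell)}$, the numerator of $\mathrm{SNR_{ORC}}$. (If $n$ is odd, the leftover index contributes a nonnegative Jeffreys term, which we drop.)

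For the denominator we deliberately extend the sum from the chosen half to all $n$ indices, since Assumption~\ref{assn:var_ratio_bdded} controls only the full sums. Expanding the squares and applying Cauchy--Schwarz to the cross terms gives
\[
\sum_k \Var\big((T_k^{(i)})'\big)\le 4\sum_{\ell}\mathrm V_{\mathrm{KL},(\ell)}+4\sum_\ell \mathrm V^{[0]}_{\mathrm{KL},(\ell)}+8\Big(\sum_\ell\mathrm V_{\mathrm{KL},(\ell)}\sum_\ell\mathrm V^{[0]}_{\mathrm{KL},(\ell)}\Big)^{1/2}.
\]
By Assumption~\ref{assn:var_ratio_bdded}, the right-hand side is at most $(4+4M+8\sqrt M)\sum_\ell\mathrm V_{\mathrm{KL},(\ell)}$. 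Thus the idealized signal-to-noise ratio is at least $\mathrm{SNR_{ORC}}/\sqrt{4+4M+8\sqrt M}$.

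\textbf{Step 3: transfer to the real statistic.} This is where I expect the main care to be needed. Write $V_T=\sum_k\Var(T_k\mid\bZ)$ for the real variance sum.
\begin{itemize}
\item By Assumption~\ref{assn:delta_vanishing}, the real numerator differs from the idealized one by at most $\sum_k\delta_k^{(1)}=\mathrm o_P(\sqrt{V_T})$. Dividing by $\sqrt{V_T}$, this changes $\mathrm{SNR_{LPT}}$ by only $\mathrm o_P(1)$.
\item The real and idealized variance sums differ by at most $\sum_k\delta_k^{(2)}=\mathrm o_P(V_T)$. So the two sums agree up to a factor $1+\mathrm o_P(1)$, and the real denominator is at most $(1+\mathrm o_P(1))$ times the idealized one.
\end{itemize}
Combining these,
\[
\mathrm{SNR_{LPT}}\ge(1-\mathrm o_P(1))\,\frac{\mathrm{SNR_{ORC}}}{\sqrt{4+4M+8\sqrt M}}-\mathrm o_P(1).
\]

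The remaining obstacle is the multiplicative factor $(1-\mathrm o_P(1))$ when $\mathrm{SNR_{ORC}}$ is unbounded. If $\mathrm{SNR_{ORC}}\to\infty$, the multiplicative error is harmless in the sense relevant to power: both sides diverge and the power tends to one. Otherwise, $\mathrm{SNR_{ORC}}=\mathrm O_P(1)$ and the factor can be absorbed into the additive $\mathrm o_P(1)$. I would handle this by splitting along subsequences according to these two regimes.
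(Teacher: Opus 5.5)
Your proof is correct. It has the same skeleton as the paper's: phantom copies, idealized mean $2\overline{\mathrm{KL}}_{(i_k)}$, Cauchy--Schwarz with Assumption~\ref{assn:var_ratio_bdded} used only after summing over all $n$ indices, and transfer via Assumption~\ref{assn:delta_vanishing}. The difference is the aggregation step.

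The paper applies Titu's (Engel's) inequality to the two idealized squared SNRs and keeps the larger. In doing so it lower-bounds each idealized mean only by $\overline{\mathrm{KL}}_{(i_k)}$, discarding the factor $2$. It also bounds the combined variance $\sum_k\Var((T_k^{(i)})'\mid\mathbf Z)+\sum_k\Var((T_k^{(j)})'\mid\mathbf Z)$ by $2\sum_\ell(\sqrt{\mathrm V_{\mathrm{KL},(\ell)}}+\sqrt{\mathrm V^{[0]}_{\mathrm{KL},(\ell)}})^2$. You instead keep the factor $2$ and use the per-bin Minkowski bound $4(\sqrt{\mathrm V_{\mathrm{KL},(i_k)}}+\sqrt{\mathrm V^{[0]}_{\mathrm{KL},(i_k)}})^2$. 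You then select by the larger idealized mean and pay for that selection by extending the half-sum of variances to all $n$ indices. Both routes land on $\sqrt{4+4M+8\sqrt M}$.

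Your bookkeeping is the sound one: Minkowski for four correlated terms gives your factor $4$, and this cannot be improved to the paper's factor $2$ in general. In the Gaussian model of Section~\ref{sec:linear_model}:
\begin{itemize}
\item the pure quadratic terms of the log-likelihood ratio cancel, so $(T_k^{(i)})'$ is a constant multiple of $(X_{i_k}-X'_{i_k})(Y_{i_k}-Y'_{i_k})$;
\item its variance is exactly $4\mathrm V^{[0]}_{\mathrm{KL}}=4\rho_n^2(1+\rho_n^2)/(1-\rho_n^2)^2$, while $\mathrm V_{\mathrm{KL}}=\rho_n^2$;
\item hence the paper's bound fails once $\sqrt{1+\rho_n^2}/(1-\rho_n^2)>1+\sqrt2$, e.g.\ at $\rho_n=0.8$.
\end{itemize}
The paper's stated constant survives only because discarding the factor $2$ in the mean more than compensates. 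Conversely, Titu's selection buys something yours does not. Run with your constants, it shows the larger idealized SNR is at least $\mathrm{SNR_{ORC}}/\sqrt{2+2M+4\sqrt M}$, a factor $\sqrt2$ better than the theorem.

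That slack resolves the issue you flag in Step~3; your proposed fix does not. Splitting into regimes and appealing to power when $\mathrm{SNR_{ORC}}\to\infty$ establishes a different statement, not the additive bound. The paper's transfer step has the same defect: it treats a multiplicative $1-\mathrm o_P(1)$ as an additive $\mathrm o_P(1)$ without comment. If you instead select the statistic with the larger idealized SNR, your Step~3 yields $\mathrm{SNR_{LPT}}\geq(1-\mathrm o_P(1))\sqrt2\,\mathrm{SNR_{ORC}}/\sqrt{4+4M+8\sqrt M}-\mathrm o_P(1)$. Since $(1-\mathrm o_P(1))\sqrt2\geq 1$ with probability tending to one, the stated inequality follows with no restriction on the size of $\mathrm{SNR_{ORC}}$.

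Finally, your odd-$n$ remark runs the wrong way. The unpaired index's Jeffreys term appears in the numerator of $\mathrm{SNR_{ORC}}$ but in neither LPT numerator. It would therefore have to be shown negligible relative to $(\sum_\ell\mathrm V_{\mathrm{KL},(\ell)})^{1/2}$, not simply dropped. The paper tacitly takes $n=2K$.
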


\begin{proof}
  First, observe that for each $i \in [n]$,
  \begin{align*}
    \E \left[ \mathrm{LLR}(X_i, Y_i \mid Z_i) \mid \mathbf Z \right] = \E \left[ \mathrm{LLR}(X_i', Y_i' \mid Z_i) \mid \mathbf Z \right] = \mathrm{KL}(P_{X, Y \mid Z_i} \| P_{X \mid Z_i} \times P_{Y \mid Z_i}),\quad \text{and}\\
    \E \left[ \mathrm{LLR}(X_i, Y_i' \mid Z_i)\mid \mathbf Z \right] = \E \left[ \mathrm{LLR}(X_i', Y_i \mid Z_i) \mid \mathbf Z\right] = -\mathrm{KL}( P_{X \mid Z_i} \times P_{Y \mid Z_i} \| P_{X, Y \mid Z_i})
  \end{align*}
  as $X_i, Y_i'$ (and $X_i', Y_i$) are independent from one another.
  Moreover, we can compute
  \begin{align*}
    \Var(\mathrm{LLR}(X_i, Y_i) \mid Z_i)\mid \mathbf Z) =
    \Var(\mathrm{LLR}(X_i', Y_i') \mid Z_i)\mid \mathbf Z) = \mathrm{V}_{\mathrm{KL}, (i)},\quad \text{and}\\
    \Var(\mathrm{LLR}(X_i, Y_i') \mid Z_i)\mid \mathbf Z) =
    \Var(\mathrm{LLR}(X_i', Y_i) \mid Z_i)\mid \mathbf Z) = \mathrm{V}^{[0]}_{\mathrm{KL}, (i)}.
  \end{align*}
Note that with the idealized statistic $(T_k^{(i)})'$, we have $\mathbb E[(T_k^{(i)})'\mid \mathbf Z] = 2\overline{\mathrm{KL}}_{(i_k)}\geq\overline{\mathrm{KL}}_{(i_k)}$.
Moreover, the Cauchy--Schwarz inequality gives the two aggregate bounds
\begin{align*}
  &\sum_{k=1}^K\Var((T_k^{(i)})'\mid\mathbf Z)+\sum_{k=1}^K\Var((T_k^{(j)})'\mid\mathbf Z)\\
  &\quad\leq 2\left(\sum_{i=1}^n\mathrm V_{\mathrm{KL},(i)}+\sum_{i=1}^n\mathrm V^{[0]}_{\mathrm{KL},(i)}+2\sum_{i=1}^n\sqrt{\mathrm V_{\mathrm{KL},(i)}\mathrm V^{[0]}_{\mathrm{KL},(i)}}\right)\\
  &\quad\leq 2(1+\sqrt M)^2\sum_{i=1}^n\mathrm V_{\mathrm{KL},(i)},
\end{align*}
where the last step uses Assumption~\ref{assn:var_ratio_bdded} only after summing, together with Cauchy--Schwarz.

Therefore, Titu's lemma (for instance, see \cite[Chapter~8]{sedrakyan2018algebraic}) gives
\begin{align*}
  &\frac{\left(\sum_{k=1}^K \mathbb E\left[(T_k^{(i)})'\mid \mathbf Z\right]\right)^{2}}{\sum_{k=1}^K \Var\bigl((T_k^{(i)})'\mid \mathbf Z\bigr)}  + \frac{\left(\sum_{k=1}^K \mathbb E\left[(T_k^{(j)})'\mid \mathbf Z\right]\right)^{2}}{\sum_{k=1}^K \Var\bigl((T_k^{(j)})'\mid \mathbf Z\bigr)} \\
  &\qquad\quad \ge \frac{\left(\sum_{k=1}^K \mathbb E\left[(T_k^{(i)})'\mid \mathbf Z\right]+\sum_{k=1}^K \mathbb E\left[(T_k^{(j)})'\mid \mathbf Z\right]\right)^{2}}{\sum_{k=1}^K \Var\bigl((T_k^{(i)})'\mid \mathbf Z\bigr)+\sum_{k=1}^K \Var\bigl((T_k^{(j)})'\mid \mathbf Z\bigr)} \\
  &\qquad\quad\geq \frac{\left(\sum_{k=1}^K \overline{\mathrm{KL}}_{(i_k)} + \overline{\mathrm{KL}}_{(j_k)}\right)^2}{\left( 2 + 2M + 4\sqrt{M} \right)\left( \sum_{k=1}^K \mathrm{V}_{\mathrm{KL}, (i_k)} + \mathrm{V}_{\mathrm{KL}, (j_k)} \right)}
  = \frac{\mathrm{SNR}_{\mathrm{ORC}}^2}{2 + 2M + 4\sqrt{M}},
\end{align*}
which implies that the larger of the two summands on the left must be at least $\frac{\mathrm{SNR}_{\mathrm{ORC}}^2}{4 + 4M + 8\sqrt{M}}$. Without loss of generality, suppose that maximum is attained by $\{(T_k^{(i)})\}_{k=1}^K$. Then, we have
\[
  \frac{\sum_{k=1}^K \mathbb E\left[(T_k^{(i)})'\mid \mathbf Z\right]}{\left( \sum_{k=1}^K \Var\left((T_k^{(i)})'\mid \mathbf Z\right) \right)^{1/2}} \geq \frac{\mathrm{SNR}_{\mathrm{ORC}}}{\sqrt{4 + 4M + 8\sqrt{M}}}.
\]
Relating this to the original statistic $T_k^{(i)}$, we have
\begin{multline*}
  \mathrm{SNR_{LPT}}
  = \frac{\sum_{k=1}^K \E \left[ T_k^{(i)} \mid \mathbf Z \right]}{\left( \sum_{k=1}^K \Var\left(T_k^{(i)}\mid \mathbf Z\right) \right)^{1/2}}
  \geq \frac{\sum_{k=1}^K \mathbb E\left[(T_k^{(i)})'  \mid \mathbf Z\right] - \sum_{k=1}^K \delta^{(1)}_k}{\left( \sum_{k=1}^K \Var\left((T_k^{(i)})'\mid \mathbf Z \right) + \sum_{k=1}^K \delta^{(2)}_k \right)^{1/2}} \\
  \geq \frac{\sum_{k=1}^K \mathbb E\left[(T_k^{(i)})' \mid \mathbf Z \right] }{\left( \sum_{k=1}^K \Var\left((T_k^{(i)})' \mid \mathbf Z\right) \right)^{1/2}}-\mathrm{o}_P(1)
  \geq \frac{\mathrm{SNR_{ORC}}}{\sqrt{4 + 4M + 8\sqrt{M}}}-\mathrm{o}_P(1)
\end{multline*}
  where the penultimate step follows from Assumption \ref{assn:delta_vanishing}.
\end{proof}

Note that if we are under local alternatives such that (for instance, as stated in  Assumption \ref{assn:local_alternative} -- cf. the discussion after Theorem \ref{thm:general_orc_power}) such that
\[
    \frac{\sum_{i=1}^n \mathrm V_{\mathrm{KL}, (i)}^{[0]}}{\sum_{i=1}^n \mathrm V_{\mathrm{KL}, (i)}} = 1 + \mathrm{o}_P(1)
\]
then we may take $M=1+\mathrm{o}_P(1)$, and the constant in the denominator converges to 4, giving the statement of Theorem \ref{thm:orc_vs_nnpt_oracle}.

We remark that Assumption~\ref{assn:delta_vanishing} which posits
that the distribution of $T_k$ is not too different from the ideal setting is crucial: otherwise, the fact that local permutation tests are constrained to simulate the null distribution only via appropriate permutations, whereas the oracle test has no such constraint (note that that this is only possible as the oracle test is designed with true knowledge of the likelihood ratio in mind) is insurmountable. However, practically speaking, this is not too stringent: usually both $\delta_k^{(1)}$ and $\delta_k^{(2)}$ are vanishing as $n \to \infty$, as $P_{X, Y \mid Z_{i_k}}$ and $P_{X, Y \mid Z_{j_k}}$ converge with a good binning strategy, and for non-trivial choices of statistics, $\Var(T_k\mid \mathbf{Z})$ is often of constant order against any fixed alternative.

\subsection{Proof of Theorem \ref{thm:gaussian_lrt_power}}\label{sec:gaussian_lrt_power_proof}

This proof is a straightforward application of Theorem \ref{thm:power_of_orc}. However, given the additional knowledge of the underlying linear confounder model, we can further simplify $\mathrm{SNR_{ORC}}$ as follows.

Under the model class \eqref{model:linear_gaussian_confounder}, we have
\[
  P_{X\mid Z}\times P_{Y\mid Z}= N \left(\mu ,\Sigma_0 \right),
  \quad\text{and}\quad
  P_{X,Y\mid Z}= N \left( \mu,\Sigma_1 \right),
\]
where we write
\[
  \mu :=\begin{bmatrix} f_1(Z)\\ f_2(Z) \end{bmatrix},
  \quad
  \Sigma_0:=\begin{bmatrix} \beta_{1,n}^2+1 & 0\\[2pt] 0 & \beta_{2,n}^2+1 \end{bmatrix}~\text{and}
  \quad
  \Sigma_1:=\begin{bmatrix} \beta_{1,n}^2+1 & \beta_{1,n}\beta_{2,n}\\[2pt] \beta_{1,n}\beta_{2,n} & \beta_{2,n}^2+1 \end{bmatrix}.
\]
The log-likelihood ratio of $P_{X,Y\mid Z}$ with respect to $P_{X\mid Z}\times P_{Y\mid Z}$ equals
\[
  \mathrm{LLR}(X, Y \mid Z) =\frac{1}{2}\log\!\left(\frac{|\Sigma_0|}{|\Sigma_1|}\right)
  -\frac{1}{2}(V - \mu)^\top\bigl(\Sigma_1^{-1}-\Sigma_0^{-1}\bigr)(V-\mu),
\]
where $V:=(X,Y)^\top$. Now, we recall the following standard facts on moments of quadratic forms. With $W\sim N(0,\Sigma)$, and for any symmetric matrix $A$, we have
\[
    \mathbb E[W^\top A W] =  \mathrm{tr}(A\Sigma),\quad
    \Var(W^\top A W) = 2 \mathrm{tr}((A\Sigma)^2).
\]
Therefore, we have
\begin{align*}
    \mathbb E_{P_{X\mid Z_i} \times P_{Y \mid Z_i}}[(V - \mu)^\top(\Sigma_1^{-1} - \Sigma_0^{-1})(V - \mu)] &= \mathrm{tr}((\Sigma_1^{-1} - \Sigma_0^{-1})\Sigma_0) = \mathrm{tr}(\Sigma_1^{-1}\Sigma_0) - 2~~\text{and}\\
    \mathbb E_{P_{X, Y \mid Z_i}}[(V - \mu)^\top(\Sigma_1^{-1} - \Sigma_0^{-1})(V - \mu)] &= \mathrm{tr}((\Sigma_1^{-1} - \Sigma_0^{-1})\Sigma_1) = 2 -\mathrm{tr}(\Sigma_0^{-1}\Sigma_1).
\end{align*}

Then, we can compute
\begin{align*}
  \overline{\mathrm{KL}}_{(i)} &=  \mathbb E_{P_{X, Y \mid Z_i}}[ \mathrm{LLR}^{(i)}(X, Y)] - \mathbb E_{P_{X \mid Z_i} \times P_{Y \mid Z_i}}[\mathrm{LLR}^{(i)}(X, Y) ] \\
                               &= \frac{1}{2} \left(\tr(\Sigma_0^{-1}\Sigma_1) + \tr(\Sigma_1^{-1}\Sigma_0)) - 4 \right) = \frac{\beta_{1, n}^2 \beta_{2, n}^2}{\beta_{1, n}^2 + \beta_{2,n}^2 + 1} = \frac{\rho^2_n}{1 - \rho^2_n}.
\end{align*}

Moreover, we compute
\[
  \mathrm{V}_{\textnormal{KL},(i)}=\mathrm{Var}_{P_{X, Y \mid Z_i}}\left(\mathrm{LLR}^{(i)}(X, Y) \right)
                                  =\frac{1}{2}\mathrm{tr}\left(\bigl(I - \Sigma_{0}^{-1}\Sigma_1\bigr)^2\right)=\frac{\beta_{1,n}^2\,\beta_{2,n}^2}{(\beta_{1,n}^2+1)(\beta_{2,n}^2+1)}=\rho_n^2,
\]
and that
\begin{align*}
  \mathrm{V}_{\textnormal{KL},(i)}^{[0]} &= \frac{1}{2}\mathrm{tr}\left(\bigl(I - \Sigma_{1}^{-1}\Sigma_0\bigr)^2\right)\\
                                         &=\left(\frac{\beta_{1,n}^2\beta_{2,n}^2}{(\beta_{1,n}^2+1)(\beta_{2,n}^2+1)}\right)^2+\frac{\beta_{1,n}^2\beta_{2,n}^2}{\beta_{1,n}^2+\beta_{2,n}^2+1}\cdot \frac{\beta_{1,n}^2+\beta_{2,n}^2}{\beta_{1,n}^2+\beta_{2,n}^2+1}=\rho_n^2 \cdot \frac{1 + \rho_n^2}{(1 - \rho_n^2)^2}.
\end{align*}

Thus, we obtain
\[
  \mathrm{SNR_{ORC}} = \frac{\sum_{i=1}^n \overline{\mathrm{KL}}_{(i)}}{\left( \sum_{i=1}^n \mathrm{V}_{\mathrm{KL}, (i)} \right)^{1/2}} = \frac{ \sqrt{n}\,|\rho_n|}{1 - \rho_n^2}
\]
as desired.

To conclude the proof, it remains to verify the conditions of Theorem \ref{thm:power_of_orc}, as stated formally in Appendix~\ref{sec:power_of_orc_proof}. 
We start by noting that
$\mathrm{LLR}^{(i)}(X, Y) - \mathbb E_{P_{X, Y \mid Z_i}}[\mathrm{LLR}^{(i)}(X, Y)]$ is a quadratic polynomial in $X, Y$, and thus satisfies Gaussian hypercontractivity (see e.g.\ Theorem 5.10 of \cite{Janson_1997}), i.e, we have
\[
  \mathbb E_{P_{X, Y \mid Z_i}}\left[ \left|\mathrm{LLR}^{(i)}(X, Y) - \mathbb E_{P_{X, Y \mid Z_i}}[\mathrm{LLR}^{(i)}(X, Y)] \right|^{3} \right]^{1/3} \!\!\leq 4\,\Var_{P_{X, Y \mid Z_i}}(\mathrm{LLR}^{(i)}(X, Y))^{1/2}.
\]
The same inequality holds under
$P_{X \mid Z_i} \times P_{Y \mid Z_i}$. Moreover, under the linear
confounder model, $\mathrm{V}_{\textnormal{KL},(i)}^{[0]}$ and
$\mathrm{V}_{\textnormal{KL},(i)}$ do not vary in $i$. Hence each aggregate
third-moment ratio in Assumption~\ref{assn:bdd_skewness} is at most
$4^3/\sqrt n=\mathrm{o}(1)$, verifying that assumption.

For Assumption~\ref{assn:cond_var_nontrivial}, the same invariance in $i$
implies
\[
  \frac{\max_{i} \mathrm V^{[0]}_{\mathrm{KL}, (i)}}{\sum_{i=1}^n \mathrm V^{[0]}_{\mathrm{KL}, (i)} }
  = \frac{\max_{i}\mathrm V_{\mathrm{KL}, (i)}}{\sum_{i=1}^n \mathrm V_{\mathrm{KL}, (i)}}
  = \frac{1}{n}=\mathrm{o}_P(1).
\]
This establishes the first condition. Further, we compute
\[
  \frac{\sum_{i=1}^n \mathrm V_{\mathrm{KL}, (i)}^{[0]}}{\sum_{i=1}^n \mathrm V_{\mathrm{KL}, (i)}} = \frac{1 + \rho_n^2}{(1 - \rho_n^2)^2},
\]
which is bounded since by definition, $\rho_n<1$ under the model class~\eqref{model:linear_gaussian_confounder}.
Finally, Assumption \ref{assn:local_alternative} holds as well, since when $\rho_n \to 0$ then 
\[
  \lim_{n \to \infty} \frac{\sum_{i=1}^n \mathrm V_{\mathrm{KL}, (i)}^{[0]}}{\sum_{i=1}^n \mathrm V_{\mathrm{KL}, (i)}} = \lim_{n \to \infty} \frac{1 + \rho_n^2}{(1 - \rho_n^2)^2} = 1
\]
and otherwise when $\rho_n<1$,
\[
\mathrm{SNR_{ORC}}  = \frac{ \sqrt{n} \rho_n}{1 - \rho_n^2}\to\infty.
\]
This completes the proof.
\qed

\subsection{Proof of Theorem \ref{thm:power_lpt_confounder_model_general}}\label{sec:power_linear_proof}

Similarly to the previous section, this proof is an application of Theorem \ref{thm:power_of_nnpt} (or more precisely, Theorem \ref{cor:nnpt_power}). However, the linear confounder model class~\eqref{model:linear_gaussian_confounder} enables more precise characterization of each of the underlying terms.

First, we introduce some definitions. We write
\begin{multline*}
  \overline{\mathbf X}_k = \frac{1}{m_k} \sum_{i \in B_k} X_i,~
  \overline{\mathbf Y}_k = \frac{1}{m_k} \sum_{i \in B_k} Y_i,~
  \overline{f_1(\mathbf Z)}_k = \frac{1}{m_k} \sum_{i \in B_k} f_1(Z_i)\quad\text{and}~
  \overline{f_2(\mathbf Z)}_k = \frac{1}{m_k} \sum_{i \in B_k} f_2(Z_i).
\end{multline*}
As assumed in Section \ref{sec:power}, we verify that $T_k$ is centered under permutations. We use the identity
\[
  T_k = \frac{1}{m_k} \sum_{i \in B_k} \left( X_i -  \overline{\mathbf X}_{k}\right)\left( Y_i - \overline{\mathbf Y}_k \right) = \frac{1}{m_k} \sum_{i \in B_k} X_i Y_i - \overline{\mathbf X}_k \overline{\mathbf Y}_k
\]
and note that 
\[
 \sum_{\sigma \in \mathcal S_{m_k} } \left( \frac{1}{m_k} \sum_{i \in B_k}  X_{\sigma(i)} Y_i \right) = \frac{1}{m_k} \sum_{i \in B_k} Y_i \sum_{\sigma \in \mathcal S_{m_k}} X_{\sigma(i)}   =\frac{(m_k - 1)!}{m_k} \sum_{i \in B_k} \sum_{j \in B_k} X_j Y_i = m_k ! \cdot \overline{\mathbf X}_k \overline{\mathbf Y}_k
\]
so that $T_k$ is already centered under permutation, as required:
\[
  \sum_{\sigma \in \mathcal S_{m_k}} T_k = \sum_{\sigma \in \mathcal S_{m_k} } \left( \frac{1}{m_k} \sum_{i \in B_k}  X_{\sigma(i)} Y_i \right) - m_k! \cdot \overline{\mathbf X}_k \overline{\mathbf Y}_k = 0.
\]

It will be useful to also write $W_i = X_i - f_1(Z_i)$ and $V_i = Y_i - f_2(Z_i)$ for the de-meaned versions of $X_i$ and $Y_i$, as well as \[
  \overline{\mathbf W}_k = \frac{1}{m_k}\sum_{i \in B_k} W_i
  \quad \text{and} \quad
  \overline{\mathbf V}_k = \frac{1}{m_k}\sum_{i \in B_k} V_i
\]
so that we may write
\[
  \overline{\mathbf X}_k = \overline{\mathbf W}_k + \overline{f_1(\mathbf Z)}_k
  \quad \text{and} \quad
  \overline{\mathbf Y}_k = \overline{\mathbf V}_k +  \overline{f_2(\mathbf Z)}_k.
\]
Note that
\[
  (W_i, V_i) \mid \mathbf Z \sim N \left( 0, \begin{bmatrix}
    \beta_{1, n}^2 + 1 && \beta_{1, n} \beta_{2, n} \\
    \beta_{1, n} \beta_{2, n} && \beta_{2, n}^2 + 1 \\
  \end{bmatrix} \right).
\]
Finally, we write an additional error term
\[
  S_k^{(12)} = \frac{1}{m_k} \sum_{i \in B_k} \left( f_1(Z_i) - \overline{f_1(\mathbf Z)}_k \right) \left( f_2(Z_i) -  \overline{f_2(\mathbf Z)}_k \right)
\]
with the understanding that $|S_k^{(12)}| \leq \sqrt{S_k^{(1)} S_k^{(2)}}$ by Cauchy--Schwarz.

\paragraph{Computing $\mathbb E[T_k \mid \mathbf Z]$.} We start by decomposing
\begin{align*}
  T_k &= \frac{1}{m_k} \sum_{i \in B_k} \bigl( f_1(Z_i) - \overline{f_1(\mathbf Z)}_k + W_i - \overline{\mathbf W}_k \bigr) \bigl( f_2(Z_i) - \overline{f_2(\mathbf Z)}_k + V_i - \overline{\mathbf V}_k \bigr) \\
       &= S_k^{(12)} + \underbrace{\frac{1}{m_k} \sum_{i \in B_k} \bigl( f_1(Z_i) - \overline{f_1(\mathbf Z)}_k  \bigr)\left( V_i - \overline{\mathbf V}_k \right)}_{:= A_{k, 1}} \\
        &\qquad\quad+ \underbrace{\frac{1}{m_k} \sum_{i \in B_k} \bigl( f_2(Z_i) - \overline{f_2(\mathbf Z)}_k\bigr)\left(W_i - \overline{\mathbf W}_k\right)}_{:= A_{k, 2}}
  + \underbrace{\frac{1}{m_k} \sum_{i \in B_k} \left( W_i - \overline{\mathbf W}_k \right)\left( V_i - \overline{\mathbf V}_k \right)}_{:= A_{k, 3}}.
\end{align*}
Observe that by linearity, $\EEst{A_{k, 1}}{\mathbf Z} = \EEst{A_{k, 2}}{\mathbf Z} = 0$, and further that,
\[
  \EEst{A_{k, 3}}{\mathbf Z} = \frac{m_k - 1}{m_k}\Cov \left( W_i, V_i \mid \mathbf Z \right) = \frac{m_k - 1}{m_k} \beta_{1,n}\beta_{2,n}.
\]
Consequently, it follows that
\[
  \EEst{T_k}{\mathbf Z} = S_k^{(12)} + \frac{m_k - 1}{m_k}\beta_{1,n}\beta_{2,n}.
\]

\paragraph{Computing $\Var(T_k \mid \mathbf Z)$.} Now, moving on to the conditional variance of $T_k$, we reuse the decomposition from above to write
\begin{align*}
    \Var(T_k \mid \mathbf Z) &= \Var(A_{k, 1} + A_{k, 2} + A_{k, 3} \mid \mathbf Z)\\
    &=\Var(A_{k,1}\mid \mathbf{Z})+\Var(A_{k,2}\mid \mathbf{Z})+ \Var(A_{k,3}\mid \mathbf{Z}) \\&+\mathrm{Cov}(A_{k,1},A_{k,2}\mid \mathbf{Z})+\mathrm{Cov}(A_{k,2},A_{k,3}\mid \mathbf{Z})+\mathrm{Cov}(A_{k,1},A_{k,3}\mid \mathbf{Z}).
\end{align*}
Thereby, we can compute
\begin{align*}
  \Var(A_{k, 1} \mid \mathbf Z) &= \frac{1}{m_k^2} \sum_{i, j \in B_k}  \left( f_1(Z_i) - \overline{f_1(\mathbf Z)}_k  \right) \left( f_1(Z_j) - \overline{f_1(\mathbf Z)}_k  \right) \Cov(V_i - \overline{\mathbf V}_k, V_j - \overline{\mathbf V}_{k} \mid \mathbf Z) \\
                                &= \frac{1}{m_k^2} \sum_{i, j \in B_k}  \left( f_1(Z_i) - \overline{f_1(\mathbf Z)}_k  \right) \left( f_1(Z_j) - \overline{f_1(\mathbf Z)}_k  \right) (\beta_{2, n}^2 + 1) \left( \One{i = j} - \frac{1}{m_k} \right)\\
                                &= \frac{\beta_{2, n}^2 + 1}{m_k}S_k^{(1)}
\end{align*}
where the last equality follows by noting that
\[
  \sum_{i, j \in B_k}  \left( f_1(Z_i) - \overline{f_1(\mathbf Z)}_k  \right) \left( f_1(Z_j) - \overline{f_1(\mathbf Z)}_k  \right) = \Bigl( \sum_{i \in B_k} \left( f_1(Z_i) - \overline{f_1(\mathbf Z)}_k \right) \Bigr)^2 = 0.
\]
By a symmetric argument, we can compute
\[
  \Var(A_{k, 2} \mid \mathbf Z) = \frac{\beta_{1, n}^2 + 1}{m_k} S_k^{(2)}.
\]
Next, we write $R_i=(W_i,V_i)$ for each $i\in [n]$. Writing $\mathbf R_k=(R_1,\ldots,R_{m_k})\in \mathbb{R}^{m_k\times 2}$, we observe that $\sum_{i \in B_k}(W_i - \overline{\mathbf W}_k)(V_i - \overline{\mathbf V}_k)$ is the off-diagonal elements of the $2\times 2$ matrix $(\mathbf{R}_k-\bar{\mathbf{R}})^\top (\mathbf{R}_k-\bar{\mathbf{R}})$. Further, recalling that for each $i$, $R_i\sim N ( 0, \Sigma)$ where
\[
\Sigma:=\begin{bmatrix}
    \beta_{1, n}^2 + 1 && \beta_{1, n} \beta_{2, n} \\
    \beta_{1, n} \beta_{2, n} && \beta_{2, n}^2 + 1 \\
  \end{bmatrix}
\]
As a result, $\mathbf{R}_k-\bar{\mathbf{R}}$ is a centered data-matrix, and $(\mathbf{R}_k-\bar{\mathbf{R}})^\top (\mathbf{R}_k-\bar{\mathbf{R}})$ is a Wishart matrix. In particular, 
\[
(\mathbf{R}_k-\bar{\mathbf{R}})^\top (\mathbf{R}_k-\bar{\mathbf{R}})\sim \mathrm{W}_{2}(\Sigma, m_{k}-1).
\]
Finally, we can compute
\begin{align*}
    \Var(A_{k, 3} \mid \mathbf Z)=&\frac{1}{m_k^2}\Var\left(\sum_{i \in B_k}(W_i - \overline{\mathbf W}_k)(V_i - \overline{\mathbf V}_k) \mid \mathbf Z\right)\\ =&\frac{(m_k - 1)}{m_k^2}\left(  \beta_{1, n}^2\beta_{2, n}^2 + (\beta_{1, n}^2 + 1)(\beta_{2, n}^2 + 1) \right).
\end{align*}

Next, we need to compute the covariances between $A_{k, 1}, A_{k, 2}, A_{k, 3}$. First, by an argument analogous to the one used to compute $\Var(A_{k, 1} \mid \mathbf Z)$, we observe that
\begin{align*}
  \Cov(A_{k, 1}, A_{k, 2} \mid \mathbf Z) &= \frac{1}{m_k^2}\sum_{i, j \in B_k}\bigl(f_1(Z_i) - \overline{f_1(\mathbf Z)}_k\bigr)\bigl(f_2(Z_j) - \overline{f_2(\mathbf Z)}_k\bigr)\Cov(V_i - \overline{\mathbf V}_k, W_j - \overline{\mathbf W}_k \mid \mathbf Z) \\
  &= \frac{\beta_{1,n}\beta_{2,n}}{m_k^2}\sum_{i,j \in B_k}\bigl(f_1(Z_i) - \overline{f_1(\mathbf Z)}_k\bigr)\bigl(f_2(Z_i) - \overline{f_2(\mathbf Z)}_k\bigr)\left( \One{i = j} - \frac{1}{m_k} \right) \\&= \frac{\beta_{1,n}\beta_{2,n}\, S_k^{(12)}}{m_k}.
\end{align*}
Finally, note that conditional on $\mathbf Z$, for each $k$, $\mathbf{R}_k\overset{d}{=}-\mathbf{R}_k$. Consequently, for each $i,j$, we have
\[
\Cov \left( (V_i - \overline{\mathbf V}_k), (W_j - \overline{\mathbf W}_k)(V_j - \overline{\mathbf V}_k)\right)=-\Cov \left( (V_i - \overline{\mathbf V}_k), (W_j - \overline{\mathbf W}_k)(V_j - \overline{\mathbf V}_k)\right)=0
\]
Therefore, it holds that
\[
  \Cov(A_{k, 1}, A_{k, 3} \mid \mathbf Z) = \frac{1}{m_k^2} \sum_{i, j\in B_k} \left(f_1(Z_i) - \overline{f_1(\mathbf Z)_k} \right) \Cov \left( (V_i - \overline{\mathbf V}_k), (W_j - \overline{\mathbf W}_k)(V_j - \overline{\mathbf V}_k)\right)=0.
\]
By a symmetric argument, $\Cov(A_{k, 2}, A_{k, 3} \mid \mathbf Z)=0$.
Finally, combining all the parts, we conclude that
\begin{multline}\label{eq:var_nonperm}
  \Var(T_k \mid \mathbf Z) = \Var(A_{k, 1} \mid \mathbf Z) + \Var(A_{k, 2} \mid \mathbf Z)  + 2\Cov(A_{k, 1}, A_{k, 2} \mid \mathbf Z) + \Var(A_{k, 3} \mid \mathbf Z)\\
  = \frac{(\beta_{2, n}^2 + 1)\,S_k^{(1)} + (\beta_{1, n}^2 + 1)\,S_k^{(2)} + 2\beta_{1, n}\beta_{2, n}\,S_k^{(12)}}{m_k} + \frac{(m_k - 1)}{m_k^2}(\beta_{1, n}^2\beta_{2, n}^2 + (\beta_{1, n}^2 + 1)(\beta_{2, n}^2 + 1)).
\end{multline}

\paragraph{Computing $\mathrm{SNR}_{\mathrm{LPT}}$.} Now, under the model class~\eqref{model:linear_gaussian_confounder}, observe that $\mathrm{SNR_{LPT}}$ can be simplified as follows:
\begin{align*}
  D_k:=\frac{(\beta_{2,n}^2+1)S_k^{(1)}+(\beta_{1,n}^2+1)S_k^{(2)}
                 +2\beta_{1,n}\beta_{2,n}S_k^{(12)}}{m_k}
  +\frac{(m_k-1)(\beta_{1,n}^2+1)(\beta_{2,n}^2+1)(1+\rho_n^2)}{m_k^2}.
\end{align*}
\begin{align*}
  \mathrm{SNR_{LPT}} = \frac{\sum_{k = 1}^K \mathbb E[T_k \mid \mathbf Z]}{ \left( \sum_{k=1}^K \Var(T_k \mid \mathbf Z) \right)^{1/2}}
  =\frac{\sum_{k=1}^K \left(S_k^{(12)} + \dfrac{m_k - 1}{m_k}\beta_{1,n}\beta_{2,n}\right)}{\left(\sum_{k=1}^K D_k\right)^{1/2}}.
\end{align*}

Observe that
\begin{align*}
  \left|\sum_{k=1}^K S_k^{(12)}\right| \leq \sum_{k=1}^K \sqrt{S_k^{(1)} S_k^{(2)}} &\leq K\sqrt{\frac{1}{K}\sum_{k=1}^K S_k^{(1)}}\sqrt{\frac{1}{K}\sum_{k=1}^K S_k^{(2)}} =\mathrm{o}_P(K \rho_n),
\end{align*}
where the inequalities follow by the Cauchy-Schwarz inequality and the last step is by \eqref{eq:linear_assmp_2}. Further, since $m_k \geq 2$ and $\rho_n \leq \beta_{1, n}\beta_{2, n}$, we can conclude 
\[
\sum_{k=1}^K \left(S_k^{(12)} + \dfrac{m_k - 1}{m_k}\beta_{1,n}\beta_{2,n}\right)= \left( \sum_{k=1}^K \frac{m_k - 1}{m_k} \beta_{1, n} \beta_{2, n} \right)\cdot (1+\mathrm{o}_P(1)).
\]
Next, by~\eqref{eq:bin_size_assmp} and~\eqref{eq:linear_assmp_1}, we have that
\begin{multline}\label{eq:neg_denom_1}
  \sum_{k=1}^K \frac{S_k^{(1)}}{m_k} \leq \frac{1}{\min_{k\in [K]} m_k} \sum_{k=1}^K S_k^{(1)} = \mathrm{O}_P\!\left( \frac{1}{\max_{k\in [K]} m_k} \sum_{k=1}^K S_k^{(1)}\right) \\
  = \mathrm{O}_P\!\left( \frac{K}{\max_{k\in [K]} m_k} \cdot \frac{1}{K} \sum_{k=1}^K S_k^{(1)}\right) = \mathrm{o}_P\!\left( \frac{K}{\max_{k\in [K]} m_k}\right) = \mathrm{o}_P \left( \sum_{k=1}^K \frac{1}{m_k} \right).
\end{multline}
Similarly, we obtain that
\[
  \sum_{k=1}^K \frac{S_k^{(2)}}{m_k} = \mathrm{o}_P \left( \sum_{k=1}^K \frac{1}{m_k} \right).
\]
Next, we have that by Cauchy--Schwarz inequality,
\begin{equation}\label{eq:neg_denom_2}
  \sum_{k=1}^K \frac{|S_k^{(12)}|}{m_k} \leq \sum_{k=1}^K \frac{\sqrt{S_k^{(1)} S_k^{(2)}}}{m_k} \leq \sqrt{\sum_{k=1}^K \frac{S_k^{(1)}}{m_k}} \cdot \sqrt{\sum_{k=1}^K \frac{S_k^{(2)}}{m_k}} = \mathrm{o}_P\!\left(\sum_{k=1}^K \frac{ 1}{m_k}\right).
\end{equation}
Finally, we have that
\[
\dfrac{(m_k - 1)(\beta_{1,n}^2 + 1)(\beta_{2,n}^2 + 1)(1 + \rho_n^2)}{m_k^2}\gtrsim \max\left\{\frac{(\beta_{1,n}^2 + 1)}{m_k},\frac{(\beta_{2,n}^2 + 1)}{m_k}\right\},
\]
which implies that 
\begin{multline*}
    \sum_{k=1}^K \left(\dfrac{(\beta_{2,n}^2 + 1)\,S_k^{(1)} + (\beta_{1,n}^2 + 1)\,S_k^{(2)} + 2\beta_{1,n}\beta_{2,n}\,S_k^{(12)}}{m_k} + \dfrac{(m_k - 1)(\beta_{1,n}^2 + 1)(\beta_{2,n}^2 + 1)(1 + \rho_n^2)}{m_k^2}\right)\\
    =\left(\sum_{k=1}^K \dfrac{(m_k - 1)(\beta_{1,n}^2 + 1)(\beta_{2,n}^2 + 1)(1 + \rho_n^2)}{m_k^2}\right)\cdot (1+\mathrm{o}_P(1)).
\end{multline*}
Combining all the parts, we have
\begin{align}\label{eq:SNR_LPT_final}
  \mathrm{SNR_{LPT}} &= \frac{\beta_{1,n}\beta_{2,n}\sum_{k=1}^K \frac{m_k - 1}{m_k}}{\sqrt{(\beta_{1,n}^2 + 1)(\beta_{2,n}^2 + 1)(1 + \rho_n^2)\sum_{k=1}^K \frac{m_k - 1}{m_k^2}}} \cdot \left( 1 + \mathrm{o}_P(1) \right)\notag \\  &=  
  \frac{\sum_{k=1}^K \frac{m_k - 1}{m_k}}{\sqrt{\sum_{k=1}^K \frac{m_k - 1}{m_k^2}}} \cdot \frac{\rho_n}{\sqrt{1 + \rho_n^2}} \cdot \left( 1 + \mathrm{o}_P(1) \right),
\end{align}
as desired.

\paragraph{Aggregate Lyapunov estimates and the power conclusion.}
We now verify, in order, the aggregate Lyapunov condition
(Assumption~\ref{assn:T_bdd_kurtosis}), variance comparability
(Assumption~\ref{assn:T_cond_var_nontrivial}), and the relevant branch of
Assumption~\ref{assn:T_local_alternative}. This will permit a direct
application of Theorem~\ref{cor:nnpt_power} in the bounded-signal regime.
The diverging-signal regime will be handled by a one-sided concentration
argument, for which no fourth-moment bound on the permutation statistic is
needed.

For compactness, define
\[
  V_n=\sum_{k=1}^K\Var(T_k\mid\mathbf Z),\qquad
  V_n^\pi=\sum_{k=1}^K\Var(\sigma_k(T_k)\mid\mathbf Z),
\]
and
\[
  M_{4,n}=\sum_{k=1}^K\E[(T_k-\E[T_k\mid\mathbf Z])^4\mid\mathbf Z],
  \qquad
  M_{4,n}^\pi=\sum_{k=1}^K\E[\sigma_k(T_k)^4\mid\mathbf Z].
\]
Let $\bar m=n/K$; condition~\eqref{eq:bin_size_assmp} implies that all $m_k$ are of common order $\bar m$.

For $i\in B_k$, put
\[
  a_i=X_i-\overline{\mathbf X}_k,\qquad b_i=Y_i-\overline{\mathbf Y}_k,
\]
and define
\[
  Q_{X,k}=\frac1{m_k}\sum_{i\in B_k}a_i^2,\quad
  Q_{Y,k}=\frac1{m_k}\sum_{i\in B_k}b_i^2,\quad
  R_{X,k}=\frac1{m_k}\sum_{i\in B_k}a_i^4,\quad
  R_{Y,k}=\frac1{m_k}\sum_{i\in B_k}b_i^4.
\]
Lemmas~\ref{lem:unif_perm_var} and~\ref{lem:unif_perm_fourth} give, conditionally on the data,
\begin{align}
  \E_\sigma[\sigma_k(T_k)^2\mid\mathbf X,\mathbf Y,\mathbf Z]
  &=\frac{Q_{X,k}Q_{Y,k}}{m_k-1},\label{eq:perm_second_Q}\\
  \E_\sigma[\sigma_k(T_k)^4\mid\mathbf X,\mathbf Y,\mathbf Z]
  &\lesssim \frac{Q_{X,k}^2Q_{Y,k}^2}{m_k^2}
  +\frac{R_{X,k}R_{Y,k}}{m_k^3}.\label{eq:perm_fourth_QR}
\end{align}
These identities are the starting point for checking the permutation half of
Assumption~\ref{assn:T_bdd_kurtosis}.
Evaluating the expectation in \eqref{eq:perm_second_Q} gives the exact identity
\begin{multline}\label{eq:var_perm}
  \Var(\sigma_k(T_k)\mid\mathbf Z)
  =\frac{S_k^{(1)}S_k^{(2)}}{m_k-1}
   +\frac{(\beta_{2,n}^2+1)S_k^{(1)}+(\beta_{1,n}^2+1)S_k^{(2)}}{m_k}\\
   +\frac{4\beta_{1,n}\beta_{2,n}S_k^{(12)}}{m_k(m_k-1)}
   +\frac{(m_k-1)(\beta_{1,n}^2+1)(\beta_{2,n}^2+1)}{m_k^2}
   +\frac{2\beta_{1,n}^2\beta_{2,n}^2}{m_k^2}.
\end{multline}
The Gaussian moment bounds used above, Cauchy--Schwarz, and the deterministic inequality
\[
  \frac1{m_k}\sum_{i\in B_k}
  \bigl(f_j(Z_i)-\overline{f_j(\mathbf Z)}_k\bigr)^4
  \leq m_k(S_k^{(j)})^2
\]
imply
\begin{align}
  \E[Q_{X,k}^2Q_{Y,k}^2\mid\mathbf Z]
  &\lesssim \bigl((1+S_k^{(1)})(1+S_k^{(2)})\bigr)^2,\label{eq:Q_bound}\\
  \E[R_{X,k}R_{Y,k}\mid\mathbf Z]
  &\lesssim \bigl(1+m_k(S_k^{(1)})^2\bigr)
                    \bigl(1+m_k(S_k^{(2)})^2\bigr).\label{eq:R_bound}
\end{align}
For \eqref{eq:R_bound}, expand each centered observation into its deterministic within-bin drift and centered Gaussian part, use $(x+y)^4\lesssim x^4+y^4$, and then apply the displayed deterministic inequality. This argument controls the mixed product directly and does not require a uniform bound on deterministic coordinate leverage.

The exact representation~\eqref{eq:perm_second_Q} also supplies a noise floor without discarding the possibly negative $S_k^{(12)}$ term in~\eqref{eq:var_perm}. Indeed, conditionally on the shared Gaussian signals $\{U_i:i\in B_k\}$, the two sample variances are independent over the noise variables and each has conditional expectation at least $(m_k-1)/m_k$. Consequently,
\begin{equation}\label{eq:perm_variance_floor}
  V_n^\pi=\sum_{k=1}^K\frac{\E[Q_{X,k}Q_{Y,k}\mid\mathbf Z]}{m_k-1}
  \gtrsim\sum_{k=1}^K\frac1{m_k}\asymp\frac K{\bar m}.
\end{equation}

Let
\[
  a_k=(1+S_k^{(1)})(1+S_k^{(2)}),
  \qquad c_k=S_k^{(1)}S_k^{(2)}.
\]
Combining \eqref{eq:perm_fourth_QR}--\eqref{eq:perm_variance_floor} and using comparable bin sizes yields
\begin{equation}\label{eq:aggregate_lyapunov_reduction}
  \frac{M_{4,n}^\pi}{(V_n^\pi)^2}
  \lesssim
  \frac{\sum_k a_k^2}{K^2}
  +\frac{\sum_k(S_k^{(1)})^2+\sum_k(S_k^{(2)})^2}{K^2}
  +\frac{\bar m\sum_kc_k^2}{K^2}
  +\frac1{K\bar m}.
\end{equation}
To read this bound in terms of the stated assumptions, note first that
$a_k^2\lesssim1+(S_k^{(1)})^2+(S_k^{(2)})^2+c_k^2$. By nonnegativity and
\eqref{eq:linear_assmp_1},
\[
  \frac{1}{K^2}\sum_k(S_k^{(j)})^2
  \leq\left(\frac1K\sum_kS_k^{(j)}\right)^2
  =\mathrm{o}_P(1),\qquad j=1,2.
\]
Because $K\to\infty$ and $\bar m\geq2$, every term on the right-hand side
of~\eqref{eq:aggregate_lyapunov_reduction}, except possibly the term involving
$\bar m\sum_kc_k^2/K^2$, is therefore either $\mathrm{o}_P(1)$ or bounded by
that remaining term. For that term,
\[
  \sum_kc_k=\mathrm{o}_P(K),\qquad
  \max_kc_k
  \leq\left(\sum_kS_k^{(1)}\right)\left(\sum_kS_k^{(2)}\right)
  =\mathrm{o}_P(K^2\rho_n^2),
\]
by \eqref{eq:linear_assmp_1} and \eqref{eq:linear_assmp_2}. Hence
\[
  \frac{\bar m\sum_kc_k^2}{K^2}
  \leq\frac{\bar m(\max_kc_k)\sum_kc_k}{K^2}
  =\mathrm{o}_P(K\bar m\rho_n^2).
\]
On every subsequence on which $\mathrm{SNR}_{\mathrm{LPT}}=\mathrm{O}_P(1)$, equation~\eqref{eq:SNR_LPT_final} gives $K\bar m\rho_n^2=\mathrm{O}_P(1)$, so \eqref{eq:aggregate_lyapunov_reduction} is $\mathrm{o}_P(1)$.
Thus
\[
  \frac{M_{4,n}^\pi}{(V_n^\pi)^2}=\mathrm{o}_P(1),
\]
which is exactly the permutation half of
Assumption~\ref{assn:T_bdd_kurtosis}.

For the observed half of that assumption, Gaussian hypercontractivity gives
\[
  \E[(T_k-\E[T_k\mid\mathbf Z])^4\mid\mathbf Z]
  \lesssim\Var(T_k\mid\mathbf Z)^2.
\]
Equation~\eqref{eq:var_nonperm}, comparable bin sizes, and \eqref{eq:linear_assmp_1} imply
\[
  \frac{\max_k\Var(T_k\mid\mathbf Z)}{V_n}=\mathrm{o}_P(1),
\]
and therefore $M_{4,n}/V_n^2=\mathrm{o}_P(1)$.
This verifies the observed half of Assumption~\ref{assn:T_bdd_kurtosis}.

We next check Assumption~\ref{assn:T_cond_var_nontrivial}. Summing
\eqref{eq:var_nonperm} and \eqref{eq:var_perm}, with the negligible terms
controlled by \eqref{eq:neg_denom_1} and \eqref{eq:neg_denom_2}, gives
\begin{align*}
  V_n
  &=(\beta_{1,n}^2+1)(\beta_{2,n}^2+1)(1+\rho_n^2)
    \sum_{k=1}^K\frac{m_k-1}{m_k^2}\,(1+\mathrm{o}_P(1)),\\
  V_n^\pi
  &=(\beta_{1,n}^2+1)(\beta_{2,n}^2+1)
    \left(\sum_{k=1}^K\frac{m_k-1}{m_k^2}
          +2\rho_n^2\sum_{k=1}^K\frac1{m_k^2}\right)(1+\mathrm{o}_P(1)).
\end{align*}
Thus $V_n^\pi/V_n=\mathrm{O}_P(1)$, which is precisely
Assumption~\ref{assn:T_cond_var_nontrivial}. Moreover, this ratio is
$1+\mathrm{o}_P(1)$ whenever $\rho_n\to0$.

It remains to check Assumption~\ref{assn:T_local_alternative} and conclude the
power calculation. First suppose
$\mathrm{SNR}_{\mathrm{LPT}}=\mathrm{O}_P(1)$. Equation
\eqref{eq:SNR_LPT_final} and $K\bar m\to\infty$ imply $\rho_n\to0$, so the
variance ratio above is $1+\mathrm{o}_P(1)$. This is the second branch of
Assumption~\ref{assn:T_local_alternative}. All assumptions of
Theorem~\ref{cor:nnpt_power} have now been verified, and that theorem gives
\[
  \mathbb E[\phi_{\mathrm{LPT}}\mid\mathbf Z]
  =\Phi\!\left(\Phi^{-1}(\alpha)+\mathrm{SNR}_{\mathrm{LPT}}\right)
   +\mathrm{o}_P(1).
\]

Finally, suppose $\mathrm{SNR}_{\mathrm{LPT}}\to\infty$, which is the first
branch of Assumption~\ref{assn:T_local_alternative}. The permutation Lyapunov
ratio need not be controlled in this regime, because power can be proved
directly. Write
$\mu_n=\sum_k\E[T_k\mid\mathbf Z]$. Chebyshev's inequality gives
$\sum_kT_k=\mu_n(1+\mathrm{o}_P(1))$. Conditional on the data, the permutation statistic is centered. On the event $\sum_kT_k>0$, its p-value therefore satisfies
\[
  p\leq
  \frac{\widehat V_n^\pi}{(\sum_kT_k)^2},
  \qquad
  \widehat V_n^\pi
  =\sum_{k=1}^K\E_\sigma[\sigma_k(T_k)^2\mid\mathbf X,\mathbf Y,\mathbf Z].
\]
Since $\E[\widehat V_n^\pi\mid\mathbf Z]=V_n^\pi$ and $V_n^\pi/V_n=\mathrm{O}_P(1)$, Markov's inequality gives $p\to0$ in probability. Hence power tends to one, which agrees with the asserted Gaussian-CDF expression as its argument tends to $+\infty$.

\qed

\section{Proofs from Appendix \ref{sec:additional} and other technical lemmas}\label{sec:app_tech}

\subsection{Proof of Theorem \ref{thm:validity_hellinger}}

We just need to prove a corresponding version of Lemmas \ref{lem:initial_tv_bound} and \ref{lem:bin_tv_bound} for the square Hellinger distance; these will be Lemmas \ref{lem:initial_h2_bound} and \ref{lem:bin_h2_bound} respectively.

\begin{lemma}\label{lem:initial_h2_bound}
  Under the notation of Lemmas \ref{lem:initial_tv_bound} and \ref{lem:bin_tv_bound}, the Type I error of adaptive-LPT, conditional on $\bZ$, is bounded as
  \[
  \mathbb P \left( p\leq \alpha \mid \mathbf Z \right) \leq \alpha + \left( 2\sum_{k=1}^K \mathrm d_{\mathrm H}^2(P_k(\bZ), P_k^*(\bZ)) \right)^{1/2}.
  \]
\end{lemma}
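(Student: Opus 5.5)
The plan is to reuse the proof of Lemma~\ref{lem:initial_tv_bound} essentially verbatim, changing only the final step where the product total variation distance is bounded. There are three steps.

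\emph{Step 1: validity under the idealized law.} As before, when $(\bX,\bY)\mid\bZ\sim\otimes_{k=1}^K P_k^*(\bZ)$ we have $(\bX,\bY)\eqd(\bX_\sigma,\bY)$ for every $\sigma\in\Pi$. The exchangeability argument then gives $\P^*(p\le\alpha\mid\bZ)\le\alpha$, using the deterministic inequality of \citet{harrison2012conservative}.

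\emph{Step 2: transfer to the true law.} By bin-symmetry of $T$, the law of $p$ under the true data distribution coincides with its law under $\otimes_k P_k(\bZ)$. Since $p\le\alpha$ is an event, we get
\[
\P(p\le\alpha\mid\bZ)\le\alpha+\dtv\Big(\otimes_k P_k(\bZ),\otimes_k P_k^*(\bZ)\Big).
\]

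\emph{Step 3: pass from total variation to Hellinger.} I will use the standard comparison $\dtv(P,Q)\le \sqrt2\,\mathrm d_{\mathrm H}(P,Q)$, valid under the normalization $\mathrm d_{\mathrm H}^2=\tfrac12\int(\sqrt p-\sqrt q)^2$. To check it, write $|p-q|=|\sqrt p-\sqrt q|(\sqrt p+\sqrt q)$ and apply Cauchy--Schwarz:
\[
\tfrac12\int|p-q|\le \tfrac12\Big(2\,\mathrm d_{\mathrm H}^2\Big)^{1/2}\Big(\int(\sqrt p+\sqrt q)^2\Big)^{1/2}\le \tfrac12\sqrt{2}\,\mathrm d_{\mathrm H}\cdot 2.
\]
Next, Hellinger distance subadditively tensorizes through the affinity. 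Writing $1-\mathrm d_{\mathrm H}^2(\otimes P_k,\otimes Q_k)=\prod_k(1-\mathrm d_{\mathrm H}^2(P_k,Q_k))$ and using $1-\prod_k(1-x_k)\le\sum_k x_k$ for $x_k\in[0,1]$, we obtain
\[
\mathrm d_{\mathrm H}^2\Big(\otimes P_k,\otimes Q_k\Big)\le\sum_k \mathrm d_{\mathrm H}^2(P_k,Q_k).
\]
Combining these two facts gives
\[
\dtv\le\Big(2\sum_k \mathrm d_{\mathrm H}^2(P_k(\bZ),P_k^*(\bZ))\Big)^{1/2},
\]
which is exactly the claim.

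The only point needing care is the constant: with the paper's definition $\mathrm d_{\mathrm H,\gamma}$ at $\gamma=2$, we indeed have $\mathrm d_{\mathrm H}^2=\tfrac12\int(\sqrt p-\sqrt q)^2$, so the constant $2$ comes out as stated. I expect no real obstacle beyond confirming this normalization and the independence structure across bins. That structure holds because the $P_k$ and $P_k^*$ are conditional laws of disjoint blocks given $\bZ$, so the product form used in Step 3 is legitimate.
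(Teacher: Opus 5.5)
Your proposal is correct and follows the paper's proof. You reuse the argument of Lemma~\ref{lem:initial_tv_bound} up to the bound by $\dtv\bigl(\otimes_k P_k(\bZ),\otimes_k P_k^*(\bZ)\bigr)$, then apply $\dtv\le\sqrt2\,\mathrm d_{\mathrm H}$ and the subadditive tensorization of squared Hellinger distance. Your version also writes out that tensorization step, via the affinity and $1-\prod_k(1-x_k)\le\sum_k x_k$, which the paper leaves implicit.
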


\begin{proof}
  Examining the final display in the proof of Lemma \ref{lem:initial_tv_bound}, using $\mathrm d_{\mathrm{TV}}(P,Q)\leq\sqrt{2}\,\mathrm d_{\mathrm H}(P,Q)$ under our normalization, gives
  \begin{align*}
    \P(p \leq \alpha \mid \mathbf Z)
    &\leq \alpha + \dtv \left( \bigotimes_{k=1}^K P_k(\mathbf Z), \bigotimes_{k=1}^K P_k^*(\mathbf Z) \right) \\
    &\leq \alpha + \sqrt{2}\,\mathrm d_{\mathrm H} \left( \bigotimes_{k=1}^K P_k(\mathbf Z), \bigotimes_{k=1}^K P_k^*(\mathbf Z) \right) \\
    &\leq \alpha + \left(2\sum_{k=1}^K \mathrm d_{\mathrm H}^2(P_k(\bZ), P^*_k(\bZ)) \right)^{1/2}.
  \end{align*}
\end{proof}

\begin{lemma}\label{lem:bin_h2_bound}
  Under the notation of Lemmas \ref{lem:initial_tv_bound} and \ref{lem:bin_tv_bound}, it holds that for each $k\in[K]$,
  \[
    \mathrm d_{\mathrm H}^2(P_k(\bZ), P_k^*(\bZ)) \leq 4(m_k-1)\cdot \left( \max_{i, j \in B_k} \mathrm d_{\mathrm H}^2(P_{X \mid Z_i}, P_{X \mid Z_j}) \right) \left( \max_{i, j \in B_k} \mathrm d_{\mathrm H}^2(P_{Y \mid Z_i}, P_{Y \mid Z_j}) \right).
  \]
\end{lemma}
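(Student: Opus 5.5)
The plan mirrors the proof of Lemma~\ref{lem:bin_tv_bound}: reduce to a single transposition, then chain over the $L\le m_k-1$ transpositions. The chaining step needs a genuinely new argument, because naive chaining in Hellinger distance is too lossy. Fix a bin and work conditionally on $\bZ$, writing $\epsilon_X^H=\max_{i,j\in B_k}\mathrm d_{\mathrm H}(P_{X\mid Z_i},P_{X\mid Z_j})$ and similarly $\epsilon_Y^H$.

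First, as in Lemma~\ref{lem:dtv_product}, use $(\sigma,\sigma\circ\pi)\eqd(\sigma,\sigma')$. This gives $P_k^*(\bZ)$ as the average over $\pi\sim\textnormal{Unif}(\cS_{m_k})$ of the laws $Q_\pi$ of $(X_\sigma,Y_{\sigma\circ\pi})$. Since $\mathrm d_{\mathrm H}^2$ is jointly convex, it suffices to bound $\mathrm d_{\mathrm H}^2(Q_{\mathrm{Id}},Q_\pi)$ for each fixed $\pi$.

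Second, I will prove the Hellinger analogue of Lemma~\ref{lem:dtv_product_swap}:
\[
\mathrm d_{\mathrm H}^2\!\left(\tfrac12(P\times Q+P'\times Q'),\tfrac12(P\times Q'+P'\times Q)\right)\le 2\,\mathrm d_{\mathrm H}^2(P,P')\,\mathrm d_{\mathrm H}^2(Q,Q').
\]
The route is to write the two mixtures' densities as $\tfrac12(pq+p'q')$ and $\tfrac12(pq'+p'q)$. Their difference is $\tfrac12(p-p')(q-q')$, a rank-one product. Use $(\sqrt a-\sqrt b)^2\le (a-b)^2/(a+b)$, with $a+b=\tfrac12(p+p')(q+q')$. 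The integral then factorizes into $\int\frac{(p-p')^2}{p+p'}\cdot\int\frac{(q-q')^2}{q+q'}$. Each factor is a triangular discrimination, which is bounded by a constant times $\mathrm d_{\mathrm H}^2$. Applying this within the transposition argument of Lemma~\ref{lem:dtv_transposition}, together with joint convexity over $\sigma$, gives a bound $C(\epsilon_X^H\epsilon_Y^H)^2$ per transposition. The factor $2$ from replacing $P_{X,i}\times P_{X,j}$ by $P_{X,j}\times P_{X,i}$ is handled by tensorization: $\mathrm d_{\mathrm H}^2$ of a product is at most the sum of the coordinatewise $\mathrm d_{\mathrm H}^2$. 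The constants are then tracked so that they combine to $4$.

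The main obstacle is chaining over the $L\le m_k-1$ transpositions. Hellinger distance satisfies the triangle inequality, but $\mathrm d_{\mathrm H}^2$ does not, so a naive chain gives $(m_k-1)^2$ rather than $(m_k-1)$.

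To avoid this, I would not decompose a fixed $\pi$. Instead I would generate the uniform $\pi$ sequentially, using the insertion (Fisher--Yates) construction: at stage $\ell$, either keep the current arrangement or swap position $\ell$ with a uniformly chosen earlier position. Each stage is then a Markov kernel that mixes the current law with transposed versions of it. The chain rule/subadditivity of $\mathrm d_{\mathrm H}^2$ along a sequence of kernels applied to a common initial law, namely the coupling $\mathrm d_{\mathrm H}^2$ bound $\sum_\ell\mathbb E\,\mathrm d_{\mathrm H}^2(\text{stage-}\ell\text{ kernels})$, together with the per-transposition bound and joint convexity, should then add the $m_k-1$ stage contributions linearly.

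If that subadditivity step fails to go through cleanly, the fallback is weaker. One can apply the triangle inequality for $\mathrm d_{\mathrm H}$ to random-$\pi$ averages. Alternatively, one can accept a bound of the stated form after passing through the affinity $1-\mathrm d_{\mathrm H}^2$, using its multiplicativity across the $m_k-1$ independent stages, since $1-\prod_\ell(1-x_\ell)\le\sum_\ell x_\ell$.
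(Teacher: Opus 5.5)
Your proposal has a genuine gap at the step you flag as the main obstacle: getting a factor $m_k-1$ rather than $(m_k-1)^2$. The ``chain rule/subadditivity of $\mathrm d_{\mathrm H}^2$ along a sequence of kernels'' does not deliver this. A chain rule of that type bounds the divergence between the \emph{path} laws of two processes by a sum of expected divergences between the two kernels, evaluated at a common point. Here one kernel is the identity and the other randomly rearranges the $Y$-coordinates. Each term is therefore $\mathrm d_{\mathrm H}^2(\delta_z,K_\ell(z,\cdot))$, which is of constant order: it equals $1-\ell^{-1/2}$ when the $Y$-values are distinct. The smallness $8\Delta_X\Delta_Y$ of a single transposition appears only after integrating over the law of $(\mathbf X_k,\mathbf Y_k)$, i.e.\ on the final marginals, never pointwise along paths.

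If instead you mean summing $\mathrm d_{\mathrm H}^2$ between consecutive marginals $P_{\ell-1}$ and $P_\ell=P_{\ell-1}K_\ell$, then the claimed inequality is exactly subadditivity of squared Hellinger distance. That is false in general; the triangle inequality for $\mathrm d_{\mathrm H}$ plus Cauchy--Schwarz only gives $(m_k-1)\sum_\ell \mathrm d_{\mathrm H}^2(P_{\ell-1},P_\ell)$.

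The affinity fallback also fails. The identity $1-\mathrm d_{\mathrm H}^2=\prod(\cdot)$ holds only for tensor products. The Fisher--Yates stages compose on one and the same vector rather than acting on independent coordinates, so neither $P_k(\bZ)$ nor $P_k^*(\bZ)$ factorizes over stages.

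A secondary point concerns the constant. Your route to the swap inequality, via $(\sqrt a-\sqrt b)^2\le (a-b)^2/(a+b)$ and triangular discrimination $\le 4\,\mathrm d_{\mathrm H}^2$, gives constant $4$, not $2$. Even a perfect linear chain of $m_k-1$ transpositions at $8\Delta_X\Delta_Y$ each would give $8(m_k-1)$, so the claim that the constants ``combine to $4$'' is unsupported.

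The missing idea is to trade squared Hellinger distance, which does not add, for a variance in a Hilbert space, which does satisfy Poincar\'e-type inequalities. Take a permutation-invariant dominating measure $\mu$ and square-root densities $f_\pi=\sqrt{\mathsf{d}R_\pi/\mathsf{d}\mu}$, where $R_\pi=\mathcal L(X_\sigma,Y_{\sigma\circ\pi})$. After your (correct) joint-convexity step, invariance of $\mu$ gives $\mathbb E_\pi[\mathrm d_{\mathrm H}^2(R_{\mathrm{Id}},R_\pi)]=\mathbb E_\pi\|f_\pi-\bar f\|^2_{L_2(\mu)}$.

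Now apply the spectral-gap inequality for the random-transposition walk on $\cS_m$, namely $\Var(f(\sigma))\le\frac{m-1}{4}\mathbb E[(f(\sigma)-f(\sigma\circ\tau))^2]$, pointwise in $(x,y)$ and integrate. This bounds the variance by $\frac{m-1}{2}\mathbb E_\tau[\mathrm d_{\mathrm H}^2(R_{\mathrm{Id}},R_\tau)]\le\frac{m-1}{2}\cdot 8\Delta_X\Delta_Y$. That is the source of both the linear dependence and the constant $4$ in the paper.

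Your Fisher--Yates idea can be salvaged in this same picture. Apply an Efron--Stein inequality to the $L_2(\mu)$-valued function of the independent stages: resampling one stage changes $\pi$ by a conjugate of a transposition or a $3$-cycle. This yields linear growth in $m_k-1$, but with a constant worse than $4$.
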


\begin{proof}
  Fix a bin of size $m=m_k$. All laws and expectations below are conditional on $\mathbf Z$; expectations indexed by permutations are only over the displayed auxiliary permutation randomness.
  Put
  \[
    \Delta_X=\max_{i,j\in B_k}\mathrm d_{\mathrm H}^2(P_{X\mid Z_i},P_{X\mid Z_j}),
    \qquad
    \Delta_Y=\max_{i,j\in B_k}\mathrm d_{\mathrm H}^2(P_{Y\mid Z_i},P_{Y\mid Z_j}).
  \]
  For each relative permutation $\pi\in\mathcal S_m$, define
  \[
    R_\pi:=\mathcal L(X_\sigma,Y_{\sigma\circ\pi}\mid\mathbf Z),
    \qquad \sigma\sim\mathrm{Unif}(\mathcal S_m)
  \]
  where $\mathcal L(X_\sigma, Y_{\sigma \circ \pi} \mid \mathbf Z)$ notates the law of $(X_\sigma, Y_{\sigma \circ \pi})$ conditional on $\mathbf Z$, so that  $R_{\mathrm{Id}}=P_k(\mathbf Z)$ and $P_k^*(\mathbf Z)=\frac{1}{m!}\sum_{\pi}R_\pi$. Choose a permutation-invariant dominating measure $\mu$ (which may depend on the fixed $\mathbf Z$), write $f_\pi=\sqrt{\mathsf{d}R_\pi/\mathsf{d}\mu}\in L^2(\mu)$, and set $\bar f=\mathbb E_\pi[f_\pi]$. Notice that
  \[
    \frac{\mathsf{d}P_k^*(\mathbf Z)}{\mathsf{d}\mu}=\mathbb E_\pi[f_\pi^2],
    \qquad
    \sqrt{\frac{\mathsf{d}P_k^*(\mathbf Z)}{\mathsf{d}\mu}}
    =\sqrt{\mathbb E_\pi[f_\pi^2]},
  \]
  which gives
    \[
    \mathrm d_{\mathrm H}^2(P_k(\mathbf Z),P_k^*(\mathbf Z))
    = \mathrm d_{\mathrm H}^2(R_{\mathrm{Id}}, \mathbb E_\pi [R_\pi])
    \leq \mathbb E_\pi\left[
      \mathrm d_{\mathrm H}^2(R_{\mathrm{Id}},R_\pi)
    \right].
  \]
    Moreover, we have
  \[
    \mathrm d_{\mathrm H}^2(R_{\mathrm{Id}}, R_\pi) = \frac{1}{2} \left\| f_{\mathrm Id} - f_\pi \right\|_{L_2(\mu)}^2 
  \]
  so we can compute
  \begin{equation}\label{eq:dH_l2_eq}
   \mathbb E_\pi \left[ \mathrm d_{\mathrm H}^2(R_{\mathrm{Id}}, R_\pi) \right] = \frac{1}{2} \mathbb E_\pi \left[ \left\| f_{\mathrm Id} - f_\pi \right\|_{L_2(\mu)}^2 \right]
    =\frac{1}{2}\mathbb E_{\pi,\pi'}\left[\left\|f_\pi-f_{\pi'}\right\|_{L_2(\mu)}^2 \right]
    = \mathbb E_\pi\left[\left\|f_\pi-\bar f\right\|_{L_2(\mu)}^2\right]
  \end{equation}
  The last equality is a basic variance identity. The second equality holds because the dominating measure $\mu$ is symmetric under permutations of $Y$ by construction, so that performing a common permutation on the integrand leaves the $L_2(\mu)$ norm unchanged:
  \[
    \left\|f_\pi-f_{\pi'}\right\|_{L_2(\mu)}^2 = 
    \left\|f_{\pi \circ \pi^{-1}} -f_{\pi' \circ \pi^{-1}}\right\|_{L_2(\mu)}^2 =
    \left\|f_{\mathrm{Id}} - f_{\pi' \circ \pi^{-1}}\right\|_{L_2(\mu)}^2 .
  \]
  But since $\pi$ and $\pi'$ are independent and uniform on $\mathcal S_m$, $\pi' \circ \pi^{-1}$ is also uniform on $\mathcal S_m$, giving us the equality in expectation in the second equality.

  Combining this identity with the previous bound gives
  \[
    \mathrm d_{\mathrm H}^2(P_k(\mathbf Z),P_k^*(\mathbf Z))
    \leq\mathbb E_\pi\left[
      \left\|f_\pi-\bar f\right\|_{L_2(\mu)}^2
    \right].
  \]
  Interchanging the order of integration yields
  \[
    \mathbb E_\pi\left[
      \left\|f_\pi-\bar f\right\|_{L_2(\mu)}^2
    \right]
    = \mathbb E_\pi \left[ \int  (f_\pi(x, y) - \bar f(x, y))^2  \mathsf{d}\mu(x, y) \right]
    = \int \mathbb E_\pi \left[ (f_\pi(x, y) - \bar f(x, y))^2 \right] \mathsf{d}\mu(x, y).
  \]
  Now, Lemma~\ref{lem:permutation_variance}, which we defer as a technical lemma, establishes the following inequality controlling the inner expectation:
  \[
    \mathbb E_\pi\left[
      \left(f_\pi(w)-\bar f(w)\right)^2
    \right]
    \leq \frac{m-1}{4}\,
    \mathbb E_{\pi,\tau}\left[
      \left(f_\pi(w)-f_{\pi\circ\tau}(w)\right)^2
    \right].
  \]
  where $\tau$ is a uniformly random transposition. This inequality has two desiderata: both introducing the desired relation to the bin size $m$ and simplifying the expectation. After swapping back the order of integration,
  \[
    \mathbb E_\pi\left[\left\|f_\pi-\bar f\right\|_{L_2(\mu)}^2\right]
    \leq \frac{m-1}{4}\,
    \mathbb E_{\pi,\tau}\left[\left\|f_\pi-f_{\pi\circ\tau}\right\|_{L_2(\mu)}^2\right] =\frac{m-1}{2}\,
    \mathbb E_\tau\left[\mathrm d_{\mathrm H}^2(R_{\mathrm{Id}},R_\tau)\right],
  \]
  where the final equality uses $\|f_\pi-f_{\pi\circ\tau}\|_{L_2(\mu)}^2 =2\mathrm d_{\mathrm H}^2(R_\pi,R_{\pi\circ\tau})$ by the same argument used to establish \eqref{eq:dH_l2_eq}. Consequently,
  \[
    \mathrm d_{\mathrm H}^2(P_k(\mathbf Z),P_k^*(\mathbf Z))
    \leq \frac{m-1}{2}\,\mathbb E_\tau\left[\mathrm d_{\mathrm H}^2(R_{\mathrm{Id}},R_\tau)\right].
  \]

  It remains to bound the effect of one transposition, which proceeds similarly as in Lemma \ref{lem:dtv_transposition}. For $\eta\in\mathcal S_m$, write
  \[
    P^X_\eta:=\mathcal L(X_\eta\mid\mathbf Z),
    \qquad
    P^Y_\eta:=\mathcal L(Y_\eta\mid\mathbf Z).
  \]
  Fix a transposition $\tau$, let $\kappa$ be uniform on $\mathcal S_m$, and independently let $\nu\sim\tfrac12\delta_{\mathrm{Id}}+\tfrac12\delta_\tau$. The auxiliary variable $\nu$ simply chooses (as in Lemma \ref{lem:dtv_transposition}), with equal probability, between the two permutations $\kappa$ and $\kappa\circ\tau$. Since $\kappa\circ\nu$ is uniform on $\mathcal S_m$, averaging first over $\nu$ gives the exact representations
  \begin{align*}
    R_{\mathrm{Id}}
    &=\mathbb E_\kappa\left[
      \tfrac12\left(
        P^X_\kappa\times P^Y_\kappa
        +P^X_{\kappa\circ\tau}\times P^Y_{\kappa\circ\tau}
      \right)
    \right],\\
    R_\tau
    &=\mathbb E_\kappa\left[
      \tfrac12\left(
        P^X_\kappa\times P^Y_{\kappa\circ\tau}
        +P^X_{\kappa\circ\tau}\times P^Y_\kappa
      \right)
    \right].
  \end{align*}
  Joint convexity and Lemma~\ref{lem:dh2_product_swap} now apply to these full-vector laws and give
  \[
    \mathrm d_{\mathrm H}^2(R_{\mathrm{Id}},R_\tau)
    \leq 2\mathbb E_\kappa\left[
      \mathrm d_{\mathrm H}^2(P^X_\kappa,P^X_{\kappa\circ\tau})
      \mathrm d_{\mathrm H}^2(P^Y_\kappa,P^Y_{\kappa\circ\tau})
    \right].
  \]
  Suppose $\tau$ exchanges positions $a$ and $b$, and set
  $i=\kappa(a)$ and $j=\kappa(b)$. The two $X$-laws in the last
  display have identical factors outside positions $a$ and $b$; at
  those positions they have factors $P_{X\mid Z_i}\times P_{X\mid Z_j}$
  and $P_{X\mid Z_j}\times P_{X\mid Z_i}$. Multiplicativity of
  Hellinger affinity therefore gives
  \begin{align*}
    \mathrm d_{\mathrm H}^2(P^X_\kappa,P^X_{\kappa\circ\tau})
    &=1-\left(
      1-\mathrm d_{\mathrm H}^2(P_{X\mid Z_i},P_{X\mid Z_j})
    \right)^2\\
    &\leq 2\Delta_X.
  \end{align*}
  The same argument gives
  $\mathrm d_{\mathrm H}^2(P^Y_\kappa,P^Y_{\kappa\circ\tau})\leq2\Delta_Y$.
  Hence
  \[
    \mathrm d_{\mathrm H}^2(R_{\mathrm{Id}},R_\tau)
    \leq 8\Delta_X\Delta_Y.
  \]
  Combining this with the preceding random-transposition bound yields
  \[
    \mathrm d_{\mathrm H}^2(P_k(\mathbf Z),P_k^*(\mathbf Z))
    \leq4(m-1)\Delta_X\Delta_Y,
  \]
  as claimed. Together with Lemma~\ref{lem:initial_h2_bound} and
  $m_k-1\leq m_k$, this also proves Theorem~\ref{thm:validity_hellinger}.
\end{proof}

\begin{lemma}\label{lem:dh2_product_swap}
  Under the notation of Lemma \ref{lem:dtv_product_swap}, we have
  \[
    \mathrm d_{\mathrm H}^2\left(\frac{1}{2}(P\times Q + P'\times Q'), \frac{1}{2}(P\times Q' + P'\times Q)\right) \leq 2 \mathrm d_{\mathrm H}^2(P,P')\,\mathrm d_{\mathrm H}^2(Q,Q').
  \]
\end{lemma}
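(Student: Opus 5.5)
The plan is a pointwise computation on densities. The bound is then obtained by integrating a separable pointwise inequality, using the normalization $\mathrm d_{\mathrm H}^2(P,Q)=\frac12\int(\sqrt p-\sqrt q)^2\,\mathsf d\mu$ from the generalized Hellinger definition with $\gamma=2$. Fix a common dominating measure $\mu$ on $\mathcal X$ and $\nu$ on $\mathcal Y$, and write $a=\sqrt{p}$, $a'=\sqrt{p'}$, $b=\sqrt{q}$, $b'=\sqrt{q'}$ for the root-densities. The two mixtures have densities
\[
A=\tfrac12(a^2b^2+a'^2b'^2),\qquad B=\tfrac12(a^2b'^2+a'^2b^2).
\]
These satisfy the factorization $A-B=\tfrac12(a^2-a'^2)(b^2-b'^2)$, which is the source of the product structure (exactly as the common $(1-a)(1-b)$, $a(1-b)$, $(1-a)b$ terms cancelled in Lemma~\ref{lem:dtv_product_swap}).

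The first step is the identity $(\sqrt A-\sqrt B)^2=(A-B)^2/(\sqrt A+\sqrt B)^2$. This gives
\[
(\sqrt A-\sqrt B)^2=\frac{\tfrac14(a-a')^2(a+a')^2(b-b')^2(b+b')^2}{A+B+2\sqrt{AB}}.
\]
The naive lower bound $(\sqrt A+\sqrt B)^2\ge A+B$ only yields the constant $4$ instead of $2$; the disjoint-support case, where the left side equals $1$, shows this slack. I will therefore keep the cross term. By AM--GM, $A\ge aa'bb'$ and $B\ge aa'bb'$, so $\sqrt{AB}\ge aa'bb'$. Hence
\[
(\sqrt A+\sqrt B)^2\ \ge\ \tfrac12 ss'+2tt',
\]
where $s=a^2+a'^2$, $t=aa'$, $s'=b^2+b'^2$, $t'=bb'$.

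The key (and only delicate) step is to show that this denominator dominates $\tfrac14(a+a')^2(b+b')^2=\tfrac14(s+2t)(s'+2t')$. Expanding,
\[
\tfrac12 ss'+2tt'-\tfrac14(s+2t)(s'+2t')=\tfrac14(s-2t)(s'-2t').
\]
This is nonnegative because $s\ge 2t$ and $s'\ge 2t'$. It follows that, pointwise,
\[
(\sqrt A-\sqrt B)^2\le (a-a')^2(b-b')^2.
\]

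Finally I integrate against $\mu\times\nu$. The right-hand side factorizes, so
\[
\mathrm d_{\mathrm H}^2\le \tfrac12\int(a-a')^2\,\mathsf d\mu\int(b-b')^2\,\mathsf d\nu=\tfrac12\cdot 2\mathrm d_{\mathrm H}^2(P,P')\cdot 2\mathrm d_{\mathrm H}^2(Q,Q'),
\]
which is the claimed bound $2\,\mathrm d_{\mathrm H}^2(P,P')\,\mathrm d_{\mathrm H}^2(Q,Q')$. Points where $A+B=0$ contribute nothing and can be dropped. I expect the main obstacle to be exactly getting the constant $2$ rather than $4$. The plan handles this by using the AM--GM cross term $2\sqrt{AB}$ instead of discarding it, together with the algebraic identity above.
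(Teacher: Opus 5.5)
Your proof is correct and follows the paper's route. Both arguments use the identity $(\sqrt A-\sqrt B)^2=(A-B)^2/(\sqrt A+\sqrt B)^2$, the factorization of $A-B$, the pointwise bound $(\sqrt A-\sqrt B)^2\le(\sqrt p-\sqrt{p'})^2(\sqrt q-\sqrt{q'})^2$, and the final Tonelli factorization.

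The only difference is how the denominator bound $\sqrt A+\sqrt B\ge\frac12(\sqrt p+\sqrt{p'})(\sqrt q+\sqrt{q'})$ is justified. The paper applies $\sqrt{x^2+y^2}\ge(x+y)/\sqrt2$ to each of the two square roots. You instead expand $A+B+2\sqrt{AB}$, apply AM--GM to the cross term, and check that the remainder $\frac14(s-2t)(s'-2t')$ is nonnegative.
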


\begin{proof}
  Let the corresponding densities with respect to a dominating measure $\mu_P$ of $P, P'$ be $p, p'$, and similarly let densities of $Q, Q'$ with respect to a dominating measure $\mu_Q$ be denoted as $q, q'$. Then we compute:
  \begin{align*}
    & \mathrm d_{\mathrm H}^2 \left(\frac{1}{2}(P\times Q + P'\times Q'), \frac{1}{2}(P\times Q' + P'\times Q)\right) \\
    &= \frac{1}{2} \int \left( \sqrt{\frac{pq + p'q'}{2}} - \sqrt{\frac{pq' + p'q}{2}} \right)^2 (\mathsf{d}\mu_P \otimes \mathsf{d}\mu_Q) \\
    &= \frac{1}{4} \int \left( \frac{(pq + p'q') - (pq' + p'q)}{\sqrt{pq + p'q'} + \sqrt{pq' + p'q}} \right)^2 (\mathsf{d}\mu_P \otimes \mathsf{d}\mu_Q) \\
    &= \frac{1}{4} \int \left( \frac{(p - p')(q - q')}{\sqrt{pq + p'q'} + \sqrt{pq' + p'q}} \right)^2 (\mathsf{d}\mu_P \otimes \mathsf{d}\mu_Q) \\
    &\leq \frac{1}{2} \int \left( \frac{(p - p')(q - q')}{(\sqrt{p} + \sqrt{p'})(\sqrt{q} + \sqrt{q'})} \right)^2 (\mathsf{d}\mu_P \otimes \mathsf{d}\mu_Q) \\
    &= 2\left( \frac{1}{2} \int \left( \frac{p - p'}{\sqrt{p} + \sqrt{p'}}\right)^2 \mathsf{d}\mu_P \right)\left( \frac{1}{2} \int \left( \frac{q - q'}{\sqrt{q} + \sqrt{q'}}\right)^2 \mathsf{d}\mu_Q \right) \\
    &= 2\mathrm d_{\mathrm H}^2(P, P') \mathrm d_{\mathrm H}^2(Q, Q')
  \end{align*}
  The only inequality follows from the elementary bound $\sqrt{x^2 + y^2} \geq (x + y)/\sqrt{2}$ for $x, y \geq 0$, applied to each square root:
  \[
    \sqrt{pq + p'q'} + \sqrt{pq' + p'q} \geq \frac{1}{\sqrt{2}}\left( \sqrt{pq} + \sqrt{p'q'} + \sqrt{pq'} + \sqrt{p'q} \right) = \frac{1}{\sqrt{2}} (\sqrt{p} + \sqrt{p'})(\sqrt{q} + \sqrt{q'}),
  \]
  so that $(\sqrt{pq + p'q'} + \sqrt{pq' + p'q})^2 \geq \frac{1}{2}(\sqrt{p} + \sqrt{p'})^2 (\sqrt{q} + \sqrt{q'})^2$.
\end{proof}

\subsection{Proof of Theorem \ref{thm:power_nongaussian}}\label{sec:proof_nongaussian}

The proof of Theorem \ref{thm:power_lpt_confounder_model_general} used the
Gaussianity of $U, \epsilon_1, \epsilon_2$ only in two places:

\begin{itemize}
\item Computing $\mathrm{SNR_{LPT}}$.
\item Establishing the observed and permutation aggregate Lyapunov ratios in
  Assumption~\ref{assn:T_bdd_kurtosis}.
\end{itemize}

Both steps can instead be carried out under the stated moment conditions. The
computation of $\mathbb E[T_k \mid \mathbf Z]$ uses only mutual independence
and the first two moments of $U, \epsilon_1, \epsilon_2$, so it carries over
verbatim:
\[
  \EEst{T_k}{\mathbf Z} = S_k^{(12)} + \frac{m_k - 1}{m_k}\beta_{1,n}\beta_{2,n}.
\]
The conditional variance, on the other hand, picks up a dependence on the fourth moment $\mu_4 = \mathbb E[U^4]$; it is recomputed without Gaussianity in Lemma \ref{lem:VarT_k}.

\paragraph{Computing $\mathrm{SNR}_{\mathrm{LPT}}$.} Write
\[
  \nu_k := 1 + \rho_n^2(\mu_4 - 2) - \frac{\rho_n^2(\mu_4 - 3)}{m_k},
\]
  Since
  \[
    \beta_{1,n}^2\beta_{2,n}^2
    =\rho_n^2(\beta_{1,n}^2+1)(\beta_{2,n}^2+1),
  \]
  the second term in Lemma~\ref{lem:VarT_k} equals
  \[
    \frac{(m_k-1)(\beta_{1,n}^2+1)(\beta_{2,n}^2+1)}{m_k^2}\,\nu_k.
  \]
  Moreover, $\nu_k\asymp1$ uniformly in $k$ and $n$. Indeed, $\mu_4\geq\mathbb E[U^2]^2=1$ and $m_k\geq2$ imply $\mu_4-2-(\mu_4-3)/m_k\geq-1$, whereas $\mu_4<\infty$ gives a uniform upper bound.
  Define
  \[
    D_k^{\mathrm{NG}}:=
    \frac{(\beta_{2,n}^2+1)S_k^{(1)}+(\beta_{1,n}^2+1)S_k^{(2)}
      +2\beta_{1,n}\beta_{2,n}S_k^{(12)}}{m_k}
    +\frac{(m_k-1)(\beta_{1,n}^2+1)(\beta_{2,n}^2+1)\nu_k}{m_k^2}.
  \]
  We can now compute that
  \[
  \mathrm{SNR_{LPT}} = \frac{\sum_{k = 1}^K \mathbb E[T_k \mid \mathbf Z]}{ \left( \sum_{k=1}^K \Var(T_k \mid \mathbf Z) \right)^{1/2}} = 
  \frac{\sum_{k=1}^K \left(S_k^{(12)} + \dfrac{m_k - 1}{m_k}\beta_{1,n}\beta_{2,n}\right)}{\left(\sum_{k=1}^K D_k^{\mathrm{NG}}\right)^{1/2}}.
  \]

Now recall \eqref{eq:linear_assmp_1} and \eqref{eq:linear_assmp_2}. In the numerator, we have
\[
  \begin{aligned}
  \left|\sum_{k=1}^K S_k^{(12)}\right|
  &\leq K\sqrt{\frac{1}{K}\sum_{k=1}^K S_k^{(1)}}
             \sqrt{\frac{1}{K}\sum_{k=1}^K S_k^{(2)}}\\
  &=\mathrm{o}_P(K\rho_n)
   =\mathrm{o}_P\!\left(\sum_{k=1}^K\frac{m_k-1}{m_k}\beta_{1,n}\beta_{2,n}\right).
  \end{aligned}
\]
where the first two inequalities are from Cauchy-Schwarz and the first equality is by \eqref{eq:linear_assmp_2}. The last equality follows as $m_k \geq 2$ and $\rho_n \leq \beta_{1, n}\beta_{2, n}$. In the denominator, we have
by that
\begin{multline}\label{eq:neg_denom_1_general}
  \sum_{k=1}^K \frac{S_k^{(1)}}{m_k} \leq \frac{1}{\min_{k=1, \dots, K} m_k} \sum_{k=1}^K S_k^{(1)} \overset{\eqref{eq:bin_size_assmp}}= \mathrm{O}_P\!\left( \frac{1}{\max_{k=1, \dots, K} m_k} \sum_{k=1}^K S_k^{(1)}\right) \\
  = \mathrm{O}_P\!\left( \frac{K}{\max_{k=1, \dots, K} m_k} \cdot \frac{1}{K} \sum_{k=1}^K S_k^{(1)}\right) \overset{\eqref{eq:linear_assmp_1}}= \mathrm{o}_P\!\left( \frac{K}{\max_{k=1, \dots, K} m_k}\right) = \mathrm{o}_P \left( \sum_{k=1}^K \frac{1}{m_k} \right).
\end{multline}
Similarly, we get
\[
  \sum_{k=1}^K \frac{S_k^{(2)}}{m_k} = \mathrm{o}_P \left( \sum_{k=1}^K \frac{1}{m_k} \right).
\]
Finally,
\begin{equation}\label{eq:neg_denom_2_general}
  \sum_{k=1}^K \frac{|S_k^{(12)}|}{m_k} \leq \sum_{k=1}^K \frac{\sqrt{S_k^{(1)} S_k^{(2)}}}{m_k} \leq \sqrt{\sum_{k=1}^K \frac{S_k^{(1)}}{m_k}} \cdot \sqrt{\sum_{k=1}^K \frac{S_k^{(2)}}{m_k}} = \mathrm{o}_P\!\left(\sum_{k=1}^K \frac{ 1}{m_k}\right)
\end{equation}
so, using $\nu_k \asymp 1$ to absorb the negligible terms, we have
\begin{multline}\label{eq:SNR_LPT_final_general}
  \mathrm{SNR_{LPT}} = \frac{\beta_{1,n}\beta_{2,n}\sum_{k=1}^K \frac{m_k - 1}{m_k}}{\sqrt{(\beta_{1,n}^2 + 1)(\beta_{2,n}^2 + 1)\sum_{k=1}^K \frac{m_k - 1}{m_k^2}\,\nu_k}} \cdot \left( 1 + \mathrm{o}_P(1) \right) =  \\
  \frac{\sum_{k=1}^K \frac{m_k - 1}{m_k}\,\rho_n}{\sqrt{\sum_{k=1}^K \frac{m_k - 1}{m_k^2} \left(1 + \rho_n^2(\mu_4 - 2) - \frac{\rho_n^2(\mu_4 - 3)}{m_k}\right)}} \cdot \left( 1 + \mathrm{o}_P(1) \right)
\end{multline}
as desired.

\paragraph{Checking assumptions and concluding power.}
As in the Gaussian proof, we verify Assumptions~\ref{assn:T_bdd_kurtosis}--
\ref{assn:T_local_alternative} explicitly. The required distributional input
for the two aggregate Lyapunov ratios is the mixed moment
\begin{equation}\label{eq:nongaussian_mixed_moment}
  \mathbb E\!\left[|\beta_{1,n}U+\epsilon_1|^4
                         |\beta_{2,n}U+\epsilon_2|^4\right]<\infty
\end{equation}
uniformly in $n$. This follows directly from $\mathbb E[U^8]<\infty$ and
$\mathbb E[\epsilon_1^4],\mathbb E[\epsilon_2^4]<\infty$ by simply expanding the two fourth powers. In particular, no separate eighth moment of either noise variable is asserted or needed.

As before, write $\bar m=n/K$; condition~\eqref{eq:bin_size_assmp} makes every $m_k$ comparable to $\bar m$.

Define $Q_{X,k},Q_{Y,k},R_{X,k},R_{Y,k}$ as in the Gaussian proof. Lemmas~\ref{lem:unif_perm_var} and~\ref{lem:unif_perm_fourth} again give
\[
  \E_\sigma[\sigma_k(T_k)^2\mid\mathbf X,\mathbf Y,\mathbf Z]
  =\frac{Q_{X,k}Q_{Y,k}}{m_k-1}
\]
and
\[
  \E_\sigma[\sigma_k(T_k)^4\mid\mathbf X,\mathbf Y,\mathbf Z]
  \lesssim\frac{Q_{X,k}^2Q_{Y,k}^2}{m_k^2}
          +\frac{R_{X,k}R_{Y,k}}{m_k^3}.
\]
Using \eqref{eq:nongaussian_mixed_moment} to control the mixed random terms directly gives the same coarse estimates
\begin{align*}
  \E[Q_{X,k}^2Q_{Y,k}^2\mid\mathbf Z]
  &\lesssim\bigl((1+S_k^{(1)})(1+S_k^{(2)})\bigr)^2,\\
  \E[R_{X,k}R_{Y,k}\mid\mathbf Z]
  &\lesssim\bigl(1+m_k(S_k^{(1)})^2\bigr)
             \bigl(1+m_k(S_k^{(2)})^2\bigr).
\end{align*}
The second estimate uses
$m_k^{-1}\sum_i(f_j(Z_i)-\overline{f_j(\mathbf Z)}_k)^4
 \leq m_k(S_k^{(j)})^2$ for the deterministic profiles. Thus the aggregate
reduction~\eqref{eq:aggregate_lyapunov_reduction} remains valid. On
subsequences where $\mathrm{SNR}_{\mathrm{LPT}}=\mathrm{O}_P(1)$,
equation~\eqref{eq:SNR_LPT_final_general} again implies
$K\bar m\rho_n^2=\mathrm{O}_P(1)$, and the argument following
\eqref{eq:aggregate_lyapunov_reduction} yields
\[
  \frac{M_{4,n}^\pi}{(V_n^\pi)^2}=\mathrm{o}_P(1).
\]
This verifies the permutation half of
Assumption~\ref{assn:T_bdd_kurtosis}.

For the observed half of the same assumption, use the decomposition
\[
  T_k-\E[T_k\mid\mathbf Z]
  =A_{k,1}+A_{k,2}+
    \bigl(A_{k,3}-\E[A_{k,3}\mid\mathbf Z]\bigr),
\]
Rosenthal's inequality applied to the two linear terms and to
$m_k^{-1}\sum_i(W_iV_i-\E[W_iV_i])$, together with
\eqref{eq:nongaussian_mixed_moment}, gives
\[
  \E[(T_k-\E[T_k\mid\mathbf Z])^4\mid\mathbf Z]
  \lesssim \frac{1+(S_k^{(1)})^2+(S_k^{(2)})^2}{m_k^2}.
\]
Lemma~\ref{lem:VarT_k}, $\sup_n|\rho_n|<1$, and comparable bin sizes give
$V_n\gtrsim K/\bar m$. Hence
\[
  \frac{M_{4,n}}{V_n^2}
  \lesssim\frac{K+\sum_k(S_k^{(1)})^2+\sum_k(S_k^{(2)})^2}{K^2}
  =\mathrm{o}_P(1).
\]
Thus the observed half of Assumption~\ref{assn:T_bdd_kurtosis} also holds.
The exact second-moment representation provides the same permutation noise floor as in~\eqref{eq:perm_variance_floor}. Direct expansion using only second and fourth moments further gives
\[
  \frac{V_n^\pi}{V_n}=\mathrm{O}_P(1),
  \qquad
  \frac{V_n^\pi}{V_n}=1+\mathrm{o}_P(1)\quad\text{if }\rho_n\to0.
\]
The first relation is Assumption~\ref{assn:T_cond_var_nontrivial}. In the
bounded-signal regime, \eqref{eq:SNR_LPT_final_general} and
$K\bar m\to\infty$ imply $\rho_n\to0$, so the second relation verifies the
variance-ratio branch of Assumption~\ref{assn:T_local_alternative}.

All assumptions of Theorem~\ref{cor:nnpt_power} have now been checked on
bounded-signal subsequences. Applying that theorem gives
\[
  \mathbb E[\phi_{\mathrm{LPT}}\mid\mathbf Z]
  =\Phi\!\left(\Phi^{-1}(\alpha)+\mathrm{SNR}_{\mathrm{LPT}}\right)
   +\mathrm{o}_P(1).
\]
If instead $\mathrm{SNR}_{\mathrm{LPT}}\to\infty$, the first branch of
Assumption~\ref{assn:T_local_alternative} holds, but the direct
Chebyshev--Markov argument from the Gaussian proof avoids any need to verify
the permutation Lyapunov ratio: $\sum_kT_k=\mu_n(1+\mathrm{o}_P(1))$, while
the conditional permutation p-value is at most
$\widehat V_n^\pi/(\sum_kT_k)^2$ on the event that the observed statistic is positive. Since
$\E[\widehat V_n^\pi\mid\mathbf Z]=V_n^\pi$ and $V_n^\pi/V_n=\mathrm{O}_P(1)$, the p-value converges to zero in probability. This proves the stated power conclusion without invoking permutation fourth moments in the strong-signal regime. \qed

\subsection{Proof of Lemma \ref{lem:binary_power_lb_tv}}

For this proof, we use the following notation: in a single bin $B_k = \{i, j\}$, suppose that $P_{X, Y \mid Z_i}$ and $P_{X, Y \mid Z_j}$ have probability densities $f_{i}(x, y)$ and $f_{j}(x, y)$ respectively; similarly let $P_{X \mid Z_i} \times P_{Y \mid Z_i}$ and $P_{X \mid Z_j} \times P_{Y \mid Z_j}$ have densities $g_i(x, y)$ and $g_j(x, y)$ respectively. As in Section \ref{sec:binary_power}, we ignore ties throughout. Now define the region
\[
  E_k = \{(x_i, x_j, y_i, y_j) \mid f_i(x_i, y_i)f_i(x_j, y_j) \geq f_i(x_i, y_j)f_i(x_j, y_i) \} \subset \mathcal X^2 \times \mathcal Y^2
\]
and set
\[
  T_k(X_i, X_j, Y_i, Y_j) = 2 \cdot \mathbbm 1 \{(X_i, X_j, Y_i, Y_j) \in E_k\} - 1;
\]
this is the bet that would be optimal if both points in the bin shared the conditional distribution at $Z_i$, and it is available to a bettor with access to the true likelihood ratios.
Now consider the joint densities
\[
  r(x_i, x_j, y_i, y_j) = f_i(x_i, y_i)f_i(x_j, y_j)
\]
and
\[
  s(x_i, x_j, y_i, y_j) = f_i(x_i, y_j)f_i(x_j, y_i).
\]
Then,
\[
  \mathrm{d_{TV}}(r, s) = \int \mathbbm 1 \{E_k\} (r - s) = \int \mathbbm 1 \{E_k\} r - \int \mathbbm 1\{E_k\} s.
\]
However, we see that
\begin{align*}
  \int \mathbbm 1\{E_k\}s &= \int \mathbbm 1\{(x_i, x_j, y_i, y_j) \in E_k\} s(x_i, x_j, y_i, y_j) \mathsf{d}x_i \mathsf{d}x_j \mathsf{d}y_i  \mathsf{d}y_j \\
  \intertext{and by just changing the order of integration and renaming the variables,}
                          &= \int \mathbbm 1\{(x_i, x_j, y_j, y_i) \in E_k\} s(x_i, x_j, y_j, y_i) \mathsf{d}x_i \mathsf{d}x_j \mathsf{d}y_i \mathsf{d}y_j \\
  \intertext{and since, ignoring ties, $(x_i, x_j, y_i, y_j) \in E_k \iff (x_i, x_j, y_j, y_i) \notin E_k$ by construction of $E_k$,}
                          &= \int (1 - \mathbbm 1\{(x_i, x_j, y_i, y_j) \in E_k\}) s(x_i, x_j, y_j, y_i) \mathsf{d}x_i \mathsf{d}x_j \mathsf{d}y_i \mathsf{d}y_j \\
  \intertext{and simply by substituting the definition of $s$, we get}
                          &= \int (1 - \mathbbm 1\{(x_i, x_j, y_i, y_j) \in E_k\}) r(x_i, x_j, y_i, y_j) \mathsf{d}x_i \mathsf{d}x_j \mathsf{d}y_i \mathsf{d}y_j \\
                          &= 1 - \int \mathbbm 1\{E_k\}r.
\end{align*}
Thus we may conclude that
\begin{equation}\label{eq:dtv_relation}
  \dtv(r, s) = 2 \int \One{E_k} r - 1.
\end{equation}
To control this total variation, notice that marginalizing $r$ and $s$ over the coordinates $x_j, y_j$ cannot increase the total variation, by the data processing inequality for total variation; but marginalizing over those coordinates gives
\[
  \int r(x_i, x_j, y_i, y_j) \mathsf{d}x_j \mathsf{d}y_j = f_i(x_i, y_i)
\]
and
\begin{align*}
  \int s(x_i, x_j, y_i, y_j) \mathsf{d}x_j\mathsf{d}y_j &= \int f_i(x_i, y_j)f_i(x_j, y_i)\mathsf{d}x_j \mathsf{d}y_j \\
                                      &= \left( \int f_i(x_i, y_j)\mathsf{d}y_j \right) \left( \int f_i(x_j, y_i)\mathsf{d}x_j \right) \\
                                      &=  g_i(x_i, y_i),
\end{align*}
so
\begin{equation}\label{eq:dtv_bound_lower}
  \dtv(r, s) \geq \dtv(P_{X, Y \mid Z_i}, P_{X \mid Z_i} \times P_{Y \mid Z_i}).
\end{equation}
Combining \eqref{eq:dtv_relation} and \eqref{eq:dtv_bound_lower} gives
\[
  \int \One{E_k}r - \frac{1}{2} \geq \frac{1}{2}\dtv(P_{X, Y \mid Z_i}, P_{X \mid Z_i} \times P_{Y \mid Z_i}).
\]
On the other hand, $\int \One{E_k}r$ is the probability that $T_k = 1$ when the data is drawn from $P_{X, Y \mid Z_i} \otimes P_{X, Y \mid Z_i}$ rather than its true distribution $P_{X, Y \mid Z_i} \otimes P_{X, Y \mid Z_j}$. Since the total variation distance between two product measures sharing a common factor is at most the total variation distance between the differing factors, we have that
\[
  \left|\int \One{E_k} r - p_k \right| \leq \dtv\left(P_{X, Y \mid Z_i} \otimes P_{X, Y \mid Z_i},\ P_{X, Y \mid Z_i} \otimes P_{X, Y \mid Z_j}\right) \leq \dtv(P_{X, Y \mid Z_i}, P_{X, Y \mid Z_j}) = \delta_k
\]
so the above bound gives
\[
  p_k - \frac{1}{2} \geq \frac{1}{2}\dtv(P_{X, Y \mid Z_i}, P_{X \mid Z_i} \times P_{Y \mid Z_i}) - \delta_k.
\]
An identical argument with the roles of $i$ and $j$ exchanged (i.e., betting with the region built from $f_j$ instead of $f_i$) also gives
\[
  p_k - \frac{1}{2} \geq \frac{1}{2}\dtv(P_{X, Y \mid Z_j}, P_{X \mid Z_j} \times P_{Y \mid Z_j}) - \delta_k.
\]
The two displays bound the success probabilities of two different bets, each available with access to the true likelihood ratios; since the optimal strategy guesses correctly with probability no smaller than either, $p_k$ satisfies both bounds, and taking the maximum gives the desired result. \qed

\subsection{Proof of Lemma \ref{lem:binary_power_lb_orc}}

The key technical ingredient is Lemma \ref{lem:dtv_kl_comparison} below, which lower bounds the total variation separation appearing in Lemma \ref{lem:binary_power_lb_tv} by the KL-type separation that drives the oracle test.

Set $C = \sqrt{2 \log K}$. Using $p_k(1 - p_k) \leq \frac14$, Lemma \ref{lem:binary_power_lb_tv}, and $\max\{u, v\} \geq \frac{u + v}{2}$,
\[
  \mathrm{SNR_{LPT}} \geq \frac{2}{\sqrt K}\sum_{k=1}^K \left( p_k - \frac12 \right) \geq \frac{1}{2\sqrt K}\sum_{i=1}^n \dtv(P_{X, Y \mid Z_i}, P_{X \mid Z_i} \times P_{Y \mid Z_i}) - \frac{2}{\sqrt K}\sum_{k=1}^K \delta_k.
\]
Next, we claim that $\overline{\mathrm{KL}}_{(i)} \leq \sigma^2$ for every $i$. Indeed, writing $\mu_{0,i}$ and $\mu_{1,i}$ for the means of $\mathrm{LLR}^{(i)}(X, Y)$ under the null and the alternative respectively, since $e^{\mathrm{LLR}^{(i)}}$ is the density of $P_{X, Y \mid Z_i}$ with respect to $P_{X \mid Z_i} \times P_{Y \mid Z_i}$, we have
\[
  1 = \E_{P_{X \mid Z_i} \times P_{Y \mid Z_i}}\left[e^{\mathrm{LLR}^{(i)}(X, Y)}\right] \leq e^{\mu_{0,i} + \sigma^2/2}
  \quad \text{and} \quad
  1 = \E_{P_{X, Y \mid Z_i}}\left[e^{-\mathrm{LLR}^{(i)}(X, Y)}\right] \leq e^{-\mu_{1,i} + \sigma^2/2},
\]
applying the sub-Gaussian assumption with $t = 1$ and $t = -1$ respectively; rearranging gives $\mu_{0,i} \geq -\sigma^2/2$ and $\mu_{1,i} \leq \sigma^2/2$, so $\overline{\mathrm{KL}}_{(i)} = \mu_{1,i} - \mu_{0,i} \leq \sigma^2$. Since $C = \sqrt{2 \log K} \to \infty$, for all $K \geq e^{\sigma^2/8}$ we have $\sigma \leq 2C$ and hence $\overline{\mathrm{KL}}_{(i)} \leq 2C\sigma$, so the minimum in the second display of Lemma \ref{lem:dtv_kl_comparison} is attained by its first argument, and combining with $\sigma^2 \leq M\,\mathrm{V}_{\mathrm{KL}, (i)}$,
\[
  \dtv(P_{X, Y \mid Z_i}, P_{X \mid Z_i} \times P_{Y \mid Z_i}) \geq \frac{\overline{\mathrm{KL}}_{(i)}}{4C\sqrt{M\, \mathrm{V}_{\mathrm{KL}, (i)}}} - \frac{e^{-C^2/2}}{C^2}.
\]
Note also that the assumption $\sigma^2 \leq M\, \mathrm{V}_{\mathrm{KL}, (i)}$ for all $i$ automatically makes the varentropies comparable:
\[
  \max_i \mathrm{V}_{\mathrm{KL}, (i)} \leq \sigma^2 \leq M \min_i \mathrm{V}_{\mathrm{KL}, (i)} \leq \frac{M}{n} \sum_{i=1}^n \mathrm{V}_{\mathrm{KL}, (i)},
\]
where the first inequality holds since the variance is always bounded by the sub-Gaussian parameter. Summing the total variation bound over $i \in [n]$ and using $\sqrt{\mathrm{V}_{\mathrm{KL}, (i)}} \leq \max_i \sqrt{\mathrm{V}_{\mathrm{KL}, (i)}} \leq \sqrt{M \sum_{i=1}^n \mathrm{V}_{\mathrm{KL}, (i)} / n}$,
\[
  \sum_{i=1}^n \dtv(P_{X, Y \mid Z_i}, P_{X \mid Z_i} \times P_{Y \mid Z_i}) \geq \frac{\sqrt{n}}{4CM} \cdot \mathrm{SNR_{ORC}} - \frac{n\, e^{-C^2/2}}{C^2},
\]
and substituting into the first display (recalling $n = 2K$) gives
\[
  \mathrm{SNR_{LPT}} \geq \frac{\mathrm{SNR_{ORC}}}{4\sqrt 2\, C M} - \frac{\sqrt K\, e^{-C^2/2}}{C^2} - \frac{2 \sum_{k=1}^K \delta_k}{\sqrt K}.
\]
Finally, with $C = \sqrt{2 \log K}$ we have $e^{-C^2/2} = 1/K$, so the middle term equals $\frac{1}{2\sqrt K \log K} = \mathrm o(1)$, while $4\sqrt 2\, C = 8 \sqrt{\log K}$, which yields
\[
  \mathrm{SNR_{LPT}} \geq \frac{\mathrm{SNR_{ORC}}}{8 M \sqrt{\log K}} - \frac{2 \sum_{k=1}^K \delta_k}{\sqrt K} - \mathrm o(1)
\]
as desired. \qed

Relative to Theorem \ref{thm:orc_vs_nnpt_oracle}, the comparison carries an additional $\sqrt{\log K}$ factor: this is the worst-case price of converting the KL-type separation exploited by the oracle into the total variation separation to which binary bets are sensitive. It disappears in settings where $\dtv \asymp \overline{\mathrm{KL}}/\sqrt{\mathrm V_{\mathrm{KL}}}$ holds directly, as in the Gaussian linear model of Section \ref{sec:linear_model}.

\subsection{Proof of Theorem \ref{thm:gaussian_binary_power}}\label{sec:prf_gaussian_binary_power}

Let $K = \lfloor n / 2 \rfloor$ denote the number of bins, and assume for notational convenience that $n = 2K$ is even. Fix a bin $B_k = \{i, j\}$ and write $D_1 = X_i - X_j$ and $D_2 = Y_i - Y_j$, so that $T_k = 2 \cdot \One{D_1 D_2 \geq 0} - 1$.

\paragraph{Conditional distribution of $(D_1, D_2)$.} Under the model \eqref{model:linear_gaussian_confounder}, conditional on $\mathbf Z$, the pair $(D_1, D_2)$ is bivariate Gaussian with means
\[
  \Delta_{1, k} = f_1(Z_i) - f_1(Z_j), \qquad \Delta_{2, k} = f_2(Z_i) - f_2(Z_j),
\]
variances $\sigma_1^2 = 2(\beta_{1, n}^2 + 1)$ and $\sigma_2^2 = 2(\beta_{2, n}^2 + 1)$, and correlation
\[
  \frac{2\beta_{1,n}\beta_{2,n}}{\sqrt{2(\beta_{1,n}^2+1) \cdot 2(\beta_{2,n}^2+1)}} = \rho_n.
\]
Since $m_k = 2$, a direct computation gives $\Delta_{l, k}^2 = 4 S_k^{(l)}$ for $l \in \{1, 2\}$; we write $\delta_{l, k} = \Delta_{l, k}/\sigma_l$ for the standardized drifts, so that $\delta_{l, k}^2 = 2 S_k^{(l)} / (\beta_{l, n}^2 + 1) \leq 2 S_k^{(l)}$. Note also that since $\limsup_n |\beta_{1, n}|$ and $\limsup_n |\beta_{2, n}| < \infty$, we have $\bar\rho := \sup_n |\rho_n| < 1$.

\paragraph{Idealized $\mathrm{SNR_{\mathrm{LPT}}}$.}

First we compute $\mathrm{SNR_{LPT}}$ in the idealized setting where $\Delta_{1, k} = \Delta_{2, k} = 0$, with the aim of eventually showing that the non-idealized case is asymptotically equivalent. In this case, we have that by Sheppard's classical quadrant-probability formula (see \cite{sheppard1899application}), 
\[
  \P\bigl( D_1 D_2 > 0 \mid \mathbf Z \bigr)
      = 2 \left( \frac{1}{4} + \frac{\arcsin \rho_n}{2\pi} \right)
      = \frac{1}{2} + \frac{\arcsin \rho_n}{\pi}
      = \frac{1}{2} + q_n
\]
exactly. Plugging into \eqref{eq:binary_SNR_formula}, we get that in this setting, 
\[
  \mathrm{SNR_{LPT}}
  = \frac{\sum_{k=1}^K (p_k - 1/2)}{\left(\sum_{k=1}^K p_k(1 - p_k)\right)^{1/2}}
  = \frac{K q_n}{\sqrt{K \left( \frac{1}{4} - q_n^2 \right)}}
  = \frac{\sqrt{2n}\, q_n}{\sqrt{1 - 4 q_n^2}}
\]
as expected.

\paragraph{Deviation from the idealized case.}

Write $p_k = \P \left( D_1 D_2 > 0 \mid \mathbf Z \right)$ and define
\[
  R_k := \E[T_k \mid \mathbf Z] - 2 q_n = 2 \left( p_k - \frac{1}{2} - q_n \right),
\]
the per-bin deviation from the idealized value computed above. This quantity satisfies
\[
  |R_k|
  \leq \frac{2}{\pi\sqrt{1-\bar\rho^2}}
  \left( |\delta_{1,k}\,\delta_{2,k}|
    + |\rho_n| \bigl( \delta_{1,k}^2 + \delta_{2,k}^2 \bigr) \right)
  \leq \frac{4}{\pi\sqrt{1-\bar\rho^2}}
  \left( \sqrt{S_k^{(1)} S_k^{(2)}}
    + |\rho_n| \bigl( S_k^{(1)} + S_k^{(2)} \bigr) \right).
\]

The key estimate is that $R_k$ is \emph{quadratic}, not linear, in the standardized drifts $\delta_{l,k}$: a naive bound (based on the fact that the drift flips the sign of coordinate $l$ with probability of order $|\delta_{l,k}|$) would give only $|R_k| = \mathrm{O}(|\delta_{1,k}| + |\delta_{2,k}|)$, which is too weak to be dominated by assumption~\eqref{eq:linear_assmp_2}. The gain comes from the fact that, on the event that the drift flips one coordinate's sign, the sign of the other coordinate is nearly unbiased. We separate the proof of the above bound to Lemma \ref{lem:binary_drift_perturbation}.

\paragraph{Averaging over the bins.} By Cauchy--Schwarz across bins and then
assumptions~\eqref{eq:linear_assmp_2} and~\eqref{eq:linear_assmp_1},
\begin{multline*}
  \frac{1}{K} \sum_{k=1}^K |R_k|
  \leq \frac{4}{\pi\sqrt{1-\bar\rho^2}}
    \left( \left( \frac{1}{K}\sum_{k=1}^K S_k^{(1)} \right)^{1/2}
           \left( \frac{1}{K}\sum_{k=1}^K S_k^{(2)} \right)^{1/2}
           + \rho_n \cdot \frac{1}{K} \sum_{k=1}^K
             \bigl( S_k^{(1)} + S_k^{(2)} \bigr) \right) \\
  = \mathrm{o}_P(\rho_n) + \rho_n \cdot \mathrm{o}_P(1),
\end{multline*}
and since $\rho_n \leq \arcsin(\rho_n) = \pi q_n$ for $\rho_n \in [0,1]$, we
conclude
\begin{equation}\label{eq:Rk_average}
  \frac{1}{K} \sum_{k=1}^K |R_k| = \mathrm{o}_P(q_n).
\end{equation}

\paragraph{Combining.} Since $p_k - \frac{1}{2} = q_n + \frac{R_k}{2}$,
\[
  \sum_{k=1}^K \left( p_k - \frac{1}{2} \right)
  = K q_n + \frac{1}{2} \sum_{k=1}^K R_k
  = K q_n \left( 1 + \mathrm{o}_P(1) \right),
\]
where the second equality holds because, by~\eqref{eq:Rk_average}, $\bigl| \frac{1}{2} \sum_{k=1}^K R_k \bigr| \leq \frac{K}{2} \cdot \mathrm{o}_P(q_n) = K q_n \cdot \mathrm{o}_P(1)$. Next, using
$p_k (1 - p_k) = \frac{1}{4} - \bigl( p_k - \frac{1}{2} \bigr)^2$ and
\[
  \left| \left( p_k - \frac{1}{2} \right)^2 - q_n^2 \right|
  = \left| p_k - \frac{1}{2} - q_n \right|
    \cdot \left| p_k - \frac{1}{2} + q_n \right|
  \leq \frac{|R_k|}{2} \cdot 1
  \leq |R_k|,
\]
where the middle factor is at most $1$ since $\bigl| p_k - \frac{1}{2} \bigr| \leq \frac{1}{2}$ and $|q_n| \leq \frac{1}{2}$, we obtain
\[
  \left| \sum_{k=1}^K p_k (1 - p_k) - K \left( \frac{1}{4} - q_n^2 \right)
  \right|
  \leq \sum_{k=1}^K |R_k|
  = K \cdot \mathrm{o}_P(q_n)
  = K \cdot \mathrm{o}_P(1).
\]
Since $\frac{1}{4} - q_n^2 \geq \frac{1}{4} - \bigl( \frac{\arcsin \bar\rho}{\pi} \bigr)^2 > 0$ is bounded away from zero (as $\bar\rho < 1$), this yields
\[
  \sum_{k=1}^K p_k (1 - p_k)
  = K \left( \frac{1}{4} - q_n^2 \right) \left( 1 + \mathrm{o}_P(1) \right).
\]

Substituting the two above computations into \eqref{eq:binary_SNR_formula}, we get
\[
  \mathrm{SNR_{LPT}}
  = \frac{K q_n}{\sqrt{K \left( \frac{1}{4} - q_n^2 \right)}}
    \cdot \left( 1 + \mathrm{o}_P(1) \right)
  = \frac{2 \sqrt{K}\, q_n}{\sqrt{1 - 4 q_n^2}}
    \cdot \left( 1 + \mathrm{o}_P(1) \right)
  = \frac{\sqrt{2n}\, q_n}{\sqrt{1 - 4 q_n^2}}
    \cdot \left( 1 + \mathrm{o}_P(1) \right),
\]
the final equality using $n = 2K$. For $n$ odd, the
single unbinned point is invariant under within-bin permutation and may be
dropped, so the argument applies with $K = (n-1)/2$, and
$2\sqrt{K} = \sqrt{2n} \bigl( 1 + \mathrm{o}(1) \bigr)$ is absorbed into the
$1 + \mathrm{o}_P(1)$ factor. \qed

\subsection{Technical lemmas}

\begin{lemma}\label{lem:normal_ks_dw}
    Let $A,B\in \mathbb R$ be random variables.
    Let $\mu\in \mathbb R$ and $\sigma^2>0$. Then
    \[\dks(A,B) \leq 2\dks(B,\mathcal{N}(\mu,\sigma^2)) + 2\sqrt{\frac{\dw(A,B)}{\sigma\sqrt{2\pi}}},\]
    where $\dw$ denotes the $1$-Wasserstein distance.
\end{lemma}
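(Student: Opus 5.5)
The plan is to pass through an intermediate point: compare $A$ to $B$ via the Gaussian $G\sim\mathcal N(\mu,\sigma^2)$ and a smoothing step. The Wasserstein bound converts into a Kolmogorov bound only for a target with bounded density, so the comparison has to be routed through $G$, whose density is at most $1/(\sigma\sqrt{2\pi})$. Write $F_A,F_B,\Phi_{\mu,\sigma}$ for the CDFs and $\epsilon=\dks(B,G)$. I work with an arbitrary coupling $(A,B)$ attaining (or nearly attaining) $\dw(A,B)=\E|A-B|$. At the end I optimize over the slack $h$, which produces the square root.

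For the upper bound at a fixed $t$, I fix $h>0$ and split on whether $|A-B|\le h$:
\[
F_A(t)=\P(A\le t)\le \P(B\le t+h)+\P(|A-B|>h)\le F_B(t+h)+\frac{\dw(A,B)}{h}
\]
by Markov. Then I move to the Gaussian:
\[
F_B(t+h)\le \Phi_{\mu,\sigma}(t+h)+\epsilon\le \Phi_{\mu,\sigma}(t)+\frac{h}{\sigma\sqrt{2\pi}}+\epsilon\le F_B(t)+2\epsilon+\frac{h}{\sigma\sqrt{2\pi}}.
\]
The lower bound $F_A(t)\ge F_B(t)-2\epsilon-h/(\sigma\sqrt{2\pi})-\dw(A,B)/h$ follows symmetrically, using $F_A(t)\ge \P(B\le t-h)-\P(|A-B|>h)$. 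Taking the supremum over $t$ gives
\[
\dks(A,B)\le 2\epsilon+\frac{h}{\sigma\sqrt{2\pi}}+\frac{\dw(A,B)}{h}.
\]

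Next I choose $h=\sqrt{\dw(A,B)\,\sigma\sqrt{2\pi}}$. This makes the last two terms equal and sum to $2\sqrt{\dw(A,B)/(\sigma\sqrt{2\pi})}$, which is exactly the claimed bound. If $\dw(A,B)=0$, then $A\eqd B$ and the claim is trivial. If $\dw(A,B)=\infty$, the bound is vacuous. If the infimum over couplings is not attained, I take couplings with $\E|A-B|\le \dw+\eta$ and let $\eta\to0$.

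The only delicate point is the boundary convention: replacing $\P(A\le t)$ by an event about $B$ needs the correct closed or open interval, $\{A\le t\}\subseteq\{B\le t+h\}\cup\{|A-B|>h\}$, which holds. Where a strict inequality would arise, I handle it using right-continuity together with continuity of $\Phi_{\mu,\sigma}$. Beyond this, the argument is elementary, and I expect no real obstacle.
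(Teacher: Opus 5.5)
Your proof is correct and is essentially the paper's argument. Both use the optimal coupling, the Markov step $\P(|A-B|>h)\le \dw(A,B)/h$, the $1/(\sigma\sqrt{2\pi})$-Lipschitz bound on the Gaussian CDF, and the same balancing choice of $h$. The paper standardizes to $\mu=0,\sigma=1$ and first proves $\dks(A,\mathcal N(0,1))\le \dks(B,\mathcal N(0,1))+2\sqrt{\dw(A,B)/\sqrt{2\pi}}$ before applying the triangle inequality, whereas you fold that triangle inequality directly into your chain of inequalities. You also explicitly handle the degenerate case $\dw(A,B)=0$, which the paper's choice $\Delta\propto\sqrt{\dw(A,B)}$ passes over.
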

\begin{proof}[Proof of Lemma~\ref{lem:normal_ks_dw}]
    First, by replacing $A$ and $B$ with $\frac{A-\mu}{\sigma}$ and $\frac{B-\mu}{\sigma}$, respectively, we can assume $\mu=0$ and $\sigma=1$ without loss of generality. Moreover, by definition of the $1$-Wasserstein distance, without loss of generality we can assume that $A,B$ are defined on the same probability space, with $\EE{|A-B|}=\dw(A,B)$.
    
    Fix any $x\in\mathbb R$. Let $\Phi$ denote the CDF of the $\mathcal{N}(0,1)$ distribution. Then, for any $\Delta>0$,
    \begin{align*}
        \PP{A\leq x}
        &= \PP{B + (A-B) \leq x}\\
        &\leq \PP{B\leq x + \Delta} + \PP{|A-B|>\Delta}\\
        &\leq \PP{B\leq x+ \Delta} + \frac{\EE{|A-B|}}{\Delta}\textnormal{ by Markov's inequality}\\
        &= \Phi(x+\Delta) + \dks(B,\mathcal{N}(0,1)) +  \frac{\dw(A,B)}{\Delta}\\
        &\leq \Phi(x) + \frac{\Delta}{\sqrt{2\pi}} +  \dks(B,\mathcal{N}(0,1)) +  \frac{\dw(A,B)}{\Delta},
    \end{align*}
    where the last step holds since $\Phi$ is $\frac{1}{\sqrt{2\pi}}$-Lipschitz. By choosing $\Delta = \sqrt[4]{2\pi}\sqrt{\dw(A,B)}$, we see that
    \[ \PP{A\leq x}- \Phi(x) \leq \dks(B,\mathcal{N}(0,1)) + 2 \frac{\sqrt{\dw(A,B)}}{\sqrt[4]{2\pi}}. \]
    An identical argument provides a lower bound on $\Phi(x) - \PP{A\leq x}$, which means that we have showed
    \[\dks(A,\mathcal{N}(0,1)) \leq \dks(B,\mathcal{N}(0,1)) + 2 \frac{\sqrt{\dw(A,B)}}{\sqrt[4]{2\pi}}.\]
    The claim then holds by the triangle inequality.
\end{proof}

\begin{lemma}\label{lem:cond_tv_ineq}
  Let $P, Q$ be two probability measures on the same space $(\Omega, \mathcal F)$, and let $P_x$ and $Q_x$ be probability measures indexed by $x$ such that for a random variable $X$ and any event $A \in \mathcal F$, $\mathbb E[P_X(A)] = P(A)$ and $\mathbb E[Q_X(A)] = Q(A)$. Then
  \[
    \dtv(P, Q) \leq \mathbb E[ \dtv(P_X, Q_X) ].
  \]
\end{lemma}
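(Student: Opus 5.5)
The plan is to use the coupling (equivalently, Hahn–Jordan) characterization of total variation, combined with the definition of a regular conditional probability via disintegration.

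First I will show that for every measurable $A$ and every $x$,
\[
P_x(A)-Q_x(A)\le \dtv(P_x,Q_x).
\]
This is immediate from the definition $\dtv(P_x,Q_x)=\sup_{B}|P_x(B)-Q_x(B)|$.

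Next I fix an arbitrary event $A\in\mathcal F$ and take expectations over $X$. By the assumed mixture identities $\mathbb E[P_X(A)]=P(A)$ and $\mathbb E[Q_X(A)]=Q(A)$, this gives
\[
P(A)-Q(A)=\mathbb E[P_X(A)-Q_X(A)]\le \mathbb E[\dtv(P_X,Q_X)].
\]
Taking the supremum over $A$ on the left, with the right side independent of $A$, yields $\dtv(P,Q)\le\mathbb E[\dtv(P_X,Q_X)]$. If $\dtv$ is defined via the absolute value, the same argument applied to $A^c$ handles the sign.

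The only point requiring care is measurability. The map $x\mapsto \dtv(P_x,Q_x)$ must be measurable, or at least the expectation must be interpreted as an outer expectation, for the right-hand side to make sense. I will handle this by assuming that $x\mapsto P_x(A)$ and $x\mapsto Q_x(A)$ are measurable kernels, as implicit in the hypothesis. When $\mathcal F$ is countably generated, the supremum defining $\dtv$ can be taken over a countable generating algebra, which gives measurability. Otherwise the inequality still holds with an outer integral, since the pointwise bound above holds for each $A$. I expect this technicality to be the main obstacle, though it is minor; the substantive inequality is just convexity of TV passed through the mixture.
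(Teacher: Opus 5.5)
Your proposal is correct and follows the paper's proof: the paper writes $\dtv(P,Q)=\sup_A \bigl(\E[P_X(A)]-\E[Q_X(A)]\bigr)\le \E\bigl[\sup_A (P_X(A)-Q_X(A))\bigr]$, which is your pointwise bound, then expectation, then supremum over $A$. Your measurability remark about $x\mapsto\dtv(P_x,Q_x)$ is a caveat the paper passes over silently. The coupling and disintegration you mention at the outset are never actually needed.
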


\begin{proof}
  We have that
  \begin{align*}
    \dtv(P, Q) &= \sup_{A \in \mathcal F} P(A) - Q(A) \\
               &= \sup_{A \in \mathcal F} \mathbb E[P_X(A)] - \mathbb E[Q_X(A)] \\
               &\leq \mathbb E\left[ \sup_{A \in \mathcal F} P_X(A) - Q_X(A) \right] \\
               &= \mathbb E[ \dtv(P_X, Q_X)].
  \end{align*}
\end{proof}

\begin{lemma}\label{lem:unif_perm_var}
  Let $(a_1, \dots, a_m)$ and $(b_1, \dots, b_m)$, $m \geq 2$, be fixed sequences with $\sum_{i = 1}^m a_i = 0$ and $\sum_{i=1}^m b_i = 0$. Then for $\sigma \sim \mathrm{Unif}(\mathcal S_m)$ we have
  \[
    \Var \left( \sum_{i=1}^m a_{\sigma(i)} b_i \right) = \frac{1}{m-1} \left( \sum_{i=1}^m a_i^2 \right) \left( \sum_{i=1}^m b_i^2 \right)
  \]
\end{lemma}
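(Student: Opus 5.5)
The plan is a direct second-moment computation for $S = \sum_{i=1}^m a_{\sigma(i)} b_i$, using only the one- and two-point marginals of a uniform random permutation.

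\textbf{Mean.} Since $\sigma(i)$ is uniform on $[m]$ for each fixed $i$, we have $\mathbb E[a_{\sigma(i)}] = \frac1m\sum_\ell a_\ell = 0$. Hence $\mathbb E[S]=0$, and so $\Var(S) = \mathbb E[S^2]$.

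\textbf{Expanding the square.} Write
\[
\mathbb E[S^2] = \sum_{i,j} b_i b_j\, \mathbb E[a_{\sigma(i)}a_{\sigma(j)}]
\]
and split the double sum into diagonal ($i=j$) and off-diagonal ($i\neq j$) terms.

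For the diagonal terms, $\mathbb E[a_{\sigma(i)}^2] = \frac1m \sum_\ell a_\ell^2 =: \frac{A}{m}$, where $A := \sum_\ell a_\ell^2$.

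For the off-diagonal terms, $(\sigma(i),\sigma(j))$ is uniform over ordered pairs of distinct indices. Therefore
\[
\mathbb E[a_{\sigma(i)}a_{\sigma(j)}] = \frac{1}{m(m-1)}\sum_{\ell\neq \ell'} a_\ell a_{\ell'} = \frac{1}{m(m-1)}\Big(\big(\textstyle\sum_\ell a_\ell\big)^2 - A\Big) = -\frac{A}{m(m-1)},
\]
where the last step uses $\sum_\ell a_\ell = 0$. The same manipulation gives $\sum_{i\neq j} b_ib_j = -B$, where $B := \sum_i b_i^2$.

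\textbf{Combining.}
\[
\mathbb E[S^2] = \frac{A}{m}B + \Big(-\frac{A}{m(m-1)}\Big)(-B) = \frac{AB}{m}\Big(1+\frac{1}{m-1}\Big) = \frac{AB}{m-1}.
\]
This is the claimed identity.

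The only step needing any care is the off-diagonal pair-marginal computation together with the zero-sum simplification. Everything else is routine bookkeeping, so there is no real obstacle.
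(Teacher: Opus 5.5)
Your proposal is correct and follows essentially the same route as the paper: show the mean is zero, expand $\mathbb E[S^2]$ using the one- and two-point marginals of $\sigma$, and simplify with the zero-sum constraints. The paper merely packages the diagonal and off-diagonal cases into the single kernel $\frac{1}{m-1}\bigl(\sum_p a_p^2\bigr)\bigl(\One{i=j}-\frac1m\bigr)$ before summing against $b_ib_j$, where you keep the two cases separate; the computation is the same.
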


\begin{proof}
  First note that
  \[
    \mathbb E \left[ \sum_{i=1}^m a_{\sigma(i)}b_i \right] = \frac{1}{m!} \sum_{i, j = 1}^m a_i b_j = 0.
  \]
  Then compute
  \[
    \mathbb E \left[ \left( \sum_{i=1}^m a_{\sigma(i)}b_i\right)^2 \right] = \mathbb E \left[ \sum_{i, j = 1}^m a_{\sigma(i)} a_{\sigma(j)} b_i b_j \right] = \sum_{i, j = 1}^m  b_i b_j \mathbb E[a_{\sigma(i)}a_{\sigma(j)}].
  \]
  Now for $i \neq j$,
  \[
    \mathbb E[a_{\sigma(i)}a_{\sigma(j)}]
    = \frac{1}{m(m - 1)}\sum_{p \neq q} a_p a_q
    = \frac{1}{m(m - 1)}\left( \sum_{p, q = 1}^m a_p a_q - \sum_{p=1}^m a_p^2 \right)
    = -\frac{1}{m(m - 1)}\sum_{p=1}^m a_p^2 
  \]
  and for $i = j$ we have
  \[
    \mathbb E[a_{\sigma(i)}a_{\sigma(j)}] = \frac{1}{m} \sum_{p=1}^m a_p^2
  \]
  so
  \[
    \mathbb E[a_{\sigma(i)}a_{\sigma(j)}] = \left( \frac{1}{m-1} \sum_{p=1}^m a_p^2 \right) \left( \One{i = j} - \frac{1}{m} \right).
  \]
  Thus we can compute
  \[
    \sum_{i, j = 1}^m  b_i b_j \mathbb E[a_{\sigma(i)}a_{\sigma(j)}] = \frac{\sum_{p=1}^m a_p^2}{m-1} \sum_{i=1}^m b_ib_j \left( \One{i = j} - \frac{1}{m} \right)
  \]
  and
  \[
    \sum_{i, j =1}^m b_ib_j \left( \One{i = j} - \frac{1}{m} \right) = \sum_{i=1}^m b_i^2
  \]
  as $\sum_{i, j=1}^m b_ib_j = 0$, which gives us the desired statement.
  
\end{proof}

\begin{lemma}\label{lem:unif_perm_fourth}
  Let $(a_1, \dots, a_m)$ and $(b_1, \dots, b_m)$, $m \geq 2$, be fixed sequences with $\sum_{i = 1}^m a_i = 0$ and $\sum_{i=1}^m b_i = 0$. Then for $\sigma \sim \mathrm{Unif}(\mathcal S_m)$ we have
  \[
    \mathbb E \left[ \left( \sum_{i=1}^m a_{\sigma(i)}b_i \right)^4 \right] \leq C\left\{\frac{A_2^2B_2^2}{m^2}+\frac{A_4B_4}{m}\right\},
  \]
  for a universal constant $C$, where $A_r=\sum_i a_i^r$ and $B_r=\sum_i b_i^r$. For $m\geq4$, the exact formula is
  \begin{align*}
    \mathbb E \left[ \left( \sum_{i=1}^m a_{\sigma(i)}b_i \right)^4 \right]
    ={}&\frac{3(m^2-3m+3)}{m(m-1)(m-2)(m-3)}A_2^2B_2^2\\
    &-\frac{3}{(m-2)(m-3)}\left(A_4B_2^2+A_2^2B_4\right)\\
    &+\frac{m(m+1)}{(m-1)(m-2)(m-3)}A_4B_4.
  \end{align*}
\end{lemma}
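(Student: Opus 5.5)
The plan is to compute the fourth moment exactly by expanding the power, and then to read the universal bound off the exact formula. Small $m$ will be handled separately.

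\emph{The case $m\in\{2,3\}$.} Cauchy--Schwarz gives the deterministic bound $\bigl(\sum_i a_{\sigma(i)}b_i\bigr)^2\leq A_2B_2$. Hence the fourth moment is at most $A_2^2B_2^2\leq 9A_2^2B_2^2/m^2$, which is of the required form. So it suffices to prove the exact formula for $m\geq4$ and then bound it.

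\emph{Expansion by partitions.} Expand
\[
\mathbb E\Bigl[\Bigl(\sum_i a_{\sigma(i)}b_i\Bigr)^4\Bigr]=\sum_{i_1,\dots,i_4}b_{i_1}\cdots b_{i_4}\,\mathbb E\bigl[a_{\sigma(i_1)}\cdots a_{\sigma(i_4)}\bigr].
\]
Group the index tuples $(i_1,\dots,i_4)$ by the set partition $\pi$ of $\{1,2,3,4\}$ that records which indices coincide. If $\pi$ has $r$ blocks with sizes $s_1,\dots,s_r$, then $\sigma$ maps the $r$ distinct indices to a uniformly random injective $r$-tuple. This gives
\[
\mathbb E\bigl[a_{\sigma(i_1)}\cdots a_{\sigma(i_4)}\bigr]=\frac{1}{(m)_r}\,S_\pi(a),\qquad S_\pi(a):=\sum_{p_1,\dots,p_r\ \text{distinct}}\prod_{\ell} a_{p_\ell}^{s_\ell},
\]
where $(m)_r=m(m-1)\cdots(m-r+1)$. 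Summing over the $i$'s with the same pattern $\pi$ produces $S_\pi(b)$. The moment therefore equals $\sum_\pi S_\pi(a)S_\pi(b)/(m)_{|\pi|}$.

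\emph{Evaluating the distinct-index sums.} The partitions fall into five shape classes, with multiplicities $1,4,3,6,1$ for the shapes $\{4\},\{3,1\},\{2,2\},\{2,1,1\},\{1,1,1,1\}$. I would evaluate each $S_\pi$ by inclusion--exclusion (M\"obius inversion on the partition lattice), substituting $A_1=0$. This should give
\[
S_{\{4\}}=A_4,\quad S_{\{3,1\}}=-A_4,\quad S_{\{2,2\}}=A_2^2-A_4,\quad S_{\{2,1,1\}}=-A_2^2+2A_4,\quad S_{\{1,1,1,1\}}=3A_2^2-6A_4,
\]
and all $A_3$ contributions cancel because each is multiplied by $A_1$. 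The same expressions hold for $b$. Collecting terms over the common denominator $m(m-1)(m-2)(m-3)$ then yields the stated coefficients of $A_2^2B_2^2$, $A_4B_2^2+A_2^2B_4$ and $A_4B_4$.

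\emph{From the exact formula to the bound.} For $m\geq4$ the coefficient of $A_2^2B_2^2$ is $O(1/m^2)$ and the coefficient of $A_4B_4$ is $O(1/m)$. The middle term enters with a negative sign and can simply be dropped. This gives the universal constant $C$.

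I expect the main obstacle to be the bookkeeping in the last two steps: getting each Möbius coefficient right and collecting the rational functions of $m$ without error. As a sanity check I would specialise the exact formula to $m=4$ with explicit vectors such as $a=b=(1,-1,0,0)$, and confirm that the second-moment analogue reproduces Lemma~\ref{lem:unif_perm_var}.
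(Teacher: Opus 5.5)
Your proposal is correct and matches the paper's proof essentially step for step. Both proofs expand by index-coincidence pattern, evaluate the same five distinct-index sums with the same multiplicities $1,4,3,6,1$ by inclusion--exclusion using $A_1=0$, and then drop the nonpositive middle term. Your Cauchy--Schwarz treatment of $m\in\{2,3\}$, giving $A_2^2B_2^2\leq 9A_2^2B_2^2/m^2$, closes the small cases more cleanly than the paper's remark that they are ``absorbed by increasing the universal constant.''
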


\begin{proof}
  For convenience, assume $m \geq 4$. First expand
  \[
    \mathbb E \left[ \left( \sum_{i=1}^m a_{\sigma(i)}b_i \right)^4 \right] = \sum_{i_1, i_2, i_3, i_4=1}^m b_{i_1} b_{i_2} b_{i_3} b_{i_4} \mathbb E[a_{\sigma(i_1)}a_{\sigma(i_2)}a_{\sigma(i_3)}a_{\sigma(i_4)}].
  \]
  Write
  \[
    A_r = \sum_{i=1}^m a_i^r
    \quad \text{and} \quad
    B_r = \sum_{i=1}^m b_i^r
  \]
  for $r = 1, 2, 3, 4$. We will need to compute expectations of the form
  \[
    \mathbb E\left[\prod_{\ell = 1}^k a_{\sigma(i_\ell)}^{r_\ell}\right] = \frac{(m - k)!}{m!} \sum_{i_1 \neq \cdots \neq i_k} \prod_{\ell = 1}^k a_{i_\ell}^{r_\ell}
  \]
  for $k = 1, 2, 3, 4$, which we can do by inclusion-exclusion. More precisely, for sets of the form $S = \{(p, q) \mid 1 \leq p, q \leq k \}$ (denoting collisions between indices),
  \[
    \sum_{i_1 \neq \cdots \neq i_k} \prod_{\ell = 1}^k a_{i_\ell}^{r_\ell} = \sum_{S} (-1)^{|S|} \sum_{\substack{\ell_1, \dots, i_k \\ i_p = i_q \forall (p, q) \in S}} \prod_{\ell=1}^k a_{i_\ell}^{r_\ell}
  \]
  by the usual inclusion-exclusion principle; we give explicit computations for $k = 1, 2, 3, 4$.
  \begin{itemize}
  \item $k = 1$. This is straightforwardly
    \[
      \sum_{i_1} a_{i_1}^{r_1} = A_{r_1}.
    \]
  \item $k = 2$. Similarly
    \[
      \sum_{i_1 \neq i_2} a_{i_1}^{r_1} a_{i_2}^{r_2} = A_{r_1} A_{r_2} - A_{r_1 + r_2}.
    \]
  \item $k = 3$. We have 
    \[
      \sum_{i_1 \neq i_2 \neq i_3} a_{i_1}^{r_1} a_{i_2}^{r_2}a_{i_3}^{r_3} = A_{r_1} A_{r_2} A_{r_3} - A_{r_1 + r_2}A_{r_3} - A_{r_1 + r_3}A_{r_2} - A_{r_2 + r_3}A_{r_1} + 2A_{r_1 + r_2 + r_3}.
    \]
    Here, the coefficient $2$ comes from the fact that $S = \{(1, 2), (1, 3), (2, 3)\}$ and
    \[
      S \in \{\{(1, 2), (2, 3)\}, S = \{(1, 2), (1, 3)\}, S = \{(1, 3), (2, 3)\}\}
    \]
    both give rise to
    \[
      \sum_{\substack{\ell_1, \dots, i_k \\ i_p = i_q \forall (p, q) \in S}} \prod_{\ell=1}^k a_{i_\ell}^{r_\ell} = A_{r_1 + r_2 + r_3}
    \]
    meaning that the overall coefficient is $(-1)^3 + 3 \cdot (-1)^2 = 2$.
  \item $k = 4$. A similar combinatorial calculation shows that
    \begin{multline*}
      \sum_{i_1 \neq \cdots \neq i_4} a_{i_1} a_{i_2} a_{i_3} a_{i_4} = A_{r_1}A_{r_2}A_{r_3}A_{r_4} - \sum_{\{i_1, i_2\} \subset [4]} A_{r_{i_1} + r_{i_2}} \prod_{i \neq i_1, i_2} A_{r_i} \\
      + \sum_{\substack{\{i_1, i_2\}, \{i_3, i_4\} \subset [4] \\ \{i_1, i_2\} \cap \{i_3, i_4\} = \emptyset}} A_{r_{i_1} + r_{i_2}} A_{r_{i_3} + r_{i_4}} + 2 \sum_{\substack{\{i_1, i_2, i_3\} \subset [4] \\ i_4 \neq i_1, i_2, i_3}} A_{r_{i_1} + r_{i_2} + r_{i_3}} A_{r_{i_4}} - 6 A_{r_1 + r_2 + r_3 + r_4}.
    \end{multline*}
    Here, the sums are over all partitions of $[4] = \{1, 2, 3, 4\}$ satisfying the given condition. For $r_1 = r_2 = r_3 = r_4 = 1$, this simplifies to
    \[
      A_1^4 - 6A_2A_1^2 + 3A_2^2 + 8A_3 A_1 - 6A_4 = 3A_2^2 - 6 A_4
    \]
    when $A_1 = 0$.
  \end{itemize}

  Now to compute $\sum_{i_1, i_2, i_3, i_4=1}^m b_{i_1} b_{i_2} b_{i_3} b_{i_4} \mathbb E[a_{\sigma(i_1)}a_{\sigma(i_2)}a_{\sigma(i_3)}a_{\sigma(i_4)}]$, we separate into the above cases.
  \begin{itemize}
  \item All four indices collide: $i_1 = i_2 = i_3 = i_4$. This case contributes
    \[
      \sum_{i=1}^m b_i^4 \mathbb E[a_{\sigma(i)}^4] = \frac{1}{m} A_4 B_4
    \]
    as the inner expectation is $\mathbb E[a_{\sigma(i)}^4] = \frac{1}{m} A_4$.
  \item Three indices collide (e.g. $i_1 = i_3 = i_4, i_1 \neq i_2$). The inner expectation is $\mathbb E[a_{\sigma(i_1)}^3 a_{\sigma(i_2)}] = \frac{A_3 A_1 - A_4}{m(m-1)} = -\frac{A_4}{m(m-1)}$ when $A_1 = 0$ so this case contributes
    \[
      4 \frac{-A_4}{m(m-1)} \sum_{i_1 \neq i_j}^m b_{i_1}^3 b_{i_2} = 4 \frac{A_4 B_4}{m(m - 1)}
    \]
    where the $4$ comes from the $4$ different ways to choose which three indices are the ones which collide.
  \item Two indices collide (e.g. $i_1 = i_4$, $i_1 \neq i_2 \neq i_3$). The inner expectation is
    \[
      \mathbb E[a_{\sigma(i_1)}^2 a_{\sigma(i_2)} a_{\sigma(i_3)}] = \frac{A_2 A_1^2 - 2A_3 A_1 - A_2^2 + 2A_4}{m(m-1)(m-2)} = \frac{2A_4 - A_2^2}{m(m-1)(m-2)}
    \]
    and similarly the contribution from the summed $b$ terms is also $2B_4 - B_2^2$, so the total contribution from this case is
    \[
      6 \frac{(2A_4 - A_2^2)(2B_4 - B_2^2)}{m(m-1)(m-2)}
    \]
    as there are $6$ ways to choose the colliding pair.
  \item Two pairs of indices collide (e.g. $i_1 = i_3, i_2 = i_4$, $i_1 \neq i_2$). The inner expectation is
    \[
      \mathbb E[a_{\sigma(i_1)}^2 a_{\sigma(i_2)}^2] = \frac{A_2^2 - A_4}{m(m-1)}
    \]
    and similarly the contribution from the summed $b$ terms is $B_2^2 - B_4$; thus the total contribution is
    \[
      3 \frac{(A_2^2 - A_4)(B_2^2 - B_4)}{m(m - 1)}
    \]
    where we have $3 = 6 / 2$ ways of choosing the two pairs.
  \item No indices collide (e.g. $i_1 \neq i_2 \neq i_3 \neq i_4)$. The inner expectation is, as computed above
    \[
      \mathbb E[a_{\sigma(i_1)}a_{\sigma(i_2)}a_{\sigma(i_3)}a_{\sigma(i_4)}] = \frac{3A_2^2 - 6A_4}{m(m-1)(m-2)(m-3)}
    \]
    and the contribution from the summed $b$ terms is also $3B_2^2 - 6B_4$ so we get a total contribution of
    \[
      \frac{(3A_2^2 - 6A_4)(3B_2^2 - 6B_4)}{m(m-1)(m-2)(m-3)}.
    \]
  \end{itemize}

  After summing the above contributions and simplifying, we see that 
  \begin{align*}
    \mathbb E \left[ \left( \sum_{i=1}^m a_{\sigma(i)}b_i \right)^4 \right]
    ={}& \frac{A_4 B_4}{m} + \frac{4 A_4 B_4}{m(m-1)}
      + \frac{3(A_2^2 - A_4)(B_2^2 - B_4)}{m(m-1)}\\
      &+ \frac{6(2A_4 - A_2^2)(2B_4 - B_2^2)}{m(m-1)(m-2)}
      + \frac{9(A_2^2 - 2A_4)(B_2^2 - 2B_4)}{m(m-1)(m-2)(m-3)}\\
    ={}&\frac{3(m^2-3m+3)}{m(m-1)(m-2)(m-3)}A_2^2B_2^2\\
      &- \frac{3}{(m - 2)(m-3)} (A_4B_2^2 + A_2^2 B_4)
      + \frac{m(m+1)}{(m - 1)(m-2)(m-3)} A_4B_4.
  \end{align*}
  The middle term is nonpositive. The first and third coefficients are bounded by $C/m^2$ and $C/m$, respectively, uniformly for $m\geq4$, and hence
  \[
    \mathbb E \left[ \left( \sum_{i=1}^m a_{\sigma(i)}b_i \right)^4 \right] \leq C\left\{\frac{A_2^2B_2^2}{m^2}+\frac{A_4B_4}{m}\right\}.
  \]
  For $m=2,3$, the finitely many collision patterns are absorbed by increasing the universal constant $C$.
\end{proof}

\begin{lemma}\label{lem:neighbor_expectation}
  For any measurable $f: \RR \to [0, \infty)$, we have that
  \[
    \mathbb E[f(Z_{N(1), n})] \leq 2 \mathbb E[f(Z_1)].
  \]
\end{lemma}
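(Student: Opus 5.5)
The plan is to compute the density of $Z_{N(1),n}$ directly and compare it with the density of $Z_1$. Let $F$ be the CDF of the common law $P_Z$ of the $Z_i$. I will first reduce to the uniform case via the quantile transform. If $Z_i = F^{-1}(U_i)$ with $U_i \iidsim \mathrm{Unif}[0,1]$, then $F^{-1}$ is nondecreasing. Hence the right nearest neighbor of $U_1$ among the $U_j$ maps to a point realizing the right nearest neighbor of $Z_1$, up to ties. Ties can be handled by any fixed tie-breaking convention, since tied values are equal and $f(Z_{N(1),n})$ does not depend on which tied index is selected. It therefore suffices to show $\mathbb E[g(U_{N(1),n})] \le 2\,\mathbb E[g(U_1)]$ for the nonnegative measurable function $g = f\circ F^{-1}$.

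In the uniform case, I will compute the law of $U_{N(1),n}$. The convention for the event that $U_1$ is the maximum (no right neighbor) has to be fixed. The natural choices are to set the contribution to $0$ or to take the left neighbor; either way, the bound should still follow with constant $2$. Conditioning on $U_1 = u$, the neighbor $W$ satisfies $\P(W > w \mid U_1=u) = (1-(w-u))^{n-1} - u^{n-1}$ for $w\in(u,1]$, exactly as computed in the proof of Theorem \ref{thm:vanishing_tv}. This gives the conditional density $(n-1)(1-w+u)^{n-2}$ on $(u,1]$. Integrating over $u\in[0,w]$ yields the marginal density of $W$ at $w$:
\[
\int_0^w (n-1)(1-w+u)^{n-2}\,\mathsf{d}u = 1-(1-w)^{n-1} \le 1.
\]
So the sub-probability density of $U_{N(1),n}$ on the event that a right neighbor exists is bounded by $1$, which is the uniform density. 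This already gives $\mathbb E[g(U_{N(1),n})\One{\text{right neighbor exists}}] \le \mathbb E[g(U_1)]$.

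The remaining event, where $U_1$ is the maximum, has probability $1/n$. If the convention assigns the left nearest neighbor there (or uses a cyclic wrap-around), the symmetric computation bounds that sub-density by $1$ as well. Summing the two pieces gives the factor $2$. I expect the main obstacle to be this bookkeeping: fixing the convention for the missing neighbor, and justifying the quantile reduction when $P_Z$ has atoms. For atoms, I would use the randomized probability integral transform together with tie-breaking by the uniform labels. This makes the map from uniform ranks to $Z$-ranks monotone, so the same density bound transfers through $g$.
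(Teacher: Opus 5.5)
Your proof is correct in substance, but it takes a different route from the one the paper relies on. The paper simply cites Lemma 9.4 of \cite{chatterjee2021new}, and the argument behind that lemma is a counting-plus-exchangeability one. Ties are broken uniformly at random, and $N(i)=i$ when $Z_i$ is the largest. Then the law of $f(Z_{N(i),n})$ does not depend on $i$, so $\mathbb E f(Z_{N(1),n})=\frac1n\mathbb E\sum_{i=1}^n f(Z_{N(i),n})$. Each index is the successor of at most one index, and is its own image only when it is the maximum. Hence $\sum_i f(Z_{N(i),n})\le 2\sum_j f(Z_j)$ holds deterministically. That route needs no densities and no quantile reduction, handles atoms automatically, and extends to any neighbour rule with bounded in-degree (e.g.\ nearest neighbours in $\RR^d$). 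Your computation instead gives the exact law of the neighbour in the uniform case; the sub-density $1-(1-w)^{n-1}$ is correct. The price is the monotone-transform bookkeeping. The convention in the cited lemma, $N(1)=1$ when $Z_1$ is the maximum, is not on your list. It is covered the same way, since the extra piece has sub-density $w^{n-1}\le 1$.

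One sentence should be removed: ties cannot be handled by \emph{any} fixed tie-breaking convention. The value of $Z_{N(1),n}$ depends on how index $1$ is ordered relative to the other indices sharing its value, not merely on which tied index is selected. A deterministic rule can break the inequality. For instance, take $\P(Z=1)=\epsilon=1-\P(Z=0)$ and $f(z)=\One{z=1}$, and always place index $1$ last among its ties. Then $\mathbb E f(Z_{N(1),n})\ge(1-\epsilon)\bigl(1-(1-\epsilon)^{n-1}\bigr)\approx(n-1)\epsilon$, which exceeds $2\epsilon=2\,\mathbb E f(Z_1)$ once $n\ge 4$ and $\epsilon$ is small. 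What is needed is exchangeable (uniformly random) tie-breaking. Your final paragraph supplies exactly this, using the randomized probability integral transform with ties broken by the uniform labels. It is also the convention of \cite{chatterjee2021new}, so the proof goes through once you rely on that construction alone.
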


\begin{proof}
  This is Lemma 9.4 of \cite{chatterjee2021new}.
\end{proof}

\begin{lemma}\label{lem:conv_in_prob}
  For any measurable $f: \RR \to \RR$, we have that $f(Z_{N(1), n}) \to f(Z_{1})$ in probability.
\end{lemma}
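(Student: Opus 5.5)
The approach is to show directly that $|f(Z_{N(1),n}) - f(Z_1)| > \epsilon$ has vanishing probability, using Lusin's theorem together with the fact that the right nearest neighbor collapses onto $Z_1$.

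\textbf{Step 1: nearest-neighbor distance vanishes.} Conditionally on $Z_1=z$, the right nearest neighbor exists with probability $1-\P(Z_2\le z)^{n-1}$. When it exists, $Z_{N(1),n}-z>h$ requires that no $Z_j$ falls in $(z,z+h]$. For $Z_1$-almost every $z$ we have $\P(Z\in(z,z+h])>0$ for all $h>0$, because the set of $z$ failing this has $P_Z$-measure zero: it is a countable union of the null sets $\{z: P_Z((z,z+h])=0\}$ over rational $h$. Hence $\P(|Z_{N(1),n}-Z_1|>h)\to0$ by dominated convergence. (If there is no right neighbor, e.g.\ $Z_1$ is the maximum, the convention for $N(1)$ applies; this event has probability at most $1/n$ in the continuous case and is handled by whatever convention the paper uses.)

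\textbf{Step 2: Lusin approximation.} Fix $\epsilon,\eta>0$. By Lusin's theorem applied to $P_Z$, there is a compact set $C$ with $P_Z(C^c)<\eta$ such that $f|_C$ is continuous, hence uniformly continuous. Choose $h$ so that $z,z'\in C$ and $|z-z'|\le h$ imply $|f(z)-f(z')|\le\epsilon$. Then
\[
\P(|f(Z_{N(1),n})-f(Z_1)|>\epsilon)\le \P(Z_1\notin C)+\P(Z_{N(1),n}\notin C)+\P(|Z_{N(1),n}-Z_1|>h).
\]
The first term is below $\eta$. For the second, Lemma~\ref{lem:neighbor_expectation} with $f=\One{\cdot\notin C}$ gives a bound of $2\eta$. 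The third term vanishes by Step 1. Sending $n\to\infty$ and then $\eta\to0$ completes the proof.

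\textbf{Main obstacle.} The delicate part is Step 1 when $P_Z$ has atoms or gaps. Atoms are harmless once ties are resolved, since a neighbor sharing the value gives distance $0$. Gaps are handled by the almost-everywhere support argument above, but this must be written carefully: one has to check that the exceptional $z$ (those with $P_Z((z,z+h])=0$ for some $h>0$) form a $P_Z$-null set. I would prove this by covering them with countably many intervals of zero mass, one for each rational $h$.
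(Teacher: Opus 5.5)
Your argument follows the route behind the paper's proof. The paper simply cites Lemma~9.5 of \cite{chatterjee2021new}, whose proof there combines the same three ingredients: Lusin's theorem, the factor-$2$ bound of Lemma~\ref{lem:neighbor_expectation} to control $\P(Z_{N(1),n}\notin C)$, and convergence of the right neighbour to $Z_1$. Your Step~2 is fine as written.

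The one claim that fails as stated is the null-set assertion in Step~1. The sets $\{z: P_Z((z,z+h])=0\}$ are not $P_Z$-null when $P_Z$ has atoms. For $P_Z=\tfrac12(\delta_0+\delta_1)$ and $h<1$, this set contains both atoms and so has full mass. Your covering argument only shows that such a set lies in countably many zero-mass intervals together with their endpoints, and those endpoints can be atoms. What is true is that the exceptional set minus the atoms is $P_Z$-null.

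You do flag atoms, but ``a neighbour sharing the value gives distance $0$'' is not automatic. At an atom $z$ followed by a gap of length at least $h$, the event $Z_{N(1),n}-Z_1>h$ occurs only if $Z_1$ is ranked last among the sample points tied at $z$. So you must fix the tie convention and bound this probability. Under the uniformly random tie-breaking of \cite{chatterjee2021new}, which is also the convention Lemma~\ref{lem:neighbor_expectation} relies on, this has conditional probability $\E[1/(1+B)]$ with $B\sim\mathrm{Bin}(n-1,P_Z(\{z\}))$, and this tends to $0$. Dominated convergence then completes Step~1, once you combine three pieces:
\begin{itemize}
\item this estimate at atoms followed by a gap;
\item your bound $(1-P_Z((z,z+h]))^{n-1}$ at every $z$ with $P_Z((z,z+h])>0$;
\item the nullity of the non-atomic exceptional points.
\end{itemize}

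The convention genuinely matters. If the right neighbour were defined as the smallest strictly larger value, the lemma would fail for the two-point law above with $f(z)=z$.
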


\begin{proof}
  This is Lemma 9.5 of \cite{chatterjee2021new}.
\end{proof}

\begin{lemma}\label{lem:ui}
  The family of random variables $\{p(x \mid Z_{N(1), n})\}_{n=1}^\infty$ is uniformly integrable for any fixed $x$.
\end{lemma}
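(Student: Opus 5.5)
The plan is to verify the uniform integrability criterion directly, $\lim_{M\to\infty}\sup_n \mathbb E\bigl[g(Z_{N(1),n})\One{g(Z_{N(1),n})>M}\bigr]=0$, where $g(z):=p(x\mid z)\geq 0$ for the fixed $x$. The main tool is the domination inequality of Lemma~\ref{lem:neighbor_expectation}, which moves expectations at the nearest neighbor back to expectations at $Z_1$.

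\emph{Setup.} First I fix a jointly measurable version of the conditional density $(x,z)\mapsto p(x\mid z)$; the common $\sigma$-finite dominating measure in Theorem~\ref{thm:vanishing_tv} lets us choose one. Then $g$ is a nonnegative measurable function of $z$. Next I note that $\mathbb E[g(Z_1)]=p(x)$, the marginal density at $x$. This is finite for $\mu$-a.e.\ $x$, since $\int p(x)\,\mathsf d\mu(x)=1$. That suffices for the dominated-convergence step in the proof of Theorem~\ref{thm:vanishing_tv}, where the lemma is only needed for $\mu$-a.e.\ $x$. I would state the lemma for such $x$, or assume $p(x)<\infty$. If $Z_1$ has no right neighbor, I use whatever convention the neighbor definition adopts (e.g.\ $Z_{N(1),n}=Z_1$); Lemma~\ref{lem:neighbor_expectation} is applied under that same convention.

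\emph{Key step.} For each $M>0$, apply Lemma~\ref{lem:neighbor_expectation} to the nonnegative measurable function $f_M(z):=g(z)\One{g(z)>M}$. This gives, uniformly in $n$,
\[
\mathbb E\bigl[g(Z_{N(1),n})\One{g(Z_{N(1),n})>M}\bigr]\leq 2\,\mathbb E\bigl[g(Z_1)\One{g(Z_1)>M}\bigr].
\]
The right-hand side does not depend on $n$. It tends to $0$ as $M\to\infty$ by dominated convergence, since $g(Z_1)\One{g(Z_1)>M}\to 0$ almost surely (because $g(Z_1)<\infty$ a.s.) and is dominated by the integrable $g(Z_1)$. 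Taking $\sup_n$ and then $M\to\infty$ gives uniform integrability. The same inequality with $M=0$ also gives $\sup_n\mathbb E[g(Z_{N(1),n})]\leq 2p(x)<\infty$, i.e.\ $L^1$-boundedness.

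\emph{Main obstacle.} The only delicate point is bookkeeping rather than analysis: ensuring $p(x)<\infty$, and ensuring $z\mapsto p(x\mid z)$ is measurable so that Lemma~\ref{lem:neighbor_expectation} applies. Both hold for $\mu$-a.e.\ $x$, which is all that is needed downstream.
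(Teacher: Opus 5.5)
Your proof is correct and is essentially the paper's argument: apply Lemma~\ref{lem:neighbor_expectation} to the truncated function $z\mapsto p(x\mid z)\One{p(x\mid z)\geq M}$ to get a bound uniform in $n$, then let $M\to\infty$ by dominated convergence using $\E[p(x\mid Z_1)]=p(x)<\infty$. Your remark that $p(x)<\infty$ is only guaranteed for $\mu$-a.e.\ $x$ is a valid tightening of the paper's ``for any fixed $x$,'' and that a.e.\ version is all the dominated-convergence step in Theorem~\ref{thm:vanishing_tv} needs.
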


\begin{proof}
  First, note that
  \[
    \E[p(x \mid Z_1)] = p(x) < \infty
  \]
  and so by dominated convergence
  \[
    \lim_{K \to \infty} \E[p(x \mid Z_1) \cdot \One{p(x \mid Z_1) \geq K}] = \E \left[ \lim_{K \to \infty} p(x \mid Z_1) \cdot \One{p(x \mid Z_1) \geq K} \right] = 0.
  \]
  Then, by Lemma \ref{lem:neighbor_expectation} applied to $f(z) = p(x \mid z) \cdot \One{p(x \mid z) \geq K}$, we have that
  \[
    \lim_{K \to \infty}\sup_n \E [p(x \mid Z_{N(1), n}) \cdot \One{p(x \mid Z_{N(1), n}) \geq K}] \leq \lim_{K \to \infty} 2 \E[p(x \mid Z_1) \cdot \One{p(x \mid Z_1) \geq K}] = 0
  \]
  so we may conclude.
\end{proof}

\begin{lemma}\label{lem:gen_hellinger}
  For any $1 \leq \gamma_1 \leq \gamma_2$, we have
  \[
    \mathrm d_{\mathrm H,\gamma_2}^{\gamma_2}(P, Q) \leq 
    \mathrm d_{\mathrm H,\gamma_1}^{\gamma_1}(P, Q).
  \]
  Moreover, for any $\gamma \geq 1$ and $1 \leq \alpha \leq 2$, it holds that
  \[
    \mathrm d_{\mathrm H,\gamma}(P, Q) \leq 2^{1 - 1/\gamma + 1/(\gamma \alpha)} \mathrm d_{\mathrm H,\gamma\alpha}(P, Q).
  \]
\end{lemma}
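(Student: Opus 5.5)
Both claims reduce to pointwise inequalities between the integrands. For the first I use superadditivity of a power function. For the second I use an elementary factorisation followed by Hölder's inequality. Throughout, $p,q$ are densities with respect to a common $\mu$, and the factor $\tfrac12$ in the definition of $\mathrm d_{\mathrm H,\gamma}$ is tracked separately.

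\emph{Monotonicity.} We must show
\[
\int|p^{1/\gamma_2}-q^{1/\gamma_2}|^{\gamma_2}\,\mathsf d\mu\le\int|p^{1/\gamma_1}-q^{1/\gamma_1}|^{\gamma_1}\,\mathsf d\mu,
\]
and it suffices to prove this pointwise. Put $r=\gamma_2/\gamma_1\ge1$, $u=p^{1/\gamma_2}$ and $v=q^{1/\gamma_2}$, and assume $u\ge v$ without loss of generality. Then $p^{1/\gamma_1}=u^r$ and $q^{1/\gamma_1}=v^r$. The pointwise claim becomes $(u-v)^{\gamma_2}\le(u^r-v^r)^{\gamma_1}$, which after taking $\gamma_1$-th roots is $(u-v)^r\le u^r-v^r$. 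This follows from superadditivity of the convex function $t\mapsto t^r$, which vanishes at $0$: $(u-v)^r+v^r\le u^r$.

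\emph{Comparison of $\gamma$ with $\gamma\alpha$.} Set $x=p^{1/(\gamma\alpha)}$ and $y=q^{1/(\gamma\alpha)}$, with $x\ge y$ pointwise without loss of generality. Then $|p^{1/\gamma}-q^{1/\gamma}|=x^\alpha-y^\alpha$. The key elementary step, valid for $1\le\alpha\le2$, is
\[
x^\alpha-y^\alpha\le(x-y)(x^{\alpha-1}+y^{\alpha-1}).
\]
The difference of the two sides equals $xy(x^{\alpha-2}-y^{\alpha-2})\le0$, because $\alpha-2\le0$ and $x\ge y$. By concavity of $t\mapsto t^{\alpha-1}$ we also have $x^{\alpha-1}+y^{\alpha-1}\le2^{2-\alpha}(x+y)^{\alpha-1}$. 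Hence
\[
\int|p^{1/\gamma}-q^{1/\gamma}|^\gamma\,\mathsf d\mu\le2^{\gamma(2-\alpha)}\int|x-y|^\gamma(x+y)^{\gamma(\alpha-1)}\,\mathsf d\mu.
\]
Next apply Hölder's inequality with exponents $\alpha$ and $\alpha/(\alpha-1)$; the case $\alpha=1$ is trivial. This bounds the right-hand integral by
\[
\Big(\int|x-y|^{\gamma\alpha}\,\mathsf d\mu\Big)^{1/\alpha}\Big(\int(x+y)^{\gamma\alpha}\,\mathsf d\mu\Big)^{(\alpha-1)/\alpha}.
\]
The first factor is $(2\,\mathrm d_{\mathrm H,\gamma\alpha}^{\gamma\alpha})^{1/\alpha}$. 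For the second, $(x+y)^{\gamma\alpha}\le2^{\gamma\alpha-1}(p+q)$, so the integral is at most $2^{\gamma\alpha}$ and the factor is at most $2^{\gamma(\alpha-1)}$.

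Collecting powers of $2$, including the $\tfrac12$ in the definition of $\mathrm d_{\mathrm H,\gamma}^\gamma$, gives
\[
\mathrm d_{\mathrm H,\gamma}^\gamma\le2^{\gamma(2-\alpha)+\gamma(\alpha-1)+1/\alpha-1}\,\mathrm d_{\mathrm H,\gamma\alpha}^\gamma=2^{\gamma-1+1/\alpha}\,\mathrm d_{\mathrm H,\gamma\alpha}^\gamma.
\]
Taking $\gamma$-th roots yields exactly the constant $2^{1-1/\gamma+1/(\gamma\alpha)}$.

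\emph{Main obstacle.} The delicate point is the pointwise factorisation. The tempting bound $x^\alpha-y^\alpha\le(x-y)(x+y)^{\alpha-1}$ is false for $\alpha\in(1,2)$; for instance it fails at $\alpha=1.5$, $y=x/2$. The argument therefore goes through $x^{\alpha-1}+y^{\alpha-1}$ and pays the concavity factor $2^{2-\alpha}$. Conveniently, this factor is what makes the exponent match the stated constant.
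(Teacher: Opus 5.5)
Your proof is correct and gives exactly the stated constant. The paper itself gives no argument for this lemma; it simply defers to Lemma~4 of \cite{kim2022local}.

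Both pointwise steps check out. The superadditivity inequality $(u-v)^r+v^r\le u^r$ handles the monotonicity claim. The factorisation $x^\alpha-y^\alpha\le (x-y)(x^{\alpha-1}+y^{\alpha-1})$ for $1\le\alpha\le 2$ handles the comparison claim. The bookkeeping $-1+\gamma(2-\alpha)+\gamma(\alpha-1)+1/\alpha=\gamma-1+1/\alpha$ is also right.

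Your route buys self-containedness. It also confirms that the cited constant matches this paper's $\tfrac12$-normalization. In unnormalized form you have shown
\[
\int|p^{1/\gamma}-q^{1/\gamma}|^\gamma\,\mathsf d\mu\le 2^\gamma\Bigl(\int|p^{1/(\gamma\alpha)}-q^{1/(\gamma\alpha)}|^{\gamma\alpha}\,\mathsf d\mu\Bigr)^{1/\alpha},
\]
and the factor $\tfrac12$ converts this into the exponent $1-1/\gamma+1/(\gamma\alpha)$.

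Two small remarks.

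First, the concavity detour is not needed. Apply H\"older directly to $|x-y|^\gamma(x^{\alpha-1}+y^{\alpha-1})^\gamma$ and use $(x^{\alpha-1}+y^{\alpha-1})^{s}\le 2^{s-1}(p+q)$ with $s=\gamma\alpha/(\alpha-1)$. This gives the same factor $2^\gamma$. So the concavity factor $2^{\gamma(2-\alpha)}$ is merely traded against your $2^{\gamma(\alpha-1)}$; it is not what makes the exponent match.

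Second, when $y=0<x$ and $\alpha<2$, the expression $xy(x^{\alpha-2}-y^{\alpha-2})$ is undefined. Write the difference as $yx^{\alpha-1}-xy^{\alpha-1}$ instead. This is $\le 0$ for all $x\ge y\ge 0$, with equality at $y=0$.
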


\begin{proof}
The above lemma can be found with proof as Lemma 4 of \cite{kim2022local}.
\end{proof}

\begin{lemma}\label{lem:renyi}
  For every fixed $\gamma>0$,
  \[
    \mathrm d_{\mathrm H}^2(P,Q)\leq C_\gamma\,\mathrm{d_{R,\gamma}}(P\|Q),
    \qquad
    C_\gamma=\frac12\max\left\{1,\frac{1-\gamma}{\gamma}\right\}.
  \]
\end{lemma}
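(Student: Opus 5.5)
The plan is to route everything through the R\'enyi divergence of order $1/2$, which is tied exactly to the Hellinger affinity. Under the paper's normalization, $\mathrm d_{\mathrm H}^2(P,Q)=\frac12\int(\sqrt p-\sqrt q)^2\,\mathsf d\mu = 1-\mathrm{BC}$, where $\mathrm{BC}=\int\sqrt{pq}\,\mathsf d\mu\in[0,1]$ is the Bhattacharyya coefficient. By definition, $\mathrm{d_{R,1/2}}(P\|Q)=-2\log \mathrm{BC}$. Since $-\log x\geq 1-x$ for $x\in(0,1]$ (both sides are $+\infty$ and $1$ respectively when $\mathrm{BC}=0$, so we may assume $\mathrm{BC}>0$), this gives the base inequality
\[
\mathrm d_{\mathrm H}^2(P,Q)=1-\mathrm{BC}\leq -\log\mathrm{BC}=\tfrac12\,\mathrm{d_{R,1/2}}(P\|Q).
\]
It then remains to compare $\mathrm{d_{R,1/2}}$ with $\mathrm{d_{R,\gamma}}$ for a general $\gamma$, and the two cases $\gamma\geq 1/2$ and $\gamma<1/2$ produce the two branches of the maximum in $C_\gamma$.

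For the comparison I will use the function $f(t)=\log\int p^t q^{1-t}\,\mathsf d\mu$ for $t\in[0,1]$, with the case $t>1$ handled through the same function. By H\"older's inequality, $f$ is convex in $t$. At the endpoints, $f(0)=\log Q(p>0)\leq 0$ and $f(1)=\log P(q>0)\leq 0$. In terms of $f$, $\mathrm{d_{R,t}}=f(t)/(t-1)$.

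\emph{Case $\gamma<1/2$.} Write $\gamma=2\gamma\cdot\frac12+(1-2\gamma)\cdot 0$. Convexity together with $f(0)\leq 0$ gives $f(\gamma)\leq 2\gamma f(1/2)$. Since $-f(1/2)=\frac12\mathrm{d_{R,1/2}}$, this yields
\[
\mathrm{d_{R,\gamma}}=\frac{-f(\gamma)}{1-\gamma}\geq\frac{2\gamma}{1-\gamma}\cdot\frac{\mathrm{d_{R,1/2}}}{2}=\frac{\gamma}{1-\gamma}\,\mathrm{d_{R,1/2}}.
\]
Combining with the base inequality gives $\mathrm d_{\mathrm H}^2\leq\frac12\cdot\frac{1-\gamma}{\gamma}\,\mathrm{d_{R,\gamma}}$, which is the second branch of $C_\gamma$.

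\emph{Case $\gamma\geq 1/2$.} Here I will invoke the standard fact that $t\mapsto \mathrm{d_{R,t}}$ is nondecreasing. For $t<1$ this is the statement that the secant slope $(f(t)-f(1))/(t-1)$ is nondecreasing by convexity of $f$, after using $f(1)\leq0$ to handle a possibly nonzero $f(1)$ when $P\not\ll Q$. For $t\geq1$ it follows from Jensen's inequality or log-convexity of $t\mapsto\int (p/q)^t q$, with $t=1$ obtained as the KL limit. Monotonicity gives $\mathrm{d_{R,1/2}}\leq\mathrm{d_{R,\gamma}}$, hence $\mathrm d_{\mathrm H}^2\leq\frac12\,\mathrm{d_{R,\gamma}}$, the first branch.

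The main obstacle I expect is the bookkeeping around support mismatch, that is, when $f(1)<0$ or $\mathrm{BC}=0$, together with the $\gamma=1$ and $\gamma>1$ cases of monotonicity. If the secant argument becomes awkward there, I would simply cite the monotonicity of R\'enyi divergence in its order (van Erven and Harremo\"es), since the inequalities hold trivially whenever the right-hand side is infinite.
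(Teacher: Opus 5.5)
Your proof is correct and matches the paper's argument. Both use the identity $\mathrm d_{\mathrm H}^2=1-\exp\{-\mathrm{d_{R,1/2}}/2\}$ for the base bound, monotonicity of R\'enyi divergence in its order for $\gamma\geq 1/2$, and log-convexity of $t\mapsto\int p^tq^{1-t}$ for $\gamma<1/2$; your interpolation between $t=0$ and $t=1/2$ with $f(0)\leq 0$ is that same step made explicit. Your secant-slope argument and the handling of support mismatch and $\mathrm{BC}=0$ simply fill in details that the paper delegates to van Erven and Harremo\"es.
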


\begin{proof}
  The identity
  \[
    \mathrm d_{\mathrm H}^2(P,Q)=1-\exp\{-\mathrm d_{R,1/2}(P\|Q)/2\}
  \]
  implies $\mathrm d_{\mathrm H}^2(P,Q)\leq \mathrm d_{R,1/2}(P\|Q)/2$. For $\gamma\geq1/2$, monotonicity of R\'enyi divergence in its order completes the proof. For $0<\gamma<1/2$, log-convexity of $t\mapsto\int p^tq^{1-t}$ gives
  $\mathrm d_{R,1/2}(P\|Q)\leq(1-\gamma)\mathrm d_{R,\gamma}(P\|Q)/\gamma$; see \cite{van2014renyi}.
\end{proof}

\begin{lemma}\label{lem:VarT_k}
  Under the model \eqref{model:linear_nongaussian_confounder} and in the notation of Section \ref{sec:power_linear_proof},
  \begin{multline*}
    \Var(T_k \mid \mathbf Z) = \Var(A_{k, 1} \mid \mathbf Z) + \Var(A_{k, 2} \mid \mathbf Z) + \Var(A_{k, 3} \mid \mathbf Z) + 2\Cov(A_{k, 1}, A_{k, 2} \mid \mathbf Z) \\
    = \frac{(\beta_{2, n}^2 + 1)\,S_k^{(1)} + (\beta_{1, n}^2 + 1)\,S_k^{(2)} + 2\beta_{1, n}\beta_{2, n}\,S_k^{(12)}}{m_k} \\
    + \frac{m_k - 1}{m_k^2}\left((\beta_{1, n}^2 + 1)(\beta_{2, n}^2 + 1) + \left(\mu_4 - 2 - \frac{\mu_4 - 3}{m_k}\right)\beta_{1, n}^2\beta_{2, n}^2\right).
  \end{multline*}
  In particular, when $U$ is Gaussian we have $\mu_4 = 3$ and the second term reduces to
  \[
    \frac{m_k - 1}{m_k^2}\left(\beta_{1, n}^2\beta_{2, n}^2 + (\beta_{1, n}^2 + 1)(\beta_{2, n}^2 + 1)\right),
  \]
  recovering \eqref{eq:var_nonperm}.
\end{lemma}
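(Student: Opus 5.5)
The plan is to reuse the decomposition from the proof of Theorem~\ref{thm:power_lpt_confounder_model_general}:
\[
T_k = S_k^{(12)} + A_{k,1}+A_{k,2}+A_{k,3},
\]
where $S_k^{(12)}$ is deterministic given $\mathbf Z$. We then recompute each variance and covariance term using only independence and moment assumptions, not Gaussianity. Write $W_i=\beta_{1,n}U_i+\epsilon_{1,i}$ and $V_i=\beta_{2,n}U_i+\epsilon_{2,i}$. Conditional on $\mathbf Z$, the pairs $(W_i,V_i)$ are i.i.d., mean zero, and have covariance matrix $\Sigma$ as before.

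The terms $\Var(A_{k,1}\mid\mathbf Z)$, $\Var(A_{k,2}\mid\mathbf Z)$ and $\Cov(A_{k,1},A_{k,2}\mid\mathbf Z)$ are bilinear in the centered noise. Their Gaussian-case computation used only $\Cov(V_i-\overline{\mathbf V}_k,V_j-\overline{\mathbf V}_k)=(\beta_{2,n}^2+1)(\One{i=j}-1/m_k)$ and its analogues, together with $\sum_{i\in B_k}(f_l(Z_i)-\overline{f_l(\mathbf Z)}_k)=0$. That computation therefore carries over verbatim and produces the first line of the claimed formula.

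For the cross terms $\Cov(A_{k,1},A_{k,3}\mid\mathbf Z)$ and $\Cov(A_{k,2},A_{k,3}\mid\mathbf Z)$, the sign-symmetry argument used earlier fails, because third moments such as $\kappa=\E[WV^2]$ need not vanish. Instead, I would note that by exchangeability of the i.i.d. rows within the bin,
\[
\E\bigl[(V_i-\overline{\mathbf V}_k)(W_j-\overline{\mathbf W}_k)(V_j-\overline{\mathbf V}_k)\bigr]=c(i,j),
\]
where $c(i,j)$ is a constant times $\kappa$ and depends only on whether $i=j$. Then
\[
\sum_{i,j}\bigl(f_1(Z_i)-\overline{f_1(\mathbf Z)}_k\bigr)c(i,j)=\bigl(c_{=}+(m_k-1)c_{\neq}\bigr)\sum_i\bigl(f_1(Z_i)-\overline{f_1(\mathbf Z)}_k\bigr)=0,
\]
so both covariances vanish. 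Finiteness of the relevant third moments follows from the assumed fourth moments.

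The main step is $\Var(A_{k,3}\mid\mathbf Z)$, the variance of a within-bin sample covariance of i.i.d. pairs. I would use the classical formula. With $\sigma_{12}=\beta_{1,n}\beta_{2,n}$, $\sigma_{11}\sigma_{22}=(\beta_{1,n}^2+1)(\beta_{2,n}^2+1)$ and $\mu_{22}=\E[W^2V^2]$, it reads
\[
\Var\Bigl(\sum_{i\in B_k}(W_i-\overline{\mathbf W}_k)(V_i-\overline{\mathbf V}_k)\Bigr)=\frac{(m_k-1)^2}{m_k}\Bigl(\mu_{22}-\sigma_{12}^2\Bigr)+\frac{m_k-1}{m_k}\bigl(\sigma_{11}\sigma_{22}+\sigma_{12}^2\bigr)-\frac{(m_k-1)^2}{m_k}\cdot\frac{\text{(correction)}}{m_k-1}.
\]
I would rederive the exact form by expanding $\sum_i W_iV_i - m_k\overline{\mathbf W}_k\overline{\mathbf V}_k$ and computing second moments by collision patterns, as in Lemma~\ref{lem:unif_perm_var}.

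Expanding the product and using independence of $U,\epsilon_1,\epsilon_2$ (odd cross terms vanish) gives
\[
\mu_{22}=\beta_{1,n}^2\beta_{2,n}^2\mu_4+\beta_{1,n}^2+\beta_{2,n}^2+1=(\beta_{1,n}^2+1)(\beta_{2,n}^2+1)+(\mu_4-1)\beta_{1,n}^2\beta_{2,n}^2.
\]
Substituting and dividing by $m_k^2$ should give
\[
\frac{m_k-1}{m_k^2}\Bigl(\sigma_{11}\sigma_{22}+\bigl(\mu_4-2-\tfrac{\mu_4-3}{m_k}\bigr)\sigma_{12}^2\Bigr).
\]
As a sanity check, setting $\mu_4=3$ must recover the Wishart value in \eqref{eq:var_nonperm}.

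I expect the main obstacle to be the bookkeeping in this sample-covariance variance: getting the $1/m_k$ correction coefficient $(\mu_4-3)$ exactly right. I would verify it against both the Gaussian case and the case $m_k=2$, where the sample covariance reduces to $\tfrac12(W_1-W_2)(V_1-V_2)$ and the variance can be computed by hand.
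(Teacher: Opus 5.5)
Your proposal follows the paper's proof essentially step for step: the same decomposition $T_k = S_k^{(12)} + A_{k,1}+A_{k,2}+A_{k,3}$, the same exchangeability-plus-centering argument to kill $\Cov(A_{k,1},A_{k,3}\mid\mathbf Z)$ and $\Cov(A_{k,2},A_{k,3}\mid\mathbf Z)$ once sign-symmetry is unavailable (the covariance depends only on whether $i=j$), and the same collision-pattern expansion of $C_k-D_k$ with $C_k=\sum_{i\in B_k}W_iV_i$ and $D_k=m_k\overline{\mathbf W}_k\overline{\mathbf V}_k$. The only loose end is your ``(correction)'' placeholder, and it is in fact zero: with $a=\mu_{22}-\sigma_{12}^2$ and $b=\sigma_{11}\sigma_{22}+\sigma_{12}^2$ one gets $\Var(C_k\mid\mathbf Z)=m_k a$, $\Cov(C_k,D_k\mid\mathbf Z)=a$ and $\Var(D_k\mid\mathbf Z)=b+(a-b)/m_k$, so $\Var(C_k-D_k\mid\mathbf Z)=\frac{(m_k-1)^2}{m_k}a+\frac{m_k-1}{m_k}b$ exactly, and dividing by $m_k^2$ gives the stated formula.
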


\begin{proof}
  The variance is more complicated, but still straightforward to compute. We use the decomposition $T_k = S_k^{(12)} + A_{k, 1} + A_{k, 2} + A_{k, 3}$ from the proof of Theorem \ref{thm:power_lpt_confounder_model_general} in Section \ref{sec:power_linear_proof}, and again decompose
  \[
    \Var(T_k \mid \mathbf Z) = \Var(A_{k, 1} + A_{k, 2} + A_{k, 3} \mid \mathbf Z)
  \]
  and compute all the corresponding variances and covariances. First,
  \begin{align*}
    \Var(A_{k, 1} \mid \mathbf Z) &= \frac{1}{m_k^2} \sum_{i, j \in B_k}  \left( f_1(Z_i) - \overline{f_1(\mathbf Z)}_k  \right) \left( f_1(Z_j) - \overline{f_1(\mathbf Z)}_k  \right) \Cov(V_i - \overline{\mathbf V}_k, V_j - \overline{\mathbf V}_{k} \mid \mathbf Z) \\
                                  &= \frac{1}{m_k^2} \sum_{i, j \in B_k}  \left( f_1(Z_i) - \overline{f_1(\mathbf Z)}_k  \right) \left( f_1(Z_j) - \overline{f_1(\mathbf Z)}_k  \right) (\beta_{2, n}^2 + 1) \left( \One{i = j} - \frac{1}{m_k} \right)\\
                                  &= \frac{\beta_{2, n}^2 + 1}{m_k}S_k^{(1)}
  \end{align*}
  where the last equality follows by noting that
  \[
    \sum_{i, j \in B_k}  \left( f_1(Z_i) - \overline{f_1(\mathbf Z)}_k  \right) \left( f_1(Z_j) - \overline{f_1(\mathbf Z)}_k  \right) = \left( \sum_{i \in B_k} \left( f_1(Z_i) - \overline{f_1(\mathbf Z)}_k \right) \right)^2 = 0.
  \]
  By symmetry
  \[
    \Var(A_{k, 2} \mid \mathbf Z) = \frac{\beta_{1, n}^2 + 1}{m_k} S_k^{(2)}.
  \]
  Then,
  \[
    \Var(A_{k, 3} \mid \mathbf Z)
    = \frac{1}{m_k^2}\Var\left(\sum_{i \in B_k}(W_i - \overline{\mathbf W}_k)(V_i - \overline{\mathbf V}_k) \mid \mathbf Z\right)
    = \frac{1}{m_k^2}\Var\left(\sum_{i \in B_k} W_iV_i - m_k \overline{\mathbf W}_k \overline{\mathbf V}_k \mid \mathbf Z\right).
  \]
  It will be useful to first compute expectations of various products of $W$ and $V$. First we compute
  \begin{align*}
    \E[W_i V_j \mid \mathbf Z] &= \beta_{1, n}\beta_{2, n}\One{i = j},\\
    \E[W_i W_j \mid \mathbf Z] &= (\beta_{1, n}^2 + 1)\One{i = j},\\
    \E[V_i V_j \mid \mathbf Z] &= (\beta_{2, n}^2 + 1)\One{i = j}.
  \end{align*}
  Now expanding
  \[
    W_i V_j W_p V_q = (\beta_{1, n} U_i + \epsilon_{1, i})(\beta_{2, n} U_j + \epsilon_{2, j})(\beta_{1, n} U_p + \epsilon_{1, p})(\beta_{2, n} U_q + \epsilon_{2, q}).
  \]
  gives $16$ cross-products; since $U$ and $\epsilon_1, \epsilon_2$ are mean-zero and independent, any term containing a single $U$, $\epsilon_1$, or $\epsilon_2$ at a single unique index vanishes in expectation. Four index-types survive:
  \begin{itemize}
  \item $i = j, p = q$: in this case,
    \[
      \E[ W_i V_j W_p V_q \mid \mathbf Z] = \E[W_iV_i\mid \mathbf Z] \E[W_pV_p\mid \mathbf Z] = \beta_{1,n}^2 \beta_{2, n}^2.
    \]
  \item $i = p, j = q$: in this case,
    \[
      \E[ W_i V_j W_p V_q \mid \mathbf Z] = \E[W_i^2\mid \mathbf Z] \E[V_j^2\mid \mathbf Z] = (\beta_{1, n}^2 + 1)(\beta_{2, n}^2 + 1).
    \]
  \item $i = q, j = p$: in this case,
    \[
      \E[ W_i V_j W_p V_q \mid \mathbf Z] = \E[W_iV_i\mid \mathbf Z] \E[W_jV_j\mid \mathbf Z] = \beta_{1,n}^2 \beta_{2, n}^2.
    \]
  \item $i = j = p = q$: in this case, expanding $W_i^2 V_i^2$ and noting that every term containing an odd power of $\epsilon_{1, 1}$ or $\epsilon_{2, 1}$ vanishes in expectation (in particular, the terms proportional to $\E[U_1^3]$ carry a factor of $\E[\epsilon_{1,1}]$ or $\E[\epsilon_{2,1}]$),
    \begin{multline*}
      \E[ W_i V_j W_p V_q \mid \mathbf Z] = \E[W_i^2V_i^2 \mid \mathbf Z] \\
      = \beta_{1, n}^2\beta_{2, n}^2 \E[U_1^4] + \beta_{1, n}^2\E[U_1^2]\E[\epsilon_{2, 1}^2] + \beta_{2, n}^2\E[U_1^2]\E[\epsilon_{1, 1}^2] + \E[\epsilon_{1, 1}^2]\E[\epsilon_{2, 1}^2] \\
      = \mu_4 \beta_{1,n}^2 \beta_{2, n}^2 + \beta_{1, n}^2 + \beta_{2, n}^2 + 1
      = (\mu_4 - 1)\beta_{1,n}^2 \beta_{2, n}^2 + (\beta_{1, n}^2 + 1)(\beta_{2, n}^2 + 1).
    \end{multline*}
  \end{itemize}
  For brevity, in the remainder of the proof we write
  \[
    a := \Var(W_1 V_1 \mid \mathbf Z) = (\mu_4 - 2)\beta_{1,n}^2 \beta_{2, n}^2 + (\beta_{1, n}^2 + 1)(\beta_{2, n}^2 + 1), \quad
    b := \beta_{1,n}^2 \beta_{2, n}^2 + (\beta_{1, n}^2 + 1)(\beta_{2, n}^2 + 1),
  \]
  so that $a - b = (\mu_4 - 3)\beta_{1,n}^2 \beta_{2, n}^2$ and $a = b$ exactly in the Gaussian case.

  Write $C_k = \sum_{i \in B_k} W_iV_i$ and $D_k = m_k \overline{\mathbf W}_k \overline{\mathbf V}_k = \frac{1}{m_k}\sum_{i,j \in B_k} W_i V_j$; then we first compute
  \[
    \Var(C_k \mid \mathbf Z) = m_k \Var(W_1 V_1 \mid \mathbf Z) = m_k a.
  \]
  Now develop
  \begin{align*}
    \Var(D_k \mid \mathbf Z) &= \frac{1}{m_k^2}\sum_{i, j, p, q \in B_k}\Cov(W_i V_j, W_p V_q \mid \mathbf Z) \\
                             &= \frac{1}{m_k^2} \sum_{i,j,p,q \in B_k} \mathbb E[W_i V_j W_p V_q \mid \mathbf Z] - \mathbb E[ W_iV_j \mid \mathbf Z]\mathbb E[ W_pV_q \mid \mathbf Z] \\
                             &= \frac{1}{m_k^2} \sum_{i, j, p, q \in B_k}(\beta_{1,n}^2+1)(\beta_{2,n}^2+1)\One{i = p}\One{j = q} + \beta_{1,n}^2\beta_{2,n}^2\One{i = q}\One{j = p} \\
                             &\qquad\qquad\qquad\qquad + (\mu_4 - 3)\beta_{1,n}^2\beta_{2,n}^2\One{i = j = p = q} \\
                             &= b + \frac{a - b}{m_k},
  \end{align*}
  where the third indicator corrects the diagonal: on $i = j = p = q$ the first two indicators contribute only $b$, whereas the centered fourth moment computed above equals $a = b + (\mu_4 - 3)\beta_{1,n}^2\beta_{2,n}^2$. (In the Gaussian case this correction vanishes and the first two terms happen to cover the diagonal exactly.)
  Finally, in
  \[
    \Cov(C_k, D_k \mid \mathbf Z) = \frac{1}{m_k}\sum_{i, p, q \in B_k}\Cov(W_i V_i, W_p V_q \mid \mathbf Z)
  \]
  only the terms with $p = q = i$ contribute: if $p = q \neq i$ the two products are independent, and if $p \neq q$ then either $W_p$ or $V_q$ appears at a unique index and decouples as a mean-zero factor. Hence
  \[
    \Cov(C_k, D_k \mid \mathbf Z) = \frac{1}{m_k} \cdot m_k \Var(W_1 V_1 \mid \mathbf Z) = a.
  \]
  So we get
  \begin{align*}
    \Var(A_{k, 3} \mid \mathbf Z) &= \frac{1}{m_k^2}\left( \Var(C_k \mid \mathbf Z) - 2\Cov(C_k, D_k \mid \mathbf Z) + \Var(D_k \mid \mathbf Z) \right) \\
    &= \frac{1}{m_k^2}\left( m_k a - 2a + b + \frac{a - b}{m_k} \right)
    = \frac{m_k - 1}{m_k^3}\left( (m_k - 1) a + b \right) \\
    &= \frac{m_k - 1}{m_k^2}\left( (\beta_{1, n}^2 + 1)(\beta_{2, n}^2 + 1) + \left( \mu_4 - 2 - \frac{\mu_4 - 3}{m_k} \right)\beta_{1, n}^2\beta_{2, n}^2 \right).
  \end{align*}
  Only the covariances between $A_{k, 1}, A_{k, 2}, A_{k, 3}$ remain. First,
  \begin{multline*}
    \Cov(A_{k, 1}, A_{k, 2} \mid \mathbf Z) = \frac{1}{m_k^2}\sum_{i, j \in B_k}\bigl(f_1(Z_i) - \overline{f_1(\mathbf Z)}_k\bigr)\bigl(f_2(Z_j) - \overline{f_2(\mathbf Z)}_k\bigr)\Cov(V_i - \overline{\mathbf V}_k, W_j - \overline{\mathbf W}_k \mid \mathbf Z) \\
    = \frac{\beta_{1,n}\beta_{2,n}}{m_k^2}\sum_{i \in B_k}\bigl(f_1(Z_i) - \overline{f_1(\mathbf Z)}_k\bigr)\bigl(f_2(Z_i) - \overline{f_2(\mathbf Z)}_k\bigr) = \frac{\beta_{1,n}\beta_{2,n}\, S_k^{(12)}}{m_k}
  \end{multline*}
  by an analogous expansion to the one used to compute $\Var(A_{k, 1} \mid \mathbf Z)$.
  Finally,
  \[
    \Cov(A_{k, 1}, A_{k, 3} \mid \mathbf Z) = \frac{1}{m_k^2} \sum_{i, j\in B_k} \left(f_1(Z_i) - \overline{f_1(\mathbf Z)_k} \right) \Cov \left( (V_i - \overline{\mathbf V}_k), (W_j - \overline{\mathbf W}_k)(V_j - \overline{\mathbf V}_k)\right).
  \]
  However, examining the covariance on the right shows that it takes on only two possible values depending on if $i = j$ (as it is clearly independent of the particular value of $j$ when $i \neq j$): write
  \[
    \Cov \left( (V_i - \overline{\mathbf V}_k), (W_j - \overline{\mathbf W}_k)(V_p - \overline{\mathbf V}_k)\right) =
    \begin{cases}
      \alpha_1 & i = j \\
      \alpha_2 & i \neq j.
    \end{cases}
  \]
  Then we can express
  \begin{multline*}
    \sum_{i, j\in B_k} \left(f_1(Z_i) - \overline{f_1(\mathbf Z)_k} \right) \Cov \left( (V_i - \overline{\mathbf V}_k), (W_j - \overline{\mathbf W}_k)(V_j - \overline{\mathbf V}_k)\right) \\
    = \alpha_1 \sum_{i \in B_k} \left(f_1(Z_i) - \overline{f_1(\mathbf Z)_k} \right) + 
    \alpha_2 \sum_{i, j \in B_k} \left(f_1(Z_i) - \overline{f_1(\mathbf Z)_k} \right) -
    \alpha_2 \sum_{i\in B_k} \left(f_1(Z_i) - \overline{f_1(\mathbf Z)_k} \right) = 0.
  \end{multline*}
  as $\sum_{i \in B_k} \left(f_1(Z_i) - \overline{f_1(\mathbf Z)_k} \right) = 0$ by construction. Note that this argument uses no distributional features of $(W_i, V_i)$ beyond their i.i.d.-ness across $i$; in particular, the nonzero third moments $\E[U^3], \E[\epsilon_1^3], \E[\epsilon_2^3]$ entering $\alpha_1$ and $\alpha_2$ are eliminated by the centering. So $\Cov(A_{k, 1}, A_{k, 3} \mid \mathbf Z) = 0$ (and by symmetry, $\Cov(A_{k, 2}, A_{k, 3} \mid \mathbf Z) = 0$ as well).
  Finally,
  \begin{multline}\label{eq:var_nonperm_general}
    \Var(T_k \mid \mathbf Z) = \Var(A_{k, 1} \mid \mathbf Z) + \Var(A_{k, 2} \mid \mathbf Z) + \Var(A_{k, 3} \mid \mathbf Z) + 2\Cov(A_{k, 1}, A_{k, 2} \mid \mathbf Z) \\
    = \frac{(\beta_{2, n}^2 + 1)\,S_k^{(1)} + (\beta_{1, n}^2 + 1)\,S_k^{(2)} + 2\beta_{1, n}\beta_{2, n}\,S_k^{(12)}}{m_k} \\
    + \frac{m_k - 1}{m_k^2}\left((\beta_{1, n}^2 + 1)(\beta_{2, n}^2 + 1) + \left(\mu_4 - 2 - \frac{\mu_4 - 3}{m_k}\right)\beta_{1, n}^2\beta_{2, n}^2\right).
  \end{multline}
\end{proof}

\begin{lemma}\label{lem:dtv_kl_comparison}
    In the setting of Lemma \ref{lem:binary_power_lb_orc}, for any $a > 0$, it holds that
    \[
    \dtv(P_{X, Y \mid Z_i}, P_{X \mid Z_i} \times P_{Y \mid Z_i}) \geq \frac{\overline{\mathrm{KL}}_{(i)} - \frac{2\sigma^2}{a} \exp\left(-\frac{a^2}{2\sigma^2}\right)}{\overline{\mathrm{KL}}_{(i)} + 2a}.
    \]
    In particular, taking $a = C \sigma$ for a constant $C > 0$,
    \[
    \dtv(P_{X, Y \mid Z_i}, P_{X \mid Z_i} \times P_{Y \mid Z_i}) \geq \min\left\{ \frac{\overline{\mathrm{KL}}_{(i)}}{4C\sigma},\ \frac{1}{2} \right\} - \frac{e^{-C^2/2}}{C^2},
    \]
    and if additionally $\sigma^2 \leq M\, \mathrm{V}_{\mathrm{KL}, (i)}$ for some constant $M \geq 1$, the same bound holds with $\sigma$ replaced by $\sqrt{M\, \mathrm{V}_{\mathrm{KL}, (i)}}$.
\end{lemma}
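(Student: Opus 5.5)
The argument lower-bounds the total variation by a truncated-expectation argument on $L:=\mathrm{LLR}^{(i)}(X,Y)$, with $P:=P_{X,Y\mid Z_i}$ and $Q:=P_{X\mid Z_i}\times P_{Y\mid Z_i}$, so that $\mathsf dP/\mathsf dQ=e^{L}$. The key device is the bounded test function $\phi_a(\ell):=\max\{-a,\min\{\ell,a\}\}$, which takes values in $[-a,a]$. For any $h$ with range in $[-a,a]$ we have
\[
\E_P[h]-\E_Q[h]\le 2a\,\dtv(P,Q).
\]
Applying this to $h=\phi_a(L)$ gives
\[
\E_P[\phi_a(L)]-\E_Q[\phi_a(L)]\le 2a\,\dtv(P,Q).
\]
It remains to show that the left-hand side is close to $\overline{\mathrm{KL}}_{(i)}=\E_P[L]-\E_Q[L]$, up to truncation errors that are controlled by sub-Gaussianity.

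Next we control the truncation errors. Write
\[
\E_P[L]-\E_P[\phi_a(L)]=\E_P[(L-a)_+]-\E_P[(-a-L)_+],
\]
and similarly under $Q$. The unfavorable contributions are $\E_P[(L-a)_+]$, since $P$ concentrates on large $L$ with mean $\mu_{1,i}\le\sigma^2/2$, and $\E_Q[(-a-L)_+]$, since $\mu_{0,i}\ge-\sigma^2/2$. Each is bounded by a Gaussian tail integral $\int_a^\infty \PP{|L-\mu|>t}\,\mathsf dt$ shifted by the mean. Using $\PP{L-\mu>t}\le e^{-t^2/(2\sigma^2)}$ together with the Mills-type bound $\int_a^\infty e^{-t^2/2\sigma^2}\,\mathsf dt\le \frac{\sigma^2}{a}e^{-a^2/2\sigma^2}$, the two errors together are at most $\frac{2\sigma^2}{a}e^{-a^2/(2\sigma^2)}$. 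The mean shifts, which are of size $\sigma^2/2$, need care: I would center the truncation at the respective means, or absorb them using $\overline{\mathrm{KL}}\le\sigma^2$.

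I expect the main obstacle to be getting the denominator to be $\overline{\mathrm{KL}}_{(i)}+2a$ rather than just $2a$. This suggests an alternative route: write
\[
\overline{\mathrm{KL}}=\E_Q[(e^{L}-1)L]
\]
and split by the event $|L|\le a$. On that event, use $|(e^{L}-1)L|$ together with $\E_Q|e^{L}-1|=2\dtv$. Off that event, use the sub-Gaussian tails. The resulting inequality has the form
\[
\overline{\mathrm{KL}}\le 2a\,\dtv+\dtv\cdot\overline{\mathrm{KL}}+\text{tail},
\]
and rearranging yields exactly $\dtv\ge(\overline{\mathrm{KL}}-\text{tail})/(\overline{\mathrm{KL}}+2a)$. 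I would try this decomposition first, with the middle term coming from bounding the contribution of $\{|L|\le a\}$ by $(a+\overline{\mathrm{KL}}/2)\E_Q|e^{L}-1|$.

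Finally, the corollary follows by substituting $a=C\sigma$, which makes the tail term $\frac{2\sigma}{C}e^{-C^2/2}$. We then split into two cases. If $\overline{\mathrm{KL}}\le 2a$, the ratio is at least $\overline{\mathrm{KL}}/(4C\sigma)$ minus the tail divided by the denominator. If $\overline{\mathrm{KL}}>2a$, the ratio is at least $1/2$. In both cases the tail term over a denominator of at least $2C\sigma$ is bounded by $e^{-C^2/2}/C^2$. Replacing $\sigma$ by the larger quantity $\sqrt{M\,\mathrm V_{\mathrm{KL},(i)}}$ is valid because the right-hand side, for fixed $C$ and in the relevant regime, is decreasing in the scale, and a $\sigma$-sub-Gaussian variable is also $\sigma'$-sub-Gaussian for any $\sigma'\ge\sigma$.
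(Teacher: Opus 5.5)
Your ingredients are the right ones: a clipped log-likelihood ratio, the bound ``range $\times\,\dtv$'', the Chernoff tail integral with the Mills-type estimate, and the two-case split for the corollary. But the step that produces the stated inequality is never carried out, and both routes you describe fail as written.

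Take first the symmetric clip $\phi_a$ to $[-a,a]$. Suppose the truncation errors were bounded by $\frac{2\sigma^2}{a}e^{-a^2/(2\sigma^2)}$. Then you would obtain $\dtv\ge(\overline{\mathrm{KL}}_{(i)}-\text{tail})/(2a)$, which exceeds $1$ whenever $\overline{\mathrm{KL}}_{(i)}>2a+\text{tail}$. The lemma allows this regime, since $a>0$ is arbitrary. The culprit is the mean shift you flag: $\E_P[(L-a)_+]=\int_{a-\mu_{1,i}}^{\infty}\P_P(L-\mu_{1,i}>s)\,\mathsf{d}s$ is not a tail once $\mu_{1,i}\gtrsim a$. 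Absorbing it via $\mu_{1,i}\le\sigma^2/2$ only yields tails at level $a-\sigma^2/2$, and nothing at all for $a\le\sigma^2/2$.

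The alternative you would try first has the same defect. On $\{|L|\le a\}$ you already get $\E_Q[(e^L-1)L\One{|L|\le a}]\le 2a\,\dtv$, so the extra $\overline{\mathrm{KL}}_{(i)}/2$ there is unearned slack. Meanwhile the off-event piece $\E_P[L\One{|L|>a}]-\E_Q[L\One{|L|>a}]$ contains $\E_P[L\One{L>a}]$. When $\mu_{1,i}\gtrsim a$ this is of order $\mu_{1,i}$, not a sub-Gaussian tail. So the inequality $\overline{\mathrm{KL}}\le 2a\,\dtv+\dtv\cdot\overline{\mathrm{KL}}+\text{tail}$ is asserted rather than derived.

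The missing idea is the one you mention in passing and then drop: clip asymmetrically at the respective means, $\psi_a:=\min\{\max\{L,\mu_{0,i}-a\},\mu_{1,i}+a\}$. This is exactly the paper's proof, and it resolves both issues at once.
\begin{itemize}
\item The range of $\psi_a$ has length $\mu_{1,i}-\mu_{0,i}+2a=\overline{\mathrm{KL}}_{(i)}+2a$. Your range bound then gives $\E_P[\psi_a]-\E_Q[\psi_a]\le(\overline{\mathrm{KL}}_{(i)}+2a)\,\dtv$ directly, so the denominator is not an obstacle at all.
\item The pointwise sandwich $L-(L-\mu_{1,i}-a)_+\le\psi_a\le L+(\mu_{0,i}-a-L)_+$ gives $\E_P[\psi_a]-\E_Q[\psi_a]\ge\overline{\mathrm{KL}}_{(i)}-\E_P[(L-\mu_{1,i}-a)_+]-\E_Q[(\mu_{0,i}-a-L)_+]$. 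Each error is now a genuine tail about the correct mean. Using the upper tail under $P$ and the lower tail under $Q$, each is at most $\frac{\sigma^2}{a}e^{-a^2/(2\sigma^2)}$.
\end{itemize}
Equivalently, your identity route works if you write $\overline{\mathrm{KL}}_{(i)}=\E_Q[(e^L-1)(L-c)]$, with $c$ the midpoint of $[\mu_{0,i}-a,\mu_{1,i}+a]$ (using $\E_Q[e^L]=1$), and replace $L$ by $\psi_a$; this is the same argument.

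Your handling of $a=C\sigma$ and of the substitution $\sigma\mapsto\sqrt{M\,\mathrm V_{\mathrm{KL},(i)}}$ is fine.
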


\begin{proof}
Fix $i \in [n]$ and write
\[
  \mu_{0,i} = \E_{P_{X\mid Z_i}\times P_{Y\mid Z_i}}\left[\mathrm{LLR}^{(i)}(X, Y)\right]
  \quad \text{and} \quad
  \mu_{1,i} = \E_{P_{X,Y\mid Z_i}}\left[\mathrm{LLR}^{(i)}(X, Y)\right]
\]
for the means of the log-likelihood ratio under the null and the alternative, so that $\overline{\mathrm{KL}}_{(i)} = \mu_{1,i} - \mu_{0,i} \geq 0$. For $a > 0$, define the truncated log-likelihood ratio
\[
  \psi_a(x, y) = \min\left\{ \max\left\{ \mathrm{LLR}^{(i)}(x, y),\ \mu_{0,i} - a \right\},\ \mu_{1,i} + a \right\},
\]
that is, $\mathrm{LLR}^{(i)}$ clipped to the interval $[\mu_{0,i} - a,\ \mu_{1,i} + a]$. The proof consists of two observations: the separation $\E_{P_{X,Y\mid Z_i}}[\psi_a(X, Y)] - \E_{P_{X\mid Z_i}\times P_{Y\mid Z_i}}[\psi_a(X, Y)]$ of this bounded test function (1) is close to $\overline{\mathrm{KL}}_{(i)}$, by sub-Gaussianity of $\mathrm{LLR}^{(i)}$, and (2) lower bounds the total variation, since $\psi_a$ is bounded.

\paragraph{The separation of $\psi_a$ lower bounds total variation.} Since $\psi_a$ takes values in an interval of length $\overline{\mathrm{KL}}_{(i)} + 2a$, writing $c$ for the midpoint of that interval,
\begin{multline*}
  \E_{P_{X,Y\mid Z_i}}[\psi_a(X, Y)] - \E_{P_{X\mid Z_i}\times P_{Y\mid Z_i}}[\psi_a(X, Y)] = \int (\psi_a - c)\,\bigl(\mathsf{d}P_{X,Y\mid Z_i} - \mathsf{d}(P_{X\mid Z_i}\times P_{Y\mid Z_i})\bigr) \\
  \leq \frac{\overline{\mathrm{KL}}_{(i)} + 2a}{2} \int \bigl|\mathsf{d}P_{X,Y\mid Z_i} - \mathsf{d}(P_{X\mid Z_i}\times P_{Y\mid Z_i})\bigr| = \left(\overline{\mathrm{KL}}_{(i)} + 2a\right) \dtv(P_{X, Y \mid Z_i}, P_{X \mid Z_i} \times P_{Y \mid Z_i}).
\end{multline*}

\paragraph{The separation of $\psi_a$ is close to $\overline{\mathrm{KL}}_{(i)}$.} Pointwise, the truncation satisfies
\[
  \mathrm{LLR}^{(i)} - \left( \mathrm{LLR}^{(i)} - \mu_{1,i} - a \right)_+ \;\leq\; \psi_a \;\leq\; \mathrm{LLR}^{(i)} + \left( \mu_{0,i} - a - \mathrm{LLR}^{(i)} \right)_+.
\]
By the Chernoff bound, sub-Gaussianity of $\mathrm{LLR}^{(i)}$ under the alternative gives
\[
  \P_{P_{X,Y\mid Z_i}}\left(\mathrm{LLR}^{(i)}(X, Y) - \mu_{1,i} > t\right) \leq e^{-t^2/(2\sigma^2)}
\]
so
\begin{multline*}
  \E_{P_{X,Y\mid Z_i}}\left[ \left(\mathrm{LLR}^{(i)}(X, Y) - \mu_{1,i} - a\right)_+ \right] = \int_a^\infty \P_{P_{X,Y\mid Z_i}}\left(\mathrm{LLR}^{(i)}(X, Y) - \mu_{1,i} > t\right) \mathsf{d}t \\
  \leq \int_a^\infty e^{-\frac{t^2}{2\sigma^2}}\,\mathsf{d}t \leq \int_a^\infty \frac{t}{a}\, e^{-\frac{t^2}{2\sigma^2}}\, \mathsf{d}t = \frac{\sigma^2}{a} e^{-\frac{a^2}{2\sigma^2}},
\end{multline*}
and, symmetrically,
\[
  \E_{P_{X\mid Z_i}\times P_{Y\mid Z_i}}
  \left[\left(\mu_{0,i}-a-\mathrm{LLR}^{(i)}(X,Y)\right)_+\right]
  \leq \frac{\sigma^2}{a}e^{-\frac{a^2}{2\sigma^2}},
\]
using the sub-Gaussianity of $-(\mathrm{LLR}^{(i)}(X, Y) - \mu_{0,i})$ under the null. Hence
\[
  \E_{P_{X,Y\mid Z_i}}[\psi_a(X, Y)] - \E_{P_{X\mid Z_i}\times P_{Y\mid Z_i}}[\psi_a(X, Y)] \geq \mu_{1,i} - \mu_{0,i} - \frac{2\sigma^2}{a}e^{-\frac{a^2}{2\sigma^2}} = \overline{\mathrm{KL}}_{(i)} - \frac{2\sigma^2}{a}e^{-\frac{a^2}{2\sigma^2}}.
\]
Combining the two bounds on the separation and rearranging gives the first display of the lemma.

For the second display, set $a = C\sigma$. For the leading term, since $\overline{\mathrm{KL}}_{(i)} + 2C\sigma \leq 2\max\{\overline{\mathrm{KL}}_{(i)},\ 2C\sigma\}$,
\[
  \frac{\overline{\mathrm{KL}}_{(i)}}{\overline{\mathrm{KL}}_{(i)} + 2C\sigma} \geq \min\left\{ \frac{\overline{\mathrm{KL}}_{(i)}}{4C\sigma},\ \frac{1}{2} \right\},
\]
while the error term satisfies
\[
  \frac{\frac{2\sigma^2}{C\sigma}e^{-C^2/2}}{\overline{\mathrm{KL}}_{(i)} + 2C\sigma} \leq \frac{2\sigma e^{-C^2/2}}{C \cdot 2C\sigma} = \frac{e^{-C^2/2}}{C^2}.
\]
Finally, if $\sigma^2 \leq M \,\mathrm{V}_{\mathrm{KL}, (i)}$, then $\overline{\mathrm{KL}}_{(i)}/(4C\sigma) \geq \overline{\mathrm{KL}}_{(i)}/(4C\sqrt{M\, \mathrm{V}_{\mathrm{KL}, (i)}})$, so the same lower bound holds with $\sigma$ replaced by $\sqrt{M\, \mathrm{V}_{\mathrm{KL}, (i)}}$.
\end{proof}

\begin{lemma}\label{lem:binary_drift_perturbation}
  In the notation of Section \ref{sec:prf_gaussian_binary_power}, for every bin $k$,
  \[
    |R_k|
    \leq \frac{2}{\pi\sqrt{1-\bar\rho^2}}
    \left( |\delta_{1,k}\,\delta_{2,k}|
      + |\rho_n| \bigl( \delta_{1,k}^2 + \delta_{2,k}^2 \bigr) \right)
    \leq \frac{4}{\pi\sqrt{1-\bar\rho^2}}
    \left( \sqrt{S_k^{(1)} S_k^{(2)}}
      + |\rho_n| \bigl( S_k^{(1)} + S_k^{(2)} \bigr) \right).
  \]
\end{lemma}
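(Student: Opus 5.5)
The plan is to reduce $R_k$ to a perturbation of a bivariate normal orthant probability and bound it by integrating the gradient along a ray. Fix a bin and condition on $\mathbf Z$. Write $s=\sqrt{1-\rho_n^2}$, so that $s\geq\sqrt{1-\bar\rho^2}$. Standardizing $D_1,D_2$ by $\sigma_1,\sigma_2$ does not change the sign of $D_1D_2$, so
\[
p_k=h(\delta_{1,k},\delta_{2,k}),\qquad h(d_1,d_2):=\P\big((G_1+d_1)(G_2+d_2)>0\big),
\]
where $(G_1,G_2)$ is standard bivariate normal with correlation $\rho_n$. Let $\Phi_2$ denote the joint CDF of $(G_1,G_2)$. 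Using the symmetry $(G_1,G_2)\eqd(-G_1,-G_2)$, I would write
\[
h(d)=\Phi_2(d_1,d_2)+\Phi_2(-d_1,-d_2).
\]
By Sheppard's formula (already used in the paper), $h(0,0)=\tfrac12+q_n$. Hence $R_k=2\big(h(\delta_{1,k},\delta_{2,k})-h(0,0)\big)$.

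The key step is a first-derivative bound showing that the gradient of $h$ vanishes linearly at the origin. This is what turns a naive first-order bound into a quadratic one. Differentiating, with $u=(d_2-\rho_n d_1)/s$, gives
\[
\partial_1 h(d)=\varphi(d_1)\big[\Phi(u)-\Phi(-u)\big]=2\varphi(d_1)\big(\Phi(u)-\tfrac12\big).
\]
Since $|\Phi(u)-\tfrac12|\leq\varphi(0)|u|$ and $\varphi(d_1)\leq\varphi(0)$, this yields
\[
|\partial_1 h(d)|\leq 2\varphi(0)^2|u|=\frac{|d_2-\rho_n d_1|}{\pi s}.
\]
The symmetric bound $|\partial_2 h(d)|\leq|d_1-\rho_n d_2|/(\pi s)$ holds as well. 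The point is that no Hessian bound is needed: the gradient itself is controlled by a linear function of $d$, uniformly in $d$, so large drifts cause no trouble.

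Next I integrate along the ray. With $\delta=(\delta_{1,k},\delta_{2,k})$,
\[
h(\delta)-h(0)=\int_0^1\nabla h(t\delta)\cdot\delta\,\mathsf{d}t .
\]
Plugging in the gradient bound at $t\delta$, which carries a factor $t$, gives
\[
|h(\delta)-h(0)|\leq\frac{1}{2\pi s}\Big(|\delta_1|\,|\delta_2-\rho_n\delta_1|+|\delta_2|\,|\delta_1-\rho_n\delta_2|\Big)\leq\frac{1}{2\pi s}\Big(2|\delta_1\delta_2|+|\rho_n|(\delta_1^2+\delta_2^2)\Big).
\]
Multiplying by $2$ and using $s\geq\sqrt{1-\bar\rho^2}$ gives the first inequality of the lemma, with room to spare in the constant.

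For the second inequality, I substitute $\delta_{l,k}^2\leq 2S_k^{(l)}$, established in Section~\ref{sec:prf_gaussian_binary_power}. This gives $|\delta_{1,k}\delta_{2,k}|\leq 2\sqrt{S_k^{(1)}S_k^{(2)}}$ and $\delta_{1,k}^2+\delta_{2,k}^2\leq 2(S_k^{(1)}+S_k^{(2)})$, which yields the factor $4/(\pi\sqrt{1-\bar\rho^2})$.

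The main obstacle is the derivative formula for $\partial_1\Phi_2$ and the sign bookkeeping in $h$. I would verify them via the conditional representation $G_2\mid G_1\sim N(\rho_n G_1,s^2)$, which gives $\partial_1\Phi_2(d_1,d_2)=\varphi(d_1)\Phi(u)$. Differentiating the term $\Phi_2(-d_1,-d_2)$ then contributes $-\varphi(d_1)\Phi(-u)$.
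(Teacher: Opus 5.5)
Your proof is correct, and it reaches the lemma by a different route than the paper.

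\textbf{The paper's route.} The paper couples $T_k$ with the driftless $T_k^0=\mathrm{sign}(G_1)\mathrm{sign}(G_2)$ and telescopes $T_k-T_k^0$ into two sign-flip terms. Each term is bounded by the flip probability $|\delta_{l,k}|/\sqrt{2\pi}$ times the conditional bias of the other coordinate's sign, with $|g|$ on the flip interval replaced by its maximum $|\Delta_{1,k}|$.

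\textbf{Relation to your argument.} In terms of your function $h$, that telescoping is exactly
$h(\delta)-h(0)=\bigl[h(\delta_{1,k},\delta_{2,k})-h(0,\delta_{2,k})\bigr]+\bigl[h(0,\delta_{2,k})-h(0,0)\bigr]$.
This corresponds to traversing the L-shaped path $(0,0)\to(0,\delta_{2,k})\to(\delta_{1,k},\delta_{2,k})$. Your factorization $\partial_1 h(d)=\varphi(d_1)\bigl(2\Phi(u)-1\bigr)$ is precisely ``density at the flip boundary times $\E[\mathrm{sign}(G_2+d_2)\mid G_1=-d_1]$''. So both arguments rest on the same fact: the gradient of the orthant probability is bounded globally by a linear function vanishing at the origin. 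Hence neither needs a Taylor remainder, and large drifts are harmless.

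\textbf{What differs, and what each buys.} You integrate along the straight ray and keep the factor $t$ exactly, whereas the paper uses worst-case bounds on each leg. As a result you obtain
\[
|R_k|\le\frac{1}{\pi\sqrt{1-\rho_n^2}}\Bigl(2|\delta_{1,k}\delta_{2,k}|+|\rho_n|\bigl(\delta_{1,k}^2+\delta_{2,k}^2\bigr)\Bigr),
\]
which halves the constant on the $|\rho_n|$ term and treats the two coordinates symmetrically. The paper's coupling, in exchange, never differentiates the bivariate normal CDF and works directly with the random signs. It also makes explicit the intuition the paper relies on: when the drift flips one coordinate's sign, the other sign is nearly unbiased, which is why $R_k$ is quadratic rather than linear in the drifts.
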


\begin{proof}
Write $G_l = D_l - \Delta_{l,k}$ for the driftless differences, so that
$(G_1, G_2)$ is centered bivariate Gaussian with variances
$\sigma_1^2, \sigma_2^2$ and correlation $\rho_n$, and define
$T_k^0 = \mathrm{sign}(G_1)\mathrm{sign}(G_2)$ on the same probability space.
By the computation of Step 1 applied to $(G_1, G_2)$, we have
$\E[T_k^0 \mid \mathbf Z] = 2 q_n$, and hence
$R_k = \E[T_k - T_k^0 \mid \mathbf Z]$. Telescoping,
\begin{equation}\label{eq:binary_telescope}
  T_k - T_k^0
  = \bigl( \mathrm{sign}(D_1) - \mathrm{sign}(G_1) \bigr)\mathrm{sign}(D_2)
  + \mathrm{sign}(G_1)\bigl( \mathrm{sign}(D_2) - \mathrm{sign}(G_2) \bigr).
\end{equation}

\paragraph{The flip event.} For any $\Delta \in \mathbb R$ and any random
variable $g$ with a continuous distribution, almost surely
\[
  \mathrm{sign}(\Delta + g) - \mathrm{sign}(g)
  = 2\mathrm{sign}(\Delta) \cdot \One{g \in J},
  \qquad
  J := \bigl( -\Delta \wedge 0,\ -\Delta \vee 0 \bigr),
\]
an interval of length $|\Delta|$ (check the cases $\Delta > 0$ and
$\Delta < 0$ directly; $\Delta = 0$ gives $J = \emptyset$). Since the density
of $G_l$ is bounded by $1/(\sigma_l \sqrt{2\pi})$, the flip probability
satisfies
\begin{equation}\label{eq:flip_prob}
  \P(G_l \in J_l \mid \mathbf Z)
  \leq \frac{|\Delta_{l,k}|}{\sigma_l \sqrt{2\pi}}
  = \frac{|\delta_{l,k}|}{\sqrt{2\pi}}.
\end{equation}

\paragraph{Near-unbiasedness of the other coordinate.} Conditional on $G_1 = g$, we have
\[
  G_2 \sim N\bigl( \rho_n (\sigma_2/\sigma_1) g, \sigma_2^2 (1-\rho_n^2) \bigr)
\]
so, writing $\Phi$ for the standard normal CDF,
\[
  \E\bigl[ \mathrm{sign}(D_2) \mid G_1 = g, \mathbf Z \bigr]
  = 2 \Phi\!\left( \frac{\Delta_{2,k} + \rho_n (\sigma_2/\sigma_1) g}
                        {\sigma_2 \sqrt{1-\rho_n^2}} \right) - 1,
\]
and the Lipschitz bound $|2\Phi(x) - 1| \leq \sqrt{2/\pi}|x|$ gives
\begin{equation}\label{eq:unbiased_other_coord}
  \bigl| \E\bigl[ \mathrm{sign}(D_2) \mid G_1 = g, \mathbf Z \bigr] \bigr|
  \leq \sqrt{\frac{2}{\pi}} \cdot
       \frac{|\delta_{2,k}| + |\rho_n| |g| / \sigma_1}{\sqrt{1-\rho_n^2}}.
\end{equation}

\paragraph{First term of \eqref{eq:binary_telescope}.} By (i), this term
equals $2\mathrm{sign}(\Delta_{1,k}) \One{G_1 \in J_1}\mathrm{sign}(D_2)$
almost surely. Conditioning on $G_1$, noting that $|g| \leq |\Delta_{1,k}|$ on
$J_1$ (so that $|g|/\sigma_1 \leq |\delta_{1,k}|$
in~\eqref{eq:unbiased_other_coord}), and applying~\eqref{eq:flip_prob}
and~\eqref{eq:unbiased_other_coord},
\[
\begin{aligned}
  &\bigl| \E\bigl[ \bigl( \mathrm{sign}(D_1) - \mathrm{sign}(G_1) \bigr)
                  \mathrm{sign}(D_2) \mid \mathbf Z \bigr] \bigr|\\
  &\qquad\leq 2 \cdot \frac{|\delta_{1,k}|}{\sqrt{2\pi}}
       \cdot \sqrt{\frac{2}{\pi}}
       \cdot \frac{|\delta_{2,k}| + |\rho_n| |\delta_{1,k}|}{\sqrt{1-\rho_n^2}}\\
  &\qquad= \frac{2}{\pi} \cdot
    \frac{|\delta_{1,k}| \bigl( |\delta_{2,k}| + |\rho_n| |\delta_{1,k}| \bigr)}
         {\sqrt{1-\rho_n^2}}.
\end{aligned}
\]

\paragraph{Second term of \eqref{eq:binary_telescope}.} Symmetrically, it
equals $2\mathrm{sign}(\Delta_{2,k}) \One{G_2 \in J_2}\mathrm{sign}(G_1)$
almost surely; now condition on $G_2 = g$, under which
$G_1 \sim N\bigl( \rho_n (\sigma_1/\sigma_2) g,\ \sigma_1^2(1-\rho_n^2) \bigr)$.
Since only the \emph{driftless} $\mathrm{sign}(G_1)$ appears here, the
analogue of~\eqref{eq:unbiased_other_coord} carries no $\delta_{1,k}$ term:
\[
  \bigl| \E[\mathrm{sign}(G_1) \mid G_2 = g, \mathbf Z] \bigr|
  \leq \sqrt{\frac{2}{\pi}} \cdot
       \frac{|\rho_n| |g| / \sigma_2}{\sqrt{1-\rho_n^2}}
  \leq \sqrt{\frac{2}{\pi}} \cdot
       \frac{|\rho_n| |\delta_{2,k}|}{\sqrt{1-\rho_n^2}}
  \quad \text{on } J_2,
\]
which gives
\[
  \bigl| \E\bigl[ \mathrm{sign}(G_1)
                  \bigl( \mathrm{sign}(D_2) - \mathrm{sign}(G_2) \bigr)
                  \mid \mathbf Z \bigr] \bigr|
  \leq \frac{2}{\pi} \cdot \frac{|\rho_n| \delta_{2,k}^2}{\sqrt{1-\rho_n^2}}.
\]
Summing the bounds above and using $1 - \rho_n^2 \geq 1 - \bar\rho^2$ yields the first inequality of the lemma; the second follows from the fact that
\[
  \delta_{l,k}^2
  = \frac{\Delta_{l,k}^2}{\sigma_l^2}
  = \frac{4 S_k^{(l)}}{2 \bigl( \beta_{l,n}^2 + 1 \bigr)}
  = \frac{2 S_k^{(l)}}{\beta_{l,n}^2 + 1}
  \leq 2 S_k^{(l)},
  \qquad l \in \{1, 2\}.
\]
which gives $|\delta_{1,k}\delta_{2,k}| \leq 2 \sqrt{S_k^{(1)} S_k^{(2)}}$ and $\delta_{l,k}^2 \leq 2 S_k^{(l)}$.
\end{proof}

\begin{lemma}
\label{lem:permutation_variance}
Let $m\geq 2$, let $\sigma\sim\mathrm{Unif}(\mathcal S_m)$, and let $\tau$ be independent of $\sigma$ and uniformly distributed over the $\binom{m}{2}$ transpositions in $\mathcal S_m$. For any function $f:\mathcal S_m\to\mathbb R$,
\[
  \Var \left(f(\sigma)\right)
  \leq
  \frac{m-1}{4}
  \E\left[
    \left(f(\sigma)-f(\sigma\circ\tau)\right)^2
  \right].
\]
\end{lemma}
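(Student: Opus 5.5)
The inequality is a Poincaré inequality for the random-transposition walk on $\cS_m$. I would prove it spectrally, by computing the spectral gap of that walk.

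\textbf{Setup.} Work in $L^2(\cS_m)$ with the uniform measure, and define the Markov operator $(Pf)(\sigma) = \E_\tau[f(\sigma\circ\tau)]$. Since every transposition is its own inverse and $\tau$ is uniform over a conjugacy class, $P$ is self-adjoint and commutes with the left regular action. Expanding the square and using that $\sigma\circ\tau$ is again uniform gives
\[
\E\left[(f(\sigma)-f(\sigma\circ\tau))^2\right] = 2\langle f,(I-P)f\rangle .
\]
Constants lie in the kernel of $I-P$. It therefore suffices to show that every eigenvalue of $P$ on the orthogonal complement of the constants is at most $1-\frac{2}{m-1}$. Writing $f-\E f$ in an eigenbasis, this yields
\[
2\langle f,(I-P)f\rangle \ge 2\cdot\tfrac{2}{m-1}\Var(f(\sigma)),
\]
which is exactly the claim.

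\textbf{Eigenvalues of $P$.} I would decompose $L^2(\cS_m)$ into isotypic components indexed by partitions $\lambda\vdash m$. Because $P$ is convolution by the class function $\mathbf 1\{\text{transposition}\}/\binom m2$, Schur's lemma shows that $P$ acts on the $\lambda$-component as the scalar $r(\lambda)=\chi_\lambda(\tau)/d_\lambda$. Frobenius' formula (equivalently, the Jucys--Murphy elements) gives
\[
r(\lambda)=\binom m2^{-1}\sum_{(i,j)\in\lambda}(j-i),
\]
the normalized sum of contents. For the trivial partition $\lambda=(m)$ this equals $1$. For $\lambda=(m-1,1)$ it equals
\[
\frac{(m-1)(m-2)/2-1}{m(m-1)/2}=\frac{m-3}{m-1}=1-\frac{2}{m-1}.
\]

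\textbf{Main obstacle.} The key step is showing that $(m-1,1)$ maximizes the content sum among all $\lambda\neq(m)$. I would use the standard dominance argument of Diaconis--Shahshahani. Moving a box from a lower row to a higher row, while keeping a valid diagram, strictly increases the content sum, so the content sum is monotone in dominance order. Every $\lambda\neq(m)$ is dominated by $(m-1,1)$, so the maximum over nontrivial $\lambda$ is attained there.

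Negative eigenvalues cause no difficulty, since they only make $1-r(\lambda)$ larger. The case $m=2$ can be checked directly: there $P$ is a swap, the variance is $\tfrac14\E[(f(\sigma)-f(\sigma\circ\tau))^2]$, and this matches $(m-1)/4$.
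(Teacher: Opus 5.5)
Your proposal is correct, and the constant checks out. The character ratio of $(m-1,1)$ at a transposition is $(m-3)/(m-1)$, so the gap is $2/(m-1)$. Combined with $\E[(f(\sigma)-f(\sigma\circ\tau))^2]=2\langle f,(I-P)f\rangle$, this gives exactly the factor $\frac{m-1}{4}$.

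It is, however, a genuinely different route from the paper's, which uses no representation theory and argues by induction on $m$. With $f$ centered, set $g_r(x)=\E[f(\sigma)\mid\sigma(r)=x]$ and $W_r=\E[g_r(\sigma(r))^2]$. The paper first proves the projection bound $\sum_r W_r\le\frac{m}{m-1}\Var(f(\sigma))$ by Cauchy--Schwarz against $G(\sigma)=\sum_r g_r(\sigma(r))$. It then applies the $(m-1)$-case conditionally on $\sigma(r)=x$, using only the transpositions that fix $r$, to bound $\Var(f(\sigma))-W_r$. Summing over $r$ counts each transposition $m-2$ times, and the two bounds combine to give the claim.

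That argument is fully self-contained and elementary. Yours leans on Frobenius' formula for $\chi_\lambda(\tau)/d_\lambda$ and on the Diaconis--Shahshahani dominance monotonicity. In exchange, yours gives the whole spectrum of the random-transposition walk, shows the constant $\frac{m-1}{4}$ is sharp, and identifies the equality cases: $f-\E f$ lies in the $(m-1,1)$ isotypic component, e.g.\ $\sigma\mapsto\sum_{r,x}A_{rx}\One{\sigma(r)=x}$ with $A$ having zero row and column sums.

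One simplification for your key step: you do not need the general fact that dominance order is generated by box moves. From any $\lambda\neq(m)$ you can reach $(m-1,1)$ by repeatedly moving the last box of the bottom row $\ell\geq 2$ to the end of the first row. Each such move raises the content sum by $\lambda_1-\lambda_\ell+\ell>0$.
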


\begin{proof}
  For $1\leq a<b\leq m$, let $\tau_{ab}$ denote the transposition exchanging $a$ and $b$. Since $\tau$ is uniform over all transpositions, it is equivalent to prove
  \begin{equation} \label{eq:permutation_variance_unnormalized}
    \Var \left(f(\sigma)\right)
    \leq
    \frac{1}{2m}
    \sum_{1\leq a<b\leq m}
    \E\left[
      \left(f(\sigma)-f(\sigma\circ\tau_{ab})\right)^2
    \right].
  \end{equation}
  since
  \[
    \frac{m-1}{4}\cdot\frac{1}{\binom{m}{2}}
    =\frac{1}{2m}.
  \]

  Subtracting the constant $\E[f(\sigma)]$ leaves all differences on the right-hand side unchanged, so assume without loss of generality that $\E[f(\sigma)] = 0$. Write also $V:=\E[f(\sigma)^2]$. We prove \eqref{eq:permutation_variance_unnormalized} by induction on
  $m$.

  For the base case $m=2$, the group $\mathcal S_2$ consists of two elements and there is only one transposition. Direct computation gives
  \[
    V
    =\frac14\left(f(\mathrm{Id})-f(\tau_{12})\right)^2
    =\frac14\E\left[
      \left(f(\sigma)-f(\sigma\circ\tau_{12})\right)^2
    \right],
  \]
  which is exactly \eqref{eq:permutation_variance_unnormalized}. We next establish an auxiliary bound conditional on the image of a particular index. For $r,x\in[m]$, define
  \[
    g_r(x):=\E\big[f(\sigma)\mid \sigma(r)=x\big] \quad
    \text{ and } \quad
    W_r:=\E\left[g_r(\sigma(r))^2\right].
  \]
  We claim that
  \begin{equation}
    \label{eq:permutation_projection_bound}
    \sum_{r=1}^m W_r\leq\frac{m}{m-1}V.
  \end{equation}
  Since $\sigma(r)$ is uniform on $[m]$,
  \[
    W_r=\frac1m\sum_{x=1}^m g_r(x)^2.
  \]
  Moreover, because $f$ is centered,
  \[
    \sum_{x=1}^m g_r(x)=0
    \qquad\text{for every }r\in[m].
  \]
  Define
  \[
    G(\sigma):=\sum_{r=1}^m g_r(\sigma(r)),
    \qquad
    S:=\sum_{r=1}^m W_r.
  \]
  By the tower law,
  \begin{align*}
    \E\big[f(\sigma)G(\sigma)\big]
    &=\sum_{r=1}^m
      \E\big[f(\sigma)g_r(\sigma(r))\big]\\
    &=\sum_{r=1}^m
      \E\left[
      \E[f(\sigma)\mid\sigma(r)]g_r(\sigma(r))
      \right]\\
    &=\sum_{r=1}^m
      \E\left[g_r(\sigma(r))^2\right]
      =S.
  \end{align*}

  For $r\neq s$, the pair $(\sigma(r),\sigma(s))$ is uniform over the $m(m-1)$ ordered pairs $(x,y)$ with $x\neq y$. Hence
  \begin{align*}
    &\E\big[g_r(\sigma(r))g_s(\sigma(s))\big]\\
    &\qquad=
      \frac{1}{m(m-1)}
      \sum_{x\neq y}g_r(x)g_s(y)\\
    &\qquad=
      -\frac{1}{m(m-1)}
      \sum_{x=1}^m g_r(x)g_s(x),
  \end{align*}
  where the last equality uses
  $\sum_xg_r(x)=\sum_yg_s(y)=0$. Summing the diagonal and off-diagonal
  terms gives
  \begin{align*}
    \E\left[G(\sigma)^2\right]
    &=S-
      \frac{1}{m(m-1)}
      \sum_{x=1}^m\sum_{r\neq s}
      g_r(x)g_s(x)\\
    &=\frac{m}{m-1}S-
      \frac{1}{m(m-1)}
      \sum_{x=1}^m
      \left(\sum_{r=1}^m g_r(x)\right)^2\\
    &\leq\frac{m}{m-1}S.
  \end{align*}
  Therefore, by Cauchy--Schwarz,
  \[
    S^2
    =\big(\E[f(\sigma)G(\sigma)]\big)^2
    \leq
    \E\left[f(\sigma)^2\right]\,
    \E\left[G(\sigma)^2\right]
    \leq\frac{m}{m-1}VS.
  \]
  If $S=0$, \eqref{eq:permutation_projection_bound} is immediate; otherwise, dividing by $S$ proves \eqref{eq:permutation_projection_bound}.

  Now assume that \eqref{eq:permutation_variance_unnormalized} holds for uniform permutations of $m-1$ elements. Fix $r\in[m]$. Conditional variance decomposition gives
  \begin{equation}
    \label{eq:permutation_conditional_variance}
    V-W_r
    =\E\left[
      \left(f(\sigma)-\E[f(\sigma)\mid\sigma(r)]\right)^2
    \right].
  \end{equation}
  Fix $x\in[m]$ and condition on $\sigma(r)=x$. The value at position
  $r$ is then fixed, while the remaining $m-1$ values appear in a
  uniformly random order in the remaining $m-1$ positions. This is
  exactly the setting of the induction hypothesis with $m-1$ in place
  of $m$. Moreover, if $a,b\neq r$, then $\tau_{ab}$ leaves $r$
  unchanged, so
  \[
    (\sigma\circ\tau_{ab})(r)=\sigma(r)=x.
  \]
  Thus the induction hypothesis can be applied conditionally on
  $\sigma(r)=x$, using the transpositions $\tau_{ab}$ with
  $a,b\neq r$. Averaging the resulting inequality over $x$ yields
  \[
    V-W_r
    \leq
    \frac{1}{2(m-1)}
    \sum_{\substack{1\leq a<b\leq m\\a,b\neq r}}
    \E\left[
      \left(f(\sigma)-f(\sigma\circ\tau_{ab})\right)^2
    \right].
  \]
  Summing over $r\in[m]$, each pair $a<b$ appears exactly $m-2$ times,
  so
  \begin{equation}
    \label{eq:permutation_induction_upper}
    mV-\sum_{r=1}^mW_r
    \leq
    \frac{m-2}{2(m-1)}
    \sum_{1\leq a<b\leq m}
    \E\left[
      \left(f(\sigma)-f(\sigma\circ\tau_{ab})\right)^2
    \right].
  \end{equation}
  On the other hand, \eqref{eq:permutation_projection_bound} implies
  \begin{equation}
    \label{eq:permutation_induction_lower}
    mV-\sum_{r=1}^mW_r
    \geq
    mV-\frac{m}{m-1}V
    =\frac{m(m-2)}{m-1}V.
  \end{equation}
  Combining \eqref{eq:permutation_induction_upper} and \eqref{eq:permutation_induction_lower}, and cancelling the common
  factor $(m-2)/(m-1)$ for $m\geq3$, gives
  \[
    V
    \leq
    \frac{1}{2m}
    \sum_{1\leq a<b\leq m}
    \E\left[
      \left(f(\sigma)-f(\sigma\circ\tau_{ab})\right)^2
    \right].
  \]
  This is \eqref{eq:permutation_variance_unnormalized}, completing the induction.
\end{proof}

\end{document}